\newtheorem{theorem}{Theorem}[section]
\newtheorem{lemma}[theorem]{Lemma}
\newtheorem{proposition}[theorem]{Proposition}
\newtheorem{corollary}[theorem]{Corollary}
\newtheorem{remark}[theorem]{Remark}
\newtheorem{definition}[theorem]{Definition}
\newtheorem{algorithm}[theorem]{Algorithm}
\newcommand{\E}{\mbox{\bf E}}
\newcommand{\cI}{{\cal I}}
\newcommand{\cM}{{\cal M}}
\newcommand{\cF}{{\cal F}}
\newcommand{\cX}{{\cal X}}
\newcommand{\cY}{{\cal Y}}
\newcommand{\cZ}{{\cal Z}}
\newcommand{\RR}{{\mathbb R}}
\def\blam{{\bm{\lambda}}}
\def\b1{{\bf 1}}
\def\be{{\bf e}}
\def\bb{{\bf b}}
\def\bx{{\bf x}}
\def\by{{\bf y}}
\def\bv{{\bf v}}
\def\bw{{\bf w}}
\def\bz{{\bf z}}
\def\bp{{\bf p}}
\def\bbb{{b}}
\def\ourspan{\mbox{span}}
\newcommand{\partdiff}[2]{\frac{\partial {#1}}{\partial {#2}}}
\newcommand{\mixdiff}[3]{\frac{\partial^2 {#1}}{{\partial {#2}}{\partial {#3}}}}
\newcommand{\etal}{et al.\ }
\newcommand{\ceil}[1]{\lceil #1 \rceil}
\newcommand{\floor}[1]{\lfloor #1 \rfloor}
\newcommand{\eps}{\varepsilon}
\newcommand{\opt}{\text{\sc OPT}}
\newcommand{\poly}{\mathrm{poly}}
\newcommand{\mypara}[1]{\bigskip \noindent {\bf #1}}
\newcommand{\crs}{{\sc CR}\xspace}
\newcommand{\dmax}{d_{\max}}
\newcommand{\bmin}{b_{\min}}
\begin{document}

\title{Submodular Function Maximization via the Multilinear Relaxation and
Contention Resolution Schemes\footnote{A preliminary version of
this paper appeared in {\em Proc.\ of ACM STOC}, 2011.}%
}

\author{Chandra Chekuri\thanks{Dept. of Computer Science,
Univ.\ of Illinois, Urbana, IL 61801, USA. Partially supported
by NSF grants CCF-0728782 and CCF-1016684. E-mail: {\tt chekuri@illinois.edu}.}
\and
Jan Vondr\'ak\thanks{IBM Almaden Research Center, San Jose, CA 95120, USA.
E-mail: {\tt jvondrak@us.ibm.com}.}
\and
Rico Zenklusen\thanks{Dept.~of Mathematics, ETH Zurich, 8092 Zurich, Switzerland,
and Department of Applied Mathematics and Statistics, Johns Hopkins University,
Baltimore, MD 21218, USA.
E-mail: {\tt ricoz@math.ethz.ch}.
Supported by Swiss National Science Foundation grant
PBEZP2-129524, by NSF grants CCF-1115849 and CCF-0829878, 
and by ONR grants N00014-12-1-0033, N00014-11-1-0053
and N00014-09-1-0326.
}
}

\maketitle
\begin{abstract}
  We consider the problem of maximizing a non-negative submodular set
  function $f:2^N \rightarrow \RR_+$ over a ground set $N$ subject to
  a variety of packing type constraints including (multiple) matroid
  constraints, knapsack constraints, and their intersections. In this
  paper we develop a general framework that allows us to derive a
  number of new results, in particular when $f$ may be a {\em
    non-monotone} function. Our algorithms are based on
  (approximately) maximizing the multilinear extension $F$ of $f$
  \cite{CCPV07} over a polytope $P$ that represents the constraints,
  and then effectively rounding the fractional solution. Although this
  approach has been used quite successfully
  \cite{CCPV09,KulikST09,LeeMNS09,CVZ10,BansalKNS10}, it has been
  limited in some important ways. We overcome these limitations as
  follows.

  First, we give constant factor approximation algorithms to maximize
  $F$ over a down-closed polytope $P$ described by
  an efficient separation oracle.
  Previously this was known only for
  monotone functions \cite{Vondrak08}. For non-monotone functions, a
  constant factor was known only when the polytope was either the
  intersection of a fixed number of knapsack constraints
  \cite{LeeMNS09} or a matroid polytope \cite{Vondrak09,OV11}. Second,
  we show that {\em contention resolution schemes} are an effective
  way to round a fractional solution, even when $f$ is
  non-monotone. In particular, contention resolution schemes for
  different polytopes can be combined to handle the intersection
  of different constraints. Via LP duality we show that a
  contention resolution scheme for a constraint is related to
  the {\em correlation gap} \cite{ADSY10} of weighted rank functions
  of the constraint.  This leads to an optimal contention
  resolution scheme for the matroid polytope.

  Our results provide a broadly applicable framework for maximizing
  linear and submodular functions subject to independence
  constraints. We give several illustrative examples. Contention
  resolution schemes %
  may find other applications. 
\end{abstract}

\section{Introduction}
\label{sec:intro}

We consider the meta-problem of {\em maximizing} a non-negative submodular
set function subject to independence constraints. Formally, let $N$ be
a finite ground set of cardinality $n$, and let $f:2^N \rightarrow
\RR_+$ be a submodular set function over $N$.\footnote{A set function
  $f:2^N \rightarrow \RR$ is submodular iff $f(A) + f(B) \ge f(A\cup
  B) + f(A \cap B)$ for all $A,B \subseteq N$.}  Let $\cI \subseteq
2^N$ be a downward-closed family\footnote{A family of sets
  $\cI \subseteq 2^N$ is downward-closed if for any $A \subset B
  \subseteq N$, $B \in \cI$ implies that $A \in \cI$.} of subsets of
$N$. Our problem is then $\max_{S \in \cI} f(S)$. We are interested in
independence families induced by natural and useful constraints such
as matroid constraints, knapsack constraints, related special cases,
and their intersections. Throughout this paper we assume that $f$ is
given via a value oracle; that is, given a set $S \subseteq N$ the
oracle returns $f(S)$. The function $f$ could be monotone or
non-monotone\footnote{$f$ is {\em monotone} if $f(A) \le
f(B)$ whenever $A \subseteq B$.}; monotone functions typically
allow better approximation results.

Submodular function maximization has recently attracted considerable
attention in theoretical computer science. This is for a variety of
reasons, including diverse applications---a prominent application
field being algorithmic game theory, where submodular functions are
very commonly used as utility functions to describe diminishing
returns---and also the recognition of interesting algorithmic and
structural properties.  A number of well-known problems can be seen as
special cases of submodular function maximization. For example, the
APX-hard Max-Cut problem can be seen as (unconstrained) maximization
of the cut function $f:2^V \rightarrow \RR_+$ of a graph
$G=(V,E)$. (Note that $f$ here is non-monotone.)  Another well-known
special case of our problem is the Max-$k$-Cover problem, which can be
viewed as $\max \{f(S): |S| \leq k\}$ where $f(S) = |\bigcup_{j \in S}
A_j|$ is the coverage function for a collection of sets $\{A_i\}$.
Max-$k$-Cover is hard to approximate to within a factor of
$(1-1/e+\eps)$ for any fixed $\eps > 0$, unless $P=NP$ \cite{Feige98}.
Hence we focus on approximation algorithms\footnote{If $f$ is not
  assumed to be non-negative, even the unconstrained problem is
  inapproximable since deciding whether the optimum value is positive
  or zero requires an exponential number of queries.}.

Classical work in submodular function maximization was based on combinatorial
techniques such as the greedy algorithm and local search.  We mention
the work of Cornuejols, Fisher, Nemhauser and Wolsey
\cite{CornuejolsFN77,NWF78,FNW78,NW78} from the late 70's which showed a
variety of approximation bounds when $f$ is monotone submodular and
$\cI$ is the intersection of matroid constraints.
Recent algorithmic work has considerably extended and improved the
classical results. Local-search methods have been identified as
particularly useful, especially for non-monotone functions. Some of
the recent results include the first constant factor approximation for
the unconstrained submodular function maximization problem
\cite{FeigeMV07}, and a variety of approximation results for knapsack
and matroid constraints \cite{LeeMNS09,LeeSV09}. The greedy algorithm
has also been modified and made applicable to non-monotone functions
\cite{GuptaRST10}.

Despite the above-mentioned results, combinatorial techniques have
some limitations: (i) they have not been able to achieve
optimal approximation results, except in the basic case of a single
cardinality or knapsack constraint \cite{NWF78,Sviri04}; (ii) they 
do not provide the flexibility to combine constraints of different types.
A new approach which overcomes some of these obstacles and brings
submodular function maximization closer to the world of polyhedral
techniques is via the {\em multilinear relaxation}, introduced in this
context in \cite{CCPV07}.

\medskip
\noindent{\bf A relaxation and rounding framework based on
the multilinear relaxation.}
In this paper we introduce a general relaxation and rounding framework
for maximizing submodular functions, which builds upon, and
significantly extends, previous approaches.  When dealing
with linear (or convex) objective functions, a standard paradigm is to
design a linear or convex relaxation whose solution is then rounded
via a problem-specific procedure.
A difficulty faced in extending this 
approach to maximizing a submodular function
$f:2^N \rightarrow \mathbb{R}_+$ --- which we often interpret 
as a function on the vertices of
a $\{0,1\}^N$ hypercube that correspond to incidence vectors --- is to
find a suitable extension $g:[0,1]^N \rightarrow \mathbb{R}_+$ of $f$
to the full hypercube.
The goal is to leverage such an extension $g$ as follows.
Suppose we have a polytope $P_{\cI} \subseteq [0,1]^N$ that is a
relaxation for $\cI \subseteq 2^N$ in the sense that $\{ \b1_I \mid I
\in \cI\} \subset P_\cI$.
We want to approximately maximize the continuous problem
$\max_{\bx \in P_{\cI}} g(\bx)$ to find a fractional solution
$\bx^* \in P_\cI$ that is finally rounded to a feasible integral
solution.

The best-studied extension of a submodular function is the Lov\'{a}sz
extension~\cite{Lovasz}; however, being a convex function, it is
mostly suitable for submodular function minimization problems.
For maximization of submodular functions, the following {\em multilinear}
extension was introduced in \cite{CCPV07}, inspired by the work in
\cite{AgeevS04}:  
$$ F(\bx) = \sum_{S \subseteq N} f(S) \prod_{i \in S}
x_i \prod_{j \not \in S} (1-x_j).$$ 
The value $F(\bx)$ is equivalently the expected value of $f(R)$ where
$R$ is a random set obtained by picking each element $i$ independently
with probability $x_i$. We observe that if $f$ is modular\footnote{A
  function is modular if $f(A) + f(B) = f(A\cup B) + f(A \cap B)$ for
  all $A, B \subseteq N$. If $f$ is modular then $f(A) = w_0 + \sum_{i
    \in A} w_i$ for some weight function $w:N \rightarrow \RR$.} then
$F$ is simply a linear function.
In this paper we focus on the multilinear extension.
The two obvious questions that arise when trying to build a
general relaxation and rounding framework based on the multilinear
extension are the following.
First, can we (approximately) solve the
problem $\max_{\bx \in P_{\cI}} F(\bx)$?
This question is particularly interesting due to the fact
that the multilinear extension is in general not concave
(nor convex).
Second, can we round a fractional solution effectively?

Recent work has addressed the above questions in several ways. First,
Vondr\'ak \cite{Vondrak08} gave a continuous greedy algorithm that
gives an optimal $(1-1/e)$-approximation for the problem $\max_{\bx
  \in P} F(\bx)$ when $f$ is monotone submodular and $P$ is a solvable
polytope\footnote{We say that a polytope $P$ is solvable if one can do
  efficient linear optimization over $P$.}. When $f$ is non-monotone,
the picture is less satisfactory. Lee \etal \cite{LeeMNS09} gave a
local-search based algorithm that gives a $(1/4-\eps)$-approximation
to maximize $F$ over the polytope induced by a fixed number of
knapsack constraints. Vondr\'ak \cite{Vondrak09} obtained a
$0.309$-approximation for maximizing $F$ over a single matroid
polytope, and this ratio has been recently improved to $0.325$
\cite{OV11}. However, no approximation algorithm was known to maximize
$F$ over a general solvable polytope $P$.

In terms of rounding a fractional solution $\bx$, a natural strategy
to preserve the value of $F(\bx)$ in expectation is to independently
round each coordinate $i$ up to $1$ with probability $x_i$ and down to
$0$ otherwise.  However, this rounding strategy does not typically
preserve the constraints imposed by $\cI$. Various dependent rounding
schemes have been proposed.  It was shown in \cite{CCPV07} that
"pipage rounding" can be used to round solutions in the matroid
polytope without losing in terms of the objective function $F(\bx)$
(\cite{CVZ10} achieves the same via "swap-rounding").
In~\cite{KulikST09,LeeMNS09,BansalKNS10,KulikST10}, randomized
rounding coupled with alteration was used for knapsack
constraints. More recently, \cite{CVZ10} showed concentration
properties for rounding in a single matroid polytope when $f$ is
monotone, and \cite{Vondrak10} showed concentration for independent
rounding even when $f$ is non-monotone.  These led to a few additional
results. Despite this progress, the ``integrality gap'' of $\max
\{F(\bx): \bx \in P\}$ has been so far unknown even when $f$ is
monotone and $P$ the intersection of two matroid polytopes. (We remark
that for intersections of matroids, combinatorial algorithms are known
to yield good approximations \cite{LeeMNS09,LeeSV09}.)  However, even
for modular functions---i.e., classical linear
optimization---combining constraints such as matroids and knapsack
constraints has been difficult, and no general result was known that
matched the best bounds one can get for them separately.

In summary, previous results via the multilinear relaxation were known
only for rather restricted cases, both in terms of approximately
maximizing the multilinear extension, and in terms of effectively
rounding fractional solutions.
We next describe the contributions of this paper in this context.

\mypara{Our contribution at a high level:} In this paper we develop a
general framework for solving submodular maximization problems of the
form $\max \{f(S): S \in \cI \}$, where $f:2^N \rightarrow \RR_+$ is
submodular and $\cI \subset 2^X$ is a downward-closed family of
sets. Our framework consists of the following components.
\begin{itemize}
\item {\em Optimizing the multilinear relaxation:} We give the first
  constant factor approximation, with an additional negligible
  additive error, for the problem $\max \{F(\bx): \bx
  \in P\}$ where $F$ is the multilinear extension of any non-negative
  submodular function, and $P$ is any down-monotone\footnote{A polytope $P \subseteq [0,1]^N$ is down-monotone if for all $\bx,\by \in [0,1]^N$,
$\by \le \bx$ and $\bx \in P$ implies $\by \in P$.} solvable polytope.
\item {\em Dependent randomized rounding:} We propose a general
  (dependent) randomized rounding framework for modular and submodular
  functions under independence constraints via what we call {\em
    contention resolution schemes} (\emph{\crs schemes}).  Rounding an
  approximate maximizer of the relaxation $\max \{F(\bx): \bx \in P\}$
  via a \crs scheme that is tailored to the given constraints, leads
  to a solution with provable approximation guarantee.  A key
  advantage of \crs schemes is the ability to easily combine 
  \crs schemes designed for different constraints into a \crs scheme
  for the intersection of these constraints.
\item {\em Contention resolution schemes:} We present \crs schemes for
  a variety of packing constraints, including knapsack constraints,
  matroid constraints, sparse packing systems, column-restricted
  packing constraints, and constraints imposed by an unsplittable flow
  problem in paths and trees.  Our \crs scheme for the matroid
  polytope, which is provably optimal~\footnote{We would like to highlight that
  contention resolutions schemes are just one way of rounding a fractional solution.
  Hence, the use of
  an optimal contention resolution scheme does not imply that no better approximation
  factor can be obtained by a different procedure.}, is obtained by exploiting a
  tight connection between \crs schemes and the {\em correlation
    gap}~\cite{ADSY10} of the associated weighted rank functions.
  Previously, in the context of matroids, an optimal \crs scheme was
  only known for the uniform matroid of rank
  $1$~\cite{FeigeV06,FeigeV10}.
\end{itemize}
\medskip

The above ingredients can be put together to obtain a relaxation and rounding framework leading to a variety of new results that we discuss in more detail in Section~\ref{sec:apps}.
We summarize some of our results in Table~\ref{table:results}.

\begin{table*}[htb]\label{table:results}
\small
\def\arraystretch{1.2}
$$\begin{array}{|| c || c | c | c ||} \hline
\mbox{Constraint type} & \mbox{Linear maximization} & \mbox{Monotone submod. max.}
 & \mbox{Non-negative submod. max.}   \\
\hline \hline
O(1) \mbox{ knapsacks} & [1-\eps \mbox{ \cite{ChandraHW76,FriezeC84}}] & [1-1/e-\eps   \mbox{ \cite{KulikST10}}] & 0.325 \ \ [0.25 \mbox{ \cite{LeeMNS09}}]  \\
\hline
k \mbox{ matroids } \& \ \ell=O(1) \mbox{ knapsacks} & \frac{0.6}{k} \ \ [\Omega(\frac{1}{(k + \ell}) \mbox{ \cite{GuptaNR10,GuptaRST10}}] & \frac{0.38}{k} \ \ [\Omega(\frac{1}{(k + \ell}) \mbox{ \cite{GuptaNR10,GuptaRST10}}]
  & \frac{0.19}{k} \ \ [\Omega(\frac{1}{k + \ell})\mbox{ \cite{GuptaNR10,GuptaRST10}}] \\
\hline
k\mbox{-matchoid } \& \ \ell\mbox{-sparse PIP} & \Omega(\frac{1}{k+\ell})%
 &  \Omega(\frac{1}{k+\ell}) %
 & \Omega(\frac{1}{k+\ell}) %
  \\
\hline
\mbox{Unsplittable flow in paths and trees} & [\Omega(1) \mbox{ \cite{ChekuriMS03}}] & \Omega(1) & \Omega(1) \\
\hline
\end{array} $$
\caption{\small Approximation ratios for different types of constraints and objective functions. Results in square brackets were previously known.
  The ratios in the last column for non-monotone functions are based on a $0.325$ approximation for
  maximizng the multilinear relaxation described in the conference version
  of this paper \cite{CVZ11}; there is a $\frac{1}{e}-\eps \approx 0.367$-approximation 
  from subsequent work of \cite{FNS11} which results in improved bounds.
}
\end{table*}

\subsection{Maximizing the multilinear extension over a general polytope}

We now give a more detailed description of our technical results and
the general framework.  First, we give a constant factor approximation
for the problem $\max \{F(\bx): \bx \in P\}$, where $F$ is the
multilinear extension of a non-monotone submodular function $f$ and
$P$ is a down-monotone solvable polytope; the monotone case admits a
$(1-1/e)$-approximation \cite{Vondrak08} as we mentioned already.  The
condition of down-monotonicity of the polytope is necessary for the
non-monotone case; it follows from \cite{Vondrak09,Vondrak09a} that no constant
factor approximation is possible for the matroid base polytope which
is not down-monotone. 

The main algorithmic technique for non-monotone functions has been local
search. Fractional local search with additional ideas has been the
tool to solve the continuous problem in special cases of polytopes
\cite{LeeMNS09,Vondrak09,OV11}.  Previous fractional local search
methods (\cite{LeeMNS09} and~\cite{Vondrak09}) improved a current solution $\bx$ by considering
moves along a small number of coordinates of $\bx$. The analysis took
advantage of the combinatorial structure of the underlying constraint
(knapsacks or matroids) which was sufficiently simple that
swaps along a few coordinates sufficed. How do we obtain an algorithm
that works for {\em any} polytope $P$?

\smallskip
\noindent {\bf A new insight:} 
Our key high-level idea is simple yet insightful. Any point $\bx \in
P$ can be written as a convex combination of the vertices of $P$. We
view the problem of $\max \{F(\bx): \bx \in P\}$ as optimizing a
submodular function over the ground set consisting of the
(exponentially many) vertices of $P$ (duplicated many times in the
limit). 
From this viewpoint we obtain a new
fractional local search procedure: given a current point $\bx$, a
local swap corresponds to removing a vertex in the convex combination
of $\bx$ and adding a new vertex of $P$ (with appropriate scalar
multipliers). To implement this efficiently we can use linear
optimization over $P$.  (We remark that the continuous greedy
algorithm for the monotone case \cite{Vondrak08} can also be
interpreted with this insight.)

Our algorithms are derived using the above high-level idea.
We note that when specialized to the
matroid polytope or knapsack polytope which have combinatorial
structure, our algorithms become simpler and in fact resemble previous
algorithms.
Our algorithms and proofs of approximation guarantees are in
fact simpler than the previously given proofs for particular polytopes
\cite{LeeMNS09,Vondrak09,OV11}.

We present two algorithms following this idea. The first algorithm
is close in spirit to the local-search algorithm of 
Lee et al.~for knapsack constraints~\cite{LeeMNS09} and gives
a $0.25$-approximation. 
This algorithm, despite having a worse approximation guarantee then
the second one we present, allows us to further explain and formalize
the above high-level idea in a clean way.
The second algorithm uses some ideas of
\cite{Vondrak09} for the case of a matroid polytope and gives a
$0.309$-approximation with respect to the best {\em integer} solution
in $P$.

We would like to mention that subsequently to the conference version
of this paper, Feldman et al.~\cite{FNS11} presented an improved
algorithm to maximize the multilinear extension, leading to an
$(e^{-1}-\epsilon)\approx 0.367$-approximation with respect to the
best integer solution. Their algorithm is an adaptation of the
continuous greedy algorithm \cite{Vondrak08}.  The conference version
of this paper~\cite{CVZ11} contained a third and much more involved
algorithm (generalizing the simulated annealing approach
of~\cite{OV11}) that gives a $0.325$-approximation, again with respect
to the best integer solution in $P$.  For conciseness, and in view of
the recent results in~\cite{FNS11}, we do not include this third
algorithm in this paper, and concentrate on the first two algorithms
mentioned in the preceding paragraph, which allow us to demonstrate
the main new algorithmic insights.  We summarize our results in the
following theorem.

\begin{theorem}
\label{thm:multilinear-solve}
Let $f$ be a nonnegative submodular function and $P\subseteq \mathbb{R}^n$
be a solvable down-monotone polytope satisfying that there is a
$\lambda\in \Omega(\frac{1}{\poly(n)})$ such that
for each coordinate $i\in [n]$,
$\lambda \cdot \be_i \in P$.
then there is a $(0.25-o(1))$-approximation algorithm
for the problem $\max \{ F(\bx): \bx \in P \}$ where $F$ is the
multilinear extension of $f$.  There is also an algorithm for this
problem which returns a solution $\by \in P$ of value $F(\by) \geq
0.309 \cdot \max \{F(\bx): \bx \in P \cap \{0,1\}^N \}$.
\end{theorem}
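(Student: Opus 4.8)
The plan is to realize the ``new insight'' literally: view $\max\{F(\bx):\bx\in P\}$ as maximizing a submodular-type objective over the ground set of vertices of $P$, and run a \emph{fractional local search}. I keep the current iterate as an explicit convex combination $\bx=\sum_j\lambda_j\bv_j$ of vertices $\bv_j$ of $P$; the admissible local moves are $\bx\mapsto\bx+\delta(\bv-\bv_j)$ with $\bv$ an arbitrary vertex of $P$, $\bv_j$ a vertex in the current support, and $\delta\in(0,\lambda_j]$ — each such move stays in $P$. The most profitable move of this form is found by a single linear-optimization call over $P$ (maximize the estimated $\nabla F(\bx)\cdot\bv$) together with a scan of the $O(n)$ support vertices; since $\nabla F$ is not available exactly, I estimate each $\partial F/\partial x_i=\E[f(R\cup\{i\})-f(R\setminus\{i\})]$ by averaging $\mathrm{poly}(n/\eps)$ value-oracle samples, which contributes only lower-order error, and after each move I re-express $\bx$ via Carath\'eodory so the support stays of size $\le n+1$. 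With a standard warm start (e.g.\ starting from $\b1_{\{i^\star\}}$ for $i^\star$ maximizing $f(\{i\})$, together with $\bzero$, which certifies a $\mathrm{poly}(n)$-factor lower bound on $\opt$) and a relative improvement threshold, each step multiplies $F(\bx)$ by a $(1+\mathrm{poly}(\eps/n))$ factor, so an approximate local optimum is reached in $\mathrm{poly}(n,1/\eps)$ steps.

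The core is the analysis at an approximate local optimum $\bx$. Unwinding local optimality along the vertex swaps gives $\nabla F(\bx)\cdot(\bv-\bv_j)\le\eps'$ for all vertices $\bv$ and all support $\bv_j$; averaging over $j$ with weights $\lambda_j$ and over convex combinations of $\bv$ yields the clean first-order condition $\nabla F(\bx)\cdot(\by-\bx)\le\eps'$ for every $\by\in P$. I then feed into this several test points that are feasible precisely because $P$ is down-monotone: $\by=\bzero$, $\by=\bx^*$ (an optimal point), $\by=\bx\wedge\bx^*$, and $\by=(\bx^*-\bx)^+$. Using (i)~concavity of $F$ along every coordinate-nonnegative direction (valid for the multilinear extension since $\partial^2F/\partial x_i^2=0$ and $\partial^2F/\partial x_i\partial x_j\le0$ by submodularity), (ii)~submodularity of $F$, (iii)~$F\ge0$, and (iv)~the standard inequality $F(\bx\vee\by)\ge(1-\max_i x_i)F(\by)$ — keeping the iterate bounded away from $\b1$, e.g.\ by running the search inside $P\cap\{\bx:\bx\le\tfrac12\b1\}$, whose maximum of $F$ is still $\ge\tfrac12\opt$ by concavity along $[\bzero,\bx^*]$ — one bounds the relevant ``up'' and ``down'' perturbations of $\bx$ by $O(1)\cdot F(\bx)$ and concludes $F(\bx)\ge\tfrac14\opt-O(\eps')$. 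Rescaling $\eps'$ gives the first statement.

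For the second statement I would run a refined version of this local search following Vondr\'ak's analysis of the matroid-polytope case. Two features help: the benchmark is integral, so $F(\bx^*)=f(\mathrm{supp}\,\bx^*)$, and instead of a one-shot comparison one tracks how $F$ grows along the search trajectory relative to $f(\mathrm{supp}\,\bx^*)$ (combined, as in that argument, with restricting the coordinates of the running solution and/or guessing a bounded amount of information about $\bx^*$), then solves the resulting recurrence / differential inequality; this is exactly what upgrades the constant from $1/4$ to $0.309$, still using only linear optimization over $P$ and sampling.

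The step I expect to be the real obstacle is the non-monotone bookkeeping in the middle paragraph: because $F$ is not monotone, the natural ``move up towards $\bx\vee\bx^*$'' is infeasible, so every comparison with the optimum must be routed through the down-monotonicity of $P$ and the concavity-along-nonnegative-directions property, and getting the constant $4$ (and then $0.309$) out of that routing is where the care is needed. By contrast, the reduction to linear optimization over $P$, the estimation of $\nabla F$, and the iteration bound are routine and cost only $o(1)$.
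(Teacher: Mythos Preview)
Your local-search setup, the gradient estimation, the vertex-swap moves, and the first-order condition $(\by-\bx)\cdot\nabla F(\bx)\le\eps'$ for all $\by\in P$ are all correct and match the paper. The gap is in the analysis that is supposed to extract the constant $1/4$.

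You propose a \emph{single} local search (possibly restricted to $P\cap[0,\tfrac12]^N$) and then combine the local-optimality inequality with test points $\bzero,\bx^*,\bx\wedge\bx^*,(\bx^*-\bx)^+$ and the bound $F(\bx\vee\by)\ge(1-\|\bx\|_\infty)F(\by)$. But if you actually chase this, you lose twice: restricting to $[0,\tfrac12]^N$ costs a factor $\tfrac12$ in the benchmark (the best point there has value $\ge\tfrac12\opt$, not $\opt$), and the inequality $F(\bx\vee\by)\ge\tfrac12 F(\by)$ costs another factor $\tfrac12$ on top of the $2F(\bx)\ge F(\bx\vee\by)$ from local optimality. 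That yields $F(\bx)\ge\tfrac18\opt$, not $\tfrac14\opt$. If instead you do not restrict, then $\|\bx\|_\infty$ may be close to $1$ and $F(\bx\vee\bx^*)$ gives you nothing. None of the other test points you list fixes this; in particular $(\bx^*-\bx)^+$ does not simplify usefully under $\vee$ and $\wedge$ with $\bx$. The paper's $0.25$ argument is genuinely different: it runs a \emph{second} local search in the complementary polytope $Q=\{\by\in P:\by\le\b1-\bx\}$, obtains a second point $\bz$, and combines the two via a representation of fractional points by measurable subsets (Lemma~\ref{lem:fract-submod}) to prove $2F(\bx)+2F(\bz)\ge\opt$. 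The complementary search is exactly what compensates for the coordinates of $\bx$ that are close to $1$; without it the bookkeeping you flag as ``the real obstacle'' does not close.

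For $0.309$, your description (trajectory tracking, differential inequality) is the simulated-annealing algorithm, which gives $0.325$ and is not in this version of the paper. The actual $0.309$ algorithm is much simpler and closer to what you wrote for the first part: a single local search in $P\cap[0,t]^N$ with $t=\tfrac12(3-\sqrt5)$, analyzed via the ``threshold lemma'' $F(\bx)\ge\E_\lambda[f(T_{>\lambda}(\bx))]$. The integrality of the benchmark $\b1_C$ is used in a concrete way: one splits $C$ along the set $A=\{i:x_i\approx t\}$ and bounds $F(\bx\vee\bz')$ and $F(\bx\wedge\bz')$ separately in terms of $f(C)$ and $f(C\setminus A)$, with the $f(C\setminus A)$ terms cancelling. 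There is no recurrence or guessing.
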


We remark that a known limit on the approximability of $\max \{F(\bx): \bx \in P\}$ is an
information theoretic hardness of $0.478$-approximation in the
value oracle model, even in the special case of a matroid polytope \cite{OV11}.

\subsection{Contention resolution schemes} 
We show that a certain natural class of rounding schemes that we call
{\em contention resolution schemes} (\emph{\crs schemes}) provides a
useful and general framework for rounding fractional solutions under
submodular objective functions.  For a ground set $N$, let $P_{\cI}$
be a convex relaxation of the constraints imposed by $\cI \subseteq
2^N$, and let $\bx \in P_{\cI}$. From the definition of $F$, a natural
strategy to round a point $\bx$ is to independently round the
coordinates; however, this is unlikely to preserve the constraints
imposed by $\cI$. Let $R(\bx) \subseteq N$ be a random set obtained by
including each element $i \in N$ independently with probability $x_i$.
The set $R(\bx)$ is not necessarily feasible.  We would like to remove
(randomly) some elements from $R(\bx)$, so that we obtain a feasible
set $I \subseteq R(\bx)$.  The property we would like to achieve is
that every element $i$ appears in $I$ with probability at least $cx_i$
for some parameter $c>0$. We call such a scheme ``$c$-balanced
contention resolution'' for $P_{\cI}$. We stress that a $c$-balanced
\crs scheme needs to work for all $\bx \in P_{\cI}$.
However, often, stronger schemes---i.e.~with larger values for $c$---can
be obtained if they only need
to work for all points in a scaled-down version
$b P_\cI = \{b\cdot \bx \mid \bx\in P_\cI \}$ of $P_\cI$, where
$b\in [0,1]$. Such schemes, which we call $(b,c)$-balanced schemes,
will prove to be useful when combining \crs schemes for different
constraints as we will discuss in Section~\ref{sec:framework}.
Below is a formal definition of \crs schemes. Let $\text{support}(\bx) = \{i \in N \mid x_i > 0\}$.

\begin{definition}
  \label{defn:crscheme}
  Let $b,c \in [0,1]$. A $(b,c)$-balanced \crs scheme $\pi$
  for $P_{\cI}$ is a procedure that for every $\bx \in b P_{\cI}$
  and $A\subseteq  N$, returns a \emph{random} set
  $\pi_\bx(A)\subseteq A\cap \text{support}(\bx)$ and satisfies the following properties: 
\begin{enumerate}[(i)]
\item  $\pi_\bx(A)\in \mathcal{I}$ with probability $1$
\;\;$\forall A\subseteq N, \bx\in b P_{\cI}$, and
\item\label{item:crs} for all $i \in \text{support}(\bx)$,
  $\Pr[i\in \pi_\bx(R(\bx)) \mid i \in R(\bx)] \geq c$
  \;\;$\forall \bx \in b P_{\cI}$. 
\end{enumerate}
The scheme is said to be {\em monotone} if $\Pr[i \in
\pi_\bx(A_1)]\geq \Pr[i\in \pi_\bx(A_2)]$ whenever $i \in A_1
\subseteq A_2$.  A $(1,c)$-balanced \crs scheme is also called a
\emph{$c$-balanced \crs scheme}. The scheme is \emph{deterministic} if
$\pi$ is a deterministic algorithm (hence $\pi_\bx(A)$ is a single set
instead of a distribution). It is \emph{oblivious} if $\pi$ is
deterministic and $\pi_\bx(A) = \pi_\by(A)$ for all $\bx,\by$ and $A$,
that is, the output is independent of $\bx$ and only depends on $A$.
The scheme is \emph{efficiently implementable} if $\pi$ is
a polynomial-time algorithm that given $\bx,A$ outputs $\pi_\bx(A)$.
\end{definition}

We emphasize that a \crs scheme is defined with respect to a specific
polyhedral relaxation $P_{\cI}$ of $\cI$.
Note that on the left-hand side of condition~\eqref{item:crs} for a
\crs scheme, the probability is with respect to two random sources:
first the set $R(\bx)$ is a random set, and second, the procedure
$\pi_\bx$ is typically randomized.
We note that a $(b,c)$-balanced \crs scheme $\pi$ can
easily be transformed into a $b c$-balanced \crs
scheme; details are given in Section~\ref{sec:CRS}.

The theorem below highlights the utility of of \crs schemes; when
rounding via monotone contention resolution schemes, one can claim an
expectation bound for submodular functions.  A similar theorem was shown
in \cite{BansalKNS10} for monotone functions. We state and prove ours
in a form suitable for our context.

\begin{theorem}
\label{thm:submod-balanced}
Let $b,c\in [0,1]$, and let $f:2^N \rightarrow \RR_+$ be a
non-negative submodular function
with multilinear relaxation $F$, and $\bx \in b\cdot P_{\cI}$,
where $P_{\cI}$ is a convex relaxation for $\cI \subseteq 2^N$.
Furthermore, let $\pi$ be a monotone $(b,c)$-balanced \crs scheme
for $P_{\cI}$, and let $I=\pi_{\bx}(R(\bx))$.
If $f$ is monotone then
$$\E[f(I)] \geq c \, F(\bx).$$
Furthermore, there is a function $\eta_f:2^N\rightarrow 2^N$ that depends
on $f$ and can be evaluated in linear time, such that even for
$f$ non-monotone
$$\E[f(\eta_f(I))] \geq c \, F(\bx).$$
\end{theorem}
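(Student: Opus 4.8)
The plan is to analyze $\E[f(I)]$ by conditioning on the random set $R=R(\bx)$ and then on the randomness of the scheme $\pi$, and to compare it term-by-term with $F(\bx) = \E[f(R)]$. Fix $\bx \in P_\cI$ (here $b=1$; the general case follows from the reduction to a $bc$-balanced scheme mentioned before the theorem, so it suffices to treat monotone $c$-balanced schemes). For a set $R$ and a subset $T \subseteq R$, write $f(T)$ as a telescoping sum along any fixed ordering of $R$, or more cleanly use the standard identity that for submodular $f$ and any $T \subseteq R$,
\[
f(T) \;\ge\; f(\emptyset) + \sum_{i \in T}\big(f(R) - f(R - i)\big)\quad\text{when }f\text{ monotone,}
\]
after suitably arranging the marginal contributions; the cleaner route is to use that $\E[f(I)]$ can be lower-bounded by a sum of per-element marginal-value contributions weighted by the probability that $i \in I$. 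Concretely, I would introduce the "threshold/marginal" decomposition: write $f(S) = \sum_{i} \big(f(S_{<i}\cup\{i\}) - f(S_{<i})\big) + f(\emptyset)$ along a random uniform ordering, so that each element's contribution, in expectation, is governed by its probability of being selected times an average marginal value.

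The key step is the following exchange-of-expectations argument. Condition on $R(\bx)=R$. By monotonicity of the \crs scheme $\pi$ and submodularity of $f$, I claim
\[
\E_\pi\big[f(\pi_\bx(R))\big] \;\ge\; \sum_{i \in R}\Pr_\pi[i \in \pi_\bx(R)]\cdot \E\big[f(R) - f(R-i)\,\big|\,\text{appropriate conditioning}\big],
\]
and then average over $R$. The monotonicity property of $\pi$ (that $\Pr[i \in \pi_\bx(A_1)] \ge \Pr[i \in \pi_\bx(A_2)]$ for $i \in A_1 \subseteq A_2$) is exactly what is needed to ensure that the FKG-type / correlation inequality goes the right way: it guarantees that the event $\{i \in \pi_\bx(R)\}$ is positively correlated (or at least not anti-correlated in a harmful way) with the marginal values, so that one can decouple the selection probability from the marginal-value expectation. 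Combining $\Pr[i \in \pi_\bx(R(\bx))\mid i \in R(\bx)] \ge c$ with $\Pr[i \in R(\bx)] = x_i$ gives $\Pr[i \in I] \ge c x_i$, and plugging this into the marginal decomposition of $F(\bx) = \E[f(R(\bx))]$ yields $\E[f(I)] \ge c F(\bx)$ in the monotone case.

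For the non-monotone case, the obstacle is that discarding elements from $R$ can *decrease* $f$ arbitrarily, so the marginal-decomposition inequality above can fail. The fix is the map $\eta_f$: I would define $\eta_f(I)$ to be $I$ with each element $i$ independently removed, or rather to be obtained from $I$ by a deterministic "pruning" that removes elements whose presence only hurts — but since $\eta_f$ must be deterministic and evaluable in linear time, the right choice is a *max-over-two-sets* type operator, e.g.\ compare $f$ on $I$ versus on $I$ restricted by a fixed threshold rule. Actually the cleanest construction: observe that for non-monotone submodular $f$, one has the inequality $\E[f(R(\by))] \ge \prod(1-y_i)\cdot f(\emptyset)$-type lower bounds and, more usefully, that for a random subset $I'$ of $I$ obtained by keeping each element with probability $1/2$ independently, $\E[f(I')]$ relates to $\E[f(I)]$; derandomizing this via the standard two-point / conditional-expectation argument produces a deterministic linear-time $\eta_f$. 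The main work — and the step I expect to be the real obstacle — is verifying that after this symmetrization the per-element "keep" probability is still at least $c x_i$ (up to the constant already absorbed) and that submodularity still lets the marginal telescoping go through *without* monotonicity; this is where one must be careful to use the Lemma-style fact (implicit in the multilinear-extension literature, e.g.\ \cite{Vondrak09,FeigeMV07}) that $F$ restricted to a downscaled cube behaves well, i.e.\ $F(\bx \vee \bz) + F(\bx \wedge \bz)$-type inequalities, to control the loss. Once that is in hand, the same counting $\Pr[i \in \eta_f(I)] \ge c x_i$ feeds into the marginal decomposition of $F(\bx)$ and gives $\E[f(\eta_f(I))] \ge c F(\bx)$.
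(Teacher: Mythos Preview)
Your monotone-case outline is in the right spirit: the paper also telescopes $f$ along a fixed ordering and uses an FKG inequality to decouple the selection indicator from the marginal value. One correction even there: the marginals that make the argument go through are taken with respect to the \emph{prefix} $R\cap[i-1]$, not with respect to the full set $R$; one shows, for each $i$, that
\[
\E\big[f(J\cap[i])-f(J\cap[i-1])\big]\ \ge\ c\,\E\big[f(R\cap[i])-f(R\cap[i-1])\big],
\]
and then sums. The FKG step is that, on the product space of $R$ conditioned on $i\in R$, both $\Pr[i\in I\mid R]$ and $f_{R\cap[i-1]}(i)$ are non-increasing in $R$ (the first by monotonicity of the scheme, the second by submodularity).

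The genuine gap is your construction of $\eta_f$ in the non-monotone case. Neither the ``keep each element with probability $1/2$ and derandomize'' idea nor a ``max-over-two-sets'' operator is what is needed, and invoking $F(\bx\vee\bz)$-type inequalities is a red herring here. The point of $\eta_f$ is not to symmetrize or to recover a per-element inclusion probability of $cx_i$; rather, it is to guarantee that every element you keep has a \emph{nonnegative} marginal contribution along the telescoping order, so that the inequality $\E[\mathbf{1}_{i\in J}f_{J\cap[i-1]}(i)]\ge 0$ can be replaced by the stronger $\E[\mathbf{1}_{i\in J}\max\{0,f_{J\cap[i-1]}(i)\}]$ without loss. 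The correct $\eta_f$ is simply greedy pruning: fix the same ordering of $N$, start with $J=\emptyset$, and for each $i\in I$ in order, set $J\leftarrow J+i$ iff $f(J+i)-f(J)>0$. This is linear time and deterministic, and it yields the key equivalence $i\in J\Leftrightarrow f_{J\cap[i-1]}(i)>0$ for $i\in I$. With that in hand one has $\mathbf{1}_{i\in J}f_{J\cap[i-1]}(i)=\mathbf{1}_{i\in I}\max\{0,f_{J\cap[i-1]}(i)\}\ge \mathbf{1}_{i\in I}\max\{0,f_{R\cap[i-1]}(i)\}$ (using $J\subseteq R$ and submodularity), and then the same FKG decoupling as in the monotone case finishes the proof. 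Your proposed symmetrizations do not supply this positivity property, and without it the telescoping bound fails for non-monotone $f$.
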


As we will see in Section~\ref{sec:CRS}, the function $\eta_f$ can be
chosen to always return a subset of its argument. We therefore call
it a \emph{pruning operation}.

We observe that several previous rounding procedures for packing (and
also covering) problems rely on the well-known technique of {\em
  alteration} of a set obtained via independent rounding and are
examples of \crs schemes (see
\cite{Srinivasan01,CCKR01,CCGK07,ChekuriMS03,BansalKNS10}).  However,
these schemes are typically oblivious in that they do not depend on
$\bx$ itself (other than in picking the random set $R$), and the
alteration is deterministic. Our definition is inspired by the ``fair
contention resolution scheme'' in \cite{FeigeV06,FeigeV10} which
considered the special case of contention for a single item. The
dependence on $\bx$ as well as randomization is necessary (even in
this case) if we want to obtain an optimal scheme.  One key question
to consider is whether some given down-monotone polytope
$P_\mathcal{I}$ admits a ``good'' $(b,c)$-balanced CR scheme, which
corresponds to having values of $b$ and $c$ that are as close
to $1$ as possible. 

One natural way to apply a $(b,c)$-balanced CR scheme to a point
$\hat{x}\in P_{\cI}$ that approximately maximizes $F$ is as follows.
In a first step we scale down $\hat{x}$ to obtain $x=b\cdot \hat{x}$.
By non-negativity and concavity of $F$ along non-negative directions
one obtains $F(x) \geq b\cdot F(\hat{x})$. Applying a $(b,c)$-balanced CR
scheme $\pi$ to $x$ leads to a set $I=\pi(R(x))$ which, according to
Theorem~\ref{thm:submod-balanced}, satisfies
$f(I)\geq c F(x) \geq c b F(\hat{x})$. This also highlights
the motivation why we want to have $b$ and $c$ as close
to $1$ as possible.

As we will show, many
natural constraint systems admit good $(b,c)$-balanced CR schemes,
including matroid constraints, knapsack constraints, and a variety of
packing integer programs.
In particular, to deal with the rather general class of matroid
constraints, we exploit a close connection between the existence of CR
schemes and a recently introduced concept, called \emph{correlation
  gap}~\cite{Yan11}.

\mypara{Contention resolution via correlation gap and an optimal scheme for matroids:}
Until recently there was no contention resolution scheme for the
matroid polytope; an optimal $(b, \frac{1-e^{-b}}{b})$-balanced scheme
was previously known for the very special case of the uniform matroid
of rank one~\cite{FeigeV06,FeigeV10}.  We note that the recent work of
Chawla \etal \cite{ChawlaHMS09,ChawlaHMS10} implicitly contains a
$(b,1-b)$-balanced deterministic scheme for matroids; their
motivation for considering this notion was mechanism design.  In this paper we
develop an optimal scheme for an arbitrary matroid\footnote{We also
describe the $(b,1-b)$ scheme in
 Section~\ref{subsec:CRmatroids} for completeness. This scheme
is simpler and computationally advantageous when compared to
the optimal scheme.}.

\begin{theorem}
  \label{thm:intro-matroid}
  There is an optimal $(b, \frac{1-e^{-b}}{b})$-balanced contention
  resolution scheme for any matroid polytope. Moreover the scheme is monotone
  and efficiently implementable.
\end{theorem}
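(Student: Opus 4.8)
The plan is to first determine, for a single fixed point $\bx$, the largest balance that \emph{any} CR scheme can guarantee at $\bx$, to recognize this quantity as a ``correlation gap'' of weighted rank functions of $\cM$, and then to invoke the (scaled) correlation-gap bound for matroid rank functions; optimality will come from the rank-one uniform matroid. Fix a matroid $\cM$ on $N$ with independent sets $\cI$ and polytope $P=P(\cM)$, and for weights $\bw\ge\bzero$ let $\omega_\bw(S)=\max\{\sum_{i\in S\cap I}w_i : I\in\cI\}$ be the weighted rank function, which is monotone submodular. A CR scheme is allowed to depend on $\bx$, so it suffices to construct, for each $\bx\in bP$, a distribution achieving balance $\tfrac{1-e^{-b}}{b}$ at that $\bx$.

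\textbf{Reducing to a minimax.}
Restricted to a fixed $\bx$, a CR scheme is just a rule that for every realization $R$ of $R(\bx)$ outputs a random independent set $I\subseteq R$; equivalently, a distribution $D$ over pairs $(R,I)$ with $I\in\cI$, $I\subseteq R$, whose $R$-marginal is the product distribution of $R(\bx)$. Its balance at $\bx$ is $\min_{i\in\text{support}(\bx)}\Pr_D[i\in I]/x_i$. Since the feasible $D$ form a polytope and $\Pr_D[i\in I]$ is linear in $D$, LP duality (the minimax theorem) gives
\[
\max_{D}\ \min_{i\in\text{support}(\bx)}\frac{\Pr_D[i\in I]}{x_i}
\;=\;\min_{\bw\ge\bzero,\ \bw\ne\bzero}\ \frac{\E[\omega_\bw(R(\bx))]}{\sum_i w_i x_i}.
\]
The inequality ``$\le$'' is immediate: for $\bw\ge\bzero$, $\E_D[\bw(I)]=\sum_i w_i\Pr_D[i\in I]$ is at least the balance times $\sum_i w_i x_i$, and at most $\E[\omega_\bw(R(\bx))]$ because $I\subseteq R$ is independent. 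For ``$\ge$'', apply the minimax theorem and reparametrize a dual probability vector $\bm{\mu}$ over coordinates by $w_i=\mu_i/x_i$; with the $R$-marginal of $D$ fixed, $\max_D\E_D[\bw(I)]=\E_R[\max_{I\subseteq R,\,I\in\cI}\bw(I)]=\E[\omega_\bw(R(\bx))]$, and $\sum_i\mu_i=1$ becomes $\sum_i w_i x_i=1$, so minimizing over $\bm{\mu}$ becomes minimizing the displayed ratio.

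\textbf{Plugging in the correlation-gap bound.}
Write $\bx=b\bx'$ with $\bx'\in P$. For $\bx'\in P(\cM)$ and $\bw\ge\bzero$, the concave closure of $\omega_\bw$ at $\bx'$ equals $\bw\cdot\bx'$: decomposing $\bx'$ into a convex combination of indicator vectors of independent sets (on which $\omega_\bw$ agrees with $\bw(\cdot)$) gives ``$\ge$'', and $\omega_\bw(S)\le\bw(S)$ for $\bw\ge\bzero$ gives ``$\le$''. The numerator $\E[\omega_\bw(R(b\bx'))]$ is the multilinear extension of the monotone submodular function $\omega_\bw$ evaluated at $b\bx'$; the scaled correlation-gap inequality (the case $b=1$ is the usual $1-1/e$ bound of \cite{ADSY10}) bounds it below by $(1-e^{-b})$ times that concave closure, i.e.\ by $(1-e^{-b})\,\bw\cdot\bx'$. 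As $\sum_i w_i x_i=b\,\bw\cdot\bx'$, the ratio above is at least $\tfrac{1-e^{-b}}{b}$, uniformly in $\bw$ and in $\bx\in bP$; combined with the minimax identity this yields a $(b,\tfrac{1-e^{-b}}{b})$-balanced scheme. Optimality needs no new work: already for the rank-one uniform matroid on $n$ elements at the uniform point, a scheme keeps at most one element of $R(\bx)$, and the best conditional retention probability tends to $\tfrac{1-e^{-b}}{b}$ as $n\to\infty$ — the tight example of \cite{FeigeV06,FeigeV10} — so $\tfrac{1-e^{-b}}{b}$ cannot be improved over the class of matroid polytopes.

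\textbf{The hard part: monotonicity and efficiency.}
The steps above only establish existence, so the real work is to pin down a scheme that is in addition monotone and polynomial-time implementable without sacrificing the optimal balance. Solving the defining LP directly is hopeless (a variable per pair $(R,I)$); instead one must exploit that optimization over independent subsets of a fixed set is solved by the matroid greedy algorithm, which supplies a polynomial separation oracle and lets the scheme be implemented (exactly, or suitably approximately) in polynomial time. Monotonicity — $\Pr[i\in\pi_\bx(A_1)]\ge\Pr[i\in\pi_\bx(A_2)]$ whenever $i\in A_1\subseteq A_2$ — is not automatic from an arbitrary minimax-optimal $D$; one has to select the scheme carefully, e.g.\ via a canonical extremal optimal solution or via a direct iterative construction extending the fair single-item scheme of \cite{FeigeV06,FeigeV10}, and then verify that monotonicity survives. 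I expect this reconciliation of optimal balance with both monotonicity and efficiency to absorb essentially all of the remaining effort; Steps 1--3 are the conceptual core but technically light.
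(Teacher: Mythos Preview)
Your Steps 1--3 are precisely the paper's approach: the LP/minimax characterization of the best balance at a fixed $\bx$, identification of the dual optimum with a ``correlation gap'' ratio for weighted rank functions, the scaled correlation-gap inequality $F(b\bp)\ge(1-e^{-b})f^+(\bp)$ for monotone submodular $f$ (which the paper proves as Lemma~\ref{lem:correl-gap-stronger} rather than citing), and the rank-one uniform matroid for tightness. So the conceptual core matches.

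The genuine gap is in your ``hard part.'' You correctly note that greedy gives a separation oracle for the dual, but you then treat monotonicity as a separate obstacle requiring a canonical extremal solution or a bespoke iterative construction. The paper's point is that these two issues are \emph{the same} issue and resolve together. The separation oracle for (DP1), given weights $\by$, is the deterministic scheme $\phi_\by$ that on input $A$ returns a max-$\by$-weight independent subset of $A$; for matroids this is the greedy algorithm with a fixed weight ordering, and that scheme is itself monotone (if $i\in\phi_\by(A_2)$ and $i\in A_1\subseteq A_2$, then greedy still picks $i$ from $A_1$). Running the ellipsoid method on (DP1) therefore only ever generates dual constraints indexed by monotone deterministic schemes $\phi_\by$. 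Solving the primal (LP1) restricted to exactly those polynomially many $\phi$'s yields a distribution over monotone deterministic schemes, which is automatically a monotone randomized scheme, and its value matches the dual optimum up to the sampling error $\eps$. So monotonicity and efficient implementability come for free from the \emph{same} ellipsoid computation once you observe that the max-weight-independent-set algorithm is monotone---no extremal-solution selection or direct single-item-style construction is needed.
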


The main idea in proving the preceding theorem is consider
a randomized \crs scheme and view it abstractly as a convex
combination of deterministic \crs schemes. This allows, via LP duality,
to show that the best contention resolution scheme for a constraint
system is related to the notion of correlation gap for
weighted rank functions of the underlying constraint.  We reiterate
that the scheme depends on the fractional solution $\bx$ that we
wish to round; the alteration of the random set $R(\bx)$ is itself a
randomized procedure that is tailored to $\bx$, and is found by solving
a linear program. We are inspired
to make the general connection to correlation gap due to the recent work
of Yan~\cite{Yan11}; he applied a similar idea in the context of
greedy posted-price ordering schemes for Bayesian mechanism design,
improving the bounds of \cite{ChawlaHMS09,ChawlaHMS10}. 

\subsection{A framework for rounding via contention resolution schemes}
\label{sec:framework}
We now describe our framework for the problem $\max_{S \in \cI} f(S)$.
The framework assumes the following: (i) there is a polynomial-time
value oracle for $f$, (ii) there is a solvable down-monotone polytope
$P_{\cI}$ that contains the set $\{\b1_S | S \in \cI\}$, and (iii)
there is a monotone $c$-balanced contention resolution scheme $\pi$
for $P_{\cI}$.  Then we have the following simple algorithm:
\begin{enumerate}
\item Using an approximation algorithm, obtain in polynomial time a
  point $\bx^* \in P_{\cI}$ such that 
  $$F(\bx^*) \ge \alpha \cdot \max\{F(\bx)  \mid \bx \in P_{\cI} \cap \{0,1\}^N \} \ge \alpha \cdot \max_{S \in \cI} f(S) .$$ 
\item Round the point $\bx^*$ using the CR scheme $\pi$
to obtain $I=\pi_{\bx^*}(R(\bx^*))$, and return its pruned
version $\eta_f(I)$.
\end{enumerate}

\begin{theorem}
  \label{thm:framework}
  The preceding framework gives a randomized $(\alpha \,
  c)$-approximation algorithm for $\max_{S \in \cI} f(S)$, whenever
  $f$ is non-negative submodular, $\alpha$ is the approximation ratio
  for $\max\{F(\bx) \mid \bx \in P_{\cI} \cap \{0,1\}^N \}$ and
  $P_\cI$ admits a monotone $c$-balanced \crs scheme. If $f$ is
  monotone then the pruning step is not needed. If $f$ is modular then
  the ratio is $c$ and the \crs scheme is not even constrained to be monotone.
\end{theorem}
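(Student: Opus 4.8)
The plan is to simply compose the two main ingredients already in hand: the $\alpha$-approximate maximizer of the multilinear relaxation produced in Step~1, and the rounding guarantee of Theorem~\ref{thm:submod-balanced} applied with $b=1$. First I would run Step~1 to obtain, in polynomial time, a point $\bx^* \in P_\cI$ with $F(\bx^*) \ge \alpha \max\{F(\bx) : \bx \in P_\cI \cap \{0,1\}^N\}$. Since the standing assumption gives $\b1_S \in P_\cI \cap \{0,1\}^N$ for every $S \in \cI$, and $F(\b1_S) = f(S)$ by definition of the multilinear extension, this already yields $F(\bx^*) \ge \alpha\,\opt$, where $\opt = \max_{S \in \cI} f(S)$. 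Then I would run Step~2: draw $R(\bx^*)$ by independent rounding, apply the monotone $c$-balanced \crs scheme to get $I = \pi_{\bx^*}(R(\bx^*))$, and output $\eta_f(I)$.

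Feasibility is immediate: by property~(i) of Definition~\ref{defn:crscheme}, $I \in \cI$ with probability $1$; since $\eta_f(I) \subseteq I$ and $\cI$ is downward-closed, $\eta_f(I) \in \cI$. For the objective value I would invoke Theorem~\ref{thm:submod-balanced} with $b=1$, which gives $\E[f(\eta_f(I))] \ge c\,F(\bx^*) \ge \alpha c\,\opt$, where the expectation is over both sources of randomness (the random set $R(\bx^*)$ and the internal randomness of $\pi_{\bx^*}$). Polynomial running time follows from the three standing assumptions — a value oracle for $f$, tractability of $P_\cI$ (used in Step~1), and efficient implementability of $\pi$. This proves the first assertion. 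If $f$ is monotone, Theorem~\ref{thm:submod-balanced} supplies the stronger bound $\E[f(I)] \ge c\,F(\bx^*)$ with no pruning, and $I$ is already feasible, so the pruning step is unnecessary.

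For the modular case I would bypass Theorem~\ref{thm:submod-balanced} entirely and argue directly, so as to avoid needing monotonicity of $\pi$. Write $f(A) = w_0 + \sum_{i \in A} w_i$ with $w_0 = f(\emptyset) \ge 0$, and let $\eta_f$ delete all elements of negative weight (this keeps the set in $\cI$ by downward-closedness). The only property of the scheme I would use is that, for every $i \in \text{support}(\bx^*)$,
$$\Pr[i \in I] = \Pr[i \in R(\bx^*)]\cdot\Pr[i \in \pi_{\bx^*}(R(\bx^*)) \mid i \in R(\bx^*)] \ge c\,x^*_i ,$$
which holds for \emph{any} $c$-balanced \crs scheme: property~(i) guarantees $\pi_{\bx^*}(R(\bx^*)) \subseteq R(\bx^*)$, so the conditioning is legitimate, and property~(ii) supplies the factor $c$. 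Then $\E[f(\eta_f(I))] = w_0 + \sum_{i : w_i \ge 0} w_i \Pr[i \in I] \ge w_0 + c\sum_{i : w_i \ge 0} w_i x^*_i$, and a one-line estimate using $w_0 \ge 0$, $c \le 1$, and $\sum_{i : w_i < 0} w_i x^*_i \le 0$ shows this is at least $c\big(w_0 + \sum_i w_i x^*_i\big) = c\,F(\bx^*) \ge c\,\opt$.

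There is no genuine obstacle here, since Theorems~\ref{thm:multilinear-solve} and~\ref{thm:submod-balanced} carry the entire analytic weight. The only points demanding care are (a) correctly tracking the two-layer probability space in the expectation bound, and (b) in the modular case, verifying that pruning negative-weight elements cannot destroy the $c\,F(\bx^*)$ guarantee — which is precisely where $f(\emptyset) \ge 0$ enters, and which is what lets us drop the monotonicity requirement on the scheme.
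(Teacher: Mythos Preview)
Your argument is correct and follows the same route as the paper: obtain $F(\bx^*)\ge\alpha\,\opt$ from Step~1, then invoke Theorem~\ref{thm:submod-balanced} for the rounding loss of $c$, with the monotone case dropping the pruning. The one point you leave implicit is why the modular ratio is $c$ rather than $\alpha c$: the paper observes that when $F$ is linear one can solve $\max\{F(\bx):\bx\in P_\cI\}$ exactly by linear programming, so $\alpha=1$; you use $F(\bx^*)\ge\opt$ in your final line without saying this. Your treatment of negative weights via pruning is a harmless extra precaution --- since $P_\cI$ is down-monotone, one may assume the linear optimum already has $x^*_i=0$ whenever $w_i<0$, which is why the paper simply appeals to linearity of expectation on $I$ itself.
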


\begin{proof}
  We have $F(\bx^*) \ge \alpha \opt$ with $\opt = \max_{S \in \cI}
  f(S)$.  Theorem~\ref{thm:submod-balanced} shows that
  $\E[f(\eta_f(I))] \geq c F(\bx^*)$, hence $\E[f(\eta_f(I))] \ge
  \alpha c \opt$. If $f$ is monotone, the pruning step is not required
  by Theorem~\ref{thm:submod-balanced}.
  
  For modular $f$, $F(\bx)$ is a linear function, and hence $\alpha=1$
  can be obtained by linear programming. Moreover, if $F(\bx)$ is a
  linear function, then by linearity of expectation, $\E[f(I)] \ge c
  F(\bx^*)$ without any monotonicity assumption on the scheme.
 \end{proof}

 For non-monotone submodular functions,
 Theorem~\ref{thm:multilinear-solve} gives $\alpha = 0.309$; the currently best
 known approximation is $(\frac{1}{e}-\eps) \simeq 0.367$ due
 to \cite{FNS11}. For monotone submodular functions an optimal
 bound of $\alpha=1-\frac{1}{e}$ is given in \cite{Vondrak08}.

\mypara{Combining schemes for different constraints:} We are
particularly interested in the case when $\cI = \cap_{i=1}^h \cI_i$ is
the intersection of several different independence systems on $N$;
each system corresponds to a different set of constraints that we
would like to impose. Assuming that we can apply the above framework 
to each $\cI_i$ separately, we can obtain
an algorithm for $\cI$ as follows.

\begin{lemma}
  \label{lem:compose}
  Let $\cI = \cap_{i=1}^h \cI_i$ and $P_{\cI} = \cap_i P_{\cI_i}$.
  Suppose each $P_{\cI_i}$ has a monotone $(b,c_i)$-balanced \crs
  scheme. Then $P_{\cI}$ has a monotone $(b,\prod_i c_i)$-balanced
  \crs scheme. In the special case that each element of $N$
  participates in at most $k$ constraints and $c_i = c$ for all $i$
  then $P_{\cI}$ has a monotone $(b, c^k)$-balanced \crs scheme.
  Moreover, if the scheme for each $P_{\cI_i}$ is implementable in
  polynomial time then the combined scheme for $P_{\cI}$ can be
  implemented in polynomial time.
\end{lemma}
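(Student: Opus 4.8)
The plan is to build the combined scheme for $P_{\cI}$ by composing the given schemes $\pi^{(1)}, \dots, \pi^{(h)}$ in sequence, feeding the output of one as the input set to the next. Concretely, given $\bx \in b P_{\cI}$ (so that $\bx \in b P_{\cI_i}$ for every $i$, since $P_{\cI} = \cap_i P_{\cI_i}$), and given $A \subseteq N$, I would define
\[
A_0 = A, \qquad A_j = \pi^{(j)}_{\bx}(A_{j-1}) \text{ for } j = 1, \dots, h,
\]
and set $\pi_{\bx}(A) = A_h$. The key structural observation is that $A_h \subseteq A_{h-1} \subseteq \cdots \subseteq A_0 = A$ (each scheme returns a subset of its input, by property (i) of Definition \ref{defn:crscheme}), so $\pi_{\bx}(A) \subseteq A \cap \text{support}(\bx)$; and $A_j \in \cI_j$ for every $j$, which by downward-closedness of each $\cI_i$ and the nesting $A_h \subseteq A_j$ gives $A_h \in \cI_j$ for all $j$, hence $A_h \in \cap_j \cI_j = \cI$. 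This establishes property (i) for the combined scheme.

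The heart of the argument is the balancedness bound, property (ii). Fix $i \in \text{support}(\bx)$ and condition on $i \in R(\bx)$; I want to lower bound $\Pr[i \in A_h \mid i \in R(\bx)]$. I would argue inductively: let $p_j = \Pr[i \in A_j \mid i \in R(\bx)]$, so $p_0 = 1$, and I claim $p_j \ge \prod_{\ell=1}^{j} c_\ell$. The inductive step needs $\Pr[i \in A_j \mid i \in A_{j-1}, \text{(history)}] \ge c_j$. Here is where monotonicity of the schemes and the independence structure enter. The scheme $\pi^{(j)}$ is $(b,c_j)$-balanced, so $\Pr[i \in \pi^{(j)}_{\bx}(R(\bx)) \mid i \in R(\bx)] \ge c_j$; but the input to $\pi^{(j)}$ in our composition is $A_{j-1}$, not $R(\bx)$. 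The trick — this is the step I expect to be the main obstacle, and where I would be most careful — is to note that we may couple the randomness so that each $\pi^{(j)}$ is run on a "fresh" independent sample. More precisely, I would instead interleave the sampling: rather than first drawing $R(\bx)$ and then applying all schemes, observe that $A_{j-1}$ is itself a random set which, conditioned on $i \in A_{j-1}$, stochastically dominates (elementwise) a set of the form needed. The cleaner route, which I would adopt, is: since each scheme is monotone, $\Pr[i \in \pi^{(j)}_{\bx}(A_{j-1})]$ as a function of $A_{j-1}$ is coordinatewise monotone, and $A_{j-1}$ (restricted to coordinates other than $i$) is itself obtained from $R(\bx)$ by a sub-sampling-type process; feeding a "smaller" random set can only hurt, so $\Pr[i \in \pi^{(j)}_{\bx}(A_{j-1}) \mid i \in A_{j-1}] \ge \Pr[i \in \pi^{(j)}_{\bx}(R(\bx)) \mid i \in R(\bx)] \ge c_j$. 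Multiplying the conditional bounds down the chain gives $p_h \ge \prod_{j=1}^h c_j$, which is exactly the $(b, \prod_i c_i)$-balanced condition. The sparse case $c_i = c$, each element in $\le k$ constraints, follows by dropping from the product all constraints $\cI_i$ not involving $i$ (for those, we can take $\pi^{(i)}$ to act as the identity on $i$), leaving at most $k$ factors of $c$, hence $p_h \ge c^k$.

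Finally I would check the remaining claims: monotonicity of the combined scheme follows because a composition of coordinatewise-monotone maps is coordinatewise monotone — if $i \in A_1 \subseteq A_2$, then running the chain on $A_1$ versus $A_2$ and using monotonicity of each $\pi^{(j)}$ stage by stage gives $\Pr[i \in \pi_{\bx}(A_1)] \ge \Pr[i \in \pi_{\bx}(A_2)]$; and polynomial-time implementability is immediate since we call $h$ polynomial-time subroutines in sequence. The one place demanding genuine care, as noted, is the stochastic-domination/monotonicity argument justifying that $\Pr[i \in \pi^{(j)}_{\bx}(A_{j-1}) \mid i \in A_{j-1}] \ge c_j$ despite $A_{j-1}$ not being distributed as $R(\bx)$; I would make this rigorous by exhibiting an explicit coupling in which $A_{j-1} \setminus \{i\}$ is a subset of $R(\bx) \setminus \{i\}$ together with monotonicity of $\pi^{(j)}$, so that the $c_j$ guarantee transfers.
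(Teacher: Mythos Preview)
Your construction is genuinely different from the paper's: you compose the schemes \emph{sequentially}, setting $A_j=\pi^{(j)}_\bx(A_{j-1})$, whereas the paper runs all schemes in \emph{parallel} on the same input and intersects, $\pi_\bx(A)=\bigcap_j \pi^{(j)}_\bx(A)$, and then applies the FKG inequality to the functions $R\mapsto\Pr[i\in\pi^{(j)}_\bx(R)\mid R]$. Unfortunately your sequential route has a real gap, in two places.

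\medskip
\textbf{Monotonicity does not survive sequential composition.} Your claim that ``a composition of coordinatewise-monotone maps is coordinatewise monotone'' is false. Take $N=\{1,2,3\}$ and deterministic schemes $\pi^{(1)}(A)=A\setminus\{2\}$ if $3\in A$ and $A$ otherwise, and $\pi^{(2)}(A)=A\setminus\{1\}$ if $2\in A$ and $A$ otherwise. Each is monotone in the sense of Definition~\ref{defn:crscheme} (check element by element). But $\pi^{(2)}\!\circ\!\pi^{(1)}(\{1,2\})=\pi^{(2)}(\{1,2\})=\{2\}$ drops element $1$, while $\pi^{(2)}\!\circ\!\pi^{(1)}(\{1,2,3\})=\pi^{(2)}(\{1,3\})=\{1,3\}$ keeps it; so the composition is \emph{not} monotone for element $1$. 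The point is that monotonicity of $\pi^{(1)}$ is only a per-element marginal statement and gives no coupling with $\pi^{(1)}(S)\subseteq\pi^{(1)}(T)$ when $S\subseteq T$, which is what your inductive monotonicity argument tacitly needs. The paper's intersection construction avoids this entirely: since each factor $\Pr[i\in\pi^{(j)}_\bx(A)]$ is non-increasing in $A$ and the factors are independent given $A$, their product is non-increasing too.

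\medskip
\textbf{The coupling alone does not give $\Pr[i\in\pi^{(j)}(A_{j-1})\mid i\in A_{j-1}]\ge c_j$.} Using $A_{j-1}\subseteq R$ and monotonicity of $\pi^{(j)}$ only gives, for each realization, $\Pr[i\in\pi^{(j)}(A_{j-1})\mid R,A_{j-1}]\ge g_j(R)$ where $g_j(R):=\Pr[i\in\pi^{(j)}_\bx(R)]$. But the $(b,c_j)$-guarantee is $\E[g_j(R)\mid i\in R]\ge c_j$, an average over $R$; and once you condition on $i\in A_{j-1}$ you have reweighted the law of $R$. To pass from the pointwise bound to $\E[g_j(R)\mid i\in A_{j-1}]\ge c_j$ you still need a positive-correlation argument (FKG) using that both $g_j$ and the reweighting factor are non-increasing in $R$. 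So even on the balancedness side your ``explicit coupling'' proposal is not a substitute for FKG; it reduces to exactly the inequality the paper invokes for its intersection scheme.
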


Therefore, we can proceed as follows.  Let $P_{\cI_i}$ be a polytope
that is the relaxation for $\cI_i$. In other words $\{\b1_{S} : S \in
\cI_i\}$ is contained in $P_{\cI_i}$.  Let $P_{\cI} = \cap_i
P_{\cI_i}$. It follows that $\{\b1_{S} : S \in \cI\}$ is contained in
$P_{\cI}$ and also that there is a polynomial-time separation oracle
for $P_{\cI}$ if there is one for each $P_{\cI_i}$.  Now suppose there
is a monotone $(b, c_i)$-balanced contention resolution scheme for
$P_{\cI_i}$ for some common choice of $b$. It follows from
Lemma~\ref{lem:compose} that $P_{\cI}$ has a monotone $(b,\prod_i
c_i)$-balanced contention resolution scheme, which can be transformed
into a $(b\prod_i c_i)$-balanced scheme for $P_\cI$.  We can then
apply Theorem~\ref{thm:framework} to obtain a randomized $(\alpha b
\prod_i c_i)$-approximation for $\max_{S \in \cI} f(S)$ where $\alpha$
depends on whether $f$ is modular, monotone submodular or non-monotone
submodular.

In this paper we focus on the framework with a small list of
high-level applications. We have not attempted to optimize for the
best possible approximation for special cases. We add two remarks that
are useful in augmenting the framework.

\begin{remark}
  Whenever the rounding step of our framework is performed by a \crs
  scheme that was obtained from a $(b,c)$-balanced CR scheme---in particular
  in the context mentioned above when combining \crs schemes for different
  constraints---we can often strengthen the procedure as follows.
  Instead of approximately solving $\max_{\bx\in P_\cI} F(\bx)$,
  we can approximately solve $\max_{y\in b P_\cI} F(\by)$ to obtain $\by^*$,
  and then directly apply the $(b,c)$-balanced scheme to $\by^*$, without
  transforming it first to a $bc$-balanced scheme.
  This may be advantageous if
  the problem $\max_{\by \in b P_{\cI}} F(\by)$ admits a direct approximation
  better than one obtained by scaling from $\max_{\by \in P_{\cI}} F(\by)$.
  A useful fact here is that  the continuous greedy algorithm for monotone submodular
  functions \cite{Vondrak09,CCPV09} finds
  for every $b \in [0,1]$ a point $\by^* \in b P_{\cI}$ such
  that $F(\by^*) \geq (1-e^{-b}) \max_{\bx \in P_{\cI}} F(\bx)$.
  This is indeed a stronger guarantee than the one obtained by first applying
  the continuous greedy to $P_\cI$ to obtain $\bx^*$, and then used the
  scaled-down version $b\bx^*$, which leads to a guarantee of 
  only $F(b\bx^*) \geq bF(\bx^*) \geq b(1-e^{-1})\max_{\bx\in P_\cI} F(\bx)$.
\end{remark}

\begin{remark}
  \label{remark:subadditive}
  A non-negative submodular set function $f$ is also subadditive,
  that is, $f(A) + f(B) \ge f(A \cup B)$. In some settings when
  considering the problem $\max_{S \in \cI} f(S)$, it may be
  advantageous to partition the given ground set $N$ into $N_1,\ldots,
  N_h$, separately solve the problem on each $N_i$, and then return
  the best of these solutions. This loses a
  factor of $h$ in the approximation but one may be able to obtain
  a good \crs scheme for each $N_i$ separately while it may not be 
  straightforward to obtain one for the entire set $N$.
\end{remark}

An application of the technique mentioned in Remark~\ref{remark:subadditive}
can be found in Section~\ref{sec:CPIP}, where we use it in the context of
column-restricted packing constraints.

\mypara{Organization:} The rest of the paper is divided into three
parts.  Some illustrative applications of our framework are discussed
in Section~\ref{sec:apps}.  Constant factor approximation algorithms
for maximizing $F$ over a solvable polytope are described in
Section~\ref{sec:multilinear}.  Section~\ref{sec:CRS} discusses the
construction of \crs schemes. This include a discussion of the
connection between contention resolution schemes and correlation gap
and its use in deriving optimal schemes for matroids. Furthermore, in
the same section, we present \crs schemes for knapsack
constraints, sparse packing systems, and UFP in paths and trees.

\section{Applications}
\label{sec:apps}

In this section we briefly outline some concrete results that can be
obtained via our framework. The meta-problem we are interested in
solving is $\max_{S \in \cI} f(S)$ where $\cI$ is a downward-closed
family over the given ground set $N$ and $f$ is a non-negative
submodular set function over $N$. Many interesting problems can be
cast as special cases depending on the choice of $N$, $\cI$ and $f$.
In order to apply the framework and obtain a polynomial-time
approximation algorithm, we need a solvable relaxation $P_{\cI}$ and a
corresponding $(b,c)$-balanced \crs scheme. Note that the framework is essentially
indifferent to $f$ as long as we have a polynomial-time value oracle
for it. We therefore focus on some broad classes of constraints
and corresponding natural polyhedral relaxations, and discuss 
\crs schemes that can be obtained for them. These schemes are formally
described in Section~\ref{sec:CRS}.

\smallskip
\noindent{\em Matroids and matchoids:}
Let $\cM = (N,\cI)$ be a matroid constraint on $N$. A natural
candidate for $P_{\cI}$ is the integral matroid polytope $\{ x \in
[0,1]^n \mid x(S) \le r(S), S \subseteq N\}$ where $r:2^N \rightarrow
\mathbb{Z}_+$ is the rank function of $\cM$.  We develop an optimal
$(1-1/e)$-balanced \crs scheme for the matroid polytope. More
generally, for any $b \in (0,1]$ we design a
$(b,\frac{1-e^{-b}}{b})$-balanced \crs scheme, which lends itself well
to combinations with other constraints. The \crs scheme for the
matroid polytope extends via Lemma~\ref{lem:compose} to the case when
$\cI$ is induced by the intersection of $k$ matroid constraints on
$N$. A more general result is obtained by considering $k$-uniform
matchoids, a common generalization of $k$-set packing and intersection
of $k$ matroids \cite{LeeSV10}, defined as follows. Let $G=(V, N)$ be
a $k$-uniform hypergraph; we associate the edges of the hypergraph
with our ground set $N$. For each $v \in V$, there is a matroid
$\cM_v=(N_v, \cI_v)$ over $N_v$, set of hyperedges in $N$ that contain
$v$. This induces an independence family $\cI$ on $N$ where $\cI = \{
S \subseteq N \mid S \cap N_v \in \cI_v, v \in V\}$.  $k$-uniform
matchoids generalize the intersection of $k$ matroids in that they
allow many matroids in the intersection as long as a given element of
the ground set participates in at most $k$ of them.  A natural
solvable relaxation for $\cI$ is the intersection of the matroid
polytopes at each $v$. Via the \crs scheme for the single matroid
and Lemma~\ref{lem:compose} we
obtain a $(b, (\frac{1-e^{-b}}{b})^k)$-balanced \crs scheme for any $b
\in (0,1]$ for $k$-uniform matchoids.  The choice of $b =
\frac{2}{k+1}$ gives a $\frac{2}{e(k+1)}$-balanced \crs scheme for
every $k$-uniform matchoid.
  
\medskip
\noindent{\em Knapsack / linear packing constraints:}
Let $N = \{1,2,\ldots, n\}$. Given a non-negative $m \times n$ matrix
$A$ and non-negative vector $\bb$, let $\cI = \{ S \mid A \b1_S \le
\bb \}$ where $\b1_S$ is the indicator vector of set $S \subseteq
N$. It is easy to see that $\cI$ is an independence family.  A natural
LP relaxation for the problem is $P_\cI = \{ \bx \mid A\bx \le \bb, x
\in [0,1]^n\}$.  The {\em width} of the system of inequalities is
defined as $W=\floor{\min_{i,j} b_i/A_{i,j}}$.  Some special cases of
interest are (i) $A$ is a $\{0,1\}$-matrix, (ii) $A$ is
column-restricted, that is, all non-zero entries in each column are
the same and (iii) $A$ is $k$-column sparse, that is at most $k$
non-zero entries in each column. Several combinatorial problems can be
captured by these, such as matchings and independent sets in graphs and
hypergraphs, knapsack and its variants, and maximum throughput routing
problems. 
However, the maximum independent set problem in graphs, which is
a special case as mentioned, does not allow a $n^{1-\eps}$-approximation
for any fixed $\eps > 0$, unless P$=$NP~\cite{Hastad}.
Therefore attention has
focused on restricting $A$ in various ways and obtaining upper bounds
on the integrality gap of the relaxation $P_\cI$ when the objective
function is linear. Several of these results are based on randomized
rounding of a fractional solution and one can interpret the rounding
algorithms as \crs schemes. We consider a few such results below.
  \begin{itemize}
  \item For a constant number of knapsack constraints ($m=O(1)$),
    by guessing and enumeration tricks, one can ``effectively'' get a
    $(1-\eps,1-\eps)$-balanced \crs scheme for any fixed $\eps > 0$. %
  \item When $A$ is $k$-column sparse, there is a $(b,
    1-2kb)$-balanced \crs scheme.  If $A$ has in addition width $W
    \geq 2$, there is a $(b, 1 - k (2eb)^{W-1})$ \crs scheme for any $b
    \in (0,1)$.  These results follow from \cite{BansalKNS10}.
  \item When $A$ is a $\{0,1\}$-matrix induced by the problem of routing
    unit-demand paths in a capacitated path or tree, there is a $(b, 1 -
    O(b))$ \crs scheme implicit in \cite{CCKR01,CCGK07,ChekuriMS03}. This can be
    extended to the unsplittable flow problem (UFP) in capacitated paths and trees via
    grouping and scaling techniques
    \cite{KolliopoulosS04,ChekuriMS03,ChekuriEK09}.
\end{itemize}

\smallskip
Section~\ref{sec:CRS} has formal details of the claimed
\crs schemes. There are other rounding schemes in the literature for
packing problems, typically developed for linear functions, that can
be reinterpreted as \crs schemes. Our framework can then be used to
obtain algorithms for non-negative submodular set functions. See
\cite{ChanH09} for a recent and illuminating example.

\mypara{Approximation algorithms.}  The \crs schemes mentioned above
when instantiated with suitable parameters and plugged into our
general framework yield several new randomized polynomial-time
approximation algorithms for problems of the form $\max_{S \in \cI}
f(S)$, where $f$ is non-negative submodular. We remark that
these results are for some-what abstract problems and one can obtain
more concrete results by specializing them and improving the constants.
We have not attempted to do so in this paper.

\begin{itemize}
\item If $\cI$ is the intersection of a fixed number of knapsack
  constraints, we achieve a $0.309$-approximation, improving the
  $(0.2-\eps)$-approximation from \cite{LeeMNS09} and a recent
  $(0.25-\eps)$-approximation \cite{KulikST10}. This is obtained
  via the $(1-\eps,1-\eps)$-balanced \crs scheme for a fixed number
  of knapsack constraints.
\item If $\cI$ is the intersection of a $k$-uniform matchoid and
  $\ell$ knapsack constraints with $\ell$ a fixed constant, we obtain
  an $\Omega(\frac{1}{k})$-approximation (constant independent of $\ell$),
  which improves the bound of $\Omega(\frac{1}{k+\ell})$ from \cite{GuptaNR10}.
  We remark that this is a new result even for linear objective functions. 
  We obtain this by choosing $b = \Omega(1/k)$ and using the 
  $(b, (\frac{1-e^{-b}}{b})^k)$-balanced \crs scheme for $k$-uniform matchoids
  and the $(1-\eps,1-\eps)$-balanced \crs scheme for a fixed number of
  knapsack constraints (this requires a separate preprocessing step).
\item If $\cI$ is the intersection of a $k$-uniform matchoid and an
  $\ell$-sparse knapsack constraint system of width $W$, we give an
  $\Omega(\frac{1}{k+\ell^{1/W}})$-approximation, improving the
  $\Omega(\frac{1}{k \ell})$ approximation from \cite{GuptaNR10}.
  This follows by combining the \crs schemes for $k$-uniform matchoid
  and $\ell$-column sparse packing constraints with a choice of
  $b = \Omega(\frac{1}{k+\ell^{1/W}})$.
\item We obtain a constant factor approximation for maximizing a
  non-negative submodular function of routed requests in a capacitated
  path or tree. Previously an $O(1)$ approximation was known for
  linear functions \cite{CCKR01,CCGK07,ChekuriMS03,ChekuriEK09}.
\end{itemize}

\section{Solving the multilinear relaxation for non-negative
  submodular functions}
\label{sec:multilinear}

In this section, we address the question of solving the problem $\max
\{F(\bx): \bx \in P\}$ where $F$ is the multilinear extension of a
submodular function. As we already mentioned, due to
\cite{Vondrak08,CCPV09}, there is a $(1-1/e)$-approximation for the
problem $\max \{F(\bx): \bx \in P\}$ whenever $F$ is the multilinear
extension of a monotone submodular function and $P$ is any solvable
polytope.  Here, we consider the maximization of a possibly {\em
  non-monotone submodular function} over a down-monotone solvable
polytope. We assume in the following that $P \subseteq [0,1]^N$ is a
down-monotone solvable polytope and $F:[0,1]^N \rightarrow \RR_+$ is
the multilinear extension of a submodular function. We present two
algorithms for this problem.  As we noted in the introduction, there
is no constant-factor approximation for maximizing non-monotone
submodular functions over general---i.e., not necessarily
down-monotone---solvable polytopes \cite{Vondrak09}. The approximation
that can be achieved for matroid base polytopes is proportional to
$1-1/\nu$ where $\nu$ is the fractional packing number of bases (see
\cite{Vondrak09}), and in fact this trade-off generalizes to arbitrary
solvable polytopes; we discuss this in
Appendix~\ref{sec:general-polytopes}.

\subsection{Continuous local-search}

Here we present our first algorithm for the problem $\max \{F(\bx): \bx \in P\}$.
We remark that in the special case of multiple knapsack constraints,
this algorithm is equivalent to the algorithm of \cite{LeeMNS09}.

First we consider a natural local-search algorithm that tries to find
a local optimum for $F$ in the polytope $P$. For a continuous function
$g$ defined over a convex set $C \subseteq \mathbb{R}^n$, a point $\bx
\in C$ is a local optimum (in particular, a maximum), if $g(\bx) \ge
g(\bx')$ for all $\bx' \in C$ in a neighborhood of $\bx$.  If $g$ is
differentiable over $C$, a first-order necessary condition for $\bx$
to be a local maximum is that $(\by-\bx) \cdot \nabla g(\bx) \le 0$
for all $\by \in C$. If $g$ is in addition a concave function then
this is in fact sufficient for $\bx$ to be a global maximum. However,
in general the first-order necessary condition is not sufficient to
guarantee even a local optimum. Although sufficient conditions based on
second-order partial derivatives exist, it is non-trivial to find a
local optimum or to certify that a given point $\bx$ is a local
optimum.  Our algorithms and analysis rely only on finding a point
which satisfies (approximately) the first-order necessary condition.
Hence, this point is not necessarily a local optimum in the classical
sense.  Nevertheless, for notational convenience we refer to any such point
as a local optimum (sometimes such a point is referred to as a
constrained critical point). A simple high-level procedure to find
such a local optimum for $F(\bx)$ in $P$---which does not consider
implementability---is the following. We will subsequently discuss how
to obtain an efficient version of this high-level approach that
returns an approximate local optimum.

\begin{algorithm}{Continuous local search:}
\label{alg:local-search}
Initialize $\bx:=0$. As long as there is $\by \in P$ such that
$(\by-\bx) \cdot \nabla F(\bx) > 0$, %
move $\bx$ continuously in the direction $\by-\bx$. If
there is no such $\by \in P$, return $\bx$.
\end{algorithm}

This algorithm is similar to gradient descent (or
rather ascent), and without considering precision and convergence
issues, it would be equivalent to it.  The importance of the
particular formulation that we stated here will become more clear when
we discretize the algorithm, in order to argue that it terminates in
polynomial time and achieves a solution with suitable properties.

The objective function $F$ is not concave; however, submodularity
implies that along any non-negative direction $F$ is concave (see
\cite{Vondrak08,CCPV09}). This leads to the following basic lemma and
its corollary about local optima that we rely on in the analysis of
our algorithms. In the following $\bx \vee \by$ denotes
the vector obtained by taking the coordinate-wise maximum of 
the vectors $\bx$ and $\by$; and $\bx \wedge \by$ denotes
the vector obtained by taking the coordinate-wise minimum.

\begin{lemma}
\label{lem:G-bound}
For any two points $\bx, \by \in [0,1]^N$ and the multilinear extension $F:[0,1]^N \rightarrow \RR$ of a submodular function,
$$ (\by-\bx) \cdot \nabla F(\bx) \geq F(\bx \vee \by) + F(\bx \wedge \by) - 2 F(\bx).$$
\end{lemma}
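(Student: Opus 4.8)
The plan is to exploit the fact that the multilinear extension $F$ is multilinear (affine in each coordinate separately), which makes $\nabla F(\bx)$ easy to write down explicitly, and then to relate the directional derivative $(\by-\bx)\cdot\nabla F(\bx)$ to values of $F$ at the corners $\bx\vee\by$ and $\bx\wedge\by$ via the concavity of $F$ along nonnegative directions. First I would recall the standard partial-derivative identity: since $F$ is multilinear,
$$\partdiff{F}{x_i}(\bx) = F(\bx \vee \be_i) - F(\bx \wedge (\b1 - \be_i)) = \E[f(R(\bx)+i)] - \E[f(R(\bx)-i)],$$
i.e. the derivative in coordinate $i$ equals the expected marginal gain of $i$ with respect to the random set $R(\bx)$, and in particular is independent of $x_i$ itself.

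The key decomposition is to split the direction $\by - \bx$ into its positive part and negative part relative to $\bx$. Write $\bz^+ = (\bx \vee \by) - \bx \ge \bzero$ (supported on coordinates where $y_i > x_i$) and $\bz^- = \bx - (\bx \wedge \by) \ge \bzero$ (supported on coordinates where $y_i < x_i$); these have disjoint supports and $\by - \bx = \bz^+ - \bz^-$. Then
$$(\by - \bx)\cdot \nabla F(\bx) = \bz^+ \cdot \nabla F(\bx) - \bz^-\cdot\nabla F(\bx).$$
For the first term I would move from $\bx$ in the nonnegative direction $\bz^+$: since $F$ is concave along nonnegative directions (as noted in the excerpt, from \cite{Vondrak08,CCPV09}), the function $t\mapsto F(\bx + t\bz^+)$ is concave, so its derivative at $t=0$, which is $\bz^+\cdot\nabla F(\bx)$, is at least the average slope over $[0,1]$, giving $\bz^+\cdot\nabla F(\bx) \ge F(\bx+\bz^+) - F(\bx) = F(\bx\vee\by) - F(\bx)$. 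For the second term I would instead look at $F(\bx) - F(\bx - t\bz^-)$ as a function of $t$: again by concavity along the nonnegative direction $\bz^-$ (traversed backwards from $\bx$), the derivative $\bz^-\cdot\nabla F(\bx)$ is at most the average slope over $[0,1]$, so $-\bz^-\cdot\nabla F(\bx) \ge F(\bx - \bz^-) - F(\bx) = F(\bx\wedge\by) - F(\bx)$. Adding the two bounds yields exactly $(\by-\bx)\cdot\nabla F(\bx) \ge F(\bx\vee\by) + F(\bx\wedge\by) - 2F(\bx)$.

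The main subtlety — and the step I would be most careful about — is the direction of the concavity inequalities: one needs the derivative at the endpoint to be an \emph{upper} bound or \emph{lower} bound on the secant slope depending on which end of the segment one sits at, and the sign of $\bz^-$ enters with a minus sign, so it is easy to flip an inequality. Concretely, for a concave $\phi$ on $[0,1]$ one has $\phi'(0) \ge \phi(1)-\phi(0) \ge \phi'(1)$; I apply the left inequality to $\phi(t) = F(\bx+t\bz^+)$ and, for the negative part, I apply the \emph{right} inequality to $\psi(t) = F((\bx-\bz^-) + t\bz^-)$ at $t=1$, noting $\psi'(1) = \bz^-\cdot\nabla F(\bx)$ and $\psi(1)-\psi(0) = F(\bx) - F(\bx\wedge\by)$. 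Everything else is bookkeeping: checking that $\bx+\bz^+ = \bx\vee\by$ and $\bx - \bz^- = \bx\wedge\by$ lie in $[0,1]^N$ (they do, since they are coordinatewise between $\bx$ and $\by$), and that the disjoint-support decomposition of $\by-\bx$ is valid.
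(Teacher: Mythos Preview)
Your proof is correct and follows essentially the same approach as the paper: both use concavity of $F$ along the nonnegative directions $(\bx\vee\by)-\bx$ and $\bx-(\bx\wedge\by)$ to bound the two pieces of the directional derivative, then add. The only cosmetic difference is that the paper phrases the recombination via the identity $(\bx\vee\by)+(\bx\wedge\by)=\bx+\by$ rather than explicitly decomposing $\by-\bx=\bz^+-\bz^-$ as you do.
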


\begin{proof}
By submodularity, $F$ is concave along any line with a nonnegative direction vector,
such as $(\bx \vee \by) - \bx \geq 0$. Therefore,
$$\begin{aligned}
F(\bx \vee \by) - F(\bx) &\leq ((\bx \vee \by) - \bx) \cdot \nabla F(\bx),
\text{ and similarly }\\
 F(\bx \wedge \by) - F(\bx) &\leq ((\bx \wedge \by) - \bx) \cdot \nabla F(\bx),
\end{aligned}$$
because of the concavity of $F$ along direction $(\bx \wedge \by) - \bx \leq 0$.
Adding up these two inequalities, we get
$ F(\bx \vee \by) + F(\bx \wedge \by) - 2 F(\bx) \leq ((\bx \vee \by) + (\bx \wedge \by) - 2 \bx)
 \cdot \nabla F(\bx).$
It remains to observe that $(\bx \vee \by) + (\bx \wedge \by) = \bx + \by$, which proves the lemma.
\end{proof}

\begin{corollary}
\label{cor:local-opt}
If $\bx$ is a local optimum in $P$, i.e.~$(\by - \bx) \cdot \nabla F(\bx) \leq 0$ for all $\by \in P$, then
$2 F(\bx) \geq F(\bx \vee \by) + F(\bx \wedge \by)$  for any $\by \in P$.
\end{corollary}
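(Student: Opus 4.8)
The plan is to deduce the corollary directly from Lemma~\ref{lem:G-bound}, which supplies the only nontrivial content. Fix an arbitrary $\by \in P$. Since $P \subseteq [0,1]^N$, both $\bx$ and $\by$ lie in $[0,1]^N$, so Lemma~\ref{lem:G-bound} applies and yields
\[
(\by - \bx) \cdot \nabla F(\bx) \;\geq\; F(\bx \vee \by) + F(\bx \wedge \by) - 2 F(\bx).
\]

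Next I would invoke the hypothesis that $\bx$ is a local optimum in $P$: by assumption $(\by - \bx) \cdot \nabla F(\bx) \leq 0$ for every $\by \in P$, and in particular for the $\by$ we fixed. Chaining the two inequalities gives
\[
0 \;\geq\; (\by - \bx) \cdot \nabla F(\bx) \;\geq\; F(\bx \vee \by) + F(\bx \wedge \by) - 2 F(\bx),
\]
and rearranging the outer terms yields $2 F(\bx) \geq F(\bx \vee \by) + F(\bx \wedge \by)$. Since $\by \in P$ was arbitrary, this establishes the claim.

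The only point that warrants a moment's care is checking that the objects in Lemma~\ref{lem:G-bound} are legitimate in this application: the lemma is stated for \emph{any} pair of points in $[0,1]^N$ and does not require $\bx \vee \by$ or $\bx \wedge \by$ to belong to $P$, so no feasibility issue arises, and $F$ is defined on all of $[0,1]^N$. Consequently there is essentially no obstacle here — the substantive work is entirely in Lemma~\ref{lem:G-bound} (which in turn rests on concavity of $F$ along nonnegative directions), and Corollary~\ref{cor:local-opt} is simply its specialization to a point satisfying the first-order optimality condition.
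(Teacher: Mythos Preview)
Your proof is correct and is exactly the intended argument: the paper states Corollary~\ref{cor:local-opt} immediately after Lemma~\ref{lem:G-bound} without proof, since it follows at once by combining the lemma's inequality with the first-order optimality condition, just as you do. Your remark that Lemma~\ref{lem:G-bound} does not require $\bx \vee \by$ or $\bx \wedge \by$ to lie in $P$ is also on point.
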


\subsection{Discretized local search}

What follows is a discretization of Algorithm~\ref{alg:local-search},
which is the one we actually use in our framework.  Let $M=
\max\{f(i), f(N-i) : i\in N\}$. Notice that $M$ is an upper bound on
the maximum absolute marginal value of any element, i.e., $M\geq
\max_{S,i} |f_S(i)| = \max \{ f(i)-f(\emptyset), f(N-i) - f(N): i \in
N \}$.  By subadditivity, we have $|f(S)| \leq Mn$ for all $S$.  It
can be also verified easily that $|\partdiff{F}{x_i}| \leq M$ and
$|\mixdiff{F}{x_i}{x_j}| \leq 2M$ for all $i,j$ (see
\cite{Vondrak09}).  We pick a parameter $q=n^a$ for some sufficiently
large constant $a>3$ and maintain a convex combination $\bx =
\frac{1}{q} \sum_{i=1}^{q} \bv_i$, where $\bv_i$ are certain points in
$P$ (without loss of generality vertices, with possible repetition).
Each discrete step corresponds to replacing a vector in the convex
combination by another.  Instead of the gradient $\nabla{F}(\bx)$, we
use an estimate of its coordinates $\partdiff{F}{x_i}$ by random
sampling.  We use the following lemma to control the errors in our
estimates.

\begin{lemma}
  Let $\tilde{F}(\bx) = \frac{1}{H} \sum_{h=1}^{H} f(R_h)$ where $R_h$
  is a random set sampled independently with probabilities $x_i$. Let
  $H = n^{2a+1}$, $\delta = M / n^{a-1}$ and $M = \max\{f(i), f(N-i):
  i\in N\}$. Then the probability that $|\tilde{F}(\bx) - F(\bx)| >
  \delta$ is at most $2e^{-n/8}$.
\end{lemma}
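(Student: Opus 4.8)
The statement is a routine concentration bound, and the plan is to apply Hoeffding's inequality to the average of $H$ independent and identically distributed bounded random variables. First, observe that by the probabilistic interpretation of the multilinear extension, each $f(R_h)$ is a random variable with expectation exactly $F(\bx) = \E[f(R(\bx))]$, and since the sets $R_1,\dots,R_H$ are sampled independently, the variables $f(R_1),\dots,f(R_H)$ are i.i.d.\ with mean $F(\bx)$, and $\tilde{F}(\bx) = \frac{1}{H}\sum_{h=1}^H f(R_h)$ is their empirical mean.

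Next, I would invoke the boundedness already recorded in the excerpt: $f$ is non-negative and subadditive, so $0 \le f(S) \le Mn$ for every $S \subseteq N$, where $M = \max\{f(i), f(N-i) : i \in N\}$. Hence each $f(R_h)$ takes values in an interval of length $Mn$. Hoeffding's inequality for the average of $H$ independent variables, each confined to an interval of length $L$, gives
$$\Pr\big[\,|\tilde{F}(\bx) - F(\bx)| > \delta\,\big] \;\le\; 2\exp\!\left(-\frac{2H\delta^2}{L^2}\right),$$
and here $L = Mn$.

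The final step is bookkeeping: substitute $H = n^{2a+1}$, $\delta = M/n^{a-1}$, and $L = Mn$ into the exponent. This yields
$$\frac{2H\delta^2}{L^2} \;=\; \frac{2\, n^{2a+1}\, (M^2/n^{2a-2})}{M^2 n^2} \;=\; 2n,$$
so in fact $\Pr[|\tilde{F}(\bx) - F(\bx)| > \delta] \le 2e^{-2n}$, which is stronger than the claimed $2e^{-n/8}$. (One could equally use the symmetric range $[-Mn, Mn]$ of length $2Mn$ and still obtain an exponent of $n/2 \ge n/8$; the slack is intentional, since later the lemma is applied with a union bound over the polynomially many estimates made during the algorithm, and the margin $2e^{-\Omega(n)}$ comfortably absorbs this.) There is no real obstacle here; the only thing to be careful about is citing the $|f(S)| \le Mn$ bound correctly and noting that the $R_h$ are sampled with the exact marginals $x_i$, so that there is no bias term and $\E[\tilde F(\bx)] = F(\bx)$ holds on the nose.
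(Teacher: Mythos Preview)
Your proof is correct and takes essentially the same approach as the paper: both apply a Hoeffding/Chernoff-type tail bound to the i.i.d.\ bounded variables $f(R_h)$, using $|f(S)| \le Mn$. The only difference is cosmetic---the paper first normalizes $X_h = \frac{1}{2Mn}(f(R_h)-F(\bx))$ to lie in $[-1,1]$ and applies the Alon--Spencer form $\Pr[|\sum X_h|>t]<2e^{-t^2/2H}$, which is slightly more conservative than your direct Hoeffding computation and accounts for the weaker exponent $n/8$ versus your $2n$.
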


\begin{proof}
  Let us define $X_h = \frac{1}{2Mn} (f(R_h) - F(\bx))$, a random
  variable bounded by $1$ in absolute value.  By definition, $\E[X_h]
  = 0$. By the Chernoff bound, $ \Pr[|\sum_{h=1}^{H} X_h| > t] <
  2e^{-t^2 / 2H}$ (see Theorem A.1.16 in \cite{AlonS-book}). We set $H
  = n^{2a+1}$ and $t = \frac12 n^{a+1}$, and obtain $
  \Pr[|\tilde{F}(\bx) - F(\bx)| > M / n^{a-1}] = \Pr[|\sum_{h=1}^{H}
  X_h| > \frac12 n^{a+1}] < 2e^{-n/8}.$
\end{proof}

Given estimates of $F(\bx)$, we can also estimate $\partdiff{F}{x_i} =
F(\bx \vee \be_i) - F((\bx \vee \be_i) - \be_i) = \E[f(R+i) -
f(R-i)]$.  The above implies the following bound.

\begin{corollary}
  Let $\delta = M / n^{a-1}$.  If the total number of evaluations of
  $F$ and $\partdiff{F}{x_i}$ is bounded by $n^b$ and each estimate is
  computed independently using $n^{2a+1}$ samples, then with
  probability at least $1 - O(n^b e^{-n/8})$ {\em all} the estimates are
  within $\pm \delta$ additive error .
\end{corollary}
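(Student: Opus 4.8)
The plan is to derive this from the preceding lemma by a routine union bound, once one checks that every quantity the algorithm samples --- not just $F(\bx)$ itself --- enjoys the same single-estimate concentration.

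First I would note that each partial-derivative estimate is also an empirical average of $n^{2a+1}$ i.i.d.\ bounded random variables with the correct mean. Writing $\partdiff{F}{x_i} = \E[f(R+i) - f(R-i)]$ with $R$ sampled coordinatewise with probabilities $x_j$, the per-sample quantity $f(R_h+i) - f(R_h-i)$ equals a marginal value $f_{R_h - i}(i)$, which is bounded in absolute value by $M$, and $|\partdiff{F}{x_i}| \le M$ as well. Hence the centered summands are bounded by $2M$, and applying the Chernoff bound (Theorem A.1.16 of \cite{AlonS-book}) to $X_h = \frac{1}{2M}\bigl(f(R_h+i) - f(R_h-i) - \partdiff{F}{x_i}\bigr)$ with the threshold $t$ needed to force additive error $\delta = M/n^{a-1}$ gives $t = \tfrac12 n^{a+2}$ and a tail bound of $2e^{-n^3/8} \le 2e^{-n/8}$ --- in fact stronger than what the lemma gives for $F$. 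The one subtlety to get right is to estimate $\partdiff{F}{x_i}$ \emph{directly} through this expectation form rather than by subtracting two independent estimates of $F$, which would double the additive error for no benefit.

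With the single-estimate failure probability for every relevant quantity pinned at $\le 2e^{-n/8}$, I would then take a union bound over the at most $n^b$ estimates the algorithm ever computes: the probability that some estimate deviates from its target by more than $\delta$ is at most $n^b \cdot 2e^{-n/8} = O(n^b e^{-n/8})$, so with probability at least $1 - O(n^b e^{-n/8})$ all estimates are simultaneously within $\pm\delta$.

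There is essentially no obstacle here; the only work is the bookkeeping above. Since $a$ and $b$ are fixed constants, the failure probability is $o(1)$, which is all that the algorithm's analysis will require.
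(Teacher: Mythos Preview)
Your proposal is correct and matches the paper's intended argument: the corollary is stated without proof as an immediate consequence of the preceding lemma plus a union bound, and you have filled in exactly those details. Your observation that the partial-derivative samples $f(R_h+i)-f(R_h-i)$ are bounded by $M$ rather than $Mn$ (yielding the sharper $2e^{-n^3/8}$ tail) and your remark about estimating $\partdiff{F}{x_i}$ directly rather than as a difference of two $F$-estimates are both accurate refinements that the paper glosses over.
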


The algorithm works as follows. The input to the algorithm is a submodular function $f$ given by a value oracle,
and a polytope $P$ given by a separation oracle.

\begin{algorithm}{Fractional local search.}
\label{alg:fr-local-search}
Let $q = n^a, \delta = M/n^{a-1}$. Let $\bx:= \frac{1}{q} \sum_{i=1}^{q} \bv_i$, and initialize $\bv_i = 0$ for all $i$.
Use estimates $\tilde{\nabla}F(\bx)$ of $\nabla F(\bx)$ within $\pm \delta$ in each coordinate.
As long as there is $\by \in P$ such that $(\by-\bx) \cdot \tilde{\nabla}F(\bx) > 4 \delta n$
(which can be found by linear programming), we modify $\bx:= \frac{1}{q} \sum_{i=1}^{q} \bv_i$
by replacing one of the vectors $\bv_i$ in the linear combination by $\by$, so that we maximize $F(\bx)$.
If there is no such $\by \in P$, return $\bx$.
\end{algorithm}

\begin{lemma}
Algorithm~\ref{alg:fr-local-search} terminates in polynomial time with high probability.
\end{lemma}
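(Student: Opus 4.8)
The strategy is the usual one for local-search arguments: show that each iteration increases $F(\bx)$ by a fixed inverse-polynomial amount, observe that $F$ is bounded, conclude that only polynomially many iterations can occur, and then check that a single iteration runs in polynomial time and that all sampled estimates are accurate with high probability. Throughout, $\bx$ is a convex combination of the $\bv_i \in P$, hence $\bx \in P$; and since $f \ge 0$ and $|f(S)| \le Mn$ by subadditivity, we have $0 \le F(\bx) \le Mn$ at every step.

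\emph{Per-iteration progress.} Suppose the current iterate is $\bx = \frac{1}{q}\sum_{i=1}^{q}\bv_i$ and the algorithm replaces some $\bv_j$ by $\by$, yielding $\bx' = \bx + \frac{1}{q}(\by - \bv_j)$. For the progress bound it suffices to take $j$ to be an index minimizing $\bv_j\cdot\tilde\nabla F(\bx)$; since $\bx$ is the average of the $\bv_i$, such a $j$ has $\bv_j\cdot\tilde\nabla F(\bx)\le\bx\cdot\tilde\nabla F(\bx)$, so by the while-loop condition $(\by-\bv_j)\cdot\tilde\nabla F(\bx)\ge(\by-\bx)\cdot\tilde\nabla F(\bx) > 4\delta n$. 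As $\tilde\nabla F(\bx)$ agrees with $\nabla F(\bx)$ to within $\delta$ in each coordinate and $\|\by-\bv_j\|_1\le n$, this gives $(\by-\bv_j)\cdot\nabla F(\bx) > 3\delta n$. Expanding $F$ around $\bx$ along the step $\frac{1}{q}(\by-\bv_j)$, whose coordinates are at most $1/q$ in absolute value, and using $|\mixdiff{F}{x_i}{x_j}|\le 2M$ (together with the analogous bounds on the higher mixed partials, available because $F$ is multilinear), the quadratic and higher-order contributions are $O(Mn^2/q^2)$, so
\[
F(\bx') - F(\bx) \;\ge\; \frac{1}{q}(\by-\bv_j)\cdot\nabla F(\bx) - O\!\big(Mn^2/q^2\big) \;>\; \frac{3\delta n}{q} - O\!\big(Mn^2/q^2\big).
\]
With $\delta = M/n^{a-1}$ and $q = n^a$ the leading term is $3Mn^{2-2a}$ and the error is $O(Mn^{2-2a})$ with a strictly smaller constant; this is precisely why the threshold is $4\delta n$ rather than merely positive and why $q$ is taken to be a large polynomial, and it yields $F(\bx')-F(\bx) = \Omega(Mn^{2-2a}) > 0$. (If one wants to follow the rule ``replace the $\bv_j$ maximizing $F(\bx')$'' literally, one estimates $F$ at each of the $q$ candidates to accuracy $\Theta(Mn^{2-2a})$, which still costs only $\mathrm{poly}(n)$ samples, and the same bound follows.) Since $F$ starts at $F(\bzero)\ge 0$ and never exceeds $Mn$, the number of iterations is at most $Mn/\Omega(Mn^{2-2a}) = O(n^{2a-1})$.

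\emph{Efficiency and the high-probability guarantee.} Each iteration computes $\tilde\nabla F(\bx)$ by estimating the $n$ quantities $\partdiff{F}{x_i} = \E[f(R+i)-f(R-i)]$ with $n^{2a+1}$ independent samples each, solves $\max\{\bw\cdot\tilde\nabla F(\bx):\bw\in P\}$ via the separation oracle for $P$ (a polynomial-time linear program), and performs one swap --- all in polynomial time. Over the $O(n^{2a-1})$ iterations, the total number of estimates of $F$ and of $\partdiff{F}{x_i}$ is bounded by $n^b$ for a constant $b$, so by the Corollary preceding the algorithm, with probability $1 - O(n^b e^{-n/8})$ all of them lie within $\pm\delta$. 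To keep this non-circular: if every estimate taken during the first $O(n^{2a-1})$ iterations is accurate, then the progress bound forces the algorithm to stop within that many steps; hence ``the algorithm runs longer'' is contained in ``one of the at most $n^b$ estimates made in those iterations is inaccurate'', an event of probability $O(n^b e^{-n/8}) = e^{-\Omega(n)}$ by a union bound. Thus with high probability the algorithm halts after polynomially many polynomial-time iterations.

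The step I expect to require the most care is the per-iteration progress bound, where one must simultaneously dominate both the gradient-estimation error $O(\delta n)$ and the second-order Taylor error $O(Mn^2/q^2)$ by the guaranteed linear gain $\frac{3\delta n}{q}$; this is exactly what dictates the value $4\delta n$ of the while-loop threshold and the choice of the discretization parameter $q = n^a$.
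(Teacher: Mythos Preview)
Your proof is correct and follows essentially the same approach as the paper: bound the per-iteration increase in $F$, use the trivial upper bound $F(\bx)\le Mn$ to cap the number of iterations, and invoke the sampling corollary to ensure all gradient estimates are accurate with high probability. The only technical difference is that you identify a good swap by choosing $j$ minimizing $\bv_j\cdot\tilde\nabla F(\bx)$ and analyzing that single step via a first-order Taylor bound, whereas the paper averages the gain over all $q$ possible swaps (using the mean-value theorem on each) to get $\E[F(\bx')-F(\bx)]\ge \tfrac{1}{q}((\by-\bx)\cdot\nabla F(\bx)-2\delta n)$ and then observes the best swap beats the average; both arguments yield the same $\Omega(Mn^{2-2a})$ progress and the same $O(n^{2a-1})$ iteration bound.
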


\begin{proof}
We show that if all estimates of $\nabla F$ computed during the algorithm
are within $\pm \delta$ in each coordinate---which happens with high
probability---then the algorithm terminates in polynomial time.
This implies the lemma since with high probability, we have
that a polynomial number of estimates of $\nabla F$ are indeed
all within $\pm \delta$ in each coordinate. Hence, we assume in the following that
all estimates $\tilde{\nabla} F$ of $\nabla F$ are within $\pm \delta$.

In each step, the algorithm continues only if it finds $\by \in P$ such that
$(\by - \bx) \cdot \tilde{\nabla}F(\bx) \geq 4 \delta n$. Since $\tilde{\nabla}F$
approximates $\nabla F$ within $\pm \delta$ in each coordinate, this means that
$(\by - \bx) \cdot \nabla F(\bx) \geq 3 \delta n$.
Denote by $\bx'$ a random vector that is obtained by replacing a random vector $\bv_i$ by $\by$,
in the linear combination $\bx = \frac{1}{q} \sum_{i=1}^{q} \bv_i$.
The expected effect of this change is
$$ \E[F(\bx') - F(\bx)] = \frac{1}{q} \sum_{i=1}^{q} \left(F\left(\bx + \frac{1}{q} (\by - \bv_i)\right) - F(\bx)\right)
= \frac{1}{q^2} \sum_{i=1}^{q} (\by - \bv_i) \cdot \nabla F(\tilde{\bx}_i) $$
where $\tilde{\bx}_i$ is some point on the line between $\bx$ and $\bx + \frac{1}{q} (\by - \bv_i)$,
following from the mean-value theorem. 
Since $q = n^a$ and the second partial derivatives of $F$ are bounded by $2M$, we get by standard bounds that
$||\nabla F(\tilde{\bx}_i) - \nabla F(\bx)||_1 \leq \frac{n^2}{q} \cdot 2M = \frac{2M}{n^{a-2}} = 2 \delta n$.
Using also the fact that $\by - \bv_i \in [-1,1]^n$,
$$ \E[F(\bx') - F(\bx)] \geq \frac{1}{q^2} \sum_{i=1}^{q} \left((\by - \bv_i) \cdot \nabla F(\bx)
  - 2 \delta n \right) = \frac{1}{q} \left((\by - \bx) \cdot \nabla F(\bx) - 2 \delta n \right) \geq \frac{1}{q} \cdot \delta n $$
using the fact that $(\by -\bx) \cdot \nabla F(\bx) \geq 3 \delta n$. Therefore, if we
exchange $\by$ for the vertex $\bv_i$ that maximizes our gain, we gain at least
$F(\bx') - F(\bx) \geq \frac{1}{q} \delta n = \frac{M}{n^{2a-2}}$. 
Also we have the trivial bound $\max F(\bx) \leq n M$; therefore the number of steps is bounded by $n^{2a-1}$.
\end{proof}

\begin{lemma}
\label{lem:fract-opt}
If $\bx$ is the output of Algorithm~\ref{alg:fr-local-search}, then
with high probability
$$ 2 F(\bx) \geq F(\bx \vee \by) + F(\bx \wedge \by) - 5 \delta n $$
for every $\by \in P$.
\end{lemma}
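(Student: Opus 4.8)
The plan is to combine the stopping condition of Algorithm~\ref{alg:fr-local-search} with the general concavity inequality of Lemma~\ref{lem:G-bound}, after first pinning down the high-probability event under which everything works. Concretely, I would fix the event $\mathcal{E}$ that \emph{all} of the (polynomially many) gradient estimates $\tilde{\nabla}F$ computed during the run are within $\pm\delta$ of $\nabla F$ in every coordinate; by the corollary following the sampling lemma, $\Pr[\mathcal{E}] \ge 1 - O(n^b e^{-n/8})$. On $\mathcal{E}$, the preceding lemma already guarantees that the algorithm halts in polynomial time and hence returns some $\bx = \frac{1}{q}\sum_{i=1}^q \bv_i \in P$; moreover the final estimate $\tilde{\nabla}F(\bx)$ that certified termination is, by definition of $\mathcal{E}$, within $\pm\delta$ of $\nabla F(\bx)$ coordinate-wise. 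From here the argument is deterministic, conditioned on $\mathcal{E}$.

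Next I would turn the stopping condition into an approximate first-order optimality bound. Termination means that for every $\by \in P$ we have $(\by-\bx)\cdot\tilde{\nabla}F(\bx) \le 4\delta n$. Since $\bx,\by \in P \subseteq [0,1]^N$, the vector $\by - \bx$ has every coordinate in $[-1,1]$, so the coordinate-wise error bound gives $|(\by-\bx)\cdot(\tilde{\nabla}F(\bx) - \nabla F(\bx))| \le n\delta$. Adding, $(\by - \bx)\cdot\nabla F(\bx) \le 4\delta n + n\delta = 5\delta n$ for every $\by \in P$; that is, $\bx$ satisfies the hypothesis of Corollary~\ref{cor:local-opt} up to an additive slack of $5\delta n$.

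Finally I would invoke Lemma~\ref{lem:G-bound} with this $\bx$ and an arbitrary $\by \in P$, which yields $F(\bx\vee\by) + F(\bx\wedge\by) - 2F(\bx) \le (\by-\bx)\cdot\nabla F(\bx) \le 5\delta n$, and this rearranges to the claimed inequality $2F(\bx) \ge F(\bx\vee\by) + F(\bx\wedge\by) - 5\delta n$. It is worth noting that one does \emph{not} need $\bx\vee\by$ or $\bx\wedge\by$ to lie in $P$: Lemma~\ref{lem:G-bound} holds for any two points of $[0,1]^N$, and $\bx\vee\by,\bx\wedge\by \in [0,1]^N$ automatically. There is no real analytic difficulty here; the only genuinely delicate point is the bookkeeping of the randomness --- one must use a single event $\mathcal{E}$ that simultaneously controls every estimate used, both those driving the iterations (needed for termination via the previous lemma) and the terminal one (needed for the gradient inequality), and one must track the $n$ factors coming from the coordinate-wise error of $\tilde{\nabla}F$ and the $[-1,1]^N$ range of $\by-\bx$ so that they sum to exactly $5\delta n$.
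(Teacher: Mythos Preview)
Your proposal is correct and follows essentially the same route as the paper's proof: use the stopping condition $(\by-\bx)\cdot\tilde{\nabla}F(\bx)\le 4\delta n$, pass to the true gradient at the cost of an extra $\delta n$ via the coordinate-wise accuracy of $\tilde{\nabla}F$, and then apply Lemma~\ref{lem:G-bound}. Your treatment is in fact slightly more careful than the paper's in making the high-probability event $\mathcal{E}$ and the $\|\by-\bx\|_\infty\le 1$ bound explicit.
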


\begin{proof}
  If the algorithm terminates, it means that for every $\by \in P$,
  $(\by - \bx) \cdot \tilde{\nabla}F(\bx) \leq 4 \delta n$.
  Considering the accuracy of our estimate of the gradient
  $\tilde{\nabla}F(\bx)$ (with high probability), this means that
  $(\by - \bx) \cdot {\nabla}F(\bx) \leq 5 \delta n$.  By
  Lemma~\ref{lem:G-bound}, we have $(\by - \bx) \cdot \nabla F(\bx)
  \geq F(\bx \vee \by) + F(\bx \wedge \by) - 2 F(\bx)$.  This proves
  the lemma.
\end{proof}

\subsection{Repeated local search: a 0.25-approximation}

Next, we show that how to design a $0.25$-approximation to the multilinear optimization problem using two runs
of the fractional local-search algorithm. The following is our algorithm.

\begin{algorithm}
\label{alg:2-stage}
Let $\bx$ be the output of Algorithm~\ref{alg:fr-local-search} on the polytope $P$.
Define $Q = \{\by \in P: \by \leq \b1-\bx\}$ and let $\bz$ be the output of Algorithm~\ref{alg:fr-local-search} on the polytope $Q$.
Return the better of $F(\bx)$ and $F(\bz)$.
\end{algorithm}

We use the following property of the
multilinear extension of a submodular function.  Let us replace each
coordinate by a $[0,1]$ interval and let us represent a certain value
$x_i$ of the $i$'th coordinate by a subset of $[0,1]$ of the corresponding measure.

\begin{definition}
\label{def:rep}
  Let $\cX \in \mathcal{L}^N$, where $\mathcal{L}$ denotes
  the set of all measurable subsets of $[0,1]$.
  We say that $\cX$ represents a vector $\bx \in [0,1]^N$,
  if $\cX_i$ has measure $x_i$ for each $i \in N$.
\end{definition}

From a "discrete point of view", we can imagine that each coordinate
is replaced by some large number of elements $M$ and a value of $x_i$
is represented by any subset of size $M x_i$. This can be carried out
if all the vectors we work with are rational.
In the following, we consider functions on subsets of this new ground
set. We show a natural property, namely that a function derived from
the multilinear extension of a submodular function is again
submodular. (An analogous property in the discrete case was
proved in \cite{MSV08,LeeMNS09}.)

\begin{lemma}
\label{lem:fract-submod}
Let $F:[0,1]^N \rightarrow \RR$ be a multilinear extension of a
submodular function $f$.  Define a function $F^*$ on $\mathcal{L}^N$,
by $F^*(\cX) = F(\bx)$, where $\bx \in [0,1]^N$ is the vector
represented by $\cX$.
Then $F^*$ is submodular:
$$ F^*(\cX \cup \cY) + F^*(\cX \cap \cY) \leq F^*(\cX) + F^*(\cY),$$
where the union and intersection is interpreted component-wise.
\end{lemma}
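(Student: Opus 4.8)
The plan is to reduce the claim to the ordinary submodularity of $f$ by writing $F^*(\cX)$ as an expectation, and exploiting the fact that on each coordinate, union and intersection of the representing sets correspond precisely to pointwise maximum and minimum of the measures. First I would recall that $F(\bx) = \E[f(R(\bx))]$ where $R(\bx)$ picks each $i$ independently with probability $x_i$. The key observation is a \emph{coupling}: given $\cX, \cY \in \mathcal{L}^N$, for each coordinate $i$ draw a single uniform random point $t_i \in [0,1]$, independently across $i$. Define the four indicator events $i \in A$ iff $t_i \in \cX_i$, and $i \in B$ iff $t_i \in \cY_i$. Then $A$ is distributed as $R(\bx)$, $B$ as $R(\by)$, and crucially $A \cup B$ is distributed as $R(\bx \vee \by)$ and $A \cap B$ as $R(\bx \wedge \by)$, because $t_i \in \cX_i \cup \cY_i$ has probability equal to the measure of $\cX_i \cup \cY_i$, which (by the componentwise interpretation) is exactly the $i$-th coordinate of the vector represented by $\cX \cup \cY$; similarly for intersection. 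Here I am using that the representation convention makes ``the vector represented by $\cX \cup \cY$'' have $i$-th coordinate equal to the measure of $\cX_i \cup \cY_i$.

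With this coupling in hand, I would write
$$ F^*(\cX \cup \cY) + F^*(\cX \cap \cY) = \E\big[f(A \cup B) + f(A \cap B)\big] \leq \E\big[f(A) + f(B)\big] = F^*(\cX) + F^*(\cY), $$
where the inequality is just submodularity of $f$ applied pointwise to the (deterministic, once $\{t_i\}$ are fixed) sets $A$ and $B$, then taking expectations. That is the whole argument.

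The only real subtlety — and the step I would be most careful about — is making precise that the measures of the componentwise unions and intersections are the coordinates of the ``vectors represented by $\cX \cup \cY$ and $\cX \cap \cY$'': in general the measure of a union is \emph{not} the sum minus the measure of the intersection unless one is careful, but in fact for any two measurable sets $\cX_i, \cY_i \subseteq [0,1]$ we do have $\mu(\cX_i \cup \cY_i) + \mu(\cX_i \cap \cY_i) = \mu(\cX_i) + \mu(\cY_i)$, so the representing vectors of $\cX \cup \cY$ and $\cX \cap \cY$ are exactly $\bx \vee \by$ and $\bx \wedge \by$ — wait, not quite: $\mu(\cX_i \cup \cY_i)$ need not equal $\max(x_i, y_i)$. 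This is where I must be honest: the statement $F^*(\cX) = F(\bx)$ only depends on the measures $x_i = \mu(\cX_i)$, so the representing vector of $\cX \cup \cY$ has coordinates $\mu(\cX_i \cup \cY_i)$, which can be anything in $[\max(x_i,y_i), \min(x_i+y_i,1)]$ depending on the overlap. Thus the lemma as coupled gives submodularity with respect to \emph{these} vectors, which is exactly what $F^*(\cX \cup \cY) + F^*(\cX \cap \cY) \le F^*(\cX) + F^*(\cY)$ asserts — no appeal to $\vee,\wedge$ is needed at all, and the coupling argument goes through verbatim. I would therefore present the proof purely via the coupling on uniform points $t_i$, apply submodularity of $f$ pointwise, and take expectations; the main obstacle is simply stating the measure identity $\mu(\cX_i \cup \cY_i) + \mu(\cX_i \cap \cY_i) = \mu(\cX_i) + \mu(\cY_i)$ cleanly and noting $R(\cdot)$ is determined by coordinate measures alone.
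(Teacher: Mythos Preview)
Your proposal is correct and takes essentially the same approach as the paper: couple all four random sets via a single uniform draw $t_i \in [0,1]$ per coordinate, observe that $\{i: t_i \in \cX_i \cup \cY_i\} = A \cup B$ and $\{i: t_i \in \cX_i \cap \cY_i\} = A \cap B$, and apply submodularity of $f$ pointwise under the expectation. The measure identity you mention at the end is actually not needed --- the set-level equalities $A \cup B = \{i: t_i \in (\cX \cup \cY)_i\}$ and $A \cap B = \{i: t_i \in (\cX \cap \cY)_i\}$ follow directly from the coupling, which is exactly how the paper argues.
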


\begin{proof}
  We have $F(\bx) = \E[f(\hat{\bx})]$ where $\hat{x}_i = 1$
  independently with probability $x_i$.  An equivalent way to generate
  $\hat{\bx}$ is to choose any set $\cX \in \mathcal{L}^N$
  representing $\bx$, generate uniformly and independently a number
  $r_i \in [0,1]$ for each $i \in N$, and set $\hat{x}_i=1$ iff $r_i
  \in \cX_i$. Since the measure of $\cX_i$ is $x_i$, $\hat{x}_i=1$
  with probability exactly $x_i$.  Therefore,
$$ F^*(\cX) = F(\bx) = \E[f(\hat{\bx})] = \E[f(\{i: r_i \in \cX_i\})]. $$
Similarly,
$$ F^*(\cY) = \E[f(\{i: r_i \in \cY_i\})]. $$
This also holds for $\cX \cup \cY$ and $\cX \cap \cY$: since $(\cX \cup \cY)_i = \cX_i \cup \cY_i$ and
 $(\cX \cap \cY)_i  = \cX_i \cap \cY_i$, we get
$$ F^*(\cX \cup \cY) %
 = \E[f(\{i: r_i \in \cX_i\} \cup \{i: r_i \in \cY_i\})] $$
and 
$$ F^*(\cX \cap \cY) %
 = \E[f(\{i: r_i \in \cX_i\} \cap \{i: r_i \in \cY_i\})]. $$
Hence, by the submodularity of $f$,
\begin{eqnarray*}
F^*(\cX \cup \cY) + F^*(\cX \cap \cY) 
& = & \E[f(\{i: r_i \in \cX_i\} \cup \{i: r_i \in \cY_i\}) 
  + f(\{i: r_i \in \cX_i\} \cap \{i: r_i \in \cY_i\})] \\
& \leq & \E[f(\{i: r_i \in \cX_i\}) + f(\{i: r_i \in \cY_i\})] \\
& = &  F^*(\cX) + F^*(\cY).
\end{eqnarray*}
\end{proof}

From here, we obtain our main lemma - the average of the two fractional local optima is at least $\frac14 \opt$.

\begin{lemma}
\label{lem:add-complement}
Let $\opt = \max \{F(\bx): \bx \in P\}$.  Let $\bx$ be the output of Algorithm~\ref{alg:fr-local-search}
on polytope $P$, and $\bz$ an output of Algorithm~\ref{alg:fr-local-search} on polytope $Q = \{\by \in P:\by \leq \b1-\bx\}$,
with parameter $\delta$ as in Algorithm~\ref{alg:fr-local-search}.
Then with high probability, $2 F(\bx) + 2 F(\bz) \geq \opt - 10 \delta n.$
\end{lemma}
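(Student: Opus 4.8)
The plan is to combine the local-optimality guarantees of the two runs of Algorithm~\ref{alg:fr-local-search} with the submodularity of $F^*$ from Lemma~\ref{lem:fract-submod}, using an optimal solution $\bx^* \in P$ as a witness. Let $\bx^* = \arg\max\{F(\by):\by\in P\}$, so $F(\bx^*) = \opt$. First I would apply Lemma~\ref{lem:fract-opt} to the run on $P$ with the witness $\by := \bx^*$, obtaining $2F(\bx) \geq F(\bx\vee \bx^*) + F(\bx\wedge \bx^*) - 5\delta n$. For the second run, the key point is to choose the right witness in $Q = \{\by\in P: \by \leq \b1-\bx\}$. The natural candidate is $\by := \bx^* \wedge (\b1-\bx) = (\bx^* - \bx)\vee \bzero$ coordinatewise, which lies in $Q$ because $P$ is down-monotone (it is $\leq \bx^*\in P$) and because it is $\leq \b1-\bx$ by construction. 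Applying Lemma~\ref{lem:fract-opt} to the run on $Q$ with this witness gives $2F(\bz) \geq F(\bz\vee \by) + F(\bz\wedge \by) - 5\delta n$.

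Adding the two inequalities yields $2F(\bx) + 2F(\bz) \geq F(\bx\vee\bx^*) + F(\bx\wedge\bx^*) + F(\bz\vee\by) + F(\bz\wedge\by) - 10\delta n$, so it suffices to show the four $F$-terms on the right sum to at least $F(\bx^*)=\opt$. Here I would translate everything into the language of $F^*$ on $\mathcal{L}^N$ (Lemma~\ref{lem:fract-submod}): represent $\bx$ by a set $\cX$, represent $\bx^*$ by a set $\cX^*$ \emph{chosen compatibly} so that wherever $\bx^*_i \geq \bx_i$ we have $\cX_i \subseteq \cX^*_i$, and represent $\bz$ by a set $\cZ$ that is disjoint from $\cX$ (possible since $\bz \leq \b1-\bx$). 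Then $\bx\vee\bx^*$ is represented by $\cX\cup\cX^*$ and $\bx\wedge\bx^*$ by $\cX\cap\cX^*$. The witness $\by$ is represented by $\cX^*\setminus\cX$, and since $\cZ\cap\cX=\emptyset$ one can arrange $\cZ$ and $\cX^*\setminus\cX$ so that $\bz\vee\by$ is represented by $\cZ\cup(\cX^*\setminus\cX)$ and $\bz\wedge\by$ by $\cZ\cap(\cX^*\setminus\cX)$. Now I would simply drop the two terms $F^*(\cX\cap\cX^*)$ and $F^*(\cZ\cap(\cX^*\setminus\cX))$, which are nonnegative since $f\geq 0$, and bound
$$ F^*(\cX\cup\cX^*) + F^*(\cZ\cup(\cX^*\setminus\cX)) \geq F^*\big((\cX\cup\cX^*)\cup(\cZ\cup(\cX^*\setminus\cX))\big) + F^*\big((\cX\cup\cX^*)\cap(\cZ\cup(\cX^*\setminus\cX))\big) $$
by submodularity of $F^*$. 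The second term is again nonnegative and can be dropped, while the first term contains $\cX^*$ entirely (since $\cX^*\subseteq \cX\cup(\cX^*\setminus\cX) \subseteq (\cX\cup\cX^*)\cup(\cZ\cup(\cX^*\setminus\cX))$), so by monotonicity one would \emph{like} to say it is $\geq F^*(\cX^*)=\opt$ — but $f$ is not monotone, so instead I would restructure the grouping.

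The cleaner route, and the one I would actually carry out, is to group as follows: apply submodularity of $F^*$ to the pair $\cA_1 = \cX\cup\cX^*$ and $\cA_2 = \cZ\cup(\cX^*\setminus\cX)$ to get their union and intersection, but first observe that $\cA_1 \cup \cA_2 \supseteq \cX^*$ and $\cA_1\cap\cA_2 = \cX^* \setminus \cX$ minus... — this is where care is needed. Rather than monotonicity, I would use a second application of submodularity designed so that $\cX^*$ appears as one of the two intersection/union arguments exactly, e.g. pairing $\cX\cup\cX^*$ with $\cX^*\setminus\cX$ directly: submodularity gives $F^*(\cX\cup\cX^*) + F^*(\cX^*\setminus\cX) \geq F^*(\cX\cup\cX^*) $... no — the right identity is that $(\cX\cup\cX^*)\cup(\cX^*\setminus\cX) = \cX\cup\cX^*$ and their intersection is $\cX^*\setminus\cX$, which is circular. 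The actual mechanism (mirroring the classical non-monotone local-search analysis of Lee--Mirrokni--Nagarajan--Sviridenko and of Vondr\'ak) is: from the $P$-run, $F(\bx\vee\bx^*)+F(\bx\wedge\bx^*)\geq 2F(\bx)-5\delta n$; from the $Q$-run with witness $\by=(\bx^*-\bx)\vee\bzero$ one gets after discarding the $\wedge$-term that $2F(\bz)\geq F(\bz\vee\by)-5\delta n \geq F((\bx\wedge\bx^*)\vee\bz\vee\by) + F((\bx\wedge\bx^*)\wedge(\text{stuff}))-\cdots$, chosen so that $(\bx\wedge\bx^*)\vee(\text{the }Q\text{-witness part}) \geq \bx^*$, and then submodularity with $\bx\vee\bx^*$ collapses the overlap; the nonnegative discarded terms absorb everything except $F(\bx^*)$. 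The main obstacle, and the place I expect to spend the most effort, is precisely this bookkeeping: choosing the representing sets $\cX,\cX^*,\cZ$ with the correct containment and disjointness relations, and selecting which pair to feed into submodularity of $F^*$ so that $F^*(\cX^*)=\opt$ survives while all other terms are dropped as nonnegative — the down-monotonicity of $P$ (guaranteeing the witness lies in $Q$) and the non-negativity of $f$ (guaranteeing the discarded terms help us) are the two structural facts doing the real work.
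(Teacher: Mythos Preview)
Your setup is exactly right: the two witnesses $\bx^*$ and $(\bx^*-\bx)\vee\bzero$, the two applications of Lemma~\ref{lem:fract-opt}, and the passage to $F^*$ via the representations $\cX_i=[0,x_i)$, $\cZ_i=[x_i,x_i+z_i)$, $\cX^*_i=[0,x^*_i)$ (so that $\cX\cap\cZ=\emptyset$ and $(\cX^*\setminus\cX)\cup\cZ$ represents $\bz\vee\by$) are all the ones the paper uses. The place where your argument actually breaks is the step where you ``simply drop the two terms $F^*(\cX\cap\cX^*)$ and $F^*(\cZ\cap(\cX^*\setminus\cX))$.'' Dropping the second is fine, but dropping $F^*(\cX\cap\cX^*)$ is fatal: once it is gone you are left trying to bound $F^*(\cX\cup\cX^*)+F^*((\cX^*\setminus\cX)\cup\cZ)$ from below by $F^*(\cX^*)$, and as you correctly diagnose, that would require monotonicity.

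The missing idea is that $F^*(\cX\cap\cX^*)$ must be \emph{kept} and used in the first of two submodularity steps. Pair it with $F^*((\cX^*\setminus\cX)\cup\cZ)$: submodularity gives
\[
F^*(\cX\cap\cX^*)+F^*((\cX^*\setminus\cX)\cup\cZ)\ \geq\ F^*\big((\cX\cap\cX^*)\cup(\cX^*\setminus\cX)\cup\cZ\big)\ =\ F^*(\cX^*\cup\cZ),
\]
discarding the (nonnegative) intersection term. Now, and only now, pair the remaining term $F^*(\cX\cup\cX^*)$ with this new term $F^*(\cX^*\cup\cZ)$: since $\cX\cap\cZ=\emptyset$, one checks $(\cX\cup\cX^*)\cap(\cX^*\cup\cZ)=\cX^*$, so submodularity yields
\[
F^*(\cX\cup\cX^*)+F^*(\cX^*\cup\cZ)\ \geq\ F^*(\cX^*)\ =\ \opt,
\]
discarding the (nonnegative) union term. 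Chaining the two displays gives exactly $F^*(\cX\cup\cX^*)+F^*(\cX\cap\cX^*)+F^*((\cX^*\setminus\cX)\cup\cZ)\geq\opt$, which is what you needed. The point is that $\cX\cap\cX^*$ is the piece that, when united with $\cX^*\setminus\cX$, reconstitutes $\cX^*$; throwing it away is precisely what forces the monotonicity detour you ran into.
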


\begin{proof}
Let $\opt = F(\bx^*)$ where $\bx^* \in P$. By Lemma~\ref{lem:fract-opt}, the output of the algorithm $\bx$ satisfies
with high probability
\begin{equation}
\label{eq:local-1}
 2 F(\bx) \geq F(\bx \vee \bx^*) + F(\bx \wedge \bx^*) - 5 \delta n.
\end{equation}
In the restricted polytope $Q = \{\by \in P: \by \leq \b1-\bx\}$,
consider the point $\bz^* = (\bx^* - \bx) \vee {\bf 0} \in Q$. Again
by Lemma~\ref{lem:fract-opt}, the output of the algorithm $\bz$ satisfies
\begin{equation}
\label{eq:local-2}
 2 F(\bz) \geq F(\bz \vee \bz^*) + F(\bz \wedge \bz^*) - 5 \delta n.
\end{equation}
Now we use a representation of vectors by subsets as described in Def.~\ref{def:rep}.
We choose $\cX, \cX^*, \cZ, \cZ^* \in \mathcal{L}^N$ to represent $\bx,\bx^*,\bz,\bz^*$ as follows:
for each $i \in N$, $\cX_i = [0,x_i)$, $\cZ_i = [x_i,x_i+z_i)$ (note that $x_i+z_i \leq 1$),
$\cX^*_i = [0,x^*_i)$ and $\cZ^*_i = [0,z^*_i) = [0, \max \{x^*_i-x_i,0\})$.
Note that $(\cX \cap \cZ)_i = \emptyset$ for all $i\in N$.

Defining $F^*$ as in Lemma~\ref{lem:fract-submod}, we have $F^*(\cX) =
F(\bx)$, $F^*(\cX^*) = F(\bx^*) = \opt$, $F^*(\cZ) = F(\bz)$ and $F^*(\cZ^*) = F(\bz^*)$.
Using relations like $[0,x_i) \cup [0,x^*_i) = [0, \max \{x_i,x^*_i\})$, we also get
$F^*(\cX \cup \cX^*) = F(\bx \vee \bx^*)$ and $F^*(\cX \cap \cX^*) = F(\bx \wedge \bx^*).$
Furthermore, we have $(\cX^*_i \setminus \cX_i) \cup \cZ_i = [x_i, \max \{x^*_i, x_i+z_i\})
 = [x_i, x_i + \max \{z^*_i, z_i\})$. This is an interval of length $\max \{z^*_i, z_i\} = (\bz \vee \bz^*)_i$
and hence $ F^*((\cX^* \setminus \cX) \cup \cZ) = F(\bz \vee \bz^*)$,
where $(\cX^* \setminus \cX)\cup \cZ$ is interpreted component-wise.

The property of the first local optimum (\ref{eq:local-1}) can be thus
written as $ 2 F(\bx) \geq F^*(\cX \cup \cX^*) + F^*(\cX \cap \cX^*) - 5 \delta n.$ The
property of the complementary local optimum (\ref{eq:local-2}) can be
written as $ 2 F(\bz) \geq F^*((\cX^* \setminus \cX) \cup \cZ) - 5 \delta n$ (we
discarded the nonnegative term $F(\bz \wedge \bz^*)$ which is not
used in the following). 
Therefore, $2 F(\bx) + 2 F(\bz) \geq F^*(\cX \cup \cX^*) + F^*(\cX \cap \cX^*) + F^*((\cX^* \setminus \cX) \cup \cZ) - 10 \delta n$.
By Lemma~\ref{lem:fract-submod}, $F^*$ is submodular. Hence we get
$$\begin{aligned}
F^*(\cX \cap \cX^*) + F^*((\cX^* \setminus \cX) \cup \cZ) & \geq  F^*((\cX \cap \cX^*) \cup (\cX^* \setminus \cX) \cup \cZ) \\
 & =  F^*(\cX^* \cup \cZ)
\end{aligned}$$
(we discarded the intersection term).
Finally, using the fact that $\cX \cap \cZ = \emptyset$ and again the submodularity of $F^*$, we get
$$ F^*(\cX \cup \cX^*) + F^*(\cX^* \cup \cZ) \geq F^*((\cX \cup \cX^*) \cap (\cX^* \cup \cZ)) = F^*(\cX^*) $$
(we discarded the union term). To summarize,
$$\begin{aligned}
2 F(\bx) + 2 F(\bz) & \geq   F^*(\cX \cup \cX^*) + F^*(\cX \cap \cX^*) + F^*((\cX^* \setminus \cX) \cup \cZ) - 10 \delta n\\
& \geq  F^*(\cX \cup \cX^*) + F^*(\cX^* \cup \cZ) - 10 \delta n \\
& \geq  F^*(\cX^*) - 10 \delta n ~=~ \opt - 10 \delta n.
\end{aligned}$$
\end{proof}

Since the parameter $\delta$ in Algorithm~\ref{alg:fr-local-search} is
chosen as $\delta = \frac{M}{n^{a-1}}$ for some constant
$a>3$, 
we obtain the following.

\begin{corollary}\label{cor:1over4}
  For any solvable down-monotone polytope $P \subseteq [0,1]^N$ and
  multilinear extension of a submodular function $F:[0,1]^N
  \rightarrow \RR_+$, Algorithm~\ref{alg:2-stage} finds with high
  probability a solution of value at least $\frac14 OPT -
  O(\frac{M}{n^{a-2}})$ for the problem $\max \{F(\bx): \bx \in P\}$.
\end{corollary}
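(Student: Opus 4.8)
The plan is to read the corollary off from Lemma~\ref{lem:add-complement}, which already contains all the substance. Recall that Algorithm~\ref{alg:2-stage} runs the fractional local search (Algorithm~\ref{alg:fr-local-search}) once on $P$ to obtain $\bx$, once on $Q=\{\by\in P:\by\le\b1-\bx\}$ to obtain $\bz$, and returns $\max\{F(\bx),F(\bz)\}$. I would invoke Lemma~\ref{lem:add-complement}, which states that with high probability $2F(\bx)+2F(\bz)\ge\opt-10\delta n$ where $\opt=\max\{F(\bx):\bx\in P\}$, and then use that the maximum of two reals is at least their average:
\[
  \max\{F(\bx),F(\bz)\}\ \ge\ \tfrac12\bigl(F(\bx)+F(\bz)\bigr)\ =\ \tfrac14\bigl(2F(\bx)+2F(\bz)\bigr)\ \ge\ \tfrac14\,\opt-\tfrac52\,\delta n .
\]
No extra probabilistic bookkeeping is needed: the single application of Lemma~\ref{lem:add-complement} already conditions on the high-probability event that every gradient estimate produced during \emph{both} runs of Algorithm~\ref{alg:fr-local-search} lies within $\pm\delta$ of the truth in each coordinate.

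It then remains to verify that the additive slack $\tfrac52\delta n$ is of lower order than $\opt$. By the choice $\delta=M/n^{a-1}$ in Algorithm~\ref{alg:fr-local-search} this slack equals $\tfrac52\,M/n^{a-2}$, while $\opt$ is polynomially related to $M$: it is trivially at most $Mn$, and (after the routine preprocessing that deletes ground-set elements contained in no feasible set, so that $\be_i\in P$ and hence $\opt\ge f(\{i\})$, while $f(N\setminus i)\le\sum_{k\ne i}f(\{k\})\le n\,\opt$ by subadditivity) one also has $\opt\ge M/n$. Hence $\tfrac52\delta n/\opt=O(n^{-(a-3)})$, which tends to $0$ because $a>3$ is a fixed constant, so $\max\{F(\bx),F(\bz)\}\ge(\tfrac14-o(1))\opt$ with high probability, which is the assertion. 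Down-monotonicity of $P$ enters only through the earlier lemmas: it is what makes $Q$ a well-defined solvable down-monotone polytope and, in the proof of Lemma~\ref{lem:add-complement}, what makes $\bz^*=(\bx^*-\bx)\vee\bzero$ a legal point of $Q$.

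I expect no real obstacle in this last step; it is a one-line averaging inequality plus error bookkeeping. All the difficulty sits upstream, in Lemma~\ref{lem:add-complement} and the tools it rests on: the interval representation of fractional vectors (Definition~\ref{def:rep}), the submodularity of the lifted set function $F^*$ (Lemma~\ref{lem:fract-submod}), the choice of the complementary target $\bz^*=(\bx^*-\bx)\vee\bzero$ together with the identities matching $F^*$ of the various set combinations back to $F$ of the corresponding vector combinations, and the approximate-local-optimality guarantee of Lemma~\ref{lem:fract-opt}. Since all of these are already established, the corollary follows exactly as above.
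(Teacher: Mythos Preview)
Your proposal is correct and takes essentially the same approach as the paper: the paper derives the corollary directly from Lemma~\ref{lem:add-complement} together with the averaging inequality $\max\{F(\bx),F(\bz)\}\ge\tfrac12(F(\bx)+F(\bz))$, and then absorbs the additive $10\delta n$ term into the $o(1)$ using $\delta=M/n^{a-1}$ with $a>3$. If anything you are more careful than the paper, which only mentions the upper bound $\opt\le Mn$ before stating the corollary; your argument that (after the standard preprocessing making singletons feasible) subadditivity gives $M\le n\cdot\opt$, hence $\tfrac52\delta n/\opt=O(n^{-(a-3)})=o(1)$, is exactly the missing direction needed to justify the $o(1)$.
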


We remark that in many settings of interest, $\opt = \max \{F(\bx) :
\bx \in P\} \geq M/\poly(n)$ and thus we can make the additive error
small relative to the optimum by choosing $a$ large enough. This leads
to a multiplicative $(1/4 - o(1))$-approximation. 
A concrete setting of interest is when $P$ is not too thin in any dimension, 
as highlighted by the following lemma which, together
with Corollary~\ref{cor:1over4},
implies Theorem~\ref{thm:multilinear-solve}.

\begin{lemma}\label{lem:largeOPT}
Let $f:2^N \rightarrow \mathbb{R}_{\geq 0}$
be a nonnegative submodular function with
multilinear extension $F$, and let $P\subseteq \mathbb{R}^n$
be a solvable down-monotone polytope satisfying that there is a
$\lambda\in \Omega(\frac{1}{\poly(n)})$ such that
for each coordinate $i\in [n]$,
$\lambda \cdot \be_i \in P$.
Furthermore, let $\opt = \max\{F(x) \mid x\in P\}$,
and $M=\max\{f(i), f(N-i): i\in N\}$.
Then
\begin{equation*}
\opt \geq \Omega\left(\frac{M}{\poly(n)}\right).
\end{equation*}
\end{lemma}
\begin{proof}
Let $i\in N$. Since $F$ is concave along any nonnegative
direction and $\lambda \be_i\in P$, we have
\begin{equation}\label{eq:iToOpt}
\opt \geq F(\lambda \be_i) \geq \lambda F(\be_i) = \lambda f(i)
\quad \forall i\in N.
\end{equation}
Furthermore,
\begin{equation}\label{eq:nMinIToOpt}
f(N-i) \leq \sum_{j\in N-i}f(j) \leq \frac{n}{\lambda}\opt,
\end{equation}
where we used~\eqref{eq:iToOpt} for the last
inequality. Equations~\eqref{eq:iToOpt}
and~\eqref{eq:nMinIToOpt} imply the desired results.
\end{proof}

\subsection{Restricted local search: a 0.309-approximation}
Next, we present a modified local-search algorithm which is a
generalization of the algorithm for matroid polytopes from
\cite{Vondrak09}.  We remark that this algorithm is in fact simpler
than the $\frac14$-approximation from the previous section, in that it
does not require a second-stage complementary local search.
Both algorithms work for any down-monotone polytope $P$. 
However, our analysis of the restricted local-search algorithm is with respect
to the best integer solution in the polytope; we do not know whether
the approximation guarantee holds with respect to the best fractional solution.
\begin{algorithm}
\label{alg:local-t}
Fix a parameter $t \in [0,1]$. Using
Algorithm~\ref{alg:fr-local-search}, find an approximate local optimum
$\bx$ in the polytope $P \cap [0,t]^N$.  Return $\bx$.
\end{algorithm}
We show that with the choice of $t = \frac12 (3 - \sqrt{5})$, this algorithm achieves
a $\frac14 (-1+\sqrt{5}) \simeq 0.309$-approximation with respect to the optimal
integer solution in $P$.

\begin{lemma}
\label{lem:local-t}
Let $\bx$ be an output of Algorithm~\ref{alg:fr-local-search} on $P \cap [0,t]^N$.
Define $\bw \in [0,1]^N$ by $w_i = t$ if $x_i \geq t - 1/n$ and $w_i = 1$ if $x_i < t - 1/n$.
Let $\bz$ be any point in $P$ and let $\bz' = \bz \wedge \bw$. Then with high probability,
$$ 2 F(\bx) \geq F(\bx \vee \bz') + F(\bx \wedge \bz') - 5 \delta n^2.$$
\end{lemma}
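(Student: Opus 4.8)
The plan is to mimic the proof of Lemma~\ref{lem:fract-opt} but to be careful about the fact that the local search runs over the restricted polytope $P \cap [0,t]^N$ rather than over $P$ itself. The key point is that although $\bz' = \bz \wedge \bw$ need not lie in $P \cap [0,t]^N$ (its coordinates can be as large as $1$ wherever $w_i = 1$), I will show that a suitable point related to $\bz'$ \emph{does} lie in the restricted polytope, so that the local-optimality condition applies to it, and then convert that back to a statement about $\bz'$ using the concavity-along-nonnegative-directions property of $F$ (Lemma~\ref{lem:G-bound} and the reasoning behind it).

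First I would observe the following about $\bw$: wherever $x_i < t - 1/n$ the local search already had ``room to move'' in coordinate $i$ and yet did not, so intuitively the gradient in that coordinate is already non-positive; wherever $x_i \ge t - 1/n$, the coordinate $x_i$ is essentially at the cap $t$ (within $1/n$), and $w_i = t$. Concretely, I would first replace $\bx$ by the nearby point $\bx'$ that rounds each $x_i \ge t-1/n$ up to exactly $t$; this changes each coordinate by at most $1/n$, hence changes $F$ and each partial derivative by an amount that is $O(\delta n)$ after accounting for the bounds $|\partial F/\partial x_i|\le M$, $|\partial^2 F/\partial x_i \partial x_j|\le 2M$ and $q = n^a$, $\delta = M/n^{a-1}$ — this is where the weakened error term $5\delta n^2$ (as opposed to $5\delta n$) comes from. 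Then I would consider the vector $\bz'' := \bz' \wedge \bx'$... no: more precisely, I want to apply the local-optimality inequality $(\by - \bx)\cdot\nabla F(\bx) \le 5\delta n$ with $\by$ chosen so that both $\bx \vee \by$ and $\bx \wedge \by$ appear on the right via Lemma~\ref{lem:G-bound}, and so that $\by \in P \cap [0,t]^N$. The natural choice is $\by = \bz' \wedge \bt$ where $\bt = (t,\dots,t)$, equivalently $\by = \bz \wedge \bw \wedge \bt = \bz \wedge \bt$ (since $\bw \ge \bt$ coordinatewise); this lies in $P$ (as $\bz \in P$ and $P$ is down-monotone) and in $[0,t]^N$, so it is a legal test direction. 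Plugging into Lemma~\ref{lem:G-bound}: $2F(\bx) \ge F(\bx \vee \by) + F(\bx \wedge \by) - 5\delta n$.

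The remaining work is to compare $F(\bx\vee\by)+F(\bx\wedge\by)$ with the target $F(\bx\vee\bz')+F(\bx\wedge\bz')$. Since $\by = \bz\wedge\bt \le \bz' = \bz\wedge\bw$, we have $\bx\wedge\by \le \bx\wedge\bz'$ but $\bx\vee\by \le \bx\vee\bz'$ as well, so the inequality does not go through termwise; this is the main obstacle. The resolution: on any coordinate $i$ with $w_i = t$ we have $\by_i = (\bz')_i$, so $\by$ and $\bz'$ differ only on coordinates with $w_i = 1$, i.e. coordinates where $x_i < t - 1/n$ (equivalently $x'_i < t$, with room below the cap). On exactly those coordinates, $\bx'$ is bounded away from the cap, and the defining property of the approximate local optimum over $P\cap[0,t]^N$ — that moving toward any feasible point has small gradient gain — together with Lemma~\ref{lem:G-bound} will let me replace $\bz$ by the larger point with $(\bz')_i$ up to $1$ in those coordinates at no loss beyond the error term, essentially because $F$ is concave in those (nonnegative) directions and the relevant partial-derivative ``slack'' was already exhausted by the local search. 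I would carry out this comparison coordinate by coordinate, using the representation-by-measurable-sets machinery (Definition~\ref{def:rep}, Lemma~\ref{lem:fract-submod}) if a clean coordinatewise argument is awkward, to obtain $F(\bx'\vee\bz') + F(\bx'\wedge\bz') \le 2F(\bx') + O(\delta n^2)$, and finally undo the $\bx' \leftrightarrow \bx$ replacement, absorbing everything into $5\delta n^2$. I expect this last comparison step — justifying that enlarging $\bz$ to $\bz'$ on the ``sub-cap'' coordinates costs essentially nothing — to be the technically delicate part, and the place where the precise definition of $\bw$ via the threshold $t - 1/n$ is used.
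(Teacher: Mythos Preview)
Your plan has a genuine gap, and it stems from a wrong intuition you state explicitly: ``wherever $x_i < t - 1/n$ the local search already had room to move in coordinate $i$ and yet did not, so intuitively the gradient in that coordinate is already non-positive.'' This is false. The local search runs over $P \cap [0,t]^N$, not over $[0,t]^N$; even if $x_i$ is well below the cap $t$, the polytope $P$ may forbid increasing $x_i$ (indeed may forbid moving toward $\bz'$ along coordinate $i$ alone), so $\partial F/\partial x_i$ can be strictly positive at the local optimum. Consequently your ``technically delicate'' comparison of $F(\bx\vee(\bz\wedge\bt))+F(\bx\wedge(\bz\wedge\bt))$ with $F(\bx\vee\bz')+F(\bx\wedge\bz')$ has no engine: concavity along nonnegative directions gives you upper bounds on gains, not the lower bounds you need, and Lemma~\ref{lem:fract-submod} is a submodularity statement unrelated to the local-optimality slack you are hoping to exploit. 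The rounding $\bx\to\bx'$ is also unnecessary and is not where the extra factor of $n$ comes from.

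The paper's argument is a one-line trick that you missed: instead of testing $\bz'$ (which is not in $[0,t]^N$), test the convex combination $\by = \bx + \tfrac{1}{n}(\bz' - \bx)$. Since $\bx,\bz' \in P$, convexity gives $\by \in P$. And $\by \in [0,t]^N$ because on coordinates with $z'_i \le t$ the combination of two values $\le t$ stays $\le t$, while on coordinates with $z'_i > t$ one has $w_i = 1$, hence $x_i < t - 1/n$, so $y_i \le x_i + \tfrac{1}{n}(1 - x_i) \le x_i + \tfrac{1}{n} < t$. Now the stopping rule gives $(\by - \bx)\cdot\nabla F(\bx) \le 5\delta n$, i.e.\ $(\bz' - \bx)\cdot\nabla F(\bx) \le 5\delta n^2$, and Lemma~\ref{lem:G-bound} finishes. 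The threshold $t - 1/n$ is chosen precisely so that a $\tfrac{1}{n}$-step toward $\bz'$ stays under the cap, and the extra factor of $n$ in the error term comes from undoing that $\tfrac{1}{n}$ scaling, not from any perturbation of $\bx$.
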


We remark that the above inequality would be immediate from
Lemma~\ref{lem:fract-opt}, if $\bz' \in P \cap [0,t]^N$. However,
$\bz'$ is not necessarily constrained by $[0,t]^N$.

\begin{proof}
  Consider $\bz' = \bz \wedge \bw$ as defined above. By
  down-monotonicity, $\bz' \in P$.  Also, the coordinates where $z'_i > t$ are exactly those where $x_i < t - 1/n$.
  So we have $\bx + \frac{1}{n} (\bz' - \bx) \in P \cap [0,t]^N$. By the
  stopping rule of Algorithm~\ref{alg:fr-local-search},
$$ \frac{1}{n} (\bz' - \bx) \cdot \nabla F(\bx) \leq  5 \delta n.$$
By Lemma~\ref{lem:G-bound}, this implies $F(\bx \vee \bz') + F(\bx \wedge \bz') - 2 F(\bx) \leq (\bz'-\bx) \cdot \nabla F(\bx)
 \leq 5 \delta n^2$.
\end{proof}

In the rest of the analysis, we follow \cite{Vondrak09}.

\begin{definition}
  For $\bx \in [0,1]^N$ and $\lambda \in [0,1]$, we define the
  associated ``threshold set'' as $T_{>\lambda}(\bx) = \{i: x_i > \lambda \}$.
\end{definition}

\begin{lemma}
\label{lem:threshold}
Let $\bx \in [0,1]^N$. For any partition $N = C \cup \bar{C}$,
$$ F(\bx) \geq \E[f((T_{>\lambda}(\bx) \cap C) \cup (T_{>\lambda'}(\bx) \cap \bar{C}))] $$
where $\lambda, \lambda' \in [0,1]$ are independently and uniformly
random.
\end{lemma}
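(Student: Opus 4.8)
The plan is to unfold the multilinear extension $F(\bx)$ as an expectation over an independent random set and then exhibit a coupling with the random threshold sets. Recall $F(\bx) = \E[f(R(\bx))]$, where $R(\bx)$ includes each $i$ independently with probability $x_i$. First I would generate $R(\bx)$ in the standard way by drawing i.i.d.\ uniform $r_i \in [0,1]$ and putting $i \in R(\bx)$ iff $r_i < x_i$. The key observation is that if I first fix two independent uniform thresholds $\lambda,\lambda' \in [0,1]$, then for $i \in C$ the event $\{r_i < x_i\}$ can be obtained by first checking $\{\lambda < x_i\}$, i.e.\ $i \in T_{>\lambda}(\bx)$, and conditionally refining; and similarly for $i \in \bar C$ using $\lambda'$. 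Concretely, conditioned on $i \in T_{>\lambda}(\bx)$ for $i \in C$ (resp.\ $i \in T_{>\lambda'}(\bx)$ for $i \in \bar C$), one can still place $i$ into $R(\bx)$ with the correct overall probability $x_i$, so that $R(\bx)$ restricted to $C$ is a subset of $T_{>\lambda}(\bx)\cap C$ and $R(\bx)$ restricted to $\bar C$ is a subset of $T_{>\lambda'}(\bx)\cap \bar C$.

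Having set up this coupling, the argument is: $R(\bx) \subseteq (T_{>\lambda}(\bx)\cap C)\cup(T_{>\lambda'}(\bx)\cap\bar C) =: S(\lambda,\lambda')$ always, where the randomness is over $\lambda,\lambda'$ and the further independent refinement. Since $f$ is non-negative submodular, it is also subadditive, but what I actually want is monotonicity-in-the-wrong-direction; instead I will use the fact that $\E$ of $f$ over a random subset of a fixed set $S$ is at most something — no. Let me reconsider: the cleanest route is to note $F(\bx) = \E_{\lambda,\lambda'}\big[\,\E[f(R(\bx)) \mid \lambda,\lambda']\,\big]$, and to argue that for each fixed $(\lambda,\lambda')$ the inner conditional expectation of $f$ over the (random) set $R(\bx) \subseteq S(\lambda,\lambda')$ need not dominate $f(S(\lambda,\lambda'))$. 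So subadditivity does not immediately help. The right statement to invoke is from \cite{FeigeMV07}: for a non-negative submodular $f$ and any fixed set $S$, the expected value of $f$ over a random subset of $S$ obtained by including each element independently with probability at least $p$ is at least $p \cdot f(S)$; but here we want the reverse direction (lower-bounding $F(\bx)$ by $\E[f(S(\lambda,\lambda'))]$), which is false in general.

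So I should flip the coupling: instead let the threshold sets be \emph{subsets} of $R(\bx)$. That is, draw $r_i$ first, set $i \in R(\bx)$ iff $r_i < x_i$, and then take $\lambda,\lambda'$ uniform and independent, so that $\{i \in C : x_i > \lambda\}\supseteq\{i\in C: r_i < x_i, x_i > \lambda\}$... this still is not a subset relation between $T$ and $R$. The clean fact is the \emph{identity in distribution}: for a single uniform $\lambda$, the set $\{i : x_i > \lambda\}$ has a different distribution from $R(\bx)$, but the trick (used in \cite{Vondrak09}) is that $F(\bx) \geq \E_\lambda[f(T_{>\lambda}(\bx))]$ fails too. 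Therefore the main obstacle — and the thing I expect the author's proof actually does — is the following sharper coupling: generate $R(\bx)$ by, for each $i \in C$, drawing uniform $r_i$ and including $i$ iff $r_i < x_i$; now let $\lambda = \min\{r_i : i \in C, i \in R(\bx)\}$... no, $\lambda$ must be independent of everything. The correct and standard resolution is: take $\lambda$ uniform on $[0,1]$ \emph{independent} of the $r_i$'s; then $T_{>\lambda}(\bx)\cap C$ has each $i$ with probability $x_i$, exactly like $R(\bx)$, but the elements are perfectly correlated through the single $\lambda$. Then $F(\bx) = \E[f(R(\bx))]$ where now $R(\bx)$ on $C$ is generated as $\{i \in C : \lambda < x_i\}$ — i.e.\ $T_{>\lambda}(\bx)\cap C$ itself — and on $\bar C$ as $T_{>\lambda'}(\bx)\cap\bar C$, using that a \emph{single} uniform threshold reproduces the correct marginal for each coordinate while only introducing positive correlation, which by submodularity (the correlation-gap inequality, or directly) can only \emph{decrease} $\E[f]$; hence $\E[f(T_{>\lambda}(\bx)\cap C)\cup(T_{>\lambda'}(\bx)\cap\bar C)] \leq F(\bx)$. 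The main step to get right is thus this monotonicity of $\E[f]$ under the ``thresholding correlation'' — I would prove it by expressing both quantities via the Lindeberg-style hybrid argument, swapping coordinates one at a time from independent to threshold-coupled and using submodularity at each swap. I expect the author instead uses a slick direct decomposition of $F(\bx)$ as an average of $f$ evaluated on threshold sets, which I would reconstruct by writing $F(\bx) = \int_0^1 \partial_\lambda(\cdots)$; pinning down that decomposition cleanly is the crux.
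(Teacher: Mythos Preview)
Your proposal eventually lands on the right idea, but the route is circuitous and you miss the clean identification that makes the paper's argument almost a one-liner. The key observation you are circling is this: for the trivial partition $C=N$, the right-hand side is exactly $\E_\lambda[f(T_{>\lambda}(\bx))]$, which is the \emph{Lov\'asz extension} $f^L(\bx)$. The inequality $F(\bx)\ge f^L(\bx)$ for submodular $f$ is a standard fact (it is precisely the statement that replacing independent sampling by the comonotone threshold coupling, which has the same marginals, can only decrease $\E[f]$). The paper simply cites this and then applies it twice: first condition on $R(\bx)\cap\bar C$, apply the inequality to the restricted submodular function $S\mapsto f(S\cup(R(\bx)\cap\bar C))$ on $C$ to replace the $C$-part by $T_{>\lambda}(\bx)\cap C$; then, for each fixed $\lambda$, apply it again on $\bar C$ to replace the remaining independent sampling by $T_{>\lambda'}(\bx)\cap\bar C$.

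Your Lindeberg-style hybrid is a valid way to \emph{prove} $F\ge f^L$ from scratch: swapping one coordinate from independent to threshold-coupled, the difference reduces (after conditioning on everything else) to $x_{k+1}\int_{x_{k+1}}^1 h(\lambda)\,d\lambda-(1-x_{k+1})\int_0^{x_{k+1}} h(\lambda)\,d\lambda$, where $h(\lambda)$ is the marginal value of element $k+1$ given the current threshold set; submodularity makes $h$ nondecreasing in $\lambda$, and the sign follows. So your plan is correct but amounts to re-deriving the multilinear-vs-Lov\'asz comparison. The gap in your write-up is not technical but conceptual: you never name the Lov\'asz extension, and you speculate that the author ``decomposes $F(\bx)$ as an average of $f$ on threshold sets'' --- in fact it is the \emph{right-hand side} that is such an average, and the whole lemma is just $F\ge f^L$ applied to each block of the partition.
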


This appears as Lemma~A.5 in \cite{Vondrak09}.  We remark that the
right-hand side with $C = \emptyset$ or $C=N$ gives the Lov\'asz
extension of $f$ and the lemma follows by comparing the multilinear
and Lov\'asz extension. For a non-trivial partition $(C,\bar{C})$, the
lemma follows by two applications of this fact.  The next lemma is
exactly as in \cite{Vondrak09} for the special case of a matroid
polytope; we rephrase the proof here in our more general setting.

\begin{lemma}
\label{lem:0.309-bound}
  Assume that $t \in [0,\frac12 (3 - \sqrt{5})]$.  Let $\bx$ be an
  output of Algorithm~\ref{alg:fr-local-search} on $P \cap [0,t]^N$
  (with parameter $a \geq 4$), and let $\bz = \b1_C$ be any integer
  solution in $P$.  Then with high probability,
$$ F(\bx) \geq \left(t - \frac12 t^2 \right) f(C) - O\left( \frac{M}{n^{a-3}} \right).$$
\end{lemma}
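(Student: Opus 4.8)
The plan is to combine the local-optimality property from Lemma~\ref{lem:local-t} with the threshold-set estimate from Lemma~\ref{lem:threshold}, following the structure of Vondr\'ak's matroid analysis \cite{Vondrak09}. First I would set $\bz = \b1_C \in P$ and take $\bz' = \bz \wedge \bw$ where $\bw$ is the vector defined in Lemma~\ref{lem:local-t} ($w_i = t$ if $x_i \geq t - 1/n$, and $w_i = 1$ otherwise). Since $x_i \le t$ everywhere and the coordinates $i$ with $w_i = 1$ are exactly those with $x_i < t - 1/n$, I expect that $\bx \wedge \bz'$ and $\bx \vee \bz'$ have a clean description in terms of threshold sets of $\bx$: on $C$, the vector $\bx \vee \bz'$ is essentially $\bx$ lifted to value $t$ (or $1$ on the few ``small'' coordinates), while $\bx \wedge \bz'$ restricted to $C$ is just $\bx$ on $C$. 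The key point is to lower bound $F(\bx \vee \bz')$ and $F(\bx \wedge \bz')$ using Lemma~\ref{lem:threshold} with the partition $(C, \bar C)$, choosing the threshold randomization so that the relevant threshold sets land inside $C$ with the right probabilities.

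Concretely, I would argue that $F(\bx \vee \bz') \geq (1-o(1)) f(C)$: on $\bar C$ the vector $\bx \vee \bz'$ agrees with $\bx \le t\b1$, and on $C$ it is at least $t$ on all but $o(n)$-fraction of coordinates; picking random thresholds $\lambda$ on $C$ uniform in $[0,t)$ and $\lambda'$ on $\bar C$ uniform in $[0,1)$, Lemma~\ref{lem:threshold} (after rescaling the threshold distribution, which only helps since $\bx \vee \bz' \ge t$ on $C$) gives $F(\bx \vee \bz') \ge \Pr[\text{all of }C\text{ selected}] \cdot f(C) - (\text{small correction})$, and with thresholds below $t$ essentially all of $C$ is selected. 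For $F(\bx \wedge \bz')$ I would simply use $F \ge 0$, or better, bound it below by something proportional to $F(\bx)$ to get the final shape of the inequality. Plugging into Lemma~\ref{lem:local-t}, $2F(\bx) \geq F(\bx \vee \bz') + F(\bx \wedge \bz') - 5\delta n^2 \geq (1-o(1)) f(C) + c F(\bx) - o(1)$, and solving for $F(\bx)$ should yield $F(\bx) \geq (t - \frac12 t^2 - o(1)) f(C)$ after the algebra; the coefficient $t - \frac12 t^2$ is exactly what comes out when one also exploits that $F(\bx) \ge (1 - t/2) \cdot (\text{value of } \bx \text{ viewed via its own threshold sets})$, i.e. a self-bounding inequality $F(\bx) \ge \frac{1}{t}\int_0^t \E[f(T_{>\lambda}(\bx))]\,d\lambda$-type estimate combined with the local optimum bound.

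The error term $5\delta n^2$ is harmless since $\delta = M/n^{a-1}$ with $a \ge 4$, so $\delta n^2 = M/n^{a-3} = o(M)$ and $\opt = \Theta(M)$ in the relevant regime (or more precisely $f(C) \le Mn$, and we can absorb into the $o(1)$), which is why the lemma insists on $a \geq 4$ rather than merely $a > 3$.

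The main obstacle I anticipate is the bookkeeping around the vector $\bw$ and the $1/n$-slack: one must check that $\bx + \frac1n(\bz'-\bx) \in P \cap [0,t]^N$ (this is why the threshold is $t - 1/n$, not $t$), and then carefully track how the ``exceptional'' coordinates where $w_i = 1$ contribute. Because there can be up to $n$ such coordinates but each contributes a marginal of at most $M$, and they enter with the $\frac1n$ factor, their total effect is $O(M) = o(1) \cdot \opt$ only if $f(C)$ is comparable to $Mn$; handling the case where $f(C)$ is small needs the observation that the inequality is trivially true (both sides small) or that the estimates can be made relatively accurate. The cleanest route, which I would follow, is to keep everything in terms of additive $o(\delta n^2)$ errors throughout and only convert to a multiplicative $o(1)$ at the very end using $\opt \le Mn$ and the choice of $a$, exactly as in the corollary following Lemma~\ref{lem:add-complement}. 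Matching the precise constant $t - \frac12 t^2$ (and hence the final $0.309 = \frac14(\sqrt5 - 1)$ after optimizing $t$) is the delicate part and requires the self-bounding estimate on $F(\bx)$ in addition to the local-optimality inequality.
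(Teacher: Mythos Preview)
Your setup is right---applying Lemma~\ref{lem:local-t} with $\bz' = \b1_C \wedge \bw$ and then invoking the threshold lemma---but the bounds you propose for $F(\bx \vee \bz')$ and $F(\bx \wedge \bz')$ are wrong, and the missing ingredient is what actually produces the constant $t - \tfrac12 t^2$.

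First, $F(\bx \vee \bz') \geq (1-o(1)) f(C)$ is false. On coordinates $i \in C$ with $x_i \geq t - 1/n$ (call this set $A$), we have $z'_i = t$, so $(\bx \vee \bz')_i = t$, not $1$. Hence when you apply Lemma~\ref{lem:threshold} with $\lambda$ uniform in $[0,1]$, the threshold set $T_{>\lambda}(\bx \vee \bz') \cap C$ equals $C$ only with probability $t$; with probability $1-t$ it equals $C \setminus A$. You cannot ``rescale the threshold distribution to $[0,t)$''---the lemma is stated and proved for uniform $[0,1]$ thresholds, and changing the distribution breaks the comparison with the multilinear extension. (Notice that if your bound held, you would get $F(\bx) \geq (\tfrac12 - o(1)) f(C)$, strictly better than what is being claimed.)

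Second, throwing away $F(\bx \wedge \bz')$ or bounding it by a multiple of $F(\bx)$ does not work. The actual argument introduces the auxiliary quantity $f(C \setminus A)$: one shows $F(\bx \wedge \bz') = F(\bx \wedge \b1_C) \geq (t - o(1))(f(C) - f(C \setminus A))$ via the threshold lemma (when $\lambda < t - 1/n$, the threshold set contains $A$, and submodularity gives $f(T_{>\lambda}(\bx)\cap C) \geq f(C) - f(C\setminus A)$), and $F(\bx \vee \bz') \geq (1-t)\bigl(t f(C) + (1-t) f(C \setminus A)\bigr)$ from the two-threshold analysis. The hypothesis $t \leq \tfrac12(3-\sqrt5)$ is not merely the optimizer at the end; it is equivalent to $t \leq (1-t)^2$, which is used here to rewrite the second bound as $F(\bx \vee \bz') \geq t(1-t) f(C) + t\, f(C\setminus A)$. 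Adding the two bounds, the $f(C \setminus A)$ terms combine to a nonnegative contribution and one is left with $2F(\bx) \geq (2t - t^2 - o(1)) f(C)$. There is no self-bounding inequality for $F(\bx)$ in the argument; the entire content is the interplay of the two threshold bounds through the cross term $f(C\setminus A)$.
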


\begin{proof}
  Define $A = \{ i: x_i \geq t - 1/n \}$ and let $\bw = t \b1_{A} +
  \b1_{\bar{A}}$, $\bz' = \bz \wedge \bw$ as in Lemma~\ref{lem:local-t}.
  Since $\bz = \b1_C$, we have $\bz' = t  \b1_{A \cap C} + \b1_{C \setminus A}$.  By Lemma~\ref{lem:local-t},
  we get
\begin{equation}
\label{eq:1}
 2 F(\bx) \geq F(\bx \vee \bz') + F(\bx \wedge \bz') - 5 \delta n^2.
\end{equation}
First, let us analyze $F(\bx \wedge \bz')$. Since $\bz' = t \b1_{A \cap C} + \b1_{C \setminus A}$
and $\bx \in [0,t]^N$, we have $\bx \wedge \bz' = \bx \wedge \b1_C$.
We apply Lemma~\ref{lem:threshold}, which states that
$$ F(\bx \wedge \bz') = F(\bx \wedge \b1_C) \geq \E[f(T_{>\lambda}(\bx) \cap C)].$$
Due to the definition of $T_{>\lambda}(\bx)$, with probability $t-1/n$ we
have $\lambda < t-1/n$ and $T_{>\lambda}(\bx)$ contains $A = \{ i: x_i \geq t-1/n \}$.
Then, $f(T_{>\lambda}(\bx) \cap C) + f(C \setminus A) \geq f(C)$
by submodularity.  We conclude that
\begin{equation}
\label{eq:2}
F(\bx \wedge \bz') \geq \left(t - \frac{1}{n}\right) (f(C) - f(C \setminus A)).
\end{equation}
Next, let us analyze $F(\bx \vee \bz')$. We apply
Lemma~\ref{lem:threshold}.  We get
$$ F(\bx \vee \bz') \geq \E[f((T_{>\lambda}(\bx \vee \bz') \cap C)
 \cup (T_{>\lambda'}(\bx \vee \bz') \cap \bar{C}))].$$
 
 The random threshold sets are as follows:
 $T_{>\lambda}(\bx \vee \bz') \cap C = T_{>\lambda}(\bz')$ is
 equal to $C$ with probability $t$, and equal to $C \setminus A$ with probability $1-t$,
 by the definition of $\bz'$.
 $T_{>\lambda'}(\bx \vee \bz') \cap \bar{C} =
 T_{>\lambda'}(\bx) \cap \bar{C}$ is empty with probability $1-t$,
 because $\bx \in [0,t]^N$.
 (We ignore the contribution when $T_{>\lambda'}(\bx) \cap \bar{C}
 \neq \emptyset$.)  Because $\lambda,\lambda'$ are independently
 sampled, we get
$$ F(\bx \vee \bz') \geq (1-t)(t f(C) + (1-t) f(C \setminus A)).$$
Provided that $t \in [0,\frac12 (3 - \sqrt{5})]$, we have $t \leq (1-t)^2$.
Then, we can write
\begin{equation}
\label{eq:3}
F(\bx \vee \bz') \geq t(1-t) f(C) + t f(C \setminus A).
\end{equation}
Combining equations (\ref{eq:1}), (\ref{eq:2}) and (\ref{eq:3}), we get
\begin{eqnarray*}
2 F(\bx) & \geq & F(\bx \vee \bz')  +  F(\bx \wedge \bz') - 5 \delta n^2 \\
 & \geq & t(1-t) f(C) + t \, f(C \setminus A) + (t - \frac{1}{n}) (f(C) - f(C \setminus A))  - 5 \delta n^2 \\
 & \geq & (2t - t^2) f(C) - O\left( \frac{M}{n^{a-3}} \right)
\end{eqnarray*}
using $\delta n^2 = M / n^{a-3}$.
\end{proof}

Next, we show how the error term in Lemma~\ref{lem:0.309-bound} can be compared to the optimal value.
Note that here we use the fact that in this section, we compare to $0/1$ solutions only.
The following Lemma is essentially a specialization of
Lemma~\ref{lem:largeOPT} to the $0/1$ case.

\begin{lemma}
\label{lem:M-vs-OPT}
Suppose that $\be_i \in P$ for each $i \in N$. Let $OPT = \max \{F(\bx): \bx \in P \cap \{0,1\}^N \}$
and $M = \max_{i \in N} \{f(i), f(N-i) \}$.
Then $OPT \geq \frac{1}{n} M$.
\end{lemma}

\begin{proof}

If $M = f(i)$
  for some $i \in N$, then clearly $OPT = \max \{ F(\bx): \bx \in P
  \cap \{0,1\}^N \} \geq M$, because $F(\be_i) = f(i)$ and $\be_i \in
  P$. If $M = f(N-i)$ for some $i \in N$, then consider $\sum_{j \neq
    i} f(j) \geq f(N-i) = M$ which holds by submodularity and
  nonnegativity of $f$.  We have $f(j) \geq \frac{1}{n} M$ for some $j
  \neq i$. By the above argument, $OPT \geq \frac{1}{n} M$.
\end{proof}

Clearly, if $\be_i \notin P$, then coordinate $x_i$ cannot participate
in an integer optimum, $\max \{ F(\bx): \bx \in P \cap \{0,1\}^N
\}$. We can remove all such coordinates from the problem. Therefore,
we can in fact assume that $\be_i \in P$ for all $i \in N$.

\begin{corollary}
  Assume $\be_i \in P$ for all $i \in N$. Then for $t = \frac12 (3 - \sqrt{5})$, Algorithm~\ref{alg:local-t} with high probability
   a $\frac14 (-1+\sqrt{5} - o(1))$-approximation for the problem  $\max \{F(\bx): \bx \in P \cap \{0,1\}^N \}$.
\end{corollary}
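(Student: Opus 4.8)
The plan is to invoke the preceding lemma with $\bz$ taken to be an optimal integer solution, and then to optimize the resulting bound over the admissible range of the parameter $t$. Let $C \subseteq N$ be a set with $\b1_C \in P$ and $f(C) = \max\{F(\bx) : \bx \in P \cap \{0,1\}^N\}$, which exists because $F$ agrees with $f$ on the integer points of the cube. First I would observe that Algorithm~\ref{alg:fr-local-search} is applicable to the polytope $P \cap [0,t]^N$: this polytope is down-monotone, being an intersection of two down-monotone sets, and it admits an efficient separation oracle obtained by checking the box constraints $x_i \le t$ and, if these hold, calling the separation oracle for $P$. Hence the guarantees of Algorithm~\ref{alg:fr-local-search} --- and with them the statement of the preceding lemma --- apply to the point $\bx$ returned by Algorithm~\ref{alg:local-t} when it is run with parameter $a \ge 4$. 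Taking $\bz = \b1_C$, the lemma yields, with high probability, $F(\bx) \ge \bigl(t - \tfrac12 t^2 - o(1)\bigr) f(C)$.

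It remains to choose $t$ so as to maximize $g(t) := t - \tfrac12 t^2$ subject to the hypothesis $t \in [0, \tfrac12(3-\sqrt 5)]$ of the lemma. Since $g'(t) = 1 - t > 0$ for $t \in [0,1)$, the function $g$ is strictly increasing on this interval, so its maximum is attained at the right endpoint $t = \tfrac12(3-\sqrt 5)$. For this value one has $t^2 = \tfrac14(3-\sqrt 5)^2 = \tfrac12(7 - 3\sqrt 5)$, whence $g\bigl(\tfrac12(3-\sqrt 5)\bigr) = \tfrac12(3-\sqrt 5) - \tfrac14(7 - 3\sqrt 5) = \tfrac14(-1+\sqrt 5) \approx 0.309$. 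Substituting back, $F(\bx) \ge \bigl(\tfrac14(-1+\sqrt 5) - o(1)\bigr) f(C)$ with high probability, and since Algorithm~\ref{alg:local-t} returns $\bx$ and $f(C)$ is precisely $\max\{F(\bx') : \bx' \in P \cap \{0,1\}^N\}$, this is the asserted approximation ratio.

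I do not anticipate any genuine obstacle: all the analytic work is done by the preceding lemma, whose proof balances the two threshold-rounding estimates for $F(\bx \wedge \bz')$ and $F(\bx \vee \bz')$, with the constraint $t \le (1-t)^2$ --- equivalently $t \le \tfrac12(3-\sqrt 5)$ --- being exactly what makes the estimate for $F(\bx \vee \bz')$ usable. The corollary itself is just the elementary maximization of the quadratic $g$ against that constraint, together with the bookkeeping that the additive $o(1)$-term carried through from the lemma remains negligible relative to $\max\{F(\bx') : \bx' \in P \cap \{0,1\}^N\}$ (the case where this maximum is $0$ being trivial, since then any nonnegative solution is optimal).
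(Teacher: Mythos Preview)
Your proposal is correct and takes essentially the same approach as the paper, which states the result as an immediate corollary of the preceding lemma without giving a separate proof. You have simply filled in the routine details: applying the lemma to an optimal integer solution $\b1_C$, maximizing the quadratic $t-\tfrac12 t^2$ over the admissible interval, and verifying the arithmetic at $t=\tfrac12(3-\sqrt5)$.
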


\section{Contention resolution schemes}
\label{sec:CRS}

In this section we discuss contention resolution schemes in more
detail and prove our results on the existence of contention resolution
schemes and their application to submodular maximization
problems.

\subsection{Contention resolution basics}

Recall the definition, from Section~\ref{sec:intro}, of a
$(b,c)$-balanced \crs scheme $\pi$ for a polytope $P_\cI$.  We first
prove the claim that a $(b,c)$-balanced \crs scheme $\pi$ can be transformed
into a $b c$-balanced \crs scheme $\pi'$ as follows. Let $\bx \in
P_{\cI}$ and $A\subseteq N$. We define $\pi'_{\bx}(A)$ as follows.
First each element of $A$ is removed independently of the others with
probability $1-b$ to obtain a random set $A'\subseteq A$.  We then set
$\pi'_{\bx}(A) = \pi_{\bx}(A')$.  The key observation is that if $A$
is a set drawn according to the distribution induced by $R(\bx)$, then
$A'$ has a distribution given by $R(b\bx)$. Hence, for any $i\in N$
\begin{align*}
\Pr[i\in \pi'_\bx(R(\bx))\mid i\in R(\bx)]
&= \frac{\Pr[i\in \pi'_\bx(R(\bx))]}{\Pr[i\in R(\bx)]}
= \frac{\Pr[i\in \pi_\bx(R(b\bx))]}{\Pr[i\in R(\bx)]}
= \frac{b \Pr[i\in \pi_\bx(R(b\bx))]}{\Pr[i\in R(b\bx)]}\\
&= b \Pr[i\in \pi_\bx(R(b\bx)) \mid i\in R(b\bx)] \geq bc,
\end{align*}
where the last inequality follows from the fact that $\pi$
is $(b,c)$-balanced.

\medskip
\noindent
{\bf Monotonicity of \crs schemes for submodular function maximization:}
The inequality that relates contention resolution to submodular
maximization is given in Theorem~\ref{thm:submod-balanced}. A proof 
of this inequality also appears in \cite{BansalKNS10} for monotone functions
without the pruning procedure. Before
presenting the proof, we provide some intuition on why monotonicity of
the \crs scheme is needed in the context of submodular function
maximization, and we specify the pruning procedure $\eta_f$.  It is
easy to see that if $P_{\cI}$ has a $c$-balanced \crs scheme then it
implies a $c$-approximation for maximizing a linear function over
$P_{\cI}$.  If $\bx$ is a fractional solution then its value is
$\sum_i w_i x_i$, where $w_i$ are some (non-negative) weights; since
each element $i$ is present in the final solution produced by a
$c$-balanced \crs scheme with probability at least $c x_i$, by
linearity of expectation, the expected weight of a solution returned
by a $c$-balanced scheme is at least $c \sum_i w_i x_i$.  More
generally, we would like to prove such a bound for any submodular
function $f$ via $F$. However, this is no longer obvious since
elements do not appear independently in the rounding scheme; recall
that $F(\bx)$ is the expected value of $f$ on a set produced by
independently including each $i$ with probability $x_i$. Monotonicity
is the property that is useful in this context, because elements of
smaller sets contribute more to a submodular function than elements of
larger sets.

To prove Theorem~\ref{thm:submod-balanced}, we first introduce the
claimed pruning function $\eta_f$.  To prune a set $I$ via the pruning
function $\eta_f$, an arbitrary ordering of the elements of $N$ is
fixed: for notational simplicity let $N=\{1,\dots,n\}$ which gives a
natural ordering.  Starting with $J=\emptyset$ the final set
$J=\eta_f(I)$---which we called the \emph{pruned} version of $I$---is
constructed by going through all elements of $I$ in the order induced
by $N$.  When considering an element $i$, $J$ is replaced by $J+i$ if
$f(J+i)-f(J)>0$.

\begin{proof}[Proof of Theorem~\ref{thm:submod-balanced}]
Let $R=R(\bx)$ and $I=\pi_\bx(R)$,
and let $J=\eta_f(I)$ if $f$ is non-monotone
and $J=I$ otherwise. Hence, in both cases, $J$ is the set returned by
the suggested rounding procedure.

Assume that $N=\{1,\dots, n\}$ is the same ordering of the elements
as used in the pruning operation (in case no pruning was applied, any
order is fine).
The main property we get by pruning is the following.
Notice that this property trivially holds when $f$
is monotone.
\begin{equation}\label{eq:pruneProp}
f_{J\cap [i-1]}(i) \geq 0 \quad \forall i\in J.
\end{equation}
Furthermore, for each $i\in I$ 
\begin{equation}\label{eq:posNotPruned}
f_{J\cap [i-1]}(i) > 0 \quad \Rightarrow \quad
  i\in J.
\end{equation}
Again, notice that this property holds trivially in
the monotone case in which we have $J=I$.

The main step that we will prove is
that for any fixed $i\in \{1,\dots,n\}$,
\begin{equation}\label{eq:pruneToProve}
\E[f(J\cap [i])-f(J\cap [i-1])]\geq c\E[f(R\cap [i])-f(R\cap [i-1])].
\end{equation}
We highlight that there are two sources of randomness over which the
expectation is taken on the left-hand side of the above inequality:
one source is the randomness in choosing the set $R$, and the other
source is the potential randomness of the CR scheme used to obtain the
set $I$ from $R$, which is later deterministically pruned to get $J$.
The theorem then follows from~\eqref{eq:pruneToProve} since
$$ \E[f(J)] = f(\emptyset) + \sum_{i=1}^{n} \E[f(J \cap [i]) - f(J \cap [i-1])]
 \geq f(\emptyset) + c \, \sum_{i=1}^{n} \E[f(R \cap [i]) - f(R \cap [i-1])]
  \geq c \, \E[f(R)].$$

Hence, it remains to prove~\eqref{eq:pruneToProve}.
Consider first the non-monotone case. Here we have
\begin{equation*}
\begin{aligned}
\E[f(J\cap [i])-f(J\cap [i-1])] &= \E[\b1_{i\in J}f_{J\cap [i-1]}(i)]\\
&= \Pr[i\in R]\cdot \E[\b1_{i\in J}f_{J\cap [i-1]}(i)\mid i\in R]\\
&\overset{\eqref{eq:pruneProp}}{=}\Pr[i\in R]\cdot \E[\b1_{i\in J}
  \max\{0,f_{J\cap [i-1]}(i)\}\mid i\in R]\\
&\overset{\eqref{eq:posNotPruned}}{=}\Pr[i\in R]\cdot \E[\b1_{i\in I}
  \max\{0,f_{J\cap [i-1]}(i)\}\mid i\in R]\\
&\geq \Pr[i\in R]\cdot \E[\b1_{i\in I}
  \max\{0,f_{R\cap [i-1]}(i)\} \mid i\in R] \quad \quad \mbox{(since $f$ is submodular)}\\
&= \Pr[i\in R]\cdot \E[\E[\b1_{i\in I}
  \max\{0,f_{R\cap [i-1]}(i)\}\mid R]\mid i\in R]\\
&= \Pr[i\in R]\cdot \E[\E[\b1_{i\in I}\mid R]
  \max\{0,f_{R\cap [i-1]}(i)\} \mid i\in R].
\end{aligned}
\end{equation*}

On the product space associated with the distribution of $R$ conditioned
on $i\in R$, both of the terms $\E[\b1_{i\in I}\mid R]$ and
$\max\{0,f_{R\cap [i-1]}(i)\}$ are non-increasing functions,
because of the monotonicity of the \crs scheme used to obtain $I$ from
$R$ and $f$ being submodular, respectively. Notice that the randomness
in both terms $\E[\b1_{i\in I} \mid R]$ and $\max\{0,f_{R\cap [i-1]}(i)\}$
stems only from the random set $R$, and not from the potential
randomness of the CR scheme.
Hence, by the FKG inequality
we obtain

\begin{equation*}
\begin{aligned}
\Pr[i\in R]\cdot \E[\E[\b1_{i\in I}\mid R]
  &\max\{0,f_{R\cap [i-1]}(i)\} \mid i\in R]\\
  &\geq \Pr[i\in R] \cdot \E[\b1_{i\in I}\mid i\in R]\cdot
  \E[\max\{0,f_{R\cap [i-1]}(i)\}\mid i\in R]\\
&= \Pr[i\in R] \cdot \Pr[i\in I\mid i\in R] \cdot
   \E[\max\{0,f_{R\cap [i-1]}(i)\}\mid i\in R]\\
&\geq c \Pr[i\in R] \cdot \E[\max\{0,f_{R\cap [i-1]}(i)\}\mid i\in R]\\
&\geq c \Pr[i\in R] \cdot \E[f_{R\cap [i-1]}(i) \mid i\in R]\\
&= c \Pr[i\in R] \cdot \E[f_{R\cap [i-1]}(i)]\\
&=c \E[f(R\cap [i])-f(R\cap [i-1])],
\end{aligned}
\end{equation*}
where in the second to last equality we use again the property that
$f_{R\cap [i-1]}(i)$ is independent of $i\in R$.
Hence, this shows~\eqref{eq:pruneToProve} as desired, and completes the proof.
\end{proof}

The following subsection on strict CR schemes discusses an alternate way
of rounding that does not rely on pruning and is oblivious to the underlying
submodular function. As we highlight below, such a procedure is useful when 
the value of several submodular functions should approximately be preserved,
simultaneously.
However, since we do not rely on this alternate procedure later,
this part can safely be skipped.

\paragraph{Strict contention resolution schemes}

An alternative way to round in the context of non-monotone
submodular functions, that does not rely on pruning, can be obtained by
using a stronger notion of \crs schemes. More precisely, we say that
a $(b,c)$-balanced \crs scheme $\pi$ for $P_\cI$ is \emph{strict},
if it satisfies the second condition of a \crs scheme with equality,
i.e., $\Pr[i\in \pi_\bx(R(\bx))]=c$. We have the following
(the proof can be found in Appendix~\ref{sec:appendixCR}).
\begin{theorem}
\label{thm:submod-balanced-strict}
Let $f:2^N \rightarrow \RR_+$ be a non-negative submodular function
with multilinear relaxation $F$,
and $\bx$ be a point in $P_{\cI}$, a convex relaxation for $\cI
\subseteq 2^N$. Let $\pi$ be a monotone and strict $(b,c)$-balanced \crs scheme
for $P_{\cI}$, and let $I=\pi_{\bx}(R(\bx))$. Then
$$\E[f(I)] \geq c \, F(\bx).$$
\end{theorem}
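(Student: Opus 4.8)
The plan is to follow the telescoping-plus-FKG argument in the proof of Theorem~\ref{thm:submod-balanced}, but to use strictness in order to avoid replacing the marginal $f_{R\cap[i-1]}(i)$ by its truncation $\max\{0,f_{R\cap[i-1]}(i)\}$, which is exactly what forced the pruning step there. Write $R=R(\bx)$, $I=\pi_\bx(R)$, and fix the natural order $N=\{1,\dots,n\}$. From the exact identity $f(I)=f(\emptyset)+\sum_{i=1}^n \b1_{i\in I}\,f_{I\cap[i-1]}(i)$ and submodularity (since $I\cap[i-1]\subseteq R\cap[i-1]$) we get the pointwise bound $\b1_{i\in I}\,f_{I\cap[i-1]}(i)\geq \b1_{i\in I}\,f_{R\cap[i-1]}(i)$, hence
$$\E[f(I)]\ \geq\ f(\emptyset)+\sum_{i=1}^n \E\big[\b1_{i\in I}\,f_{R\cap[i-1]}(i)\big].$$
So it suffices to prove, for each $i\in\text{support}(\bx)$, the per-element bound $\E[\b1_{i\in I}\,f_{R\cap[i-1]}(i)]\geq c\,\E[f(R\cap[i])-f(R\cap[i-1])]$; for $i\notin\text{support}(\bx)$ both sides are $0$, since $i\notin R$ with probability $1$. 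Summing over $i$ and using $\E[f(R)]=F(\bx)$ and $f(\emptyset)\geq0$ then gives $\E[f(I)]\geq f(\emptyset)+c\,(F(\bx)-f(\emptyset))\geq cF(\bx)$.

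For the per-element bound, condition on the event $i\in R$. Since $f_{R\cap[i-1]}(i)$ depends only on the coordinates of $R$ in $[i-1]$, it is independent of $\b1_{i\in R}$, so $\E[f(R\cap[i])-f(R\cap[i-1])]=\Pr[i\in R]\cdot\E[f_{R\cap[i-1]}(i)]$; likewise $\E[\b1_{i\in I}\,f_{R\cap[i-1]}(i)]=\Pr[i\in R]\cdot\E[\b1_{i\in I}\,f_{R\cap[i-1]}(i)\mid i\in R]$ because $i\in I$ forces $i\in R$. Set $g(R)=\Pr_\pi[i\in\pi_\bx(R)]$ and $h(R)=f_{R\cap[i-1]}(i)$, so this conditional expectation equals $\E[g(R)h(R)\mid i\in R]$. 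Conditioned on $i\in R$, the remaining coordinates of $R$ form a product distribution; $g$ is non-increasing in them by monotonicity of the scheme, and $h$ is non-increasing by submodularity. By the FKG inequality, $\E[g(R)h(R)\mid i\in R]\geq \E[g(R)\mid i\in R]\cdot\E[h(R)\mid i\in R]$, and here is the one place strictness enters: $\E[g(R)\mid i\in R]=\Pr[i\in\pi_\bx(R)\mid i\in R]=c$ \emph{exactly}. Combining, $\E[\b1_{i\in I}\,f_{R\cap[i-1]}(i)\mid i\in R]\geq c\,\E[f_{R\cap[i-1]}(i)]$, and multiplying by $\Pr[i\in R]$ recovers the per-element bound.

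The crux — and the reason strictness is essential — is the sign of $h$. In Theorem~\ref{thm:submod-balanced}, condition (ii) only guarantees $\Pr[i\in\pi_\bx(R)\mid i\in R]\geq c$, so turning $\E[g(R)\mid i\in R]\geq c$ into a lower bound on $\E[g h\mid i\in R]$ requires multiplying an inequality by a nonnegative quantity, which forces the use of $\max\{0,h\}$ and hence the pruning operation. With equality in (ii) no sign hypothesis on $h$ is needed, and the possibly-negative marginal can be used directly. Beyond that, the routine verifications are that $h(R)=f_{R\cap[i-1]}(i)$ is genuinely a non-increasing function of the product coordinates (adding any ground element weakly decreases this marginal, by submodularity) so that FKG applies, and that the case $i\notin\text{support}(\bx)$ is handled trivially.
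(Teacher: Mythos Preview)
Your proof is correct and follows essentially the same approach as the paper: telescope $f(I)$, use submodularity to pass from $f_{I\cap[i-1]}(i)$ to $f_{R\cap[i-1]}(i)$, condition on $i\in R$, apply FKG to the two non-increasing functions $\Pr[i\in I\mid R]$ and $f_{R\cap[i-1]}(i)$, and then invoke strictness so that $\Pr[i\in I\mid i\in R]=c$ exactly, which lets you multiply through without needing the marginal to be nonnegative. Your explanation of why strictness removes the need for pruning is precisely the point the paper is making.
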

The advantage of using a strict \crs scheme compared to applying the
pruning step is that this version of rounding is oblivious to the
underlying submodular function $f$. This could potentially be useful
in settings where one is interested in simultaneously maximizing more
than one submodular function. Assume for example that $\bx$ is a point
such that $F_1(\bx)$ and $F_2(\bx)$ have simultaneously high
values, where $F_1$ and $F_2$ are the multilinear relaxations of two
submodular functions $f_1$ and $f_2$.  Then using a rounding that is
oblivious to the underlying submodular function leads to a randomly
rounded set $I$ satisfying $\E[f_1(I)]\geq c F_1(\bx)$ and
$\E[f_2(I)]\geq c F_2(\bx)$. 

Any monotone but not necessarily strict
$(b,c)$-balanced \crs scheme $\pi$ can be transformed into a monotone
$(b,c)$-balanced \crs scheme that is arbitrarily close to being strict
as follows.  For each element $i\in N$, one can estimate the
probability $c'_i = \Pr[i\in \pi_\bx(R(\bx))\mid i\in I]\geq c$ via
Monte-Carlo sampling within a polynomially small error (assuming that
$c$ is a constant).  Then we can modify the \crs scheme by removing
from its output $I$, element $i\in I$ with probability $1 - c / c'_i$.
The resulting scheme is arbitrarily close to being strict and can be
used in place of a strict scheme in Theorem~\ref{thm:submod-balanced-strict}
with a weaker guarantee; in applications to approximation,
the ratio is affected in the lower-order terms. We omit further details.

\medskip
\noindent {\bf Combining \crs schemes:} Next, we discuss how to
combine contention resolution schemes for different constraints.  We
consider a constraint $\cI = \cap_{i=1}^h \cI_i$ and its polyhedral
relaxation $P_{\cI} = \cap_i P_{\cI_i}$, such that $P_{\cI_i}$ has a
monotone $(b,c_i)$-balanced \crs scheme $\pi^i$.  We produce a
contention resolution scheme $\pi$ for $\cI$ which works with respect
to the natural combination of constraint relaxations --- an
intersection of the respective polytopes $P_{\cI_i}$. This ensures
that the relaxed problem is still tractable and we can apply our
optimization framework.

In case some elements $D\subseteq N$ are not
part of the constraint $\cI_i$, we assume without loss of
generality that $\pi^i$ never removes elements in $D$, i.e.,
$\pi^i_\bx(A)\cap D = A\cap D$ for any $\bx\in b P_{\cI_i}$
and $A\subseteq N$.
The combined contention resolution scheme $\pi$ for
$P_{\cI}$ is defined by
\begin{equation*}
\pi_\bx(A) = \bigcap_{i} \pi^i_\bx(A) \quad \text{for } A\subseteq N,
\bx\in b P_\cI.
\end{equation*}

A straightforward union bound
would state that the combined scheme $\pi$ 
is $(b,1-\sum_i (1-c_i))$-balanced for
$P_{\cI}$.  Using the FKG inequality, we obtain a stronger result in this
setting, namely a $(b, \prod_i c_i)$-balanced scheme.
Moreover, if each
constraint admits a $(b,c)$-balanced scheme and each element
participates in at most $k$ constraints, then we obtain a
$(b,c^k)$-balanced scheme.  This is the statement of
Lemma~\ref{lem:compose} which we prove here using the combined scheme
$\pi$ defined above.

\begin{proof}[Proof of Lemma~\ref{lem:compose}]

  Let us consider the $\leq k$ constraints that element $i$ participates
  in.  For simplicity we assume $k=2$; the general statement follows
  by induction. For notational convenience we define $R=R(\bx)$,
  $I_1=\pi^1_\bx(R)$ and $I_2=\pi^2_\bx(R)$.

Conditioned on $R$, the choices of $I_1,I_2$ are
  independent, which means that
$$ \Pr[i \in I_1 \cap I_2 \mid R] = \Pr[i \in I_1 \ \& \ i \in I_2 \mid R]
= \Pr[i \in I_1 \mid R] \Pr[i \in I_2 \mid R]. $$
Taking an expectation over $R$ conditioned on $i \in R$, we get
$$ \Pr[i \in I_1 \cap I_2 \mid i \in R] = \E_R[ \Pr[i \in I_1 \cap I_2 \mid R] \mid i \in R]
 = \E_R[ \Pr[i \in I_1 \mid R] \Pr[i \in I_2 \mid R] \mid i \in R].$$
Both $\Pr[i \in I_1 \mid R]$ and $\Pr[i \in I_2 \mid R]$ are non-increasing
functions of $R$ on the product space of sets containing $i$,
so by the FKG inequality,
$$ \E_R[ \Pr[i \in I_1 \mid R] \Pr[i \in I_2 \mid R] \mid i \in R]
 \geq \E_R[  \Pr[i \in I_1 \mid R] \mid i \in R] \cdot
 \E_R[  \Pr[i \in I_2 \mid R] \mid i \in R].$$
Since these expectations are simply probabilities conditioned on $i \in R$, we conclude:
$$ \Pr[i \in I_1 \cap I_2 \mid i \in R] \geq 
 \Pr[i \in I_1 \mid i \in R] \Pr[i \in I_2 \mid i \in R].$$
Monotonicity of the above scheme is also easily implied: consider
$j \in T_1 \subset T_2 \subseteq N$, then
  \begin{eqnarray*}
\Pr[j \in I | R = T_1] = \prod_i \Pr [j \in I_i | R = T_1]
    \ge \prod_i \Pr [j \in I_i | R = T_2] = \Pr[j \in I | R = T_2].
  \end{eqnarray*}
where the inequality follows from the fact that each of the schemes is monotone.
The polynomial time implementability of the composed scheme follows
easily from the polynomial time implementability of
$\pi^1$ and $\pi^2$.
\end{proof}

\subsection{Obtaining \crs schemes via distributions of deterministic
\crs schemes}\label{subsec:distrCRschemes}

We now describe a general way to obtain \crs schemes relying on an LP
approach.
More precisely, we will observe that any \crs scheme can be interpreted
as a distribution over deterministic \crs schemes.
Exploiting this observation, we formulate an exponential-sized LP 
whose optimal solution corresponds to an optimal \crs scheme. The separation
problem of its dual then gives a natural characterization
for the existence of strong CR schemes, which can be made algorithmic in
some interesting cases including matroid constraints, as we show in
Section~\ref{subsec:CRmatroids}.
Furthermore, in Section~\ref{subsec:corrGap}, we will use this point
of view to draw a connection to a recently introduced concept,
known as correlation gap.

\medskip

Recall the formal definition of \crs schemes given in
Definition~\ref{defn:crscheme}, in particular the differences between
oblivious, deterministic and general (randomized) schemes.  First, we
note that the simplest \crs schemes are the oblivious ones. An
oblivious scheme does not depend on $\bx$ and is deterministic; hence
it is essentially a single mapping $\pi:2^N \rightarrow \cI$ that
given $A \subseteq N$ returns a set $\pi(A)$ such that $\pi(A)
\subseteq A$ and $\pi(A) \in \cI$.  Several alteration based schemes
are oblivious --- see \cite{CCKR01,ChekuriMS03} for some examples. A
typical oblivious scheme fixes an ordering of the elements of $N$
(that depends on the combinatorial properties of $\cI$); it starts
with an empty set $A'$, and considers the elements of $A$ according
to the fixed order and adds the current element $i$ to the set $A'$ if
$A'\cup \{i\} \in \cI$, otherwise it discards $i$. Finally it outputs
$A'$. These greedy ordering based insertion schemes are easily seen to
be monotone. A deterministic \crs scheme is more general than an
oblivious scheme in that the output can depend on $\bx$; in other
words, for each $\bx \in P_\cI$, $\pi_\bx$ is a mapping from $2^N$ to
$\cI$. The advantage or need for such a dependence is demonstrated by
matroid polytopes.  Let $P(\cM)$ be the convex hull 
of the independent sets of a matroid $\cM$; oblivious schemes cannot give a
$c$-balanced \crs scheme for any constant $c$.  However, we can show
that for any $b \in [0,1]$ a good deterministic \crs scheme
exists: for any $\bx \in
P_\cM$, there is an ordering $\sigma_\bx$ that can be efficiently
computed from $\bx$ such that a greedy insertion scheme based on the
ordering $\sigma_\bx$ gives a $(b,1-b)$-balanced scheme. Such a scheme
for $b=1/2$ is implicitly present in \cite{ChawlaHMS09}, however for
completeness, we give the details of our scheme in
Section~\ref{subsec:CRmatroids}. The algorithm in \cite{ChanH09} for
geometric packing problems was reinterpreted as a deterministic \crs
scheme following our work; it is also based on computing an ordering
that depends on $\bx$ followed by a greedy insertion procedure via the
computed ordering (see also more recent work \cite{EneHR12}). Such
ordering based deterministic schemes are easily seen to be
monotone.

In contrast to deterministic schemes, general (randomized) \crs
schemes are such that $\pi_\bx(A)$ is a \emph{random} feasible subset
of $A$. Randomization is necessary to obtain an optimal result even
when considering contention for a single item
\cite{FeigeV06,FeigeV10}. For the time being, we do not require the
\crs schemes to be monotone; this is a point we discuss later.
A non-oblivious $(b,c)$-balanced \crs scheme $\pi$, deterministic or
randomized, can depend on $\bx$, and hence it is convenient to view it
is a collection of separate schemes, one for each $\bx \in b
P_\cI$. They are only tied together by the uniform guarantee $c$. In
the following we will fix a particular $\bx$ and focus on finding the
best scheme $\pi_\bx$ for it. As we already discussed, if $\pi$ is
deterministic, then $\pi_\bx$ is a mapping from $2^N$ to $\cI$.
We observe that a randomized scheme $\pi_\bx$ is 
a distribution over deterministic schemes; note that here we are
ignoring computational issues as well as monotonicity.
We formalize this now.
Call a mapping $\phi$ from $2^N$ to $\cI$ {\em valid} if
$\phi(A)\subseteq A$ \;$\forall A\subseteq N$.  Let $\Phi^*$ be the
family of all valid mappings from $2^N$ to $\cI$.  Any probability
distribution $(\lambda_\phi), \phi \in \Phi^*$ induces a randomized
scheme $\pi_\bx$ as follows. For a set $A$, the algorithm $\pi_\bx$
first picks $\phi \in \Phi^*$ according to the given probability
distribution and then outputs $\phi(A)$. Conversely, for every
randomized scheme $\pi_\bx$, there is an associated probability
distribution $(\lambda_\phi), \phi \in \Phi^*$ \footnote{Let $k$ be an
  upper bound on the number of random bits used by $\pi_\bx$.  For any
  fixed string $r$ of $k$ random bits, let $\phi^r$ be the valid
  mapping from $2^N$ to $\cI$ generated by the algorithm $\pi_\bx$
  with random bits set to $r$. The distribution where for each $r$ the
  probability assigned to $\phi^r$ is $1/2^k$ is the desired one.}.
Based on the preceding observation, one can write an LP to
express the problem of finding
a \crs scheme that is $(b,c)$-balanced for $\bx$ with a value of $c$
as high as possible. 
More precisely, for each $\phi \in \Phi^*$, we
define $q_{i,\phi} = \Pr[i \in \phi(R)]$, where, as usual, $R:=R(\bx)$
is obtained by including each $j\in N$ in $R$ with probability
$\mathrm{x}_j$, independently of the other elements.
Thus, for a given distribution $(\lambda_\phi)_{\phi\in \Phi^*}$,
the probability that the corresponding \crs scheme $\pi_\bx$
returns a set $\pi_\bx(R)$ containing $i$, is given
by $\sum_{\phi \in \Phi^*}q_{i,\phi}\lambda_\phi$.
Hence, the problem of finding the
distribution $(\lambda_\phi)_{\pi\in \Phi^*}$
that leads to a $(b,c)$-balanced CR scheme for $\bx$ with $c$
as high as possible can be formulated as the following linear
program (LP1),
with corresponding dual (DP1).
\begin{equation*}
\begin{array}{rrrcll}
\multirow{4}{*}{(LP1)} & \max & c \hfill &&&\\
& s.t. & \sum_{\phi \in \Phi^*} q_{i,\phi} \lambda_\phi  &\geq
    & \mathrm{x}_i c &\forall i\in N\\
& &\sum_{\phi \in \Phi^*} \lambda_{\phi} & = &1 &\\
& & \lambda_\phi &\geq &0 &\forall \phi \in \Phi^* \\[0.5em]
\multirow{4}{*}{(DP1)}& \min & \mu \hfill &&&\\
& \text{s.t.} & \sum_{i\in N} q_{i,\phi} y_i & \leq & \mu
   &\forall \phi \in \Phi^*\\
& &\sum_{i\in N} \mathrm{x}_i y_i & = &1 &\\
& & y_i &\geq &0 &\forall i\in N
\end{array}
\end{equation*}

In general we may also be interested in
a restricted set of mappings $\Phi \subseteq \Phi^*$.
In the above LP we can replace $\Phi^*$ by $\Phi$ to obtain
the best $c$ that can be achieved by taking probability distributions
over valid mappings in $\Phi$. Let $c(\bx,\Phi)$ be the optimum
value of the LP for a given $\bx$ and a set $\Phi \subseteq \Phi^*$.
It is easy to see that $c(\bx,\Phi) \le c(\bx, \Phi^*)$ for any
$\Phi$. From the earlier discussion, $c(\bx,\Phi^*)$ is the best
scheme for $\bx$. We summarize the discussion so far by the following.

\begin{proposition}
There exists a $(b,c)$-balanced \crs scheme
for $P_\cI$ iff \, $\inf_{\bx \in bP_\cI} c(\bx,\Phi^*) \ge c$.
\end{proposition}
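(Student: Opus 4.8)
The plan is to prove the two implications separately, in each case using the correspondence, recalled just before the statement, between randomized \crs schemes for a \emph{fixed} point $\bx$ and probability distributions over the family $\Phi^*$ of valid mappings --- which is precisely the object over which LP1 optimizes. The one substantive observation is that the feasible region of LP1 encodes exactly the achievable vectors $\big(\Pr[i\in\pi_\bx(R(\bx))]\big)_{i\in N}$; once this is in place, everything else is bookkeeping.

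For the ``only if'' direction, I would fix a $(b,c)$-balanced \crs scheme $\pi$ and an arbitrary $\bx\in bP_\cI$, and consider the distribution $(\lambda_\phi)_{\phi\in\Phi^*}$ associated with $\pi_\bx$ as explained before the statement (condition on the random bits of $\pi_\bx$; each fixed bit string yields a valid deterministic mapping $\phi\in\Phi^*$). Since the random bits of $\pi_\bx$ are independent of $R=R(\bx)$, we get $\Pr[i\in\pi_\bx(R)]=\sum_{\phi}\lambda_\phi q_{i,\phi}$, and condition~(ii) of Definition~\ref{defn:crscheme} rewrites as $\sum_{\phi}\lambda_\phi q_{i,\phi}=\Pr[i\in\pi_\bx(R)]\ge c\,x_i$ for all $i\in\text{support}(\bx)$ (and it is vacuous, with $x_i=0$, for the remaining $i$). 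Together with $\sum_\phi\lambda_\phi=1$ and $\lambda_\phi\ge0$, this shows that $\big((\lambda_\phi),\,c\big)$ is feasible for LP1, so $c(\bx,\Phi^*)\ge c$; as $\bx$ was arbitrary, $\inf_{\bx\in bP_\cI}c(\bx,\Phi^*)\ge c$.

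For the ``if'' direction, I would assume $\inf_{\bx\in bP_\cI}c(\bx,\Phi^*)\ge c$, so $c(\bx,\Phi^*)\ge c$ for every $\bx\in bP_\cI$. Since $2^N$ and $\cI$ are finite, $\Phi^*$ is finite, so for each such $\bx$ the LP1 optimum is attained by some distribution $(\lambda^{\bx}_\phi)_{\phi\in\Phi^*}$. I would then let $\pi_\bx$ draw $\phi$ according to $(\lambda^\bx_\phi)$ and output $\phi(A)\cap\text{support}(\bx)$; by downward-closedness of $\cI$ this output still lies in $\cI$ and is contained in $A\cap\text{support}(\bx)$, so condition~(i) holds, and because elements outside $\text{support}(\bx)$ are almost surely absent from $R(\bx)$, for every $i\in\text{support}(\bx)$ we have $\Pr[i\in\pi_\bx(R(\bx))\mid i\in R(\bx)]=\tfrac{1}{x_i}\sum_\phi\lambda^\bx_\phi q_{i,\phi}\ge c(\bx,\Phi^*)\ge c$. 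Collecting the maps $\pi_\bx$ over all $\bx\in bP_\cI$ yields a $(b,c)$-balanced \crs scheme for $P_\cI$.

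There is no serious obstacle here; the only two spots needing a little care are the harmless post-processing by $\cap\,\text{support}(\bx)$ in the ``if'' direction, which reconciles plain validity of $\phi$ ($\phi(A)\subseteq A$) with condition~(i) of the definition, and the appeal to finiteness of $\Phi^*$ ensuring that the per-$\bx$ LP optimum is actually attained so that a concrete distribution can be extracted (alternatively, for each $\eps>0$ one extracts a distribution of value within $\eps$ of $c(\bx,\Phi^*)$ and obtains a $(b,c-\eps)$-balanced scheme, which suffices for all applications).
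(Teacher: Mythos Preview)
Your proposal is correct and follows essentially the same approach as the paper, which presents the proposition as a summary of the preceding discussion (the correspondence between randomized \crs schemes for a fixed $\bx$ and distributions over $\Phi^*$, together with the definition of $c(\bx,\Phi^*)$ as the optimum of LP1). Your write-up is a careful formalization of that discussion; the only addition is the harmless $\cap\,\text{support}(\bx)$ post-processing in the ``if'' direction, which the paper does not explicitly spell out but which is exactly the right fix to meet condition~(i) of Definition~\ref{defn:crscheme}.
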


\mypara{Proving the existence of a $(b,c)$-balanced \crs scheme:} To show that
$P_\cI$ has a $(b,c)$-balanced \crs scheme we need to show that $c(\bx,\Phi^*)
\ge c$ for all $\bx \in bP_\cI$.  By LP duality this is equivalent to
showing that the optimum value of the dual (DP1) is at least $c$ for
all $\bx$. We first reformulate the dual in a convenient form so that
proving a lower bound $c$ on the dual optimum reduces to a more
intuitive question. We will then address the issue of efficiently
constructing a \crs scheme that nearly matches the lower bound.

Below we will use $R$ to denote a random set obtained by picking
each $i \in N$ independently with probability $\mathrm{x}_i$ and use
probabilities and expectations with respect to this random process.
The optimum value of the dual can  be rewritten as:
$$\min_{\by \ge 0} \max_{\phi \in \Phi^*} \frac{\sum_i q_{i,\phi} y_i}{\sum_{i \in N} \mathrm{x}_i y_i} = \min_{\by \ge 0} \max_{\phi \in \Phi^*} \frac{\sum_i y_i \Pr[i \in \phi(R)]}{\sum_{i \in N} \mathrm{x}_i y_i} = \min_{\by \ge 0} \max_{\phi \in \Phi^*} \frac{\E_R\left[ \sum_{i \in \phi(R)} y_i\right]}{\sum_{i \in N} \mathrm{x}_i y_i}
$$
For any fixed weight vector $\by \ge 0$ we claim that
$$\max_{\phi \in \Phi^*} \E_R\left[ \sum_{i \in \phi(R)} y_i\right]
= \E_R\left[ \max_{S \subseteq R, S \in \cI} \sum_{i \in S} y_i\right],
$$
which follows by considering the specific mapping $\phi \in \Phi^*$ that
for each $A \subseteq N$ sets $\phi(A) = \max_{A' \subseteq A, A' \in \cI} y(A)$.
Thus, the dual optimum value is 
\begin{equation}
  \min_{\by \ge 0} \frac{\E_R\left[ \max_{S \subseteq R, S \in \cI} \sum_{i \in S} y_i\right]}{{\sum_{i \in N} \mathrm{x}_i y_i}}.
\label{eq:optDP1}
\end{equation}

The above expression can be explained as an ``integrality gap'' of
$P_\cI$ for a specific rounding strategy; here the problem of interest
is to find a maximum weight independent set in $\cI$.  The vector
$\by$ corresponds to weights on $N$. The vector $\bx$ corresponds to a
fractional solution in $bP_\cI$ (it is helpful here to think of
$b=1$). Thus $\sum_{i \in N} \mathrm{x}_i y_i$ is the value of the
fractional solution. The numerator is the expected value of a maximum
weight independent set in $R$.  Since we are minimizing over $\by$,
the ratio is the worst case gap between the value of an integral
feasible solution (obtained via a specific rounding) and a fractional
solution.

Thus, to prove the existence of a $(b,c)$-balanced \crs scheme it is
sufficient (and necessary) to prove that for all $\by\ge 0$ and $\bx
\in bP_\cI$
$$
 \frac{\E_{R(\bx)}\left[ \max_{S \subseteq R, S \in \cI} \sum_{i \in S} y_i\right]}{{\sum_{i \in N} \mathrm{x}_i y_i}} \ge c.
$$

\mypara{Constructing \crs schemes via the ellipsoid algorithm:} We
now discuss how to efficiently compute the best \crs scheme for a
given $\bx$ by solving (LP1) via the dual (DP1). We observe that
$c(\bx,\Phi^*)$, the best bound for a given $\bx$, could be smaller
than the bound $c$. It should not be surprising that the separation
oracle for the dual (DP1) is related to the preceding
characterization. The separation oracle for (DP1) is the following:
given $\mu$ and weight vector $\by$, normalized such that $\sum_i
\mathrm{x}_iy_i = 1$, check whether there is any $\phi \in \Phi^*$
such that $\sum_{i\in N} q_{i,\phi}y_i > \mu$ and if so output a
separating hyperplane.  To see whether there is a violated constraint,
it suffices to evaluate $\max_{\phi \in \Phi^*} q_{i,\phi}y_i$ and
compare it with $\mu$.  Following the previous discussion, this
expression is equal to $\E_{R(\bx)}\left[ \max_{S \subseteq R, S \in
    \cI} \sum_{i \in S} y_i\right]$.  One can accurately estimate this
quantity as follows. First, we sample a random set $R$ using marginals
given by $\bx$. Then we find a maximum $\by$-weight subset of $R$ that
is contained in $\cI$. This gives an unbiased estimator, and to get a
high-accuracy estimate we repeat the process sufficiently many times
and take the average value. Thus, the algorithmic problem needed for
the separation oracle is the maximum weight independent set problem
for $\cI$: given weights $\by$ on $N$ and a $A \subseteq N$ output a
maximum weight subset of $A$ in $\cI$.  The sampling creates an
additive error $\epsilon$ in estimating $\E_{R(\bx)}\left[ \max_{S
    \subseteq R, S \in \cI} \sum_{i \in S} y_i\right]$ which results
in a corresponding loss in finding the optimum solution value $\mu^*$
to (DP1).  To implement the ellipsoid algorithm we also need to find a
separating hyperplane if there is a violated constraint.  A natural
strategy would be to output the hyperplane corresponding to the
violating constraint found while evaluating $\max_{\phi \in \Phi^*}
q_{i,\phi}y_i$. However, we do not necessarily have the exact
coefficients $q_{i,\phi}$ for the constraint since we use random
sampling. We describe in Section~\ref{sec:appendix-ellipsoid} of the
appendix the technical details in implementing the ellipsoid algorithm
with sufficiently accurate estimates obtained from sampling.  For now
assume we can find a separating hyperplane corresponding to the most
violated constraint. The ellipsoid algorithm can then be used to find
a polynomial number of dual constraints that certify that the dual
optimum is at least $\mu^*-\eps$ where $\mu^*$ is the actual dual
optimum value. By strong duality $\mu^* = c(\bx,\Phi^*)$.  We then
solve the primal (LP1) by restricting it to the variables that
correspond to the dual constraints found by the ellipsoid algorithm.
This gives a primal feasible solution of value $c(\bx,\Phi^*)-\eps$
and this solution is the desired \crs scheme.  We observe that the
primal can be solved efficiently since the number of variables and
constraints is polynomial; here too we do not have the precise
coefficients $q_{i,\phi}$ but we can use the esimates that come from
the dual --- see Section~\ref{sec:appendix-ellipsoid}. To summarize,
an algorithm for finding a maximum weight independent set in $\cI$,
together with sampling and the ellipsoid algorithm, can be used to
efficiently find a $(b,c(\bx,\Phi^*) -\eps)$-balanced \crs scheme
where $\eps$ is an error tolerance; the running time depends
polynomially on the input size and $1/\eps$. The proof can be easily
adapted to show that an $\alpha$-approximation for the max-weight
independent set problem gives a $\alpha \cdot c(\bx,\Phi^*) -\eps$
$\crs$ scheme.

\mypara{Monotonicity:}
The discussion so far did not consider the issue of monotonicity.
One way to adapt the above approach to monotone schemes is to define
$\Phi$ to be the family of all deterministic monotone \crs schemes and
solve (LP1) restricted to $\Phi$. A deterministic scheme $\phi$ is
monotone if it has the property that $i \in \phi(A)$ implies that $i
\in \phi(A')$ for all $A' \subset A$. Distributions of deterministic
monotone schemes certainly yield a monotone \crs
scheme. Interestingly, it is not true that all monotone randomized
\crs schemes can be obtained as distributions of deterministic ones.
Now the question is whether we can solve (LP1) restricted to monotone
deterministic schemes. In general this is a non-trivial problem.
However, the ellipsoid-based algorithm to compute $c(\bx,\Phi^*)$ that
we described above gives the following important property. In each
iteration of the ellipsoid algorithm, the separation oracle uses a
maximum-weight independent set algorithm for $\cI$ to find a
violating constraint; this constraint corresponds to a deterministic
scheme $\phi$ that is obtained by specializing the algorithm to the
given weight vector $\by$. Therefore, if the
maximum-weight independent set algorithm is monotone, then
all the constraints generated in the ellipsoid algorithm correspond to
monotone schemes. Since we solve the primal (LP1) only for the
schemes generated by the separation oracle for the dual
(DP1), it follows that there is an optimum solution to (LP1) that is a
distribution over monotone schemes! In such a case $c(\bx,\Phi^*) =
c(\bx,\Phi)$ and there is no loss in using monotone schemes.  
For matroids the greedy algorithm to find a maximum weight
independent set is a monotone algorithm.  Thus, for matroids, the
above approach of solving (DP1) and (LP1) can be used to obtain a
close to optimal monotone $(b,c)$-balanced \crs scheme. It
remains to determine the value of the optimal $c$ and we analyze it in
Section~\ref{subsec:CRmatroids}.  It may be the case that there is no
monotone maximum weight independent set algorithm for some given
$\cI$, say the intersection of two matroids. In that case we can
use an approximate montone algorithm instead.

We summarize the above discussions in the following theorem.

\begin{theorem}\label{thm:linkCRschemeLP}
  There is a $(b,c)$-balanced \crs scheme for $P_\cI$ iff
  $\E_{R(\bx)}\left[ \max_{S \subseteq R, S \in \cI} \sum_{i \in S}
    y_i\right] \geq c \sum_i y_i \mathrm{x}_i$ for all $\bx \in
  bP_\cI$ and $\by \ge 0$. Moreover, if there is a polynomial-time
  deterministic algorithm to find a maximum weight independent set in
  $\cI$, then for any $b$ and $\eps > 0$, there is a randomized
  efficiently implementable $(b,c^*-\eps)$-balanced \crs scheme for
  $P_\cI$ where $c^*$ is the smallest value of $c$ such that there is
  a $(b,c)$-balanced \crs scheme for $P_\cI$; the running time is
  polynomial in the input size and $1/\eps$.  In addition, if the
  maximum-weight independent set algorithm is monotone, the resulting
  \crs scheme is monotone.  
\end{theorem}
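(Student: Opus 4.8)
The plan is to assemble the theorem from the LP-duality characterization and the ellipsoid-based construction developed just above, so the proof is largely a matter of collecting pieces and citing the right identities. For the ``iff'' part I would start from the Proposition, which states that a $(b,c)$-balanced \crs scheme for $P_\cI$ exists if and only if $c(\bx,\Phi^*) \ge c$ for every $\bx \in bP_\cI$, where $c(\bx,\Phi^*)$ is the optimum of (LP1). By strong LP duality $c(\bx,\Phi^*)$ equals the optimum of (DP1), and we already rewrote that optimum as the ratio in~\eqref{eq:optDP1}, namely $\min_{\by\ge 0} \E_{R(\bx)}\!\left[\max_{S\subseteq R,\, S\in\cI}\sum_{i\in S} y_i\right] / \sum_i \mathrm{x}_i y_i$, where we may restrict to $\by$ with $\sum_i \mathrm{x}_i y_i = 1$ (the case $\sum_i \mathrm{x}_i y_i = 0$ is vacuous since then $y_i=0$ on $\text{support}(\bx)$ and both sides vanish). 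Hence ``$c(\bx,\Phi^*)\ge c$ for all $\bx\in bP_\cI$'' is exactly ``$\E_{R(\bx)}\!\left[\max_{S\subseteq R,\, S\in\cI}\sum_{i\in S} y_i\right] \ge c\sum_i \mathrm{x}_i y_i$ for all $\bx\in bP_\cI$ and $\by\ge 0$'', giving the first claim, and identifying $c^* = \inf_{\bx\in bP_\cI} c(\bx,\Phi^*)$.

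For the constructive part, fix $\bx\in bP_\cI$ and $\eps>0$; the goal is to produce in polynomial time a distribution over polynomially many mappings in $\Phi^*$ that is $(b,\, c(\bx,\Phi^*)-\eps)$-balanced for this $\bx$. I would run the ellipsoid algorithm on (DP1). Its separation oracle, given $(\mu,\by)$ normalized so that $\sum_i \mathrm{x}_i y_i = 1$, must decide whether $\max_{\phi\in\Phi^*}\sum_i q_{i,\phi} y_i > \mu$; as shown above this maximum equals $\E_{R(\bx)}\!\left[\max_{S\subseteq R,\, S\in\cI}\sum_{i\in S} y_i\right]$ and is attained by the deterministic mapping $\phi$ sending each $A$ to a maximum $\by$-weight independent subset of $A$. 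Thus a max-weight-independent-set subroutine for $\cI$, together with Monte-Carlo sampling of $R=R(\bx)$, gives an (approximate) evaluation of this quantity and, when a constraint is violated, produces the corresponding $\phi$ and the sampled sets certifying it. The ellipsoid algorithm terminates with a polynomial family of dual constraints certifying that the dual optimum is at least $\mu^*-\eps = c(\bx,\Phi^*)-\eps$; restricting (LP1) to the variables $\lambda_\phi$ for the mappings $\phi$ arising in those constraints and solving the resulting polynomial-size LP yields a primal feasible $(\lambda_\phi)$ of value at least $c(\bx,\Phi^*)-\eps$. This distribution, evaluated by drawing $\phi$ and outputting $\phi(A)$ (one more call to the independent-set algorithm), is an efficiently implementable scheme for $\bx$, and since $c(\bx,\Phi^*)-\eps \ge c^*-\eps$ uniformly in $\bx$, we get a $(b,c^*-\eps)$-balanced \crs scheme. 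The same argument with an $\alpha$-approximate independent-set oracle yields the stated $\alpha c^* - \eps$ variant.

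For monotonicity, recall that each $\phi$ reported by the separation oracle is the specialization, to a fixed weight vector $\by$, of the max-weight-independent-set algorithm, and is unaffected by the sampling (which only estimates the coefficients $q_{i,\phi}$). If that algorithm is monotone --- keeping an element $i$ available can only help $i$ be selected --- then each such $\phi$ is a monotone deterministic scheme; a finite distribution over monotone deterministic schemes is itself a monotone \crs scheme, and the primal solution we extract is precisely such a distribution. Hence in this case $c(\bx,\Phi^*) = c(\bx,\Phi)$ for $\Phi$ the monotone deterministic schemes, and the constructed scheme is monotone, completing the proof modulo the ellipsoid implementation details.

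I expect the main obstacle to be the faithful implementation of the ellipsoid algorithm when the coefficients $q_{i,\phi}$ are accessible only through sampling: one must guarantee that the inequality reported by the separation oracle is a genuine valid constraint of (DP1) up to a controlled additive error, that the \emph{same} estimated coefficients are reused consistently when the restricted primal is finally solved, and that the accumulated sampling error over the polynomially many ellipsoid iterations stays below $\eps$. These bookkeeping points are exactly what is deferred to Section~\ref{sec:appendix-ellipsoid}; everything else is a direct consequence of LP duality together with the identity for $\max_{\phi\in\Phi^*}\E_{R(\bx)}\!\left[\sum_{i\in\phi(R)} y_i\right]$ established above.
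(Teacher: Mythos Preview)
Your proposal is correct and follows essentially the same approach as the paper: the ``iff'' is obtained from the Proposition plus strong LP duality and the rewriting of the dual optimum as~\eqref{eq:optDP1}; the constructive part runs the ellipsoid method on (DP1) with a sampling-based separation oracle built from the max-weight-independent-set algorithm, then solves the restricted primal; and monotonicity follows because every $\phi$ produced by the separation oracle is a specialization of the (assumed monotone) independent-set algorithm. Your identification of the sampling/consistency bookkeeping as the only nontrivial technical point, deferred to Section~\ref{sec:appendix-ellipsoid}, matches the paper exactly.
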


Before leveraging the above theorem to design close to optimal
CR schemes for matroids, we highlight an interesting connection
between CR schemes and a concept known as
\emph{correlation gap}. This connection is a further insight
that we gain through the linear programs (LP1) and (DP1).

\subsection{Connection to correlation gap}\label{subsec:corrGap}

In this section we highlight a close connection between \crs schemes
and a concept known as {\em correlation gap}~\cite{ADSY10}.
The correlation gap is a function on set functions that measures 
how much the expected value of a set function with respect to some random
input can vary, if only the marginal probabilities
of the input are fixed.
We first show how one can naturally extend this notion to
sets $\mathcal{I}\subseteq 2^N$.
Then, by exploiting the dual
LP formulation of \crs schemes (DP1), we present a close relationship of
the notion of correlation gap, interpreted in terms of constraints, and
the existence of strong \crs schemes.

\medskip

\begin{definition}
\label{def:correlation-gap}
For a set function $f:2^N \rightarrow \RR_+$, the \emph{correlation
  gap} is defined as
\begin{equation*}
\kappa(f) = \inf_{\bx \in [0,1]^N} \frac{\E[f(R(\bx))]}{f^+(\bx)},
\end{equation*}
where $R(\bx)$ is a random set independently containing each
element $i$ with probability $x_i$,
and
\begin{equation*}
f^+(\bx) = \max \{ \sum_S \alpha_S f(S): \sum_S \alpha_S \b1_S
  = \bx, \sum_S \alpha_S = 1, \alpha_S \geq 0\}
\end{equation*}
is the maximum possible expectation of $f$ over distributions
with expectation $\bx$.
Furthermore, for a class of functions $\cal C$, the correlation
gap is defined by $\kappa({\cal C}) = \inf_{f \in {\cal C}} \kappa(f)$.
\end{definition}
In other words, the correlation gap is the worst-case ratio between
the multilinear extension $F(\bx) = \E[f(R(\bx))]$ and the concave
closure $f^+(\bx)$.
We remark that we define the correlation gap as a
number $\kappa \in [0,1]$, to be in line
with the parameter $c$ in our notion of a $(b,c)$-balanced \crs scheme
(the higher the better). The definition in~\cite{ADSY10} uses the inverse ratio.

The relationship between \crs schemes and correlation gap
arises as follows.

\begin{definition}
\label{def:constraint-gap}
For $\cI \subseteq 2^N$, we define the correlation gap as
$ \kappa(\cI) = \inf_{\bx \in P_\cI, \by \geq 0}
 \frac{1}{\sum_i \mathrm{x}_i y_i} \E[\max_{S \subseteq R, S \in \cI}
  \sum_{i \in S} y_i],$
where $R=R(\bx)$ contains element $i$ independently
with probability $\mathrm{x}_i$.
\end{definition}

The reason we call this quantity a correlation gap (considering Definition~\ref{def:correlation-gap}),
is that this quantity is equal to the correlation gap of the {\em weighted rank function} corresponding to $\cI$
(see Lemma~\ref{lem:corrGapRankFunc} below).

\begin{theorem}\label{thm:corrGapToC}
The correlation gap of $\cI$ is equal to
the maximum $c$ such that $\cI$ admits a
$c$-balanced \crs scheme.
\end{theorem}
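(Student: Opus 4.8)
The plan is to obtain the statement directly from Theorem~\ref{thm:linkCRschemeLP} specialized to $b=1$, upon noticing that the inequality appearing there is exactly the one defining $\kappa(\cI)$ in Definition~\ref{def:constraint-gap}. Concretely, write $c^* := \kappa(\cI)$, the infimum of $\frac{1}{\sum_i \mathrm{x}_i y_i}\,\E_{R(\bx)}[\max_{S\subseteq R,\,S\in\cI}\sum_{i\in S}y_i]$ over $\bx\in P_\cI$ and $\by\ge 0$ (with the usual convention that pairs with $\sum_i\mathrm{x}_iy_i=0$ are ignored: for such a pair with $\by\neq\bzero$, the support of $\by$ misses $\text{support}(\bx)$, so the numerator also vanishes and the pair imposes no constraint on $c$). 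I will show that $c^*$ is precisely the largest $c$ for which $\cI$ admits a $c$-balanced \crs scheme.

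For the achievability direction I would invoke the ``if'' part of Theorem~\ref{thm:linkCRschemeLP} with $b=1$: a $c$-balanced \crs scheme for $P_\cI$ exists provided $\E_{R(\bx)}[\max_{S\subseteq R,\,S\in\cI}\sum_{i\in S}y_i]\ge c\sum_i\mathrm{x}_iy_i$ for all $\bx\in P_\cI$ and $\by\ge 0$. Plugging in $c=c^*$, this inequality holds for every pair $(\bx,\by)$: it is the defining property of the infimum $c^*$ when $\sum_i\mathrm{x}_iy_i>0$, and it reads $0\ge 0$ otherwise. Hence a $c^*$-balanced \crs scheme exists.

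For the optimality direction I would show that no $c>c^*$ is achievable. This follows either from the ``only if'' part of Theorem~\ref{thm:linkCRschemeLP}, or, self-containedly, as follows: if $\pi$ is a $c$-balanced \crs scheme then for every $\bx\in P_\cI$ and $\by\ge 0$ the random set $\pi_\bx(R(\bx))$ is an independent subset of $R(\bx)$, so $\max_{S\subseteq R(\bx),\,S\in\cI}\sum_{i\in S}y_i\ge\sum_{i\in\pi_\bx(R(\bx))}y_i$; taking expectations and using $\Pr[i\in\pi_\bx(R(\bx))]=\mathrm{x}_i\,\Pr[i\in\pi_\bx(R(\bx))\mid i\in R(\bx)]\ge c\,\mathrm{x}_i$ (the definition of $c$-balanced for $i\in\text{support}(\bx)$, and trivial otherwise) gives $\E_{R(\bx)}[\max_{S\subseteq R,\,S\in\cI}\sum_{i\in S}y_i]\ge c\sum_i\mathrm{x}_iy_i$. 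Dividing by $\sum_i\mathrm{x}_iy_i$ and taking the infimum over $(\bx,\by)$ yields $c\le c^*$. The two directions together prove the theorem.

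I do not expect a genuine obstacle: the substantive work was already done in establishing Theorem~\ref{thm:linkCRschemeLP} (through LP duality between (LP1) and (DP1) and the rewriting~\eqref{eq:optDP1} of the dual optimum as an integrality-gap-type quantity), and what remains is bookkeeping --- identifying the $b=1$ instance of that theorem with Definition~\ref{def:constraint-gap} and disposing of the degenerate weight vectors $\by$ for which the denominator vanishes. Equivalently, one can bypass Theorem~\ref{thm:linkCRschemeLP} entirely: the Proposition preceding it states that a $c$-balanced scheme exists iff $\inf_{\bx\in P_\cI}c(\bx,\Phi^*)\ge c$, strong duality between (LP1) and (DP1) together with~\eqref{eq:optDP1} gives $c(\bx,\Phi^*)=\min_{\by\ge 0}\frac{\E_{R(\bx)}[\max_{S\subseteq R,\,S\in\cI}\sum_{i\in S}y_i]}{\sum_i\mathrm{x}_iy_i}$, and taking the infimum over $\bx\in P_\cI$ reproduces $\kappa(\cI)$ verbatim.
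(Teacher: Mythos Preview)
Your proposal is correct and follows essentially the same route as the paper: both arguments rest on the LP duality between (LP1) and (DP1), the rewriting~\eqref{eq:optDP1} of the dual optimum, and the Proposition characterizing when a $c$-balanced scheme exists. The paper's proof is a two-line sketch of exactly the argument you spell out in your final paragraph; your main route via Theorem~\ref{thm:linkCRschemeLP} is just the same argument packaged one level higher, and your added bookkeeping (the degenerate case $\sum_i \mathrm{x}_i y_i = 0$, and the observation that the infimum $c^*$ itself satisfies the required inequality so the maximum is attained) fills in details the paper leaves implicit.
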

\begin{proof}
The correlation gap of $\cI$ is equal
to the optimum value of (DP1).
By LP duality, this is equal to the optimum of the primal (LP1),
which is the best value of $c$
for which there is a $c$-balanced \crs scheme.
\end{proof}

The following lemma shows a close connection between the
correlation gap of a solution set $\cI$ and the correlation
gap of the respective rank function.
More precisely, the correlation gap of $\cI$ corresponds
to the worst (i.e. smallest) correlation gap of the respective rank function
over all weight vectors.

\begin{lemma}\label{lem:corrGapRankFunc}
For $\cI \subseteq 2^N$ and weight vector $\by \geq 0$,
let $r_\by(R) = \max_{S \subseteq R, S \in \cI} \sum_{i \in S} y_i$
denote the associated weighted rank function. Then
$\kappa(\cI)=\inf_{\by\geq 0}\kappa(r_\by)$.
\end{lemma}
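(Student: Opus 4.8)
The plan is to prove the identity $\kappa(\cI) = \inf_{\by \geq 0} \kappa(r_\by)$ by showing that both sides are equal to the same expression, namely the one appearing in Definition~\ref{def:constraint-gap}. First I would unfold the definition of the correlation gap of a single weighted rank function $r_\by$ via Definition~\ref{def:correlation-gap}: we have $\kappa(r_\by) = \inf_{\bx \in [0,1]^N} \E[r_\by(R(\bx))] / r_\by^+(\bx)$, where $r_\by^+$ is the concave closure. The numerator is exactly $\E[\max_{S \subseteq R(\bx), S \in \cI} \sum_{i \in S} y_i]$, which is the numerator in Definition~\ref{def:constraint-gap}. So the whole task reduces to understanding the denominator $r_\by^+(\bx)$ and to taking the infimum over $\by$ as well as over $\bx$.

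The key step — and the main obstacle — is to argue that $\inf_{\by \geq 0} \inf_{\bx \in [0,1]^N} \E[r_\by(R(\bx))]/r_\by^+(\bx)$ equals $\inf_{\bx \in P_\cI, \by \geq 0} \E[\max_{S \subseteq R, S \in \cI} \sum_{i \in S} y_i] / \sum_i x_i y_i$. Two things need to be reconciled: (a) the domain of $\bx$ is $[0,1]^N$ in one expression and $P_\cI$ in the other, and (b) the denominator is $r_\by^+(\bx)$ in one and the linear functional $\sum_i x_i y_i$ in the other. For (b), I would use the standard fact that $r_\by$ is a monotone submodular function (it is a weighted matroid-type rank function: the maximum weight of an independent subset), and that for a monotone submodular $f$ with $f(\emptyset)=0$ one has $f^+(\bx) = \max\{\sum_i x_i y_i : \text{...}\}$ when restricted appropriately — more precisely, I claim that for a fixed $\bx$, the worst ratio over all $\by$ is attained at $\bx \in P_\cI$, because if $\bx \notin P_\cI$ one can show the ratio is no smaller than at a suitable point of $P_\cI$. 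Actually the cleaner route: when $\bx \in P_\cI$, the concave closure of $r_\by$ at $\bx$ equals $\sum_i x_i y_i$, since $\bx$ can be written as a convex combination of independent sets (vertices of $P_\cI$), and on each independent set $S$, $r_\by(\b1_S) = \sum_{i \in S} y_i = \b1_S \cdot \by$, so $r_\by^+(\bx) = \bx \cdot \by$ exactly; and this is clearly the maximum since $r_\by(S) \le \sum_{i \in S} y_i$ always, giving $f^+(\bx) \le \bx \cdot \by$ for every $\bx$. Conversely, for $\bx \notin P_\cI$, $r_\by^+(\bx) \le \bx \cdot \by$ still holds but with possible strict inequality, making the ratio $\E[r_\by(R(\bx))]/r_\by^+(\bx)$ potentially \emph{larger}; combined with a monotonicity/scaling argument (shrinking $\bx$ toward $P_\cI$ only decreases the numerator proportionally less), one shows the infimum over $[0,1]^N$ is attained within $P_\cI$. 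This handles (a) and (b) simultaneously.

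Concretely, the steps in order would be: (1) Observe $r_\by$ is monotone submodular with $r_\by(\emptyset) = 0$, so Definition~\ref{def:correlation-gap} applies and $\E[r_\by(R(\bx))] = \E[\max_{S \subseteq R(\bx), S \in \cI} \sum_{i\in S} y_i]$ is exactly the multilinear extension of $r_\by$ evaluated at $\bx$. (2) Show $r_\by^+(\bx) \le \sum_i x_i y_i$ for all $\bx \in [0,1]^N$, using $r_\by(S) \le \sum_{i \in S} y_i$ and the definition of $f^+$ as a max over convex combinations. (3) Show $r_\by^+(\bx) = \sum_i x_i y_i$ when $\bx \in P_\cI$, by writing $\bx$ as a convex combination of indicator vectors of independent sets and noting equality holds on each. (4) Conclude $\kappa(r_\by) = \inf_{\bx \in [0,1]^N} \E[r_\by(R(\bx))]/r_\by^+(\bx)$, and argue the infimum is attained (or approached) at some $\bx \in P_\cI$: if $\bx \notin P_\cI$, the ratio is at least $\E[r_\by(R(\bx))]/(\bx \cdot \by)$, and one checks this latter quantity, minimized over $\bx \in [0,1]^N$, is minimized over $\bx \in P_\cI$ — intuitively because outside $P_\cI$ the numerator ``saturates'' while the denominator keeps growing. (5) Taking the infimum over $\by \ge 0$ of both sides and comparing with Definition~\ref{def:constraint-gap} gives $\inf_{\by \ge 0} \kappa(r_\by) = \inf_{\bx \in P_\cI, \by \ge 0} \E[\max_{S\subseteq R, S\in\cI}\sum_{i\in S} y_i]/(\sum_i x_i y_i) = \kappa(\cI)$.

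I expect step (4) — carefully justifying that restricting $\bx$ to $P_\cI$ does not change the infimum — to be the delicate point, since one must rule out the ratio being strictly smaller for some $\bx \notin P_\cI$; the saturation argument should work because for $\bx \notin P_\cI$ the random set $R(\bx)$ contains, in expectation, ``more than feasible'' mass, so the rank function clips it and $\E[r_\by(R(\bx))]$ grows sublinearly in the scaling of $\bx$ while $\bx \cdot \by$ grows linearly, meaning the ratio only improves as we move toward $P_\cI$. If a fully rigorous saturation argument proves cumbersome, an alternative is to invoke directly the LP-duality characterization from Theorem~\ref{thm:linkCRschemeLP} / equation~\eqref{eq:optDP1}: both $\kappa(\cI)$ and $\inf_\by \kappa(r_\by)$ coincide with the optimum of (DP1), bypassing the need to reason about $\bx \notin P_\cI$ at all, since (DP1) already has $\bx \in bP_\cI$ built in with $b = 1$.
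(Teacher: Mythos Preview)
Your steps (1)--(3) and (5) match the paper's argument. The gap is in step (4), and it is a real one: your proposed route through the intermediate quantity $\E[r_\by(R(\bx))]/(\bx\cdot\by)$ does not work. You claim that this ratio, minimized over $\bx\in[0,1]^N$, is already minimized on $P_\cI$, but the opposite holds. Since $r_\by$ is submodular with $r_\by(\emptyset)=0$, its multilinear extension $F$ is concave along any ray from the origin, so $F(t\bx)/t$ is non-increasing in $t$; hence $F(\bx)/(\bx\cdot\by)$ \emph{decreases} as you scale $\bx$ up and out of $P_\cI$. Concretely, for the rank-$1$ uniform matroid on $N=\{1,2\}$ with $\by=(1,1)$, the ratio at $\bx=(1,1)\notin P_\cI$ is $1/2$, while the infimum over $P_\cI$ is $3/4$. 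So your chain $F(\bx)/r_\by^+(\bx)\ge F(\bx)/(\bx\cdot\by)\ge\inf_{\bx'\in P_\cI}F(\bx')/(\bx'\cdot\by)$ breaks at the second inequality. Your LP-duality fallback is circular: identifying $\inf_\by\kappa(r_\by)$ with the optimum of (DP1) is precisely what this lemma establishes.

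The paper handles step (4) by constructing, for any $\bx\in[0,1]^N$, a point $\bx'\in P_\cI$ with $\bx'\le\bx$ and $r_\by^+(\bx')\ge r_\by^+(\bx)$; monotonicity of $r_\by$ then gives $\E[r_\by(R(\bx'))]\le\E[r_\by(R(\bx))]$ and hence a smaller ratio at $\bx'$. The construction: take an optimal convex combination $\bx=\sum_S\alpha_S\b1_S$ realizing $r_\by^+(\bx)=\sum_S\alpha_S r_\by(S)$, let $I(S)\subseteq S$ be a maximum-$\by$-weight independent subset of $S$, and set $\bx'=\sum_S\alpha_S\b1_{I(S)}$. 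Then $\bx'\le\bx$, $\bx'\in P_\cI$ (convex combination of independent sets), and $r_\by^+(\bx')\ge\sum_S\alpha_S r_\by(I(S))=\sum_S\alpha_S r_\by(S)=r_\by^+(\bx)$. This replacement of each $S$ by its best independent subset is the missing idea.
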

\begin{proof}
Using the notation $r_{\by}(R)$ for the weighted rank function
with weights $\by$, the correlation gap of $\cI$ can be
rewritten as
$ \kappa(\cI) = \inf_{\bx \in P_\cI, \by \geq 0}
\frac{\E[r_\by(R(\bx))]}{\sum_i \mathrm{x}_i y_i}, $
 where $R(\bx)$ contains elements independently with
probabilities $\mathrm{x}_i$.
We first observe that for any $\bx\in P_\cI$, we have
$r_\by^+(\bx)=\sum_{i}\mathrm{x}_i y_i$. 
Hence, let $\bx\in P_\cI$, and
consider a convex combination 
$\bx = \sum_{S \in \cI} \alpha_S \b1_S$,
$\sum \alpha_S = 1$, $\alpha_S \geq 0$
with $r_\by^+(\bx)=\sum_{S\in \cI}\alpha_S y(S)$.
Since the weighted rank function of a feasible set $S\in \cI$ is
simply its weight we obtain
$$ r_\by^+(\bx) = \sum_{S \in \cI} \alpha_S y(S)
 = \by \cdot \sum_{S \in \cI} \alpha_S \b1_S
 = \by \cdot \bx = \sum_i \mathrm{x}_i y_i ,$$
as claimed. Therefore,
\begin{equation*}
\kappa(\cI)  = \displaystyle \inf_{\bx \in P_\cI, \by \geq 0}
\frac{\E[r_\by(R(\bx))]}{\sum_i \mathrm{x}_i y_i}
= \inf_{\bx \in P_\cI, \by \geq 0} \frac{\E[r_\by(R(\bx))]}{r_\by^+(\bx)}.
\end{equation*}
To prove the claim it remains to show that
\begin{equation}\label{eq:restrictP}
\inf_{\bx \in P_\cI, \by \geq 0} \frac{\E[r_\by(R(\bx))]}{r_\by^+(\bx)}=
\inf_{\bx \in [0,1]^N, \by\geq 0}\frac{\E[r_\by(R(\bx))]}{r_\by^+(\bx)}.
\end{equation}
Let $\by\geq 0$. 
We will prove~\eqref{eq:restrictP} by showing that for any point
$\bx\in [0,1]^N$ there is a point $\bx'\in P_\cI$
with $\bx'\leq \bx$ (coordinate-wise), and
satisfying $r^+_\by(\bx')\geq r^+_\by(\bx)$.
Since $r_\by$ is monotone, we then obtain
$\E[r_\by(R(\bx))]/r^+_\by(\bx)\geq \E[r_\by(R(\bx))]/r^+_\by(\bx')$, showing that
the infinum over $\bx$ on the right-hand side of~\eqref{eq:restrictP} can indeed be
restricted to the polytope $P_\cI$.
Let $\bx=\sum_{S\subseteq N}\alpha_S \b1_S$,
$\sum_{S\subseteq N}\alpha_S=1, \alpha_S\geq 0$ be a convex combination
of $\bx$ such that $r^+_\by(\bx)=\sum_{S \subseteq N}\alpha_S r_\by(S)$.
For every $S\subseteq N$, let $I(S)\subseteq S$ be a maximum weight independent
set, hence $r_\by(S)=y(I(S))$. The point
$\bx'=\sum_{S\subseteq N}\alpha_S \b1_{I(S)}$ clearly
satisfies $\bx'\leq \bx$, and furthermore
$$r^+_\by(\bx')\geq \sum_{S\in \cI}
\Bigg(\sum_{\substack{W\subseteq N, I(W)=S}} \alpha_W\Bigg)
r_\by(S) = \sum_{S\subseteq N} \alpha_S r_\by(S) =r^+_\by(\bx).
$$
\end{proof}

\subsection{Contention resolution for matroids}\label{subsec:CRmatroids}

In this section we prove the following theorem on \crs schemes for 
matroids. 
\begin{theorem}\label{thm:existenceCRSmatroid}
  For any matroid $\cM=(N,\cI)$ on $n$ elements 
  there exists a $\Big(b, \frac{1 - (1-\frac{b}{n})^n }{b}
  \Big)$-balanced \crs scheme for the polytope $P(\cM)$.
\end{theorem}
We later address monotonicity of the scheme and constructive aspects.
To prove Theorem~\ref{thm:existenceCRSmatroid} we rely on the
characterization formalized in Theorem~\ref{thm:linkCRschemeLP}.  It
suffices to prove for $\bx\in b\cdot P_{\mathcal{I}}$ and any
non-negative weight vector $\by \ge 0$ that
$\E_R\left[\max_{S\subseteq R, S \in \cI}\sum_{i \in S}y_i\right]\geq
c\sum_{i\in N} x_i y_i$, with $c=\frac{1-(1-\frac{b}{n})^n}{b}$ where
$R$ contains each $i\in N$ independently with probability $x_i$ and
$\bx \in b\cdot P_{\mathcal{I}}$.
For
a given weight vector $\by \ge 0$ on $N$ and a set $S \subseteq N$ let
$r_{\by}(S)$ denote the weight of a maximum weight independent set
contained in $S$; in other words $r_{\by}$ is the weighted rank
function of the matroid $\cM$.  Restating, it remains to prove
\begin{equation}\label{eq:matToProve}
\E[r_{\by}(R)] \geq \frac{1-(1-\frac{b}{n})^n}{b}\sum_{i \in N} y_i x_i,
\end{equation}

It is well-known that a simple greedy
algorithm can be used to compute $r_{\by}(S)$ (in fact an independent
set $S' \subseteq S$ of maximum weight with respect to $y_i$):
Start with $S'=\emptyset$, consider the elements of $S$
in non-increasing order of their weight $y_i$ and add the current
element $i$ to $S'$ if $S'+i$ is independent, otherwise discard $i$.

To show~\eqref{eq:matToProve}, which is a general property of
weighted matroid rank functions, we prove a more general result
that holds for any non-negative monotone submodular function.
The main ingredient for this is a lower bound on the multilinear
extension, which is stated in Lemma~\ref{lem:correl-gap-stronger}.
A slightly weaker form of Lemma~\ref{lem:correl-gap-stronger}, which
we state as Lemma~\ref{lem:correl-gap-weak}, will be presented first,
due to its consice proof. The proof of
Lemma~\ref{lem:correl-gap-stronger} is deferred to the appendix.
Both lemmas can be seen as an extension of the property that 
the correlation gap for monotone submodular functions is $1-1/e$~\cite{CCPV07}.

\begin{lemma}
\label{lem:correl-gap-weak}
If $f:2^N \rightarrow \RR_+$ is a monotone submodular function, $F:[0,1]^N \rightarrow \RR_+$
its multilinear extension, and $f^+:[0,1]^N \rightarrow \RR_+$ its concave closure, then
for any $b \in [0,1]$ and $\bp\in [0,1]^N$,
$$ F(b\cdot \bp) \geq (1 - e^{-b}) f^+(\bp).$$
\end{lemma}

\begin{proof}
We use another extension of a monotone submodular function, defined in \cite{CCPV07}:
$$ f^*(\bp) = \min_S \left( f(S) + \sum_i p_i f_S(i) \right).$$
It is shown in \cite{CCPV07} that $f^*(\bp) \geq f^+(\bp)$ for all $\bp \in [0,1]^N$.
Consider the function $\phi(t) = F(t \bp)$ for $t \in [0,1]$, i.e. the multilinear extension
on the line segment between $\bf 0$ and $\bp$. We prove that $\phi(t)$ satisfies a differential
equation similar to the analysis of the continuous greedy algorithm \cite{CCPV09},
which leads immediately to the statement of the lemma. We have
$$ \frac{d\phi}{dt} = \bp \cdot \nabla F(t \bp) = \sum_i p_i \partdiff{F}{x_i}\Big|_{\bx=t\bp}.$$
By properties of the multilinear extension, we have $\partdiff{F}{x_i}\Big|_{\bx=t\bp} = \E[f(R+i)-f(R-i)]
 \geq \E[f_R(i)]$, where $R$ is a random set sampled independently with probabilities $x_i = t p_i$
(see \cite{CCPV09} for more details). Therefore,
$$ \frac{d\phi}{dt} = \sum_i p_i \partdiff{F}{x_i} \Big|_{\bx = t \bp} \geq \sum_i p_i \E[f_R(i)]
 = \E[\sum_i p_i f_R(i)] \geq \E[f^*(\bp) - f(R)] $$
by the definition of $f^*(\bp)$. Finally, $\E[f(R)] = F(t \bp) = \phi(t)$,
hence we obtain the following differential inequality:
$$ \frac{d \phi}{dt} \geq f^*(\bp) - \phi(t) $$
under the initial condition $\phi(0) \geq 0$. We solve this as follows:
$ \frac{d}{dt} (e^t \phi(t)) = e^t \phi(t) + e^t \frac{d\phi}{dt} \geq e^t f^*(\bp) $
which implies that
$$ e^b \phi(b) \geq e^0 \phi(0) + \int_0^b e^t f^*(\bp) dt \geq (e^b - 1) f^*(\bp).$$
Considering that $\phi(b) = F(b \bp)$ and $f^*(\bp) \geq f^+(\bp)$, this proves the lemma.
\end{proof}

A more fine-grained analysis leads to the following strengthened
version of Lemma~\ref{lem:correl-gap-weak}, whose proof 
can be found in Appendix~\ref{sec:appendixCR}.

\begin{lemma}\label{lem:correl-gap-stronger}
If $f:2^N \rightarrow \RR_+$ is a monotone submodular
function, $F:[0,1]^N \rightarrow \RR_+$
its multilinear extension, and $f^+:[0,1]^N \rightarrow \RR_+$ its
concave closure, then for any $b \in [0,1]$ and $\bp\in [0,1]^N$,
\begin{equation*}
F(b\cdot \bp) \geq \Big(1-\Big(1-\frac{b}{n}\Big)^n\Big) f^+(\bp).
\end{equation*}
\end{lemma}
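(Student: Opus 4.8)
The plan is to reduce the statement, by rescaling, to the following: for any monotone submodular $f:2^N\to\RR_+$ with $|N|=n$, any $b\in[0,1]$ and any $\bp\in[0,1]^N$,
$$F(b\bp)\ \ge\ \Big(1-(1-\tfrac bn)^n\Big)\,f^+(\bp).$$
Fix an optimal convex combination realizing the concave closure, $f^+(\bp)=\sum_j\alpha_j f(S_j)$ with $\sum_j\alpha_j\b1_{S_j}=\bp$, $\sum_j\alpha_j=1$, $\alpha_j\ge0$; thus if $S$ is drawn with $\Pr[S=S_j]=\alpha_j$ then $\Pr[i\in S]=p_i$ for every $i$ and $\E[f(S)]=f^+(\bp)$.

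First I would introduce an auxiliary randomized set $R'$: run $n$ independent \emph{rounds}; in each round, independently with probability $b/n$, draw a fresh copy $S$ from the $\alpha$-distribution and call the round \emph{active} (otherwise the round contributes nothing); let $R'$ be the union of the sets drawn in the active rounds. Two observations drive the argument. (i)~For each $i$, $\Pr[i\in R']=1-(1-\tfrac bn p_i)^n\le bp_i$, by Bernoulli's inequality. (ii)~Let $\ell$ be the number of active rounds, so $\ell\sim\mathrm{Binomial}(n,b/n)$ and $\Pr[\ell\ge1]=1-(1-\tfrac bn)^n$. Conditioned on $\ell\ge1$, writing $R'=S^{(1)}\cup\cdots\cup S^{(\ell)}$ with the $S^{(k)}$ i.i.d.\ from the $\alpha$-distribution, monotonicity of $f$ gives $\E[f(R')\mid\ell\ge1]\ge\E[f(S^{(1)})]=f^+(\bp)$, while for $\ell=0$ we only use $f(\emptyset)\ge0$. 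Hence
$$\E[f(R')]\ \ge\ \Pr[\ell\ge1]\cdot f^+(\bp)\ =\ \Big(1-(1-\tfrac bn)^n\Big)f^+(\bp).$$

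The remaining, and in my view main, obstacle is to show $F(b\bp)=\E[f(R(b\bp))]\ge\E[f(R')]$. This does \emph{not} follow from observation~(i) via a generic monotonicity/stochastic-domination argument: a monotone submodular function need not satisfy $\E[f(X)]\le F(\bq)$ for an arbitrary correlated $X$ whose marginals are dominated by $\bq$ --- the very existence of a positive correlation gap shows that correlated inputs can beat the product distribution with the same marginals --- so submodularity beyond monotonicity must be used. The approach I would take is a telescoping coupling that replaces the auxiliary rounds by genuinely independent ones: write $R(b\bp)$ itself as $\bigcup_{t=1}^n Q_t$, where $Q_1,\dots,Q_n$ are i.i.d.\ and each $Q_t$ includes element $i$ independently with probability $q_i:=1-(1-bp_i)^{1/n}$ (so the union again contains $i$ with probability exactly $bp_i$), and then compare the two unions round by round, showing that swapping a ``draw a whole $S_j$, or nothing'' round for a $Q_t$-round cannot decrease $\E[f(\cdot)]$. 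The inequality enabling each swap is the bound $f^+\le f^*$ of \cite{CCPV07}, which for the monotone submodular function $g_W(T):=f(W\cup T)$ (with $W$ the union produced so far) gives $\E_j[g_W(S_j)]\le g_W(\emptyset)+\sum_i p_i\big(g_W(\{i\})-g_W(\emptyset)\big)$; the delicate point, where the real work lies, is to certify that this first-order gain is at most the gain $\E[g_W(Q_t)]-g_W(\emptyset)$ of one product round, for the specific finite $n$ and not merely in the limit $n\to\infty$. An alternative to keep in reserve is an induction on $n$: conditioning $R(b\bp)$ on whether a fixed element $i$ is sampled writes $F(b\bp)$ as a convex combination of the multilinear extensions of the restriction $f|_{N\setminus i}$ and the contraction $T\mapsto f(T\cup i)$ on an $(n-1)$-element ground set, and $f^+(\bp)$ decomposes compatibly; the subtlety is that the branch conditioned on $i$ being sampled must retain full value (no inductive loss), so the comparison between $1-(1-b/n)^n$ and $1-(1-b/(n-1))^{n-1}$ has to be tracked carefully.

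Chaining $F(b\bp)\ge\E[f(R')]\ge\big(1-(1-\tfrac bn)^n\big)f^+(\bp)$ yields the lemma. Specialized to $b=1$ it recovers the fact that the correlation gap of monotone submodular functions is $1-1/e$ \cite{CCPV07}, and applied to $f=r_\by$, the weighted rank function of $\cM$, it gives inequality~\eqref{eq:matToProve} and hence Theorem~\ref{thm:existenceCRSmatroid}.
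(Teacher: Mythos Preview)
Your lower bound $\E[f(R')]\ge\big(1-(1-\tfrac bn)^n\big)f^+(\bp)$ is correct and easy, but the proof is incomplete precisely where you say it is: you never establish $F(b\bp)\ge\E[f(R')]$, and neither sketch closes the gap. In the telescoping swap, after using $f^+\le f^*$ you still need $\E[g_W(Q_t)]-g_W(\emptyset)\ \ge\ (b/n)\sum_i p_i\big(g_W(\{i\})-g_W(\emptyset)\big)$; but for monotone submodular $g_W$, the multilinear value on a product set is \emph{upper}-bounded by the first-order sum $\sum_i q_i\,g_W(\{i\})$ (concavity along nonnegative directions), not lower-bounded, so the inequality you need does not follow from the tools you invoke. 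The induction sketch is even less developed, and you yourself name an unresolved subtlety. As it stands, the proposal is a correct easy half plus an unproved hard half.

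The paper sidesteps your obstacle by a different intermediate construction. Rather than $n$ ``draw a whole $S_j$ or nothing'' rounds, it first takes a \emph{basic} optimal solution $(\alpha_j,A_j)_{j\in[m]}$ to the LP defining $f^+(\bp)$, so that $m\le n+1$, and sets $A=\bigcup_{j=1}^m A_j(b\alpha_j)$, where $A_j(b\alpha_j)$ keeps each element of $A_j$ independently with probability $b\alpha_j$. The key point is that $A$ is itself a product distribution: each $i$ appears independently with probability $1-\prod_{j:i\in A_j}(1-b\alpha_j)\le b p_i$, so $F(b\bp)\ge\E[f(A)]$ is immediate from monotonicity of $F$ --- no correlated-versus-product comparison is needed. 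The real work is then the \emph{lower} bound on $\E[f(A)]$: the paper uses a lemma from \cite{V07} (Lemma~\ref{lem:janSubIneq}) comparing $\E[f(\bigcup_j A_j(q_j))]$ to $\E[f(\bigcup_{j\in J}A_j)]$, and a sharpened combinatorial inequality (Lemma~\ref{lem:strongSubIneq}) that yields the factor $1-(1-\tfrac{b}{m})^m$. Because a basic LP solution has $m\le n+1$ sets, this gives the desired $1-(1-\tfrac bn)^n$ when $m\le n$; the boundary case $m=n+1$ is handled separately by observing that one $A_t$ is covered by the others and redistributing probabilities. So the $n$ in the final bound comes from the \emph{support size of a basic LP solution}, not from an $n$-round process, and this is exactly what makes your hard step trivial in their setup.
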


Lemma~\ref{lem:correl-gap-stronger}  implies~\eqref{eq:matToProve},
and therefore completes the proof of Theorem~\ref{thm:existenceCRSmatroid}, by
setting $f = r_\by$ and $b\cdot \bp = \bx$.
Notice that the multilinear extension of $r_{\by}$ evaluated at $\bx$ is
$\E[r_{\by}(R)]$. Furthermore, $r_{\by}(\bp)= \sum_{i\in N}y_i p_i = \sum_{i \in N} y_i \frac{x_i}{b}$
if $\bp$ is in the matroid polytope. Hence we
obtain~\eqref{eq:matToProve}:
\begin{align*}
\E[r_{\by}(R)] &\geq \left(1-\left(1-\frac{b}{n}\right)^n \right)
r_{\by}^+\left(\bp \right) 
 = \left(1-\left(1-\frac{b}{n}\right)^n \right)
\sum_{i \in N} y_i \frac{x_i}{b}\\
 &= \frac{1-\left(1-\frac{b}{n}\right)^n}{b} \sum_{i \in N} y_i x_i.
\end{align*}

\medskip Theorem~\ref{thm:linkCRschemeLP} also shows that an
efficient algorithm for computing $r_{\by}$ results in an efficiently
implementable near-optimal \crs scheme. It is well-known that a simple
greedy algorithm can be used to compute $r_{\by}(S)$ (in fact an
independent set $S' \subseteq S$ of maximum weight): Start with
$S'=\emptyset$, consider the elements of $S$ in non-increasing
order of their weight and add the current element $i$ to $S'$ if
$S'+i$ is independent, otherwise discard $i$. Moreover, it is easy
to see that this algorithm is monotone --- the ordering of the
elements by weight does not depend on the set $S$ and
hence if an element $i$ is included when evaluating $r_{\by}(A)$
then it will be included in evaluating $r_{\by}(B)$ for any $B \subset A$.
We thus obtain our main result for \crs schemes in the context of matroids
by combining Theorem~\ref{thm:linkCRschemeLP} 
for a choice of $\epsilon$ satisfying $\epsilon \leq \frac{b}{10n}$
with Theorem~\ref{thm:existenceCRSmatroid}, and by using
the inequality
$(1-\frac{b}{n})^n \leq e^{-b} - \frac{b^2}{10n}$~\footnote{%
This inequality can be obtained by observing that
$1-x+\frac{x^2}{3} \leq e^{-x}$ for $x\in [0,1]$, and
hence $(1-\frac{b}{n})^n \leq (e^{-\frac{b}{n}}-\frac{b^2}{3n^2})^n$.
Let $y= e^{-\frac{b}{n}}$ and $z=\frac{b^2}{3n^2}$ for simplicity.
One can easily check that for these values of $y$ and $z$ we have
$(y-z)^n \leq y^n - n y^{n-1} z + \frac{n^2}{2}y^{n-2} z^2$.
Expanding the last expression and using $n\geq 2$, since the
inequality is trivially true for $n=1$, the desired inequality
follows.
}.

\begin{corollary}\label{corr:mainMatroidRes}
For any matroid $\cM$, and $\bx\in b\cdot P_\cI$, there is an efficiently
 implementable $\big(b,\frac{1-e^{-b}}{b}\big)$-balanced
and monotone \crs scheme.
\end{corollary}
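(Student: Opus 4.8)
The plan is to combine the existential bound of Theorem~\ref{thm:existenceCRSmatroid} with the constructive-and-monotonicity part of Theorem~\ref{thm:linkCRschemeLP}, arranging that the $\eps$-loss of the ellipsoid-based construction is absorbed by a small slack in a numerical inequality. First I would invoke the second assertion of Theorem~\ref{thm:linkCRschemeLP}. For a matroid $\cM=(N,\cI)$ the weighted rank function $r_\by(S)=\max_{S'\subseteq S,\,S'\in\cI}\sum_{i\in S'}y_i$ is computed in polynomial time (using only an independence oracle) by the classical greedy algorithm that scans the elements in non-increasing order of $y_i$ and keeps each element iff it preserves independence; moreover this algorithm is monotone in the sense required by Theorem~\ref{thm:linkCRschemeLP}, because the scan order does not depend on the input set, so if $e_k$ is selected while evaluating $r_\by(A)$ it is also selected while evaluating $r_\by(A')$ for every $A'\subseteq A$ with $e_k\in A'$ --- equivalently $e_k\notin\ourspan(A\cap\{e_1,\dots,e_{k-1}\})$ implies $e_k\notin\ourspan(A'\cap\{e_1,\dots,e_{k-1}\})$ by monotonicity of $\ourspan(\cdot)$. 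Hence, for every $b$ and every $\eps>0$, Theorem~\ref{thm:linkCRschemeLP} produces an efficiently implementable \emph{monotone} $(b,c^*-\eps)$-balanced \crs scheme for the matroid polytope $P_\cI$, where $c^*$ denotes the largest $c$ for which a $(b,c)$-balanced \crs scheme exists, with running time polynomial in the input size and $1/\eps$.

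Next I would lower-bound $c^*$ and fix $\eps$ appropriately. By Theorem~\ref{thm:existenceCRSmatroid} (which in turn follows from the correlation-gap Lemma~\ref{lem:correl-gap-stronger} applied with $f=r_\by$), we have $c^*\ge \frac{1-(1-b/n)^n}{b}$. Combining this with the elementary inequality $(1-b/n)^n\le e^{-b}-\frac{b^2}{10n}$ --- which follows from $1-x+\tfrac{x^2}{3}\le e^{-x}$ on $[0,1]$ together with a short binomial estimate --- gives
$$c^*\ \ge\ \frac{1-e^{-b}+\tfrac{b^2}{10n}}{b}\ =\ \frac{1-e^{-b}}{b}+\frac{b}{10n}.$$
Running the construction above with any $\eps\le\frac{b}{10n}$ (a legitimate choice since then $1/\eps=O(n/b)$ is polynomially bounded) therefore yields a monotone, efficiently implementable, $(b,c^*-\eps)$-balanced scheme with $c^*-\eps\ge\frac{1-e^{-b}}{b}$; since a $(b,c)$-balanced scheme is a fortiori $(b,c')$-balanced for every $c'\le c$, this is the claimed $(b,\frac{1-e^{-b}}{b})$-balanced scheme.

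I do not expect a genuine obstacle: the substantive work is already done in Lemma~\ref{lem:correl-gap-stronger}/Theorem~\ref{thm:existenceCRSmatroid} and in the ellipsoid-based construction behind Theorem~\ref{thm:linkCRschemeLP}. The one point that requires care is that the additive $\eps$-loss of the construction must not erode the target constant $\frac{1-e^{-b}}{b}$; this is precisely why one needs the quadratic slack term $\frac{b^2}{10n}$ (rather than the crude bound $(1-b/n)^n\le e^{-b}$), and why $\eps$ must be tied to both $b$ and $n$ as above.
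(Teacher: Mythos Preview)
Your proposal is correct and follows essentially the same argument as the paper: both combine Theorem~\ref{thm:existenceCRSmatroid} with Theorem~\ref{thm:linkCRschemeLP} (using the monotone greedy algorithm for weighted matroid rank), invoke the inequality $(1-b/n)^n \le e^{-b} - b^2/(10n)$ to gain the additive slack $b/(10n)$, and then choose $\eps \le b/(10n)$ so that the ellipsoid construction's loss is absorbed. Your justification of the monotonicity of greedy and your handling of the $\eps$-loss match the paper's reasoning.
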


As shown by the following theorem, the \crs schemes that can be
obtained according to Corollary~\ref{corr:mainMatroidRes} are, up to
an additive $\eps$, asymptotically optimal.
\begin{theorem}\label{thm:upperBoundCRSs}
For any $b\in (0,1]$, there is no
$(b,c)$-balanced CR scheme for uniform matroids of rank one
on $n$ elements with $c > \frac{1-(1-\frac{b}{n})^n}{b}$.
\end{theorem}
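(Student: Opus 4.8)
The plan is to produce a single fractional point and run a one-line counting argument; there is no real obstacle, the only ``idea'' being the choice of the symmetric point, which the symmetry of the rank-one uniform matroid makes forced. Take $\cM=(N,\cI)$ to be the uniform matroid of rank one on $N=\{1,\dots,n\}$, so that $P_\cI=\{\bx\in[0,1]^N:\sum_{i}x_i\le 1\}$ and the only members of $\cI$ are $\emptyset$ and the singletons. I would set $\bx=\frac{b}{n}\b1$; since $\sum_i x_i=b$ we have $\bx\in b\cdot P_\cI$, and $\text{support}(\bx)=N$.

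Now suppose, for contradiction, that $\pi$ is a $(b,c)$-balanced \crs scheme for $P_\cI$, and apply it to this $\bx$. Let $R=R(\bx)$, so each element of $N$ lies in $R$ independently with probability $b/n$. Because $\pi_\bx(R)\subseteq R$ always, for every $i\in N$ we have
\[
\Pr[i\in\pi_\bx(R)]=\Pr[i\in R]\cdot\Pr[i\in\pi_\bx(R)\mid i\in R]\ \ge\ \tfrac{b}{n}\,c,
\]
using condition \eqref{item:crs} of Definition~\ref{defn:crscheme}. Summing over the $n$ elements of $N$ gives $\sum_{i\in N}\Pr[i\in\pi_\bx(R)]\ge b\,c$.

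On the other hand, $\pi_\bx(R)\in\cI$ with probability $1$, so $|\pi_\bx(R)|\le 1$ always, and $\pi_\bx(R)=\emptyset$ whenever $R=\emptyset$ (again since $\pi_\bx(R)\subseteq R$). Hence
\[
\sum_{i\in N}\Pr[i\in\pi_\bx(R)]=\E\big[\,|\pi_\bx(R)|\,\big]\ \le\ \Pr[\pi_\bx(R)\ne\emptyset]\ \le\ \Pr[R\ne\emptyset]\ =\ 1-\Big(1-\tfrac{b}{n}\Big)^{\!n}.
\]
Combining the two bounds yields $b\,c\le 1-(1-b/n)^n$, i.e.\ $c\le\frac{1-(1-b/n)^n}{b}$, which is exactly the asserted impossibility. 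I expect the only thing worth a sentence in the write-up is to note that this matches, up to the additive $\eps$, the positive guarantee of Corollary~\ref{corr:mainMatroidRes}, so the choice $\bx=\frac{b}{n}\b1$ is genuinely the extremal instance; everything else is a direct application of linearity of expectation and the independence in the definition of $R(\bx)$, so there is no hard step.
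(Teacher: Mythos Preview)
Your proof is correct and essentially identical to the paper's: both pick the symmetric point $\bx=\frac{b}{n}\b1$ in the rank-one uniform matroid, lower-bound $\E[|\pi_\bx(R)|]$ by $bc$ via the balancedness condition, and upper-bound it by $\Pr[R\ne\emptyset]=1-(1-b/n)^n$. The only cosmetic difference is that the paper writes the upper bound as $\E[r(R)]$ rather than $\Pr[R\ne\emptyset]$, and that your ``for contradiction'' framing is unnecessary since you directly derive $c\le\frac{1-(1-b/n)^n}{b}$.
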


\begin{proof}
Let $\cM=(N,\cI)$ be the uniform matroid of rank $1$ over $n=|N|$ elements,
and consider the point $\bx\in b\cdot P_{\cI}$ given by $x_i=b/n$ for $i\in N$.
Let $R$ be a random set containing each element $i\in N$ independently
with probability $x_i$. The expected rank of $R$ is given by
\begin{equation}\label{eq:expRank}
\E[r(R)] = 1-\Pr[R=\emptyset] = 1-\left(1-\frac{b}{n}\right)^n.
\end{equation}
Moreover, any $(b,c)$-balanced \crs scheme
returning a set $I\in \cI$ satisfies
\begin{equation}\label{eq:expISize}
\E[|I|] = \sum_{i\in N}\Pr[i\in I] \geq \sum_{i\in N} \frac{b c}{n}=b c.
\end{equation}
Since $I$ is an independent subset of $R$ we have
$\E[r(R)]\geq \E[|I|]$, and the claim follows
by~\eqref{eq:expRank} and~\eqref{eq:expISize}.
\end{proof}

\mypara{A simple $(b,1-b)$-balanced \crs scheme:} Here we describe a
sub-optimal $(b,1-b)$-balanced \crs scheme for matroid polytopes. Its
advantage is that it is {\em deterministic}, simpler and computationally
less expensive than the optimal scheme that requires
solving a linear program. Moreover, Lemma~\ref{lem:span-prob} that is
at the heart of the scheme, is of independent interest and may find
other applications. A similar lemma was independently shown in
\cite{ChawlaHMS09} (prior to our work but in a different context).
Let $\cM=(N,\cI)$ be a matroid. For $S\subseteq N$ recall
that $r(S)$ is the rank of $S$ in $\cM$. The span of a set
$S$ denoted by $\ourspan(S)$ is the set of all elements
$i \in N$ such that $r(S+i)= r(S)$. 

\begin{lemma}
  \label{lem:span-prob}
  If $\cM=(N,\cI)$ is a matroid, $\bx \in P(\cM)$, $b \in [0,1]$ and $R$ a
  random set such that $\Pr[i \in R] = b x_i$, then there is an
  element $i_0$ such that $\Pr[i_0 \in \text{span}(R)] \leq b$.
\end{lemma}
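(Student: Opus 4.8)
\medskip
\noindent\textbf{Proof plan.}
The plan is to prove the statement directly, by an averaging argument over the random set $R$, using only two facts: that $\bx \in P(\cM)$ gives the rank inequality $x(T)\le r(T)$ for \emph{every} $T\subseteq N$, and that the span of a set is a flat, so $r(\ourspan(S))=r(S)$. The whole argument rests on the observation that, although $\ourspan(R)$ may be a very large subset of $N$, its rank is still only $r(R)\le |R|$; hence, for every outcome of $R$,
$$ x(\ourspan(R)) \;\le\; r(\ourspan(R)) \;=\; r(R) \;\le\; |R|, $$
where the first inequality is the matroid polytope inequality applied to $T=\ourspan(R)$, the middle equality is the standard closure property $r(\ourspan(S))=r(S)$, and the last is $r(S)\le|S|$.

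\smallskip
First I would dispose of the degenerate cases. We may assume $\cM$ has no loops, since a loop $i$ satisfies $x_i=0$ for every $\bx\in P(\cM)$ and deleting all loops changes neither the hypothesis nor the conclusion (and makes $\ourspan(\emptyset)=\emptyset$). If $\bx=\bzero$ then $R=\emptyset$ almost surely, $\ourspan(R)=\emptyset$, and any $i_0$ works; so assume $x(N)>0$. Now take the expectation of the displayed chain over $R$. By linearity of expectation and $\Pr[i\in R]=b\,x_i$, the left-hand side becomes $\sum_{i\in N} x_i \Pr[i\in\ourspan(R)]$ and the right-hand side becomes $\E[|R|]=b\,x(N)$. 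Thus the $\bx$-weighted average of the numbers $\Pr[i\in\ourspan(R)]$, $i\in N$, is at most $b$, so there is an index $i_0$ with $x_{i_0}>0$ for which $\Pr[i_0\in\ourspan(R)]\le b$, which proves the lemma.

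\smallskip
I do not expect a genuine obstacle: the only thing one must notice is the first inequality of the displayed chain, namely that passing from $R$ to $\ourspan(R)$ leaves the rank — and hence the polytope bound — unchanged, even though the cardinality may be much larger; everything after that is linearity of expectation. I would also remark that the proof uses nothing about the joint distribution of $R$ beyond its marginals $\Pr[i\in R]=b\,x_i$, so independence of the coordinates is not actually needed.
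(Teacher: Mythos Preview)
Your proof is correct and follows essentially the same approach as the paper: both derive the pointwise inequality $x(\ourspan(R))\le r(\ourspan(R))=r(R)\le |R|$, take expectations to get $\sum_i x_i\Pr[i\in\ourspan(R)]\le b\sum_i x_i$, and conclude by averaging. Your version is slightly more careful about the degenerate cases (loops, $\bx=\bzero$) and correctly observes that only the marginals of $R$ matter, but the core argument is identical.
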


\begin{proof}
  Let $r(S) = \max \{|I|: I \subseteq S \ \& \ I \in \cI\}$ denote
  the rank function of matroid $\cM = (N,\cI)$.  Since $\bx \in P(\cM)$,
  it satisfies the rank constraints $x(S) \leq r(S)$.  For $S =
  \ourspan(R)$, we get
  $$ x(\ourspan(R)) \leq r(\ourspan(R)) = r(R) \leq |R|. $$
  Recall that $R$ is a random set where $\Pr[i \in R] = bx_i$. We take the expectation
  on both sides: 
$$\E[x(\ourspan(R))] = \sum_i x_i \Pr[i \in \ourspan(R)], \text{ and } \E[|R|] = \sum_i \Pr[i \in R]   = b \sum_i x_i.$$
  Therefore,
  $$ \sum_{i \in N} x_i \Pr[i \in \ourspan(R)] \leq b \sum_{i \in N} x_i.$$
  This implies that there must be an element $i_0$ such that $\Pr[i_0 \in \ourspan(R)] \leq b$.
\end{proof}

We remark that the inequality $\sum_i x_i \Pr[i \in \ourspan(R)]
\leq \E[|R|]$ has an interesting interpretation: If $\bx \in P(\cM)$, we
sample $R$ with probabilities $x_i$, then let $S = \ourspan(R)$ and
sample again $S' \subseteq S$ with probabilities $x_i$, then $\E[|S'|]
\leq \E[|R|]$.  We do not use this in the following, though.

\begin{theorem}
  \label{thm:deterministic-matroid-contention}
  For any matroid $\cM$ and any $b \in [0,1]$, there is a deterministic $(b,
  1-b)$-balanced \crs scheme.
\end{theorem}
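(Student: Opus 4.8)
The plan is to realize $\pi_\bx$ as a \emph{greedy-insertion} scheme based on a carefully chosen linear order of $\text{support}(\bx)$: given $A$, process the elements of $A\cap\text{support}(\bx)$ in that order, maintaining an independent set $I$ (initially empty), and add the current element to $I$ whenever $I$ stays independent. Any such scheme is deterministic, satisfies $\pi_\bx(A)\subseteq A\cap\text{support}(\bx)$ and $\pi_\bx(A)\in\cI$ automatically, and is monotone since the order does not depend on $A$; so the entire content is to pick the order so that condition~(ii) of Definition~\ref{defn:crscheme} holds with $c=1-b$.

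I would build the order from the last position down to the first, repeatedly invoking Lemma~\ref{lem:span-prob}. Write $\bx=b\bx'$ with $\bx'\in P(\cM)$ and set $N_n=\text{support}(\bx)$. For $j=n,n-1,\dots,1$, apply Lemma~\ref{lem:span-prob} to the restricted matroid $\cM_j=\cM|N_j$ (note $\bx'|_{N_j}\in P(\cM_j)$, since $\cM_j$ and $\cM$ have the same rank function on subsets of $N_j$) and to the random set $R\cap N_j$, where $R=R(\bx)$; since $\Pr[i\in R\cap N_j]=x_i=b\,(\bx'|_{N_j})_i$, this produces an element $e_j\in N_j$ with $\Pr[e_j\in\ourspan_{\cM_j}(R\cap N_j)]\le b$. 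Place $e_j$ in position $j$ and set $N_{j-1}=N_j\setminus\{e_j\}$. (Ties may be broken arbitrarily; for an efficiently implementable version one would estimate these span probabilities by Monte Carlo sampling at a negligible $\eps$ loss, but the statement as phrased asks only for existence.)

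To analyze the resulting scheme, fix $e_j$. The key structural fact about greedy insertion is that, for any $A$, the partial solution just before $e_j$ is considered is a maximal --- hence, in a matroid, maximum --- independent subset of $A\cap\{e_1,\dots,e_{j-1}\}$, so it has the same span; therefore $e_j\in\pi_\bx(A)$ iff $e_j\in A$ and $e_j\notin\ourspan_{\cM}(A\cap\{e_1,\dots,e_{j-1}\})$. Taking $A=R$ and using that $R\cap\{e_1,\dots,e_{j-1}\}$ is independent of the event $\{e_j\in R\}$,
\[
 \Pr[e_j\notin\pi_\bx(R)\mid e_j\in R]=\Pr\big[e_j\in\ourspan_{\cM}(R\cap\{e_1,\dots,e_{j-1}\})\big]=:q_j .
\]
Since $\ourspan_{\cM|N_j}(S)=\ourspan_{\cM}(S)\cap N_j$ and $e_j\in N_j$, membership of $e_j$ in the two spans coincides, so Lemma~\ref{lem:span-prob} gives $\Pr[e_j\in\ourspan_{\cM}(R\cap N_j)]\le b$; splitting on whether $e_j\in R$ (and using independence again) yields $x_{e_j}+(1-x_{e_j})q_j\le b$, i.e. $q_j\le (b-x_{e_j})/(1-x_{e_j})\le b$, where the last step is equivalent to $x_{e_j}(1-b)\ge 0$. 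Hence $\Pr[e_j\in\pi_\bx(R)\mid e_j\in R]=1-q_j\ge 1-b$, which is exactly the required property, so $\pi$ is $(b,1-b)$-balanced.

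The main obstacle here is purely bookkeeping: one must apply Lemma~\ref{lem:span-prob} along the right nested chain of restrictions, keep track of the distinction between ``span in $\cM_j$'' and ``span in $\cM$'' (which agree precisely because we only ever ask about the single element $e_j\in N_j$), and correctly convert the Lemma's bound on $\Pr[e_j\in\ourspan(R\cap N_j)]$ into one on $\Pr[e_j\in\ourspan(R\cap\{e_1,\dots,e_{j-1}\})]$ by conditioning on $\{e_j\in R\}$. Once these reductions are set up, the inequality $q_j\le b$ and hence the theorem follow immediately.
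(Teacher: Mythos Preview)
Your proof is correct and follows essentially the same approach as the paper: recursively build the order from the back using Lemma~\ref{lem:span-prob}, then run greedy insertion and bound $\Pr[e_j\in\ourspan(R\cap[j-1])]$. The only cosmetic difference is that the paper replaces your case-split computation $x_{e_j}+(1-x_{e_j})q_j\le b$ with the one-line monotonicity observation $\Pr[e_j\in\ourspan(R\cap[j-1])]\le\Pr[e_j\in\ourspan(R\cap[j])]\le b$, which avoids dividing by $1-x_{e_j}$.
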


\begin{proof}
  Let $\bx \in P(\cM)$ and sample $R$ with probabilities $b x_i$.  We
  define an ordering of elements as follows. By
  Lemma~\ref{lem:span-prob}, there is an element $i_0$ such that
  $\Pr[i \in \ourspan(R)] \leq b$.  We place $i_0$ at the end of the
  order. Then, since $\bx$ restricted to $N \setminus \{i_0\}$ is in the
  matroid polytope of $\cM \setminus \{i_0\}$, we can recursively find
  an ordering by the same rule. If the elements are labeled
  $1,2,\ldots,|N|$ in this order, we obtain that $ \Pr[i \in
  \ourspan(R \cap [i])] \leq b $ for every $i$.  In fact, we are
  interested in the event that $i$ is in the span of the preceding
  elements, $R \cap [i-1]$. This is a subset of $R \cap [i]$, and
  hence
$$ \Pr[i \in \ourspan(R \cap [i-1])] \leq \Pr[i \in \ourspan(R \cap [i])] \leq b. $$

The \crs scheme is as follows:
\begin{itemize}
\item Sample $R$ with probabilities $b x_i$.
\item For each element $i$, if $i \in R \setminus \ourspan(R \cap [i-1])$,
then include it in $I$.
\end{itemize}

Obviously, $r(I \cap [i]) = r(I \cap [i-1]) + 1$ whenever $i \in I$,
so $r(I) = |I|$ and $I$ is an independent set.

To bound the probability of appearance of $i$, observe that the
appearance of elements in $[i-1]$ is independent of the appearance of
$i$ itself, and hence the events $i \in R$ and $i \notin \ourspan(R
\cap [i-1])$ are independent.  As we argued, $ \Pr[i \in \ourspan(R
\cap [i-1])] \leq b$.  We conclude:
$$ \Pr[i \in I \mid i \in R] = \Pr[i \notin \ourspan(R \cap [i-1])] \geq 1-b.$$
\end{proof}

To implement the scheme we need to make Lemma~\ref{lem:span-prob}
algorithmic. We can accomplish it by random sampling. Fix an element $i$.
Pick a random set $R$ and check if $i \in \ourspan(R)$; repeat sufficiently
many times to obtain an accurate estimate of $\Pr[i \in \ourspan(R)]$.
We note that although the scheme itself is deterministic once
we find an ordering of the elements, the construction of the ordering
is randomized due to the estimation of $\Pr[i\in \ourspan(R)]$ via
sampling.

\subsection{Contention resolution for knapsacks}

Here we sketch a contention resolution scheme for knapsack
constraints.  This essentially follows from known techniques; we
remark that Kulik, Shachnai and Tamir~\cite{KulikST10,KulikST13}
showed how to round a fractional solution to the problem $\max \{
F(\bx): \bx \in P \}$ for any constant number of knapsack constraints
and any non-negative submodular function, while losing a $(1-\eps)$
factor for an arbitrarily small $\eps>0$.  Our goal is to show that
these techniques can be implemented in a black-box fashion and
integrated in our framework. 

Let $N = \{1,2,\ldots,n\}$ and let $a_1, a_2, \ldots, a_n \in [0,1]$
be sizes of the $n$ items. The independence system induced by a single
knapsack constraint is $\cF=\{ S : \sum_{i \in S} a_i \le 1\}$ and its
natural relaxation has a variable $x_i$ for $1 \le i \le n$ and is
defined as $P_{\cF} = \{ \bx \in [0,1]^n: \sum_i a_i x_i \leq 1 \}$.
We refer to this as the knapsack polytope.

We prove the following lemma.

\begin{lemma}
\label{lem:knapsack}
For any $b \in (0,1/2)$ there is is a monotone $(b,1-2b)$-balanced
\crs scheme for the knapsack polytope.  If, for some $\delta \in (0,\frac12)$,
$a_i \le \delta$ for $1 \le i \le n$, then for
any $b \in (0, \frac{1}{2e})$ there is a monotone $(b, 1 -
(2eb)^{(1-\delta)/\delta})$-balanced \crs scheme. Further, for any
$0 < 2\delta < \eps < \frac12$, if $a_i \le \delta$ for $1 \le i \le n$ then there is a monotone
$(1-\eps, 1-e^{-\Omega(\eps^2/\delta)})$-balanced \crs scheme.
\end{lemma}

\begin{proof}
  The \crs scheme is the same for all the cases and works as follows:
  given $\bx \in b \cdot P_\cF$ we sample $R$ with probabilities
  $x_i$. To obtain $I$ from $R$ we sort the items from $R$ in an order
  of decreasing size and set $I$ to be the largest prefix of this
  sequence that fits in the knapsack. Equivalently, we consider the
  items from $R$ in an order of decreasing size and add the current
  item to $I$ if it maintains feasibility in the knapsack, else we
  discard it. It is easy to see that this scheme is monotone.

  First, we consider the general case where there are no restrictions
  on the item sizes.  Let $N_{\text{big}} = \{ i \in N \mid a_i >
  1/2\}$ be the big items in $N$ and let $N_{\text{small}} = N
  \setminus N_{\text{big}}$ be the small items.
  The probability of at least one big element
  being in $R$ is at most $2b$ since
    $$\Pr[  N_{\text{big}} \cap R \neq \emptyset] \le  \sum_{i \in N_{\text{big}}} x_i
    \le 2 \sum_{i \in N_{\text{big}}} a_i x_i \le 2b.$$ The first
    inequality is via the union bound, the second inequality is
    using the fact that $a_i > 1/2$ for all big items, and the third
    inequality follows from $\bx \in b\cdot P_{\mathcal{F}}$.  Thus $\Pr[
    N_{\text{big}} \cap R = \emptyset] \ge 1-2b$.
    
    Fix some $j\in N$. We need to lower bound $\Pr[j \in I \mid j \in R]$ where
    $I$ is the output of the \crs scheme we described.  First consider
    the case that $j$ is big. Since all big items are considered before any
    small item, $j$ is accepted if it is the unique big item in $R$.
	Since items are included in $R$ independently, we have
    $$ \Pr[ j \in I \ | \ j \in R] \geq
    \Pr[ (N_{\text{big}} \setminus \{j\}) \cap R = \emptyset \ | \ j
    \in R] = \Pr[ (N_{\text{big}} \setminus \{j\}) \cap R = \emptyset]
    \geq \Pr[ N_{\text{big}} \cap R = \emptyset] \geq 1 - 2b.$$ Now we
    consider the case that $j$ is small. Since $a_{j} \le 1/2$, $j$
    will be accepted if $a(R \setminus\{j\}) \le 1/2$.  For any $S
    \subseteq N$, we have $\E[a(S \cap R)] = \sum_{i \in S} a_i \Pr[i
    \in R] \le \sum_{i \in S} a_i x_i$.  In particular $\E[a(R)] \le
    \sum_{i \in N} a_i x_i \le b$. By Markov's inequality, $\Pr[a(R
    \setminus\{j\}) > 1/2] \le \Pr[a(R) > 1/2] \le 2b$.  Therefore
    each small item is accepted with probability at least $1-2b$.  The
    same analysis holds for a simpler \crs scheme based on an ordering
    of elements in $R$ in which all big items are considered before
    any small item.

    Now we consider the case that for $1 \le i \le n$, $a_i \le
    \delta$ for some parameter $\delta \le 1/2$. Fix some $j \in N$.
    It is clear that $j \in I$ if $j \in R$ and $\sum_{i \in
      R\setminus \{j\}} a_i \le 1-\delta$.  Conditioned on $j \in R$,
    the probability of this event is at least $1 - \Pr[\sum_{i \in R}
    a_i > 1-\delta]$. We upper bound $\Pr[\sum_{i \in R} a_i >
    1-\delta]$ via Chernoff bounds. Let $Y_i$ be the indicator random
    variable for $i$ to be chosen in $R$; $\Pr[Y_i=1] = x_i$. Let $Y =
    \sum_{i} a_i Y_i$.  We have $\E[Y] = \sum_{i} a_i x_i \le b <
    1-\delta$ by feasibility of $\bx$. We are interested in $\Pr[Y >
    1-\delta] = \Pr[\sum_{i \in R} a_i > 1-\delta]$.  We can assume
    that $\E[Y] = b$ by adding dummy elements if necessary; this can
    only increase $\Pr[Y > 1-\delta]$.
    
    We use the standard Chernoff bound, $\Pr[Z > (1+\alpha) \mu] \le
    \left(e^\alpha/(1+\alpha)^{1+\alpha}\right)^\mu$ where $Z$ is a
    sum of random variables in $[0,1]$ and $\mu = \E[Z]$.  To apply
    this bound to our setting, we consider $Z = Y/\delta = \sum_i a_i
    Y_i/\delta = \sum_i Z_i$ where $Z_i = a_i Y_i / \delta$ is a
    random variable in $[0,1]$ since $a_i \le \delta$.  Thus, $\Pr[Y >
    1-\delta] = \Pr[Z > (1+\alpha)\E[Z]] \le
    \left(e^\alpha/(1+\alpha)^{1+\alpha}\right)^\mu$ where $\mu =
    \E[Z] = b/\delta$ and $(1+\alpha) = (1-\delta)/b$.  Using $\delta
    \leq \frac12$, we obtain that
  $$ \Pr[Y > 1-\delta] \leq \left( \frac{e}{1+\alpha} \right)^{(1+\alpha) \mu} \le
  \left(\frac{eb}{1-\delta} \right)^{(1-\delta)/\delta} \le
  (2eb)^{(1-\delta)/\delta}.$$

  Finally, let's consider the case where $b = 1-\epsilon$ and $a_i \le
  \delta \le \frac{\epsilon}{2} \le \frac14$ for all $i$. Here we use
  the Chernoff-Hoeffding bound $ \Pr[Z > (1+\alpha) \mu] <
  e^{-\alpha^2 \mu / 3}$ for $\alpha \in (0,1)$ and $Z$ being a sum of
  random variables bounded by $[0,1]$. We estimate the probability
  that conditioned on $j \in R$, all of $R$ fits in the
  knapsack. Since $a_j \leq \delta \leq \frac{\epsilon}{2}$, this
  probability is
	$$ \Pr\left[ \sum_{i \in R} a_i \leq 1 \ | \ j \in R\right] 
	\geq \Pr\left[ \sum_{i \in R \setminus \{j\}} a_i \leq 1 -
        \epsilon/2\right]. $$
   We have $\E[ \sum_{i \in R \setminus \{j\}}
        a_i ] = \sum_{i \in N \setminus \{j\}} a_i x_i \leq 1 -
        \epsilon$.  We can in fact assume that $\mu = \E[ \sum_{i \in
          R \setminus \{j\}} a_i ] = 1-\epsilon$, by adding dummy
        elements that can only increase the probability of overflowing
        $1 - \epsilon/2$.  Applying the Chernoff bound for random
        variables bounded by $\delta$ (after rescaling as above), we
        obtain
	$$ \Pr\left[ \sum_{i \in R \setminus \{j\}} a_i > 1 - \epsilon/2\right] 
	 \leq \Pr\left[ \sum_{i \in R \setminus \{j\}} a_i > (1 + \epsilon/2) \mu \right]
	 \le e^{-\epsilon^2 \mu/(12\delta)} = e^{-\Omega(\epsilon^2 / \delta)}.$$	
\end{proof}

The $(1-\eps, 1-e^{-\Omega(\eps^2/\delta)})$-balanced \crs scheme is
directly applicable only if the item sizes are relatively small
compared to the knapsack capacity. However, standard enumeration
tricks allow us to apply this scheme to general instances as well.
This can be done for any constant number of knapsack constraints. We
formulate this as follows.

\begin{corollary}
\label{coro:knapsacks}
For any constant $k \geq 1$ and $\eps>0$, there is a constant
$n_0$ (that depends only on $\eps$) such that for any submodular
maximization instance involving $k$ knapsack constraints (and possibly
other constraints), there is a set $T$ of at most $n_0$ elements and a
residual instance on the remaining elements such that
\begin{itemize}
\item Any $\alpha$-approximate solution to the residual instance
  together with $T$ is an $\alpha(1-k\eps)$-approximate solution
  to the original instance.
\item In the residual instance, each knapsack constraint admits a
  $(1-\eps,1-\eps)$-balanced \crs scheme.
\end{itemize}
\end{corollary}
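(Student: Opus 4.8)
The statement to prove is Corollary~\ref{coro:knapsacks}, which says that for a constant number $k$ of knapsack constraints we can ``peel off'' a constant-size set $T$ of big items by enumeration so that in the residual instance all item sizes are tiny relative to the remaining capacity, whereupon Lemma~\ref{lem:knapsack} applies and yields a $(1-\eps,1-\eps)$-balanced scheme. The approach is the standard ``guess the large items'' trick, carried out carefully so that the guarantees are uniform across the $n_0^{O(k)}$ guesses.

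First I would fix a threshold $\delta = \delta(\eps,k)$ chosen so that $e^{-\Omega(\eps^2/\delta)} \le \eps$ (so that the conclusion of Lemma~\ref{lem:knapsack} gives $1-\eps$ rather than just $1-e^{-\Omega(\eps^2/\delta)}$), and call an item $i$ \emph{large} (for knapsack $j$) if $a_{ij} > \delta$. In any feasible solution, each knapsack can contain at most $1/\delta$ large items, so at most $k/\delta =: n_0$ items overall are large in some optimal solution $S^*$. The algorithm enumerates all subsets $T \subseteq N$ with $|T| \le n_0$ that are themselves feasible; one such guess equals $S^* \cap (\text{large items})$. For a fixed guess $T$, the residual instance has ground set $N' = N \setminus (T \cup \{\text{items that are large in some knapsack}\})$ — we discard \emph{all} remaining large items, not just those in $T$, since after committing to $T$ we only want small items — and residual capacities $b_j' = 1 - \sum_{i \in T} a_{ij}$; we also rescale so capacities are $1$, i.e.\ divide the $j$-th constraint through by $b_j'$, which inflates sizes by a factor $1/b_j' \le 1/\text{(something)}$. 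The point is that on $N'$ every $a_{ij}/b_j'$ is still $O(\delta)$ as long as $b_j'$ is bounded below by a constant; if $b_j' $ is tiny for some $j$ then that knapsack is essentially full from $T$ alone and we can drop $N'$ entirely for that coordinate (or argue the guess is suboptimal). The residual objective is $f_T(S) = f(T \cup S) - f(T)$, which is again non-negative submodular on $N'$. Then Lemma~\ref{lem:knapsack} with this $\delta$ gives, for each residual knapsack, a monotone $(1-\eps, 1-\eps)$-balanced \crs scheme, establishing the second bullet.

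For the first bullet, I would argue about the correct guess $T = S^* \cap (\text{large})$. The residual optimum on $N'$ is at least $f_T(S^* \cap N') = f(S^*) - f(T) = \opt - f(T)$; hmm, that is not quite what we want. The cleaner accounting: we solve the residual \emph{relaxation} $\max\{F_T(\bx) : \bx \in P'_{\cI'}\}$ where $P'$ already has capacities reduced by $T$, obtain a point $\bx^*$ with $F_T(\bx^*) \ge \alpha \cdot (\text{residual integral optimum})$, round via the combined \crs scheme, and return $T$ together with the rounded set; monotonicity/subadditivity of $f$ (via $f(T \cup S) \ge$ nothing useful — rather, submodularity gives $f(T\cup S) = f(T) + f_T(S)$ exactly) lets us add values back. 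Since $1\!\cdot\!_{S^* \setminus T}$ restricted to $N'$ is feasible in $P'$ and has residual value $f_T(S^* \setminus T)$, and $f(T) + f_T(S^* \setminus T) = f(S^*) = \opt$, we recover $\opt$ up to the $\alpha$ and the \crs losses; the extra multiplicative $(1-k\eps)$ comes from applying Lemma~\ref{lem:knapsack}'s $(1-\eps)$-scaling $k$ times (one per knapsack), which via Lemma~\ref{lem:compose} composes to a $(1-\eps)^k \ge 1-k\eps$ factor. Finally, one outputs the best over all guesses $T$; since the correct guess achieves the bound, so does the best.

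\textbf{The main obstacle.} The delicate point is the rescaling of the residual capacities: after committing to $T$, a knapsack $j$ may have only a minuscule residual capacity $b_j'$, so that ``small'' items ($a_{ij} \le \delta$) are no longer small relative to $b_j'$, and Lemma~\ref{lem:knapsack} does not apply with a good constant. The fix is to observe that we may assume $b_j' \ge \delta$ for the guess we care about — if $b_j' < \delta$ then no non-large item fits in knapsack $j$ at all after $T$, so effectively that knapsack forces $S \cap N' = \emptyset$ in coordinate $j$, and we can either treat the guess as covering a degenerate case or fold the relevant items into $T$'s enumeration; handling this bookkeeping uniformly over all $n_0^{O(1)}$ guesses, and making sure $n_0$ depends only on $\eps$ (and the fixed $k$), not on $n$, is where the real care is needed. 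Everything else is routine: the number of guesses is $n^{O(n_0)}$, which is polynomial since $n_0 = n_0(\eps)$ is constant, and the \crs and relaxation machinery is quoted from Lemma~\ref{lem:knapsack}, Lemma~\ref{lem:compose}, and Theorem~\ref{thm:framework}.
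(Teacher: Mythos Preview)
Your approach differs from the paper's in a crucial way, and the difference is exactly where your identified ``main obstacle'' bites. You choose $T$ by \emph{size} (enumerate the large-size items of $S^*$), whereas the paper chooses $T$ by \emph{marginal value}: greedily take elements of $S^*$ as long as their marginal contribution is at least $\delta\eps\cdot\opt$, so $|T|\le 1/(\delta\eps)=n_0$. After fixing $T$, the paper then removes from the residual ground set every element whose size exceeds $\delta$ times the \emph{residual} capacity $r_j$. In the residual (normalized) instance all sizes are therefore at most $\delta$, and Lemma~\ref{lem:knapsack} applies directly to give the second bullet.

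The value-based choice of $T$ is what makes the accounting close. At most $1/\delta$ elements of $S^*\setminus T$ are removed per knapsack (they each occupy more than a $\delta$ fraction of $r_j$), and each has marginal value below $\delta\eps\cdot\opt$ because the greedy already stopped; by submodularity the total loss is at most $\eps\cdot\opt$ per knapsack, hence $k\eps\cdot\opt$ overall. This is precisely the origin of the $(1-k\eps)$ factor in the first bullet --- it is \emph{not} the composition of $k$ \crs schemes via Lemma~\ref{lem:compose} as you suggest.

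Your size-based scheme breaks at the point you flagged, and your proposed fix is incorrect. If $b_j'<\delta$, it is not true that ``no non-large item fits'': an item with $a_{ij}\le b_j'<\delta$ is small by your definition yet still fits, and after normalization its size $a_{ij}/b_j'$ can be as large as $1$. Concretely, take one item of size $0.99$ and many items of size $0.001$; with $\delta=0.01$ the correct guess $T$ is the single big item, $b'=0.01$, and every remaining item has normalized size $0.1\gg\delta$, so Lemma~\ref{lem:knapsack} gives nothing useful. ``Folding the relevant items into $T$'' does not terminate: each round of absorption shrinks $b_j'$ further and makes new items relatively large, with no bound on the depth that is independent of $n$. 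Without a bound on the \emph{value} of the items you are forced to discard, there is no way to control the loss in the first bullet while simultaneously securing the second.
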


\begin{proof}
  Given $\eps>0$, let $\delta = O(\eps^2 / \log (1/\eps))$ and $n_0 =
  1/(\delta \eps)$. Select $T$ greedily from the optimal solution, by
  picking elements as long as their marginal contribution is at least
  $\delta \eps \opt$; note that $|T| \leq n_0$.  We define the
  residual instance so that $S$ is feasible in the residual instance
  iff $S \cup T$ is feasible in the original instance. The objective
  function in the new instance is $g$ defined by setting $g(S) = f(S
  \cup T)$ for each set $S \subseteq N\setminus T$; note that $g$ is a
  non-negative submodular function if $f$ is.  In addition, in the
  residual instance we remove all elements whose size for some
  knapsack constraint is more than $\delta \cdot r$ where $r$ is the
  residual capacity. The number of such elements in a knapsack can be
  at most $1/\delta$ and hence they can contribute at most $\eps
  \opt$; we forgo this value for each knapsack. We obtain a residual
  instance where all sizes are at most $\delta$ with the capacities
  normalized to $1$. By Lemma~\ref{lem:knapsack}, each knapsack admits
  a $(1-\eps,1-e^{-\Omega(\eps^2/\delta)})=(1-\eps,1-\eps)$-balanced
  CRS.
\end{proof}

An advantage of this black box approach is that knapsack constraints
can be combined arbitrarily with other types of constraints. They do
not affect the approximation ratio significantly. However, the
enumeration stage affects the running time by an $O(n^{n_0})$ factor.

\subsection{Sparse packing systems}

We now consider packing constraints of the type $A \bx \leq \bb$,
where $\bx \in \{0,1\}^N$ is the indicator vector of a solution.  We
can assume without loss of generality that the right-hand side is $\bb
= \b1$.  We say that the system is $k$-sparse, if each column of $A$
has at most $k$ nonzero entries (i.e., each element participates in at
most $k$ linear constraints). The approximation algorithms in
\cite{BansalKNS10} can be seen to give a contention resolution scheme for
$k$-sparse packing systems.
\smallskip
\smallskip

\noindent
{\bf \crs scheme for $k$-sparse packing systems:}
\begin{itemize}
\item We say that element $j$ participates in constraint $i$, if $a_{ij} > 0$. We call an element $j$ {\em big} for this constraint, if $a_{ij} > 1/2$. Otherwise we call element $j$ {\em small} for this constraint.
\item Sample $R$ with probabilities $x_i$.
\item For each constraint $i$: if there is exactly one big element in $R$
that participates in $i$, mark all the small elements in $R$ for this
constraint for deletion; otherwise check whether $\sum_{j\in R}a_{ij}>1$
and if so, mark all elements participating in $i$ for deletion.
\item Define $I$ to be $R$ minus the elements marked for deletion.
\end{itemize}

Based on the analysis in \cite{BansalKNS10}, we obtain the following.

\begin{lemma}
\label{lem:k-sparse}
For any $\bbb \in (0,\frac{1}{2k})$, the above is a monotone $(\bbb, 1-2k \bbb)$-balanced \crs scheme for $k$-sparse packing systems.
\end{lemma}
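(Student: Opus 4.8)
The plan is to verify the three defining properties of a $(\bbb, 1-2k\bbb)$-balanced monotone \crs scheme directly for the scheme described above, following the alteration analysis of \cite{BansalKNS10}. First I would check \emph{feasibility}: after the deletion step, I claim $A\b1_I \le \b1$ holds with probability $1$. Fix a constraint $i$. If at the time of processing there were two or more big elements of $R$ participating in $i$, or a single big element together with a violation among the remaining mass, the scheme deletes all elements participating in $i$, so the constraint is trivially satisfied. If there is exactly one big element $j^*$, all small elements for $i$ are deleted, leaving only $j^*$, and $a_{ij^*}\le 1$ since $\bx\in\bbb P_\cI \subseteq P_\cI$ gives $a_{ij^*}\le 1$. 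If there is no big element, the scheme checks $\sum_{j\in R}a_{ij}>1$ and deletes everything participating in $i$ if so; otherwise the constraint already holds. Also $I\subseteq R \subseteq \mathrm{support}(\bx)$, giving condition~(i).

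Next, the \textbf{balance estimate}, which is the heart of the argument. Fix an element $j$ and condition on $j\in R$. For each of the (at most $k$) constraints $i$ that $j$ participates in, I bound the probability that $j$ is marked for deletion \emph{on account of constraint $i$}, and then apply a union bound over these $\le k$ constraints; this is where the factor $2k\bbb$ comes from, so I need each such probability to be at most $2\bbb$. There are two cases for how $j$ can be deleted via constraint $i$. Case (a): $j$ is small for $i$ and there is exactly one big element of $R\setminus\{j\}$ participating in $i$. Since the sizes satisfy $a_{ij'}>1/2$ for each such big element and the scaled fractional point $\bx\in\bbb P_\cI$ obeys $\sum_{j'} a_{ij'}x_{j'}\le \bbb$, the expected number of big participants of $R$ in $i$ is at most $2\bbb$ by Markov/linearity, so the probability that there is at least one big participant (other than $j$) is at most $2\bbb$. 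Case (b): $j$ is big for $i$, or $j$ is small and there is no big participant; in either situation $j$ is deleted only if $\sum_{j'\in R, j'\ne j} a_{ij'} > 1/2$ (when $j$ itself is big, since $a_{ij}\le 1$, the overflow must come from the rest exceeding $1-a_{ij}$; a cleaner sufficient condition used in \cite{BansalKNS10} is to argue the residual mass exceeds $1/2$). By Markov, $\Pr[\sum_{j'\ne j} a_{ij'}\b1_{j'\in R} > 1/2] \le 2\,\E[\sum_{j'\ne j}a_{ij'}\b1_{j'\in R}] \le 2\bbb$. In all cases the deletion probability through a single constraint is at most $2\bbb$, and since the events over the $\le k$ constraints are combined by a union bound,
$$\Pr[j\notin I \mid j\in R] \le 2k\bbb, \quad\text{hence}\quad \Pr[j\in I\mid j\in R]\ge 1-2k\bbb,$$
which is condition~(ii) with $c = 1-2k\bbb$ and $b=\bbb$.

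Finally, \textbf{monotonicity}: I would argue that $\Pr[j\in\pi_\bx(A)]$ is non-increasing in $A$. Removing elements from $A$ can only reduce the number of big participants in any constraint $j$ lies in, and only reduce every partial sum $\sum_{j'\in A} a_{ij'}$; thus any event that causes $j$ to be deleted under the smaller set is also triggered under the larger set (the deletion rules are monotone threshold conditions in the set of co-participants). Hence $j$ survives at least as often from a smaller $A$, giving $\Pr[j\in\pi_\bx(A_1)]\ge\Pr[j\in\pi_\bx(A_2)]$ for $A_1\subseteq A_2$. The main obstacle I anticipate is the careful bookkeeping in the balance estimate: one must make sure the big/small split is used so that a single surviving big element never causes a feasibility violation while simultaneously ensuring the "exactly one big element" rule only deletes $j$ with probability $\le 2\bbb$; getting the constants to line up to exactly $1-2k\bbb$ (rather than something weaker) is the delicate part, and is exactly where I would lean on the precise estimates from \cite{BansalKNS10}.
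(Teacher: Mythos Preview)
Your proposal is correct and follows the same overall strategy as the paper: bound, for each of the at most $k$ constraints containing a fixed element $j$, the conditional probability that $j$ is marked for deletion by $2\bbb$ via Markov's inequality, then take a union bound. The paper's argument is more streamlined than your case split: instead of separately treating ``$j$ small with exactly one big neighbor'' versus ``$j$ big or $j$ small with no big neighbor,'' it observes in one line that \emph{in every case} $j$ can be removed by constraint $i$ only if $\sum_{j'\in R\setminus\{j\}} a_{ij'} \ge 1/2$ (since if the residual mass is below $1/2$ there is no other big element, so a big $j$ survives by the first rule and a small $j$ survives because the total is at most $1$). Your two cases collapse to this single event once you note that ``at least one big participant other than $j$'' already forces the residual to exceed $1/2$; without that observation your split, as written, risks double-counting and reads as if it might give $4\bbb$ per constraint rather than $2\bbb$. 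Your added feasibility and monotonicity checks are fine and are left implicit in the paper.
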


\begin{proof}
  Let $\bx=b\cdot \by$ with $\by\in [0,1]^N$, $A\by \leq \b1$.
  Consider a fixed element $j^*$. It appears in $R$ with probability
  $x_{j^*}$. We analyze the probability that it is removed due
  to some constraint where it participates. First, note that whether
  big or small, element $j^*$ cannot be removed due to a constraint
  $i$ if the remaining elements have size less than $1/2$, i.e. if
  $\sum_{j \in R \setminus \{j^*\}} a_{ij} < 1/2$. This is because in
  this case, there is no other big element participating in $i$, and
  element $j^*$ is either big in which case it survives, or it is
  small and then $\sum_{j \in R} a_{ij} \leq 1$, i.e. the constraint
  is satisfied.

  Thus it remains to analyze the event $\sum_{j \in R \setminus
    \{j^*\}} a_{ij} \geq 1/2$. Note that this is independent of item
  $j^*$ appearing in $R$. By the feasibility of $\frac{1}{b}\bx$,
  $\E[\sum_{j \in R \setminus \{j^*\}} a_{ij}] = \sum_{j
    \neq j^*} x_{j} a_{ij}$ $\leq \bbb$. By Markov's
  inequality, $\Pr[\sum_{j \in R \setminus \{j^*\}} a_{ij} \geq 1/2]
  \leq 2 \bbb$.  So an element is removed with probability at most
  $2\bbb$ for each constraint where it participates. By the union
  bound, it is removed by probability at most $2 k \bbb$.

\end{proof}

Recall the notion of width for a packing system: $W = \lfloor
 \frac{1}{\max_{i,j} a_{ij}} \rfloor$, where $a_{ij}$ are the entries of the
packing matrix (recall that we normalize the right-hand side to be
$\bb=\b1$).  Assuming that $W \geq 2$, one can use a simpler \crs scheme
and improve the parameters.

\mypara{\crs scheme for $k$-sparse packing systems of width $W$:}
\begin{itemize}
\item Sample $R$ with probabilities $x_i$.
\item For each constraint $i$ for which $\sum_{j \in R} a_{ij} > 1$, mark all elements participating in $i$ for deletion.
\item Define $I$ to be $R$ minus the elements marked for deletion.
\end{itemize}

\begin{lemma}
\label{lem:k-sparse-W}
For any $\bbb \in (0,\frac{1}{2e})$, the above is a monotone
$(\bbb,1-k (2 e \bbb)^{W-1})$-balanced \crs scheme for any $k$-sparse
system of packing constraints of width $W \geq 2$.
\end{lemma}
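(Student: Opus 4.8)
The plan is to mirror the proof of Lemma~\ref{lem:k-sparse}, but to replace Markov's inequality by a multiplicative Chernoff bound, so as to turn the width hypothesis $W\geq 2$ into the exponentially small deletion probability $(2e\bbb)^{W-1}$. First I would dispose of the structural requirements of Definition~\ref{defn:crscheme}, all of which are immediate. The scheme is oblivious: on input $A$ it simply deletes, for every overfull constraint $i$ (i.e. $\sum_{j\in A}a_{ij}>1$), all elements participating in $i$; hence $\pi_\bx(A)\subseteq A$ (one may harmlessly intersect with $\text{support}(\bx)$ since $R(\bx)\subseteq\text{support}(\bx)$), and for every constraint either all its elements are deleted or $\sum_{j\in A}a_{ij}\leq 1$, so $A\b1_{\pi_\bx(A)}\leq\b1$ and $\pi_\bx(A)\in\cI$. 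Monotonicity is equally easy: if $j^*\in A_1\subseteq A_2$ and $j^*\in\pi_\bx(A_2)$, then no constraint containing $j^*$ is overfull in $A_2$, hence none is overfull in the smaller set $A_1$, so $j^*\in\pi_\bx(A_1)$; as the map is deterministic this gives $\Pr[j^*\in\pi_\bx(A_1)]\geq\Pr[j^*\in\pi_\bx(A_2)]$.

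The substance is the balance estimate. I would write $\bx=\bbb\by$ with $\by\in[0,1]^N$, $A\by\leq\b1$, and recall that by definition of the width $a_{ij}\leq 1/W$ for all $i,j$. Fix an element $j^*$ and condition on $j^*\in R$. Then $j^*$ is deleted only if some constraint $i$ in which it participates is overfull, i.e. $\sum_{j\in R}a_{ij}>1$; given $j^*\in R$ this is the event $X_i:=\sum_{j\in R\setminus\{j^*\}}a_{ij}>1-a_{ij^*}\geq 1-1/W$. The variable $X_i$ is determined by which of the elements $j\neq j^*$ lie in $R$, hence is independent of whether $j^*\in R$, and its mean is $\mu_i:=\E[X_i]=\sum_{j\neq j^*}\bbb\,y_j a_{ij}\leq\bbb\,(A\by)_i\leq\bbb$. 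It therefore suffices to show $\Pr[X_i>1-1/W]\leq(2e\bbb)^{W-1}$ and then union-bound over the at most $k$ constraints in which $j^*$ participates (the system is $k$-sparse), which yields $\Pr[j^*\notin I\mid j^*\in R]\leq k(2e\bbb)^{W-1}$, i.e. exactly the claimed $(\bbb,1-k(2e\bbb)^{W-1})$-balance.

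For the tail bound I would rescale: the variables $Z_j:=Wa_{ij}$ for $j\in R\setminus\{j^*\}$ and $Z_j:=0$ for the remaining $j\neq j^*$ are independent, lie in $[0,1]$, and have total mean $W\mu_i\leq W\bbb$. Since $\bbb<\tfrac{1}{2e}$ and $W\geq 2$ we have $W\bbb<W-1$, so the threshold $W-1$ lies in the upper-tail regime and the standard multiplicative Chernoff bound $\Pr[\sum_\ell Z_\ell\geq s]\leq(e\mu/s)^s$ for $s\geq\mu$ (see \cite{AlonS-book}) applies with $s=W-1$:
\[
\Pr[X_i>1-1/W]\ \leq\ \Pr\Big[\sum_{j\neq j^*}Z_j\geq W-1\Big]\ \leq\ \Big(\frac{eW\mu_i}{W-1}\Big)^{W-1}\ \leq\ \Big(\frac{eW\bbb}{W-1}\Big)^{W-1}\ \leq\ (2e\bbb)^{W-1},
\]
using $\frac{W}{W-1}\leq 2$ for $W\geq 2$. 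The only step that needs care is this one: choosing the form of the Chernoff bound that produces the clean $(\,\cdot\,)^{W-1}$ dependence, and checking that $W-1$ genuinely exceeds the mean so the bound is valid; everything else is routine bookkeeping, and combining it with the union bound completes the proof.
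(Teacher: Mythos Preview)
Your proof is correct and follows essentially the same approach as the paper: condition on $j^*\in R$, bound the mean of the remaining contribution by $\bbb$, rescale so the summands lie in $[0,1]$, apply a multiplicative Chernoff bound to get the $(2e\bbb)^{W-1}$ tail, and union-bound over the at most $k$ constraints containing $j^*$. The only cosmetic differences are that the paper rescales by $1/\max_j a_{ij}$ rather than by $W$ and keeps the threshold at $1-a_{ij^*}$ until the last step, whereas you relax to $1-1/W$ up front; neither change affects the outcome. Your explicit verification of feasibility and monotonicity is a nice addition that the paper leaves implicit.
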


\begin{proof}
  Again, let $\bx=b\by$ with $\by\in [0,1]^N$, $A\by \leq \b1$.  Let
  us consider an element $j'$ and a constraint $i$ 
  that $j'$ participates in. If we condition on $j'$ being
  present in $R$, we have $\mu_i = \E[\sum_{j \in R \setminus \{j'\}}
  a_{ij} \mid j' \in R] = \sum_{j \neq j'} a_{ij} x_{ij} \leq
  \bbb$. By the width property, we have $a_{ij'} \leq 1/W \leq
  1/2$. We use the Chernoff bound for a sum $X$ of independent
  $[0,1]$ random variables with $\mu=\E[X]$: $\Pr[X
  > (1+\delta) \mu] \leq (e^\delta / (1+\delta)^{1+\delta})^\mu \leq
  (e/(1+\delta))^{(1+\delta)\mu}$, with $1+\delta = (1 - a_{ij'}) /
  \mu_i \geq 1/(2\bbb)$. Since our random variables are bounded by
  $[0,\max a_{ij}]$, we obtain by scaling
$$ \Pr\left[ \sum_{j \in R} a_{ij} > 1 \mid j' \in R \right]
 = \Pr\left[ \sum_{j \in R \setminus \{j'\}} a_{ij} > 1 - a_{ij'} \right] 
 $$
$$ \leq \left( \frac{e}{1+\delta} \right)^{(1+\delta) \mu_i / \max a_{ij}}
 \leq \left( {2e \bbb} \right)^{(1 - a_{ij'}) / \max a_{ij}}
 \leq \left( {2e \bbb} \right)^{W-1}.  $$
Therefore, each element is removed with probability at most $(2e \bbb)^{W-1}$ for each constraint where it participates.
\end{proof}

We remark that a $k$-sparse packing system can be viewed as the
intersection of multiple knapsack constraints on the elements where
each element participates in at most $k$ constraints. One can use the
composition lemma (Lemma~\ref{lem:compose}) and the \crs-schemes for a
single knapsack constraint given by Lemma~\ref{lem:knapsack} to obtain
\crs-schemes for $k$-sparse packing systems. The schemes that we
described and analyzed above can be seen as direct implementations
of the composition approach.

\subsection{UFP in paths and trees}
We consider the following routing/packing problem.  Let $T=(V,E)$ be a
capacitated tree with $u_e$ denoting the capacity of edge $e \in
E$. We are given $k$ distinct node pairs $s_1t_1,\ldots,s_kt_k$ with
pair $i$ having a non-negative demand $d_i$. We assume that the
instance satisfies the no-bottleneck condition, that is, $d_{\max} =
\max_i d_i \le u_{\min} = \min_e u_e$. We say that an instance is a
unit-demand instance if $d_i = 1$ for each $i \in N$ and $u_e$ is a
non-negative integer for each $e \in E$.

Let $N = \{1,\ldots,k\}$, and
for $i\in N$, we denote by $Q_i\subseteq E$ the edges on the unique
path between $s_i$ and $t_i$ in $T$.
We say that $S\subseteq N$ is \emph{routable} if, when routing $d_i$ units of
flow from $s_i$ to $t_i$ over $Q_i$ for each $i\in S$, then the total
flow on any edge $e$ is at most $u_e$. More formally, $S$ is routable if
\begin{equation*}
\sum_{i\in S: e\in Q_i} d_i \leq u_e \quad \forall e\in E.
\end{equation*}
We are interested in finding a routable set $S\subseteq N$ that
maximizes some weight function on $N$. The case of linear weights was
considered in~\cite{ChekuriMS03}. Here, a weight $w_i\geq 0$ is given
for $i\in N$, and the goal is to find a routable set $S\subseteq N$
that maximizes $\sum_{i\in S}w_i$. A constant factor approximation has
been presented for this problem~\cite{ChekuriMS03}, and moreover
it is known that the problem is APX-hard even for unit-demands and
unit-weights~\cite{GargVY97}.

We are interested in more general submodular weights.
 Let $\cI = \{S \subseteq N \mid S \text{ is
  routable}\}$.  The problem we consider is $\max_{S \in \cI} f(S)$, where
$f$ is a given non-negative submodular function. We present a \crs scheme
for this problem that implies a constant factor approximation through
our framework. We start by presenting a \crs scheme for unit demands,
which we then extend to general demands.

\medskip

A natural (packing) LP
relaxation for $P_{\cI}$ has a variable $x_i \in [0,1]$ for each pair
$i$ and a constraint $\sum_{i: e \in Q_i} d_i x_i \le u_e$ for each
edge $e$; recall that $Q_i$ is the set of edges on the
unique $s_i$-$t_i$ path in $T$.

\mypara{\crs scheme for unit-demands:}
\begin{itemize}
\item Root $T$ arbitrarily. Let depth of pair $s_it_i$ be the depth of 
the least common ancestor of $s_i$ and $t_i$ in $T$.
\item Let $R \subseteq N$ be random set obtained by including each $i$
  independently with probability $\bbb x_i$. 
\item Let $I = \emptyset$. 
\item Consider pairs in $R$ in increasing order of depth.
  \begin{itemize}
  \item Add $i$ to $I$ if $I \cup \{i\}$ is routable, otherwise reject $i$.
  \end{itemize}
\item Output $I$.
\end{itemize}

\smallskip
The techniques in \cite{CCGK07,ChekuriMS03} give the following lemma.

\begin{lemma}
  \label{lem:ufp-crs}
  For any $\bbb \in  (0,\frac{1}{3e})$ the above is a $(\bbb, 1- \frac{2e \bbb}{1-e
    \bbb})$-balanced \crs scheme.
\end{lemma}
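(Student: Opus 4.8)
The plan is to fix a pair $j\in N$, condition on $j\in R$ (where $R$ contains each $i$ independently with probability $\bbb x_i$ and $\bx\in P_\cI$), and show that $\Pr[j\in I\mid j\in R]\ge 1-\tfrac{2e\bbb}{1-e\bbb}$; the other requirements of a $(\bbb,c)$-balanced \crs scheme, namely $I\subseteq R\cap\text{support}(\bx)$ and $I\in\cI$, are immediate since the greedy insertion only ever adds sampled pairs and always maintains routability. Let $v_j$ be the least common ancestor of $s_j$ and $t_j$, so $Q_j$ splits at $v_j$ into two branches, one towards $s_j$ and one towards $t_j$; write the $s_j$-branch as $v_j=w_0,w_1,\dots,w_\ell=s_j$ with edges $f_m=w_{m-1}w_m$. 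It suffices to bound, for each branch separately, the probability that $j$ is rejected because of an edge of that branch by $\tfrac{e\bbb}{1-e\bbb}$, and then union-bound over the two branches. We may assume all capacities are integers, replacing $u_e$ by $\floor{u_e}$, which changes neither $\cI$ nor the validity of the LP relaxation.

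The combinatorial heart of the argument is the reason for ordering by ancestor depth. Consider the moment the scheme processes $j$, let $J$ be the set of already-accepted pairs, and let $L_m$ be the load of $f_m$ under $J$. I claim $L_1\ge L_2\ge\cdots\ge L_\ell$. Indeed, if an already-accepted pair $i$ uses $f_m$, then $i$ was processed before $j$ (ties broken consistently), so the depth of its own LCA $v_i$ is at most that of $v_j$; since every vertex of $Q_i$ is a descendant of $v_i$, the endpoint $w_m$ is a descendant of $v_i$, and because $\mathrm{depth}(v_i)\le\mathrm{depth}(w_0)$ the tree path from $v_i$ down to $w_m$ passes through $w_0,\dots,w_{m-1}$; hence $Q_i$ contains $f_1,\dots,f_{m-1}$ as well. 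Thus $\{i\in J: f_m\in Q_i\}\subseteq\{i\in J: f_{m-1}\in Q_i\}$, giving monotonicity of the loads; and since $J$ is routable, $L_m\le u_{f_m}$ for all $m$. Now $j$ is rejected because of the $s_j$-branch exactly when $L_m=u_{f_m}$ for some $m$, and by monotonicity of the loads this holds iff $L_{\sigma(k)}=k$ for some capacity value $k$ appearing on the branch, where $\sigma(k)$ is the \emph{shallowest} edge of the branch with capacity $k$ (if $L_m=u_{f_m}=k$ then $L_{\sigma(k)}\ge L_m=k$ while $L_{\sigma(k)}\le u_{f_{\sigma(k)}}=k$). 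This is the key reduction: it replaces a union bound over all (possibly many, low-capacity) edges of the branch by a union bound over \emph{distinct} capacity values, which is exactly what makes the resulting geometric series converge.

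For the probabilistic estimate, fix an edge $f$ with integer capacity $k=u_f$ and let $N_f=|\{i\in R: i\ne j,\ f\in Q_i\}|$; conditioned on $j\in R$ this is a sum of independent $\{0,1\}$ variables with mean $\bbb\sum_{i\ne j:\,f\in Q_i}x_i\le\bbb k$ by feasibility of $\bx$ for the constraint of edge $f$. Since $L_f\le N_f$, the Chernoff/Poisson tail bound gives $\Pr[L_f\ge k\mid j\in R]\le\Pr[N_f\ge k\mid j\in R]\le(e\bbb)^k$ (the mean is at most $\bbb k$ and $k\ge\bbb k$). Summing over the distinct capacity values $k\ge1$ occurring on the $s_j$-branch, the probability that $j$ is rejected because of that branch is at most $\sum_{k\ge1}(e\bbb)^k=\tfrac{e\bbb}{1-e\bbb}$, valid since $\bbb<\tfrac1{3e}<\tfrac1e$. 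Adding the two branches, $\Pr[j\notin I\mid j\in R]\le\tfrac{2e\bbb}{1-e\bbb}$, so $\Pr[j\in I\mid j\in R]\ge1-\tfrac{2e\bbb}{1-e\bbb}$, and the hypothesis $\bbb<\tfrac1{3e}$ is precisely what keeps this quantity positive. (Monotonicity in the sense of Definition~\ref{defn:crscheme} holds because the processing order is fixed independently of $\bx$ and $R$, as for the other fixed-order greedy-insertion schemes discussed in the paper.) The main obstacle is the structural claim about the non-increasing loads and the observation that it collapses the per-edge union bound to a per-capacity-value one; once that is set up, the rest is a routine Chernoff computation, and this is essentially the content of the rounding arguments in \cite{CCGK07,ChekuriMS03}.
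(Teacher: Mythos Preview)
Your proof is correct and follows essentially the same approach as the paper: fix a pair, split its path at the LCA into two branches, observe that the depth-ordered insertion makes the accepted load non-increasing along each branch, collapse the per-edge union bound to one representative per distinct capacity value, apply the Chernoff bound $\Pr[N_f\ge k]\le(e\bbb)^k$, and sum the resulting geometric series. The only cosmetic difference is that the paper phrases the reduction as ``pass to a subsequence of edges with strictly decreasing capacities'' whereas you take ``the shallowest edge of each capacity value''; both yield at most one term per positive integer and hence the same bound $\frac{e\bbb}{1-e\bbb}$ per branch.
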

\begin{proof}
  Let $\bx \in b \cdot P_{\mathcal{I}}$.
  Consider a fixed pair $i^*$ and let $v$ be the least common ancestor
  of $s_{i^*}$ and $t_{i^*}$ in the rooted tree $T$; note that $v$
  could be one of $s_{i^*}$ or $t_{i^*}$. Let $P$ be the unique path
  in $T$ from $v$ to $s_{i^*}$ and $P'$ be the path from $v$ to $t_{i^*}$.
  Without loss of generality assume that $v \neq s_{i^*}$ and hence $P$ is
  non-empty.  We wish to upper bound $\Pr[i^* \not \in I \mid i^* \in R]$,
  that is, the probability that $i^*$ is rejected conditioned on it
  being included in the random set $R$. The reason that $i^*$ gets
  rejected is that at least one edge $e \in P \cup P'$ is already full
  from the pairs that have been accepted into $I$ prior to considering
  $i^*$. We upper bound the probability of this event happening for some
  edge in $P$ and use a symmetric argument for $P'$.

  Let $e_1,e_2,\ldots,e_h$ be the edges in $P$ from $v$ to $s_{i^*}$.
  Let $\mathcal{E}_j$ be the event that $i^*$ gets rejected at $e_j$, that is,
  the capacity of $e_j$ is full when $i^*$ is considered for addition
  to $I$. Note that these events are correlated. We claim the
  following: if $j > h$ and $u_{e_j} \ge u_{e_h}$ then $\mathcal{E}_j$ happens
  only if $\mathcal{E}_h$ happens. The reason for this is the order in which the
  pairs in $R$ are considered for insertion. When $i^*$ is considered,
  the only pairs inserted in $I$ prior to it are those whose depth is
  no larger, and hence the total capacity used on an edge decreases as
  we traverse the path $P$ from $v$ to $s_i$.  Thus, to analyze the
  probability of rejection it suffices to consider a subsequence of
  $e_1,e_2,\ldots,e_h$ starting with $e_1$ such that the capacity of
  the next edge in the sequence is strictly smaller than the previously
  added one. For notational simplicity we will therefore assume that
  $u_{e_1} > u_{e_2} > \ldots > u_{e_h} \ge 1$.  

  Let $S_j = \{ i \neq i^* \mid e \in Q_{i}\}$ be the set of pairs
  other than $i^*$ that contain $e$ in their path $Q_i$. Let
  $\mathcal{E}'_j$ be the event that $|R \cap S_j| \ge u_{e_j}$. It is
  easy to see that $\Pr[\mathcal{E}_j] \le \Pr[\mathcal{E}'_j]$. Since
  $\frac{1}{b}\bx$ is a feasible solution to the LP relaxation we have
  $\sum_{i \in S_j} x_i < \bbb u_{e_j}$.
  Letting $X_i$ be the event that $i \in R$, and $X =
  \sum_{i \in S_j} X_i$, we have $\Pr[\mathcal{E}'_j] = \Pr[X \ge
  u_{e_j}]$. Since $X$ is the sum of independent $[0,1]$ random
  variables $X_i$, and has expectation $\bbb u_{e_j}$,
  we obtain by standard Chernoff bounds:
  $$\Pr[\mathcal{E}'_j]=\Pr[X\geq u_{e_j}]\leq
  (e^\delta / (1+\delta)^{1+\delta})^\mu
  \leq(e/(1+\delta))^{(1+\delta)\mu},$$
  where $\mu = \bbb u_{e_j}$ and
  $\delta = 1/\bbb - 1$. Hence, $\Pr[\mathcal{E}'_j] \le (e
  \bbb)^{u_{e_j}}$. Taking the union bound over all edges in the path,
  the probability of rejection of $i^*$ on some edge in $P$ is at most
  $\sum_{j=1}^h (e \bbb)^{u_{e_j}} \le \sum_{\ell=1}^\infty (e
  \bbb)^\ell = \frac{e \bbb}{1-e \bbb}$, where the inequality is due to
  the fact that the edge capacities are strictly decreasing and lower
  bounded by $1$, and the equality is due to the fact that $e\bbb < 1$
  (recall that $\bbb \in (0,\frac{1}{3e})$). By a union bound over $P$
  and $P'$ we have that the probability of $i^*$ being rejected
  conditioned on it being in $R$ is at most $\frac{2e \bbb}{1-e
    \bbb}$.
\end{proof}

\smallskip
\noindent {\bf \crs scheme for general demands:} A \crs scheme for
general demands can be obtained as follows. The linear program
$P_{\cI}$ is a packing LP of the form $A \bx \le \bb, \bx \in [0,1]$
where $A$ is column-restricted (all the non-zero values in a column
have the same value). For such column-restricted packing integer
programs (CPIPs), when demands satisfy the no-bottleneck assumption,
one can use grouping and scaling techniques first suggested by
Kolliopoulos and Stein \cite{KolliopoulosS04} (see also
\cite{ChekuriMS03}) to show that the integrality gap for a CPIP
with matrix $A$ is
at most a fixed constant factor worse than that of the underlying
$0$-$1$ matrix $A'$ (obtained from $A$ by placing a $1$ in each
non-zero entry). Note that in the context of the UFP problem, the
matrix $A$ corresponds to the problem with arbitrary demands while the
matrix $A'$ corresponds to the one with unit-demands.  One can use the
same grouping and scaling techniques to show that a monotone
$(b,1-b')$-balanced \crs scheme for $A'$ can be used to obtain a monotone
$(b/6, (1- b')/2)$-balanced \crs scheme for $A$. We give a proof in 
Section~\ref{sec:CPIP}, see Theorem~\ref{thm:cpip-crscheme}. Using this general
conversion theorem and Lemma~\ref{lem:ufp-crs}, one can obtain a
$(\bbb, \bbb')$-balanced \crs scheme for UFP in trees for some
sufficiently small but absolute constants $\bbb$ and $\bbb'$. This
suffices to obtain a constant factor approximation for maximizing a
non-negative submodular function of routable requests in a capacitated
tree. However, the $(b/6, (1-b')/2)$-balanced \crs scheme does not allow
composition with other constraints via Lemma~\ref{lem:compose} since
$(1-b')/2$ does not tend to zero even if $b'$ does. However,
Theorem~\ref{thm:cpip-crscheme} gives a more refined statement that
is helpful in applications in light of Remark~\ref{remark:subadditive}.

 Without the no-bottleneck assumption, the linear program has an
 $\Omega(n)$ integrality gap even for UFP on paths~\cite{CCGK07}. One
 can still apply the grouping and scaling techniques without the
 no-bottleneck assumption under a mild restriction; we refer the
 reader to~\cite{ChekuriEK09}.

\subsection{Column-restricted packing constraints}
\label{sec:CPIP}

Here we consider \crs schemes for CPIPs. We follow the notation from
\cite{ChekuriMS03}.  Let $A$ be an arbitrary $m \times n$ $\{0,1\}$-matrix,
and $d$ be an $n$-element non-negative vector with $d_j$
denoting the $j$th entry in $d$. Let $A[d]$ denote the matrix obtained
by multiplying every entry of column $j$ in $A$ by $d_j$. A CPIP is a
problem of the form $\max w \bx$, subject to $A[d] \bx \le \bb, \bx
\in \{0,1\}^n$. Note that all non-zero entries in $A[d]$ for any given
column have the same value and hence the name column-restricted.  Here
we are interested in submodular objective functions and the goal is
obtain a \crs scheme for the polytope $P_{\cI}$ induced by the relaxation
$A[d] \bx \le \bb, \bx \in [0,1]^n$. Instead of focusing on the polytope
for a given $d$ and $b$, we consider the class of polytopes induced by
all $d,b$.

\begin{theorem}
  \label{thm:cpip-crscheme}
  Suppose there is a monotone $(\beta, 1-\beta')$ \crs scheme for the
  polytope $A \bx \le \bb, x \in [0,1]^n$ for every $\bb \in
  \mathbb{Z}_+$ where $A$ is $\{0,1\}$-matrix. Then there is a
  monotone $(\beta/6, (1 - \beta')/2)$-balanced \crs scheme for the polytope
  $A[d] \bx \le \bb, x \in [0,1]^n$ for all $d,b$ such that $\dmax =
  \max_j d_j \le \bmin = \min_j b_j$. Moreover there is a monotone
  $(\beta/6, 1-\beta')$-balanced \crs scheme if all $d_j \le \bmin/3$ or if all
  $d_j \ge \bmin/3$.
\end{theorem}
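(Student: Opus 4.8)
The plan is to follow the grouping-and-scaling paradigm of Kolliopoulos--Stein and Chekuri--Mydlarz--Shepherd, recast in the language of \crs schemes, and to organize the argument in three steps: first reduce the general statement to the two ``moreover'' sub-cases, then prove the ``all large'' sub-case, then the ``all small'' sub-case. A basic fact used throughout is that restricting a point $\bx$ in a (scaled) packing polytope to a subset of columns, or scaling it down coordinatewise, keeps it in the correspondingly scaled polytope; and that the union of outputs of monotone \crs schemes on \emph{disjoint} item sets is again monotone and keeps the per-element balance guarantee (no FKG needed, since only one sub-scheme can ever output a given element).

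\medskip
\noindent\textbf{Reducing the general case.} Partition $N$ into $N_{\mathrm{big}}=\{j:d_j\ge\bmin/3\}$ and $N_{\mathrm{small}}=\{j:d_j<\bmin/3\}$. I would define the combined scheme to flip a fair coin: with probability $\tfrac12$ it runs the ``all large'' scheme (given by the second half of the theorem) on $A[d]$ restricted to $N_{\mathrm{big}}$ and the input $\bx|_{N_{\mathrm{big}}}\in(\beta/6)\{A[d]|_{N_{\mathrm{big}}}\bz\le\bb\}$, discarding all small elements; with probability $\tfrac12$ it does the symmetric thing. The output is always in $\cI$ (it respects $A[d]\bz\le\bb$ on the surviving columns, and the discarded columns contribute $0$), the scheme is monotone because each sub-scheme is, and for any $i$ we get $\Pr[i\in\pi_\bx(R)\mid i\in R]\ge\tfrac12(1-\beta')$. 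This yields the $(\beta/6,(1-\beta')/2)$-balanced scheme, provided the two sub-cases each give a $(\beta/6,1-\beta')$-balanced scheme. Note $\dmax\le\bmin$ is inherited by each restricted instance, so the sub-case hypotheses apply.

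\medskip
\noindent\textbf{The ``all large'' sub-case.} Here $d_j\in[\bmin/3,\bmin]$. Round each $d_j$ up to the nearest multiple of $\bmin/3$; since $d_j\ge\bmin/3$ the new demand $d'_j$ satisfies $d'_j\le 2d_j$, so $\bx$ (as it stands) satisfies $\sum_j d'_j x_j\le (\beta/3)b_i$ on every constraint $i$. Now there are only three distinct demand values $\bmin/3,2\bmin/3,\bmin$; partition the large items into the corresponding classes $N_1,N_2,N_3$. Dividing each constraint through by the common demand of a class turns the restricted system into a $\{0,1\}$-matrix packing system with integral right-hand side; I would allocate to each class $N_v$ on constraint $i$ a capacity equal to a fixed constant times its fractional usage $\sum_{j\in N_v}d'_jx_j$, bumped up so that it is at least one item's worth (using $b_i\ge\bmin\ge d'_j$ to guarantee at least one item fits). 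These allocations sum to $O(b_i)$, and with this allocation $\bx|_{N_v}$, scaled by a constant $\le\beta$ and hence lying in the $\beta$-scaled $\{0,1\}$-polytope, is a legal input to the given scheme. Applying the given $(\beta,1-\beta')$-scheme inside each of the three classes and taking the union, feasibility for $A[d]\bz\le\bb$ holds by the capacity allocation, and monotonicity and the $(1-\beta')$ balance are preserved; the rounding factor and the three-way allocation account for the constant in $\beta/6$.

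\medskip
\noindent\textbf{The ``all small'' sub-case --- the main obstacle.} When $d_j<\bmin/3$ one is forced into a dyadic grouping $C_t=\{j:d_j\in(\bmin/2^{t+1},\bmin/2^t]\}$, and there are now \emph{infinitely many} classes, so the even-split idea of the previous step fails (splitting the capacity among all classes would leave each class nothing). This is precisely the regime handled by the grouping-and-scaling lemmas of \cite{KolliopoulosS04,ChekuriMS03}, which I would invoke/adapt: round demands up within each $C_t$ (a factor-$2$ loss), and then allocate to $C_t$ on constraint $i$ a capacity proportional (up to a constant) to its own fractional usage $\sum_{j\in C_t}d_jx_j$ --- crucially these usages sum to at most $b_i$, so the total allocated capacity is $O(b_i)$, while the no-bottleneck bound $d_j\le\bmin\le b_i$ keeps every class's allocation above one item's demand. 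Each class, having a common (rounded) demand, again reduces to a $\{0,1\}$-matrix system to which the given scheme applies, and the union over all $C_t$ is feasible by construction, monotone, $(1-\beta')$-balanced per element, and efficiently implementable given the $\{0,1\}$-scheme. The technical heart of the whole proof lies here: making the scaling/allocation argument precise, pinning down the floor estimates so that every derived right-hand side is a positive integer and the reductions compose with the exact constant $6$, and checking that the final object meets all formal requirements of Definition~\ref{defn:crscheme} (output contained in $R(\bx)\cap\mathrm{support}(\bx)$, always in $\cI$, monotone). Putting the three steps together gives the theorem; a refined bookkeeping of the per-class allocations, as promised in the surrounding text, yields the more precise version used in applications via Remark~\ref{remark:subadditive}.
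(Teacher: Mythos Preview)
Your high-level framework --- grouping-and-scaling plus a fair coin between ``large'' and ``small'' --- matches the paper. But your ``all large'' sub-case has a genuine gap. You round large demands to multiples of $\bmin/3$, obtaining three classes with common demands $\bmin/3$, $2\bmin/3$, $\bmin$, run the $\{0,1\}$ scheme independently in each class, and take the union. The per-class capacity you allocate is ``a constant times fractional usage, bumped up to at least one item's worth''; you claim these allocations sum to $O(b_i)$ and that this yields feasibility for $A[d]\bz\le\bb$. They do sum to $O(b_i)$, but with constant strictly larger than $1$, and the ``one item'' additive terms do \emph{not} scale away: take a single row with $b_i=\bmin$ and one item from each of the three classes, each with tiny fractional value. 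Each class's $\{0,1\}$-capacity must be at least $1$, so the union can return all three items, with total demand $\bmin/3+2\bmin/3+\bmin=2\bmin>b_i$. No choice of the scaling constant repairs this, because the offending terms come from the ceiling, not from the fractional usage. This is exactly why the paper does \emph{not} subdivide the large class: it keeps a single group $N_0=\{j:d_j\in(\dmax/3,\dmax]\}$, so the ceiling contributes at most $\dmax\le b_i$ to any row (Lemma~\ref{lem:group-scale}, case $h=0$).

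Your ``all small'' sub-case has a related numerical issue. With dyadic classes $C_t=\{j:d_j\in(\bmin/2^{t+1},\bmin/2^t]\}$, the sum of the ``one item'' additive terms over $t\ge 1$ is $\sum_{t\ge 1}\bmin/2^t=\bmin$, which already exhausts $b_i$ when $b_i=\bmin$ and leaves no room for the scaled-usage contribution. The paper uses base $3$ from $\dmax$: with $N_h=\{j:d_j\in(\dmax/3^{h+1},\dmax/3^h]\}$ and $\bz^h=\bz|_{N_h}/6$, one gets $\sum_{h\ge1}\dmax/3^h=\dmax/2$ and scaled usage $\le b_i/2$, so the total is $\le\dmax/2+b_i/2\le b_i$ (Lemma~\ref{lem:group-scale}, second part). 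The constant $6=3\cdot 2$ arises precisely from this bookkeeping (factor $3$ from the within-class demand ratio, factor $2$ so the geometric tail plus the scaled usage fit in $b_i$).

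In short, the paper's decomposition is: one large class $N_0$ (no further subdivision), geometric small classes $N_h$ with base $3$, capacities $b^h_i=\lceil A_i\bz^h\rceil$, and the coin flip picks $\by^0$ versus $\sum_{h\ge1}\by^h$. Your reduction step and the coin-flip are fine; what needs to change is to drop the three-way split of the large items and to switch the small grouping to base $3$ (or otherwise ensure the additive ceiling terms sum to strictly less than $b_i$).
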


We sketch the proof of the above theorem which follows the grouping
and scaling ideas previously used in~\cite{KolliopoulosS04,ChekuriMS03}.
We have chosen some specific constants in the theorem for simplicity.  One
can obtain some generalizations and variations of the above theorem
via the same ideas.

Let $N = \{1,\ldots, n\}$ be a ground set corresponding to the
columns. Given $d$, for integer $h \ge 0$ we let $N_h = \{ j \in N
\mid d_j \in (\dmax/3^{h+1}, \dmax/3^h]\}$. We think of the columns in
$N_0$ as {\em large} and the rest as {\em small}. The overall idea is
to focus either on the large demands or the small demands. Moreover,
we will see that small demands can be treated independently within
each group $N_h$. Let $\bz$ be a feasible solution to the system $A[d]
\bx \le \bb, \bx \in [0,1]^n$. For integer $h \ge 0$ we let $\bz^h$
denote the vector obtained from $\bz$ as follows: $\bz^h_j = \bz_j/6$
if $j \in N_h$ and $\bz^h_j = 0$ otherwise. The vector $\bz^h$
restricts the solution $\bz$ to elements in $N_h$ and scales it down
by a small constant factor. We also define a corresponding vector
$\bb^h$ where $b^h_i = \ceil{A_i\bz^h}$ for each row $i$.  We have the
following lemma which is a restatement of corresponding statements
from~\cite{KolliopoulosS04,ChekuriMS03}.

\begin{lemma}
  \label{lem:group-scale}
  For $h \ge 0$, let $\by^h \in \{0,1\}^n$ be a feasible integral
  solution to $A\bx \le \bb^h, \bx \in [0,1]^n$ such that $\by_j = 0$
  if $\bz^h_j = 0$. Then $A[d] \by^0 \le \bb$ and $\sum_{h \ge 1} A[d]
  \by^h \le \bb$.
\end{lemma}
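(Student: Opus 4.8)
The plan is to prove both inequalities entry-by-entry. I would fix an arbitrary row index $i$ of $A$, write $S_i=\{j:a_{ij}=1\}$ for its support, and first discard any column with $d_j=0$ (such a column contributes nothing to any $A[d]\by^h$ and is forced to value $0$). After this cleanup the sets $N_0,N_1,\dots$ partition $N$, and feasibility of $\bz$ for $A[d]\bx\le\bb$ reads $\sum_{h\ge 0}\sum_{j\in S_i\cap N_h} d_j z_j\le b_i$.

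Next I would record two elementary estimates valid for every $h\ge 0$. Since $\by^h$ is a $0/1$-vector supported inside $N_h$ and feasible for $A\bx\le\bb^h$, and $d_j\le \dmax/3^h$ on $N_h$,
\[
 (A[d]\by^h)_i=\sum_{j\in S_i\cap N_h} d_j\,y^h_j\le \frac{\dmax}{3^h}\sum_{j\in S_i\cap N_h} y^h_j\le \frac{\dmax}{3^h}\,b^h_i=\frac{\dmax}{3^h}\,\ceil{A_i\bz^h}.
\]
On the other hand, since $d_j>\dmax/3^{h+1}$ on $N_h$ and $A_i\bz^h=\tfrac16\sum_{j\in S_i\cap N_h} z_j$, feasibility of $\bz$ restricted to $N_h$ yields
\[
 A_i\bz^h\le \frac{3^{h+1}}{6\dmax}\sum_{j\in S_i\cap N_h} d_j z_j=\frac{3^h}{2\dmax}\sum_{j\in S_i\cap N_h} d_j z_j.
\]

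For the claim $\sum_{h\ge 1}(A[d]\by^h)_i\le b_i$ I would use the crude rounding bound $\ceil{A_i\bz^h}\le A_i\bz^h+1$ uniformly in $h$ and sum the two estimates over $h\ge 1$. The $+1$ terms contribute $\sum_{h\ge 1}\dmax/3^h=\dmax/2\le \bmin/2\le b_i/2$, while by the second estimate the $A_i\bz^h$ terms contribute $\sum_{h\ge 1}\tfrac{\dmax}{3^h}A_i\bz^h\le \tfrac12\sum_{h\ge 1}\sum_{j\in S_i\cap N_h} d_j z_j\le b_i/2$; adding the two gives $b_i$. This step is exactly what pins down the scaling constant: with a factor $c$ in place of $6$ one gets $(\tfrac3c+\tfrac12)b_i$, which is $\le b_i$ iff $c\ge 6$.

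The case $h=0$ is the one I expect to need slightly more care, since there is no geometric series to absorb the rounding loss; here I would split on the magnitude of $A_i\bz^0$. The second estimate gives $A_i\bz^0\le \tfrac1{2\dmax}\sum_{j\in S_i\cap N_0} d_j z_j\le b_i/(2\dmax)$. If $A_i\bz^0\le\tfrac12$ then $\ceil{A_i\bz^0}\le 1$, so $(A[d]\by^0)_i\le\dmax\le\bmin\le b_i$; if $A_i\bz^0>\tfrac12$ then $\ceil{A_i\bz^0}\le 2A_i\bz^0$, so $(A[d]\by^0)_i\le 2\dmax A_i\bz^0\le b_i$. Since the row $i$ was arbitrary, this establishes $A[d]\by^0\le\bb$ and $\sum_{h\ge 1}A[d]\by^h\le\bb$. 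The only genuine obstacle is keeping the constants aligned — the $3^{h+1}$ lower bound on demands in $N_h$, the scale-down factor $6$, and the sum $\sum_{h\ge1}3^{-h}=\tfrac12$ — together with the two-regime handling of the ceiling in the $h=0$ step; everything else is bookkeeping.
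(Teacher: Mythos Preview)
Your proof is correct and is essentially identical to the paper's argument: the same upper bound $(A[d]\by^h)_i\le \frac{\dmax}{3^h}\ceil{A_i\bz^h}$, the same use of $d_j>\dmax/3^{h+1}$ to convert $A_i\bz^h$ back to a weighted-demand sum, the same geometric-series summation for $h\ge 1$, and the same two-regime treatment of the ceiling for $h=0$ (the paper phrases the latter as $\ceil{a}\le\max\{1,2a\}$, which is exactly your case split on $A_i\bz^0\lessgtr\tfrac12$). Your aside about the constant $6$ being the threshold for the $h\ge 1$ bound is a nice addition but not in the paper.
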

\begin{proof}
  Fix some $h$ and consider the $i$-th row of $A[d] \by^h$ which
  is equal to $\sum_{j \in N_h} d_j A_{ij} y^h_i$. We upper bound
  this quantity as follows:
  \begin{eqnarray*}
    \sum_{j \in N_h} d_j A_{ij} y^h_i & \le & \frac{\dmax}{3^h}\sum_{j \in N_h} A_{ij} y^h_i  \quad \quad \mbox{(from definition of $N_h$)}\\
    & \le & \frac{\dmax}{3^h}b^h_i  \quad \quad \mbox{(feasibility of $\by^h$)}\\
    & \le & \frac{\dmax}{3^h} \left(1 + \sum_{j \in N_h} A_{ij} z^h_i\right)
       \quad \quad \mbox{(definition of $\bb^h$ and using $\ceil{a} \le 1+ a$)}\\
    & \le & \frac{\dmax}{3^h}\left(1+ \sum_{j \in N_h} A_{ij} z_j/6\right)
       \quad \quad \mbox{(from definition of $\bz^h$)} \\
    & \le & \frac{\dmax}{3^h} + \frac{1}{2}\sum_{j \in N_h} A_{ij} d_j z_j \quad \quad \mbox{($d_j > \dmax/3^{h+1}$ for $j \in N_h$)}.    
  \end{eqnarray*}
 For $h=0$  we need a slight variant of the above where we replace
  $\bb^h_i$ by $\max\{1, 2 \sum_{j \in N_0} A_{ij} z^h_i\}$ since
  $\ceil{a} \le \max\{1,2a\}$. Then we obtain that 
$$\sum_{j \in N_0} d_j A_{ij} y^0_i \le
  \max\{\dmax, \sum_{j \in N_o} A_{ij} d_j z_j\} \le  b_i,$$
since $\dmax \le \bmin$ and $\bz$ is feasible. Thus $A[d]\by^0 \le \bb$.

For the second part of the claim, consider a row $i$.
\begin{eqnarray*}
  \sum_{h \ge 1} \sum_{j \in N_h} d_j A_{ij} y^h_i & \le &\sum_{h \ge 1}
   \left(\frac{\dmax}{3^h} + \frac{1}{2}\sum_{j \in N_h} A_{ij} d_j z_j\right) \\
  & \le & \sum_{h \ge 1} \frac{\dmax}{3^h} +  \sum_{h \ge 1}  \frac{1}{2}\sum_{j \in N_h} A_{ij} d_j z_j \\
  & \le & \frac{\dmax}{2} + \frac{b_i}{2} \\
  & \le & b_i.
  \end{eqnarray*}
The penultimate inequality is from the feasibility of $\bz$, and the last
inequality is from the assumption that $\dmax \le \bmin$.
\end{proof}

With the above claim in place we can describe the \crs scheme 
claimed in the theorem. Let $\bz$ be a feasible solution
and let $\bz^h$ for $h \ge 0$ be constructed from $\bz$ as described
above.

\mypara{\crs scheme:}
\begin{itemize}
\item For each $h \ge 0$ {\em independently} run the
  $(\beta,1-\beta')$-balanced \crs scheme for the polytope $A\bx \le \bb^h,
  x\in [0,1]^n$ with fractional solution $\bz^h$ to obtain integral
  vectors $\by^h$, $h \ge 0$.
\item With probability $1/2$ output $\by^0$, otherwise output
  $\sum_{h \ge 1}\by^h$.
\end{itemize}

We claim that the above scheme is a monotone
$(\beta/6,(1-\beta')/2)$-balanced \crs scheme.
Note that we use the unit-demand scheme in a black-box
fashion.  First, we observe via Lemma~\ref{lem:group-scale} that the
output of the scheme is a feasible integral solution.  An alternative
description of the scheme is as follows.
We are given a point $\bx = \frac{\beta}{6} \bz$ with
$\bz\in [0,1]^n$, $A\bz \leq \bb$.
Obtain a set $R\subseteq N$
by independently sampling each $j \in N$ with probability
$x_j=\beta/6 \cdot z_j$. Let $R_h = R \cap N_h$. For each $h$ obtain $I_h \subseteq
R_h$ as the output of the scheme for $A\by \le \bb^h, \by\in [0,1]^n$
given the random set $R_h$. With probability $1/2$ output $I = I_0$ otherwise
output $I = \cup_{h \ge 1} I_h$. For $j \in N_h$ we have that $\Pr[j \in I_h
\mid j \in R_h] \ge 1-\beta'$. Further, $\Pr[j \in I \mid j \in I_h] = 1/2$
by the choice of the algorithm in the second step. Therefore
$\Pr[ j \in I \mid j \in R] \ge (1-\beta')/2$. It is easy to verify
the scheme is monotone.

Further, if we only have large demands or only small demands then the
second step is not necessary and hence we obtain a
$(\beta/6,(1-\beta'))$-balanced \crs scheme.

\mypara{Acknowledgments:} We thank Mohit Singh for
helpful discussions on contention resolution schemes for matroids, and
Shipra Agrawal for discussions concerning the correlation gap.
We thank two anonymous reviewers for their comments which helped
us improve the presentation of the details in the paper.

\appendix

\section{Approximation for general polytopes}
\label{sec:general-polytopes}

In this section, we formulate an approximation result for the problem $\max \{F(\bx): \bx \in P\}$
when $P$ is a general solvable polytope (not necessarily down-monotone).
This result is included only for the sake of compleness; we do not have any concrete applications for it.
Our result generalizes (while losing a factor of 4) the result for matroid base polytopes from \cite{Vondrak09},
which states that a $\frac12 (1-\frac{1}{\nu}-o(1))$-approximation can be achieved,
provided that the fractional base packing number is at least $\nu$ where
$\nu \in [1,2]$.
As observed in \cite{Vondrak09}, the fractional base packing number being at least $\nu$
is equivalent to the condition $P \cap [0,\frac{1}{\nu}]^N \neq \emptyset$.
This is the condition we use for general polytopes.
We state the algorithm only in its continuous form; we omit the discretization details.

\begin{algorithm}
\label{alg:gen-local-search}
Let $t \in [0,1]$ be a parameter such that $P \cap [0,t]^N \neq \emptyset$.
Initialize $\bx \in P \cap [0,t]^N$ arbitrarily. As long as there is $\by \in P \cap [0,\frac12 (1+t)]^N$ such that
$(\by-\bx) \cdot \nabla F(\bx) > 0$ (which can be found by linear
programming), move $\bx$ continuously in the direction $\by-\bx$. If
there is no such $\by \in P \cap [0,\frac12 (1+t)]^N$, return $\bx$.
\end{algorithm}

Note that even though we require $P \cap [0,t]^N \neq \emptyset$, the local search works inside a larger polytope
$P \cap [0,\frac12 (1+t)]^N$. This is necessary for the analysis.

\begin{theorem}
For any solvable polytope such that $P \cap [0,t]^N \neq \emptyset$,
Algorithm~\ref{alg:gen-local-search} approximates the problem $\max \{F(\bx): \bx \in P\}$
within a factor of $\frac18 (1-t)$.
\end{theorem}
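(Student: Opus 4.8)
The goal is to show that the local optimum $\bx$ returned by Algorithm~\ref{alg:gen-local-search} satisfies $F(\bx) \geq \frac{1}{8}(1-t)\,\opt$. Let $\bx^* \in P$ be a point with $F(\bx^*) = \opt$. The local-search stopping rule tells us that $(\by - \bx)\cdot\nabla F(\bx) \leq 0$ for every $\by$ in the enlarged polytope $P \cap [0,\frac12(1+t)]^N$, so the first step is to produce a good feasible test direction inside that enlarged box. A natural candidate is a convex combination of $\bx^*$ and $\bx$ scaled so that the result lies in $[0,\frac12(1+t)]^N$: since $\bx \in [0,t]^N$ and $\bx^* \in [0,1]^N$, the point $\by = \frac12\bx^* + \frac12\bx$ has coordinates at most $\frac12(1+t)$, and lies in $P$ by convexity. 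Applying the stopping condition to this $\by$ gives $(\bx^* - \bx)\cdot\nabla F(\bx) \leq 0$.

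**Using the submodularity inequality.** Now I would invoke Lemma~\ref{lem:G-bound} with the pair $(\bx,\bx^*)$: it gives $(\bx^* - \bx)\cdot\nabla F(\bx) \geq F(\bx\vee\bx^*) + F(\bx\wedge\bx^*) - 2F(\bx)$. Combined with the previous inequality, we get $2F(\bx) \geq F(\bx\vee\bx^*) + F(\bx\wedge\bx^*) \geq F(\bx\vee\bx^*)$, using nonnegativity of $f$. The remaining task is to lower-bound $F(\bx\vee\bx^*)$ in terms of $\opt = F(\bx^*)$. This is where the box constraint $[0,t]$ on $\bx$ is essential: since every coordinate of $\bx$ is at most $t$, the vector $\bx\vee\bx^*$ is componentwise at least $\bx^*$ but also "not too far above" in the sense that $\bx\vee\bx^* \leq \bx^* + \bx \leq \bx^* + t\b1$. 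I would follow the matroid-base-polytope argument of \cite{Vondrak09}: consider moving from $\bx^*$ toward $\bx\vee\bx^*$ and use concavity of $F$ along this nonnegative direction, together with the fact that the "extra mass" added is concentrated on coordinates where $x_i \leq t$, to show $F(\bx\vee\bx^*) \geq \frac14(1-t)\,\opt$. The cleanest route is probably: restrict attention to $P\cap[0,t]^N$ nonempty, pick $\bx_0$ in it, run the local search starting there, and compare $F(\bx\vee\bx^*)$ against $F(\bx^*)$ via the fact that $\bx^* \vee \bx$ differs from $\bx^*$ only by adding mass bounded by $t$ coordinatewise, losing a factor governed by $t$; then chaining the $\frac12$ from the convex-combination step with a $\frac14(1-t)$ bound yields $\frac18(1-t)$.

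**The main obstacle.** The genuinely delicate step is the lower bound on $F(\bx\vee\bx^*)$ — i.e., relating the value of the multilinear extension at the coordinatewise max to its value at $\bx^*$ when $\bx$ is confined to a small box. This is the analogue of the fractional-base-packing argument in \cite{Vondrak09}, and it is the place where the factor $(1-t)$ enters and where one must be careful about which coordinates carry the extra mass. I would model this step directly on Lemma~A.5 / the threshold-set machinery of \cite{Vondrak09}: express $F(\bx\vee\bx^*)$ as an expectation over independent roundings, condition on the coordinates of $\bx^*$, and use that adding an independent $t$-fraction of noise on top of $\bx^*$ cannot decrease the expected value too much — quantitatively, by a factor of about $(1-t)$ after the appropriate averaging. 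The remaining pieces — existence of the test direction, the two applications of Lemma~\ref{lem:G-bound}, and discarding nonnegative terms — are routine, and the discretization details (error terms $\delta$, polynomial termination) are exactly as in the down-monotone case handled earlier in this section, which is why the paper states the algorithm only in continuous form.
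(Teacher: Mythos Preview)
Your proposal has two genuine gaps, one in each of its main steps.

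\textbf{The test direction.} You assume $\bx \in [0,t]^N$, but the algorithm performs local search inside $P \cap [0,\tfrac12(1+t)]^N$, so the returned point only satisfies $\bx \in [0,\tfrac12(1+t)]^N$. Consequently your choice $\by = \tfrac12(\bx+\bx^*)$ has coordinates as large as $\tfrac14(3+t)$, which exceeds $\tfrac12(1+t)$ unless $t=1$; the stopping condition does not apply to it. The paper instead takes $\by = \tfrac12(\bx_0+\bx^*)$ with a \emph{fixed} $\bx_0 \in P \cap [0,t]^N$, which does lie in $P \cap [0,\tfrac12(1+t)]^N$. But then the local-optimality inequality and Lemma~\ref{lem:G-bound} yield $2F(\bx) \ge F(\bx \vee \by)$ for this particular $\by$, \emph{not} $2F(\bx) \ge F(\bx \vee \bx^*)$; so the quantity to be lower-bounded is $F(\bx\vee\by)$, not $F(\bx\vee\bx^*)$.

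\textbf{The lower bound.} You flag the bound $F(\bx\vee\bx^*) \ge \tfrac14(1-t)\,\opt$ as the crux and propose to use the threshold-set lemma from \cite{Vondrak09}. That machinery is what drives the $0.309$-approximation for down-monotone polytopes earlier in Section~\ref{sec:multilinear}, but it is not what is used (or needed) here, and your sketch of an ``added $t$-noise'' argument does not go through as stated. The paper's argument is a clean ray-plus-concavity trick: set $\bx' = \bx \vee \by$, note that $\bx' \ge \tfrac12\bx^*$ and $\bx' \in [0,\tfrac12(1+t)]^N$, and observe that the point $\tfrac12\bx^* + \tfrac{2}{1+t}(\bx'-\tfrac12\bx^*)$ still lies in $[0,1]^N$. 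Writing $\bx'$ as the convex combination $\tfrac{1+t}{2}$ of that point and $\tfrac{1-t}{2}$ of $\tfrac12\bx^*$, concavity of $F$ along nonnegative directions gives $F(\bx') \ge \tfrac{1-t}{2}\,F(\tfrac12\bx^*) \ge \tfrac{1-t}{4}\,F(\bx^*)$. This is exactly where the $(1-t)$ factor enters, and it hinges on the containment $\bx' \in [0,\tfrac12(1+t)]^N$---which is precisely why the local search is run in the enlarged box and why one must track $\bx \vee \by$ rather than $\bx \vee \bx^*$.
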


\begin{proof}
The algorithm maintains the invariant $\bx \in P \cap [0,\frac12 (1+t)]^N$.
Suppose that the algorithm returns a point $\bx$. Then we know that for every $\by \in P \cap [0,\frac12 (1+t)]^N$,
$(\by - \bx) \cdot \nabla F(\bx) \leq 0$. We use a particular point $\by$ defined as follows:
Let $\bx^*$ be the optimum, i.e.~$F(\bx^*) = \max \{F(\bx): \bx \in P\}$,
and let $\bx_0$ be any point in $P \cap [0,t]^N$, for example the starting point.
Then we define $\by = \frac12 (\bx_0 + \bx^*)$. By convexity, we have $\by \in P$,
and since $\bx^* \in [0,1]^N$, we also have $\by \in [0,\frac12 (1+t)]^N$.
Therefore, by the local-search condition, we have $(\by - \bx) \cdot \nabla F(\bx) \leq 0$.
By Lemma~\ref{lem:G-bound},
$$ 2 F(\bx) \geq F(\bx \vee \by) + F(\bx \wedge \by) \geq F(\bx \vee \by). $$
Let $\bx' = \bx \vee \by$. The point $\bx'$ has the following properties: $\bx' = \bx \vee \frac12 (\bx_0 + \bx^*) \geq \frac12 \bx^*$,
and also $\bx' \in [0,\frac12 (1+t)]^N$. Consider the ray $\frac12 \bx^* + \xi (\bx' - \frac12 \bx^*)$ parameterized by $\xi \geq 0$.
Observe that this ray has a positive direction in all coordinates, and it is possible to go beyond $\xi=1$ and still stay inside $[0,1]^N$:
in particular, for $\xi = \frac{2}{1+t}$ we get a point $\frac12 \bx^* + \frac{2}{1+t} (\bx' - \frac12 \bx^*) \leq \frac{2}{1+t} \bx' \in [0,1]^N$.
Using this fact, we can express $\bx'$ as a convex combination:
$$ \bx' = \frac{1+t}{2} \cdot \left( \frac12 \bx^* + \frac{2}{1+t} (\bx' - \frac12 \bx^*) \right) + \frac{1-t}{2} \cdot \frac12 \bx^* $$
(the reader can verify that this is an identity). By the concavity of $F$ in positive directions, we get
$$ F(\bx') \geq \frac{1+t}{2} F\left( \frac12 \bx^* + \frac{2}{1+t} (\bx' - \frac12 \bx^*) \right) + \frac{1-t}{2} F\left(\frac12 \bx^*\right). $$
As we argued, $\frac12 \bx^* + \frac{2}{1+t} (\bx' - \frac12 \bx^*) \in [0,1]^N$, so we can just lower-bound the respective value by $0$,
and we obtain
$$ F(\bx') \geq \frac{1-t}{2} F\left( \frac12 \bx^*\right) \geq \frac{1-t}{4} F(\bx^*).$$
Finally, our solution satisfies
$$ F(\bx) \geq \frac12 F(\bx \vee \by) = \frac12 F(\bx') \geq \frac{1-t}{8} F(\bx^*) = \frac{1-t}{8} \opt.$$
\end{proof}

\section{Missing proofs of Section~\ref{sec:CRS}}%
\label{sec:appendixCR}

\subsection*{Proof of Theorem~\ref{thm:submod-balanced-strict}}

\begin{proof}%
As observed in the proof of Theorem~\eqref{thm:submod-balanced}, it
suffices to show~\eqref{eq:pruneToProve} (assuming an arbitrary ordering
of the elements $N=\{1,\dots, n\}$).
Let us take the expectation in two steps, first over $I$ conditioned on
$R$, and then over $R$:
\begin{eqnarray*}
\E[f(I \cap [i]) - f(I \cap [i-1])]
 & \geq & \E_R[\E_I[ \b1_{i \in I} f_{R \cap [i-1]}(i) \mid R]]  \\
 & = & \E_R[\Pr[i \in I \mid R] \, f_{R \cap [i-1]}(i)].
\end{eqnarray*}
Note that $\Pr[i \in I \mid R]$ can be nonzero only if $i \in R$,
therefore we can restrict our attention to this event:
$$ \E[f(I \cap [i]) - f(I \cap [i-1])]
\geq \Pr[i \in R] \cdot \E[\Pr[i \in I \mid R] f_{R \cap [i-1]}(i)
\mid i \in R]. $$ On the product space associated with the
distribution of $R$ conditioned on $i \in R$, both $\Pr[i \in I \mid
R]$ and $f_{R \cap [i-1]}(i)$ are non-increasing functions, due to $I$
being monotone with respect to $R$, and $f$ being submodular.
Therefore, the FKG inequality (see \cite{AlonS-book}) implies that
\begin{eqnarray*}
 \E_R[\Pr[i \in I \mid R] f_{R \cap [i-1]}(i) \mid i \in R] & \geq &  
 \E_R[\Pr[i \in I \mid R] \mid i \in R] \cdot \E_R[f_{R \cap [i-1]}(i) \mid i \in R]\\
 & = &  \Pr[i \in I \mid i \in R] \cdot \E[f_{R \cap [i-1]}(i)].
\end{eqnarray*}
since the marginal value $f_{R \cap [i-1]}(i)$ does not depend on $i\in R$.
By the $(b,c)$-balanced property, $\Pr[i \in I \mid i \in R] \geq c$;
in addition, $f$ is either monotone or we assume that 
$\Pr[i \in I \mid i \in R] = c$. In both cases,
$\Pr[i \in I \mid i \in R] \cdot \E[f_{R \cap [i-1]}(i)] \geq c \cdot \E[f_{R \cap [i-1]}(i)]$.
We summarize:
\begin{eqnarray*}
\E[f(I \cap [i]) - f(I \cap [i-1])] & \geq & \Pr[i \in R] \cdot c \,\E[f_{R \cap [i-1]}(i)] \\
 & = & c \, \E[f(R \cap [i]) - f(R \cap [i-1])].
\end{eqnarray*}
Therefore,
$$ \E[f(I)] = f(\emptyset) + \sum_{i=1}^{n} \E[f(I \cap [i]) - f(I \cap [i-1])]
 \geq f(\emptyset) + c \, \sum_{i=1}^{n} \E[f(R \cap [i]) - f(R \cap [i-1])]
  \geq c \, \E[f(R)].$$
\end{proof}

\subsection*{Proof of Lemma~\ref{lem:correl-gap-stronger}}

To prove Lemma~\ref{lem:correl-gap-stronger} we use a property of
submodular functions presented in~\cite{V07}, which is stated below
as Lemma~\ref{lem:janSubIneq}.

\begin{lemma}[\cite{V07}]\label{lem:janSubIneq}
Let $f: 2^N \rightarrow \mathbb{R}_+$ be a monotone submodular
function, and $A_1,\dots, A_m \subseteq N$. For each $j\in [m]$
independently, sample a random subset $A_j(q_j)$ which
contains each element of $A_j$ with probability $q_j$.
Let $J$ be a random subset of $[m]$ containing each element
$j\in [m]$ independently with probability $q_j$. 
Then
\begin{equation*}
\E[f(A_1(q_1)\cup \dots \cup A_m(q_m))] \geq
\E\left[f\left(\bigcup_{j\in J} A_j\right)\right].
\end{equation*}
\end{lemma}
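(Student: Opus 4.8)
\textbf{Proof plan for Lemma~\ref{lem:janSubIneq}.} The plan is to prove the inequality by a hybrid (telescoping) argument that, one set at a time, replaces the "all-or-nothing" sampling of each $A_j$ by the element-wise sampling, and shows each such replacement cannot decrease $\E[f(\cdot)]$. Concretely, for $k=0,1,\dots,m$ I would define a random set $Z_k$ in which the first $k$ sets $A_1,\dots,A_k$ are sampled element-wise — so we include $A_j(q_j)$ — while the remaining sets $A_{k+1},\dots,A_m$ are sampled all-or-nothing — so we include all of $A_j$ with probability $q_j$ and nothing otherwise — with all these choices mutually independent. Then $Z_0=\bigcup_{j\in J}A_j$ is exactly the right-hand side and $Z_m=A_1(q_1)\cup\cdots\cup A_m(q_m)$ is exactly the left-hand side, so it suffices to show $\E[f(Z_{k-1})]\le \E[f(Z_k)]$ for every $k\in[m]$ and then chain the inequalities.

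The single step I would reduce to a one-set statement against a fixed background. Condition on all the randomness except the choice governing index $k$; let $B$ be the (random) union of the contributions of all indices $j\neq k$, which is literally the same in $Z_{k-1}$ and in $Z_k$ since the index-$k$ randomness is independent of everything else. Conditionally on $B$ we then have $\E[f(Z_{k-1})\mid B]=q_k\,f(B\cup A_k)+(1-q_k)\,f(B)$, while $\E[f(Z_k)\mid B]=\E[f(B\cup A_k(q_k))]$, where $A_k(q_k)\subseteq A_k$ includes each element independently with probability $q_k$. So it remains to prove, for every fixed set $B$,
\begin{equation*}
\E[f(B\cup A_k(q_k))]\ \ge\ q_k\,f(B\cup A_k)+(1-q_k)\,f(B),
\end{equation*}
after which taking the expectation over $B$ finishes the step.

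For this last inequality I would pass to the contracted function $g:2^{A_k}\to\RR_+$, $g(S)=f(B\cup S)$, which is again submodular (indeed monotone and nonnegative, inherited from $f$), and let $G$ be its multilinear extension. By definition of the multilinear extension, $\E[f(B\cup A_k(q_k))]=\E[g(A_k(q_k))]=G(q_k\,\b1_{A_k})$, where $\b1_{A_k}$ is the characteristic vector of $A_k$. Submodularity makes $G$ concave along the nonnegative direction $\b1_{A_k}$ — this is precisely the one-dimensional fact underlying the proof of Lemma~\ref{lem:G-bound}; see also \cite{Vondrak08,CCPV09} — so $t\mapsto G(t\,\b1_{A_k})$ is concave on $[0,1]$ and therefore $G(q_k\,\b1_{A_k})\ge q_k\,G(\b1_{A_k})+(1-q_k)\,G(\bzero)=q_k\,g(A_k)+(1-q_k)\,g(\emptyset)=q_k\,f(B\cup A_k)+(1-q_k)\,f(B)$, as needed. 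Chaining $\E[f(Z_0)]\le\E[f(Z_1)]\le\cdots\le\E[f(Z_m)]$ then yields the lemma.

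The only genuine content is the concavity-along-a-nonnegative-direction step, which is standard and already available in the excerpt, so the main thing to get right is the bookkeeping in the hybrid: one must set up the probability space so that, when conditioning on the background $B$, the randomness governing index $k$ is genuinely independent of $B$, so that conditionally the two samplings of $A_k$ really are the element-wise and the all-or-nothing distributions with parameter $q_k$. Once that independence is arranged cleanly, the rest is routine.
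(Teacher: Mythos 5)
Your proof is correct. The hybrid argument is sound: the contributions of the indices $j\neq k$ are identical in $Z_{k-1}$ and $Z_k$ and independent of the index-$k$ randomness, so the step reduces exactly to $\E[f(B\cup A_k(q_k))]\geq q_k f(B\cup A_k)+(1-q_k)f(B)$ for fixed $B$, which follows from concavity of the multilinear extension of $S\mapsto f(B\cup S)$ along the nonnegative direction $\b1_{A_k}$ (submodularity alone suffices here; monotonicity is not even needed). Note that the paper itself gives no proof of this lemma—it is quoted from \cite{V07}—and your one-set-at-a-time replacement using exactly this conditional concavity inequality is essentially the standard argument from that reference, so there is no substantive divergence to report.
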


Lemma~\ref{lem:strongSubIneq} below is a generalization of Lemma~4.2
in~\cite{V07}. We follow the same proof 
technique as used in~\cite{V07}. The lemma
contains two statements. The first 
is a simpler statement that may be of independent
interest. The second, which can be 
seen to be a slightly stronger version of the
first statement, turns out to imply
Lemma~\eqref{lem:correl-gap-stronger},
as we will show in the following.

\begin{lemma}\label{lem:strongSubIneq}
Let $f: 2^N \rightarrow \mathbb{R}_+$ be a monotone submodular
function, and $A_1,\dots, A_m \subseteq N$. For each $j\in [m]$
independently sample a random subset $A_j(q_j)$ which
contains each element of $A_j$ with probability $q_j$.
Let $q=\sum_{j=1}^m q_j$. Then
\begin{equation*}
\E[f(A_1(q_1)\cup \dots \cup A_m(q_m))] \geq
\frac{1}{q} \left(1 - \left(1 - \frac{q}{m} \right)^m \right)
\sum_{j=1}^m q_j f(A_j).
\end{equation*}
Furthermore for $m\geq 2$ and any $s,t\in [m]$ with $s\neq t$, 
\begin{align*}
\E[f(A_1(q_1)\cup &\dots \cup A_m(q_m))]\\
&\geq
\frac{1}{q-q_s q_t}
 \left(1 - \left(1 - \frac{q-q_s q_t}{m-1} \right)^{m-1} \right)
\left(-q_s q_t \min\{f(A_s),f(A_t)\} + \sum_{j=1}^m q_j f(A_j)\right).
\end{align*}
\end{lemma}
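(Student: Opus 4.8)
The plan is to first apply Lemma~\ref{lem:janSubIneq} to pass from element-wise sampling to block sampling: since $\E[f(A_1(q_1)\cup\cdots\cup A_m(q_m))]\geq\E[f(\bigcup_{j\in J}A_j)]$, where $J\subseteq[m]$ contains each $j$ independently with probability $q_j$, it suffices to lower bound $\E[f(\bigcup_{j\in J}A_j)]$; equivalently, writing $h(S):=f(\bigcup_{j\in S}A_j)$ — a monotone submodular function on $[m]$ with $h(\{j\})=f(A_j)$ — it suffices to bound $\E[h(J)]$. Throughout I abbreviate $c(q,m):=\frac1q\bigl(1-(1-q/m)^m\bigr)$; a key elementary fact I would establish first is that $c(q',k)$ is non-increasing in $q'$ and in $k$ (the former from concavity of $t\mapsto 1-(1-t/k)^k$, which vanishes at $t=0$, the latter from monotonicity of $n\mapsto(1-t/n)^n$).

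For the first inequality I would argue as follows. By monotonicity of $h$, $h(J)\geq\max_{j\in J}h(\{j\})$, so $\E[h(J)]\geq\E[\max_{j\in J}f(A_j)]$. Relabel so that $v_1\geq\cdots\geq v_m\geq 0$ with $v_j:=f(A_j)$, set $v_{m+1}:=0$, and use the layer-cake identities $\max_{j\in J}v_j=\sum_{k=1}^m(v_k-v_{k+1})\mathbf 1[J\cap[k]\neq\emptyset]$ and $\sum_j q_jv_j=\sum_{k=1}^m(v_k-v_{k+1})\sum_{i\leq k}q_i$. Taking expectations, the claim reduces to checking, for each $k$, that $1-\prod_{i\leq k}(1-q_i)\geq c(q,m)\sum_{i\leq k}q_i$. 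This follows because, by AM--GM applied to $1-q_1,\dots,1-q_k$, $1-\prod_{i\leq k}(1-q_i)\geq 1-\bigl(1-\tfrac1k\sum_{i\leq k}q_i\bigr)^k=c\bigl(\sum_{i\leq k}q_i,k\bigr)\sum_{i\leq k}q_i\geq c(q,m)\sum_{i\leq k}q_i$, using $\sum_{i\leq k}q_i\leq q$, $k\leq m$, and the monotonicity of $c$. As each coefficient $v_k-v_{k+1}$ is non-negative, summing the per-$k$ inequalities gives $\E[\max_{j\in J}v_j]\geq c(q,m)\sum_j q_jv_j$, which is the first statement (the base case $m=1$ is included, since $c(q_1,1)=1$).

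For the ``Furthermore'' inequality I would reduce $m$ to $m-1$ by a \emph{weighted merge} of the two distinguished blocks $s,t$. Put $\tilde q:=q_s+q_t-q_sq_t=1-(1-q_s)(1-q_t)$ and define $h'$ on $([m]\setminus\{s,t\})\cup\{\star\}$ by $h'(S)=h(S)$ when $\star\notin S$ and, for $S\subseteq[m]\setminus\{s,t\}$,
\[ h'(S\cup\{\star\})=\tfrac1{\tilde q}\bigl(q_s(1-q_t)\,h(S\cup\{s\})+(1-q_s)q_t\,h(S\cup\{t\})+q_sq_t\,h(S\cup\{s,t\})\bigr). \]
I would check that $h'$ is monotone (each term is monotone in $S$ and the three coefficients sum to $\tilde q$) and submodular (every instance of the defining submodular inequality for $h'$ splits, according to the three coefficients, into instances of submodularity of $h$, using that the remaining coordinates $S$ never contain $s$ or $t$). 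The point of the weights is that if $\star$ is sampled with probability $\tilde q$, block sampling for $h'$ reproduces block sampling for $h$ exactly, so $\E[h'(\tilde J)]=\E[h(J)]$, while the total mass becomes $\sum_{j\neq s,t}q_j+\tilde q=q-q_sq_t$ and the number of blocks is $m-1$. Applying the already-proved first inequality to $h'$ (with singleton blocks) gives $\E[h(J)]\geq c(q-q_sq_t,m-1)\bigl(\sum_{j\neq s,t}q_jf(A_j)+\tilde q\,h'(\{\star\})\bigr)$, and then $h(\{s,t\})\geq\max\{f(A_s),f(A_t)\}$ (monotonicity) yields, after a short computation, $\tilde q\,h'(\{\star\})\geq q_sf(A_s)+q_tf(A_t)-q_sq_t\min\{f(A_s),f(A_t)\}$, which delivers the stated bound.

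I expect the main obstacle to be obtaining the sharp constant $1-(1-q/m)^m$ rather than the weaker $1-e^{-q}$: a continuous-greedy / differential-inequality argument (which underlies the weaker variant of this lemma) only yields $1-e^{-q}$, so the discreteness in $m$ has to be exploited — here through the ``$\max$ over $J$'' bound combined with AM--GM, which are tight exactly when all $A_j$ coincide and all $q_j$ are equal. A secondary, more technical point is verifying that the merged function $h'$ is genuinely submodular, which needs a careful case analysis according to whether $\star$ lies in neither, one, or both of the two sets occurring in the inequality. Once Lemma~\ref{lem:strongSubIneq} is in hand, Lemma~\ref{lem:correl-gap-stronger} (and hence \eqref{eq:matToProve}) follows by applying it to an optimal convex decomposition of $\bp$ and using monotonicity of the multilinear extension to pass from $\E[f(\bigcup_jA_j(q_j))]$ to $F(b\bp)$.
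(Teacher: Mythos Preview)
Your argument is correct. For the first inequality, your approach coincides with the paper's: both reduce via Lemma~\ref{lem:janSubIneq} to bounding $\E[f(\bigcup_{j\in J}A_j)]$ by the ``max over $J$'' (equivalently, $f(A_{\min J})$ after sorting), then reduce by linearity to step functions (your layer-cake identity is exactly this), and finish with AM--GM plus the monotonicity of $c(q',k)$ in both arguments.

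For the second inequality, you take a genuinely different route. The paper proves the second statement directly (and then obtains the first as a corollary by adjoining a dummy set $A_{m+1}$ with $q_{m+1}=0$): after sorting so that $f(A_1)\geq\cdots\geq f(A_m)$ and reducing to a threshold $r$, it distinguishes the cases $r<t$ and $r\geq t$; in the latter it bundles the two factors $(1-q_s)(1-q_t)$ into a single term $1-(q_s+q_t-q_sq_t)$ and applies AM--GM to the resulting $r-1$ factors, which is precisely what produces the $m-1$ and the $q-q_sq_t$ in the bound. Your route goes the opposite direction: you prove the first inequality first, then derive the second by merging the blocks $s$ and $t$ into a single element $\star$ carrying the mixture probability $\tilde q=q_s+q_t-q_sq_t$ and defining the weighted function $h'$, check that $h'$ stays monotone submodular, observe that block-sampling for $h'$ reproduces $\E[h(J)]$ exactly, and apply the first inequality on $m-1$ blocks. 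Both arguments are tight at the same configuration and yield the identical constant. Your reduction is more structural and avoids the case split; the price is the verification that $h'$ is submodular (your case analysis on whether $\star$ lies in neither, one, or both sets goes through, crucially because the non-$\star$ coordinates never contain $s$ or $t$, so each of the three weighted terms reduces to a genuine submodularity inequality for $h$). The paper's direct bundling is shorter and makes transparent exactly where the saving of one factor comes from.
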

\begin{proof}
Observe that the first statement is a consequence
of the second one: it suffices to add an arbitrary
additional set $A_{m+1}$ with probability $q_{m+1}=0$
to the family of sets and invoke the second
part of the lemma with $s=1$ and $t=m+1$.
Hence, we only prove the second part of the lemma.

By Lemma~\ref{lem:janSubIneq} it suffices to estimate
$\E\left[f(\cup_{j\in J} A_j)\right]$, where $J$ is a
random subset of $[m]$ containing element $j\in [m]$
independently of the others with probability $q_j$. Assume
$f(A_1) \geq \dots \geq f(A_m)$ and without loss of generality
we assume $t>s$. We define
for $k\in [m]$,
\begin{equation*}
\mathcal{J}_k = \{I\subseteq [m] \mid \min(I) = k\}.
\end{equation*}
By monotonicity of $f$, we have 
$f(\cup_{j\in J} A_j)\geq f(A_k)$ if $J\in \mathcal{J}_k$.
Hence,
\begin{align*}
\E[f(\cup_{j\in J} A_j)] \leq \sum_{j=1}^m
\Pr[J\in \mathcal{J}_j] f(A_j)
= \sum_{j=1}^m f(A_j) q_j \prod_{\ell = 1}^{j-1} (1-q_\ell).
\end{align*}
Thus, it suffices to prove
\begin{equation}
\sum_{j=1}^m f(A_j) q_j \prod_{\ell = 1}^{j-1} (1-q_\ell)
\geq \frac{1}{q-q_s q_t}
  \left(1 - \left(1 - \frac{q-q_s q_t}{m-1} \right)^{m-1} \right)
\left(- q_s q_t f(A_t)
  + \sum_{j=1}^m q_j f(A_j)\right).
\end{equation}
Since the above inequality is linear in the parameters
$f(A_j)$, it suffices to prove it for the special case
$f(A_1)= \dots  = f(A_r)=1$ and
$f(A_{r+1}) = \dots = f(A_m) = 0$ (A general decreasing
sequence of $f(A_j)$ can be obtained as a positive linear
combination of such special cases.) Hence, it remains
to prove
\begin{equation}\label{eq:toProveSI}
\sum_{j=1}^r q_j \prod_{\ell = 1}^{j-1} (1-q_\ell)
\geq \frac{1}{q-q_s q_t}
  \left(1 - \left(1 - \frac{q-q_s q_t}{m-1} \right)^{m-1} \right)
\left(- \b1_{r\geq t} \cdot p_s p_t + \sum_{j=1}^r q_j\right),
\end{equation}
where $\b1_{r\geq t}$ is equal to $1$ if $r\geq t$ and $0$
otherwise.
To prove~\eqref{eq:toProveSI} we distinguish two cases
depending on whether $r<t$ or $r\geq t$.
\medskip

\textbf{Case $r<t$}:
Expanding the left-hand side of~\eqref{eq:toProveSI},
we obtain
\begin{equation*}
\sum_{j=1}^r q_j \prod_{\ell =1}^{j-1} (1-q_\ell)
= 1 - \prod_{j=1}^r (1-q_j) \geq
1 - \left( 1- \frac{1}{r}\sum_{j=1}^r q_j \right)^r,
\end{equation*}
using the arithmetic-geometric mean inequality.
Finally, using concavity of
$\phi_r(x) = 1-(1-\frac{x}{r})^r$ and $\phi_r(0)=0$,
we get
\begin{align*}
1 - \left( 1- \frac{1}{r}\sum_{j=1}^r q_j \right)^r &=
\phi_r\left(\sum_{j=1}^r q_j\right) \geq
 \phi_r(q - q_s q_t) \frac{\sum_{j=1}^r q_j}{q - q_s q_t} 
 \geq \phi_{m-1}(q - q_s q_t) \frac{\sum_{j=1}^r q_j}{q - q_s q_t}\\
&= \frac{1}{q - q_s q_t}
  \left(1-\left(1-\frac{q-q_s q_t}{m-1}\right)^{m-1}\right)
 \sum_{j=1}^r q_j,
\end{align*}
where the last inequality follows from the fact $\phi_r(x)$
is decreasing in $r$ and by using $r\leq m-1$.
Notice that we used the
fact $\sum_{j=1}^r q_j \leq q-q_s q_t$ for the
first inequality in the
reasoning above, which holds since
$r < t$ and therefore
$\sum_{j=1}^r q_j \leq q - q_t \leq q - q_s q_t$.
\medskip

\textbf{Case $r \geq t$}:
As in the previous case we start by expanding
the left-hand side of~\ref{eq:toProveSI}. This
time we bundle the two terms $(1-q_s)$ and $(1-q_t)$
when applying the arithmetic-geometric mean inequality.
\begin{align*}
\sum_{j=1}^r q_j \prod_{\ell =1}^{j-1} (1-q_\ell)
& = 1 - \prod_{j=1}^r (1-q_j)
= 1 - (1-q_s)(1-q_t)
\prod_{\substack{j\in [r]\\ j\notin \{s,t\}}} (1-q_j)\\
&\geq 1 - \left( 1- \frac{1}{r-1}
\left(-q_s q_t + \sum_{j=1}^{r} q_j \right) \right)^{r-1}.
\end{align*}
The remaining part of the proof is similar to the previous case.
\begin{align*}
1 - \left( 1- \frac{1}{r-1}\left(- q_s q_t +
  \sum_{j=1}^r q_j\right) \right)^{r-1}
&= \phi_{r-1} \left(-q_s q_t + \sum_{j=1}^r q_j\right)
 \geq \phi_{r-1} (q - q_s q_t)
   \frac{-q_s q_t + \sum_{j=1}^r q_j}{q - q_s q_t} \\
&\geq \phi_{m-1} (q - q_s q_t)
   \frac{-q_s q_t + \sum_{j=1}^r q_j}{q - q_s q_t} \\
&= \frac{1}{q - q_s q_t}
  \left(1-\left(1-\frac{q-q_s q_t}{m-1}\right)^{m-1}\right)
 \left(- p_s p_t + \sum_{j=1}^r q_j\right),
\end{align*}
again using concavity of $\phi_{r-1}$ and the fact that
$\phi_{r-1}(q-q_s q_t)$ is decreasing in $r$.
This finishes the proof of~\eqref{eq:toProveSI} and thus
completes the proof of the lemma.
\end{proof}

Leveraging Lemma~\ref{lem:strongSubIneq} we are now ready
to prove Lemma~\ref{lem:correl-gap-stronger}.
We recall that for a nonnegative submodular function
$f:2^N\rightarrow \mathbb{R}_+$ and $\bp\in [0,1]^N$,
its concave closure $f^+$ is defined by
\begin{equation*}
f^+(\bp)= \max\left\{\sum_{S\subseteq N}\alpha_S f(S) \;\Big\vert\;
  \alpha_S \geq 0 \; \forall S\subseteq N,
  \sum_{S\subseteq N} \alpha_S = 1,
\sum_{S\subseteq N, i\in S} \alpha_S = p_i \;\forall i\in N\right\}.
\end{equation*}

\begin{proof}[Proof of Lemma~\ref{lem:correl-gap-stronger}]
Consider a basic solution $(\alpha_j, A_j)_{j\in [m]}$
to the linear program that defines
$f^+(\bp)$, i.e.,
$f^+(\bp) = \sum_{j=1}^m \alpha_j f(A_j)$,
with $A_j\subseteq N, \alpha_j \geq 0$ for $j\in [m]$, 
$\sum_{j=1}^m \alpha_j=1$ and
$\sum_{j\in [m], i\in A_j}\alpha_j = p_i$ for $i\in N$.
Notice that since we chose a basic solution and
the LP defining $f^+(\bp)$ only has $n+1$ constraints
apart from the nonnegativity constraints, we have $m\leq n+1$.
Let $R(b \bp)$ be a random subset of $N$ containing each
element $i\in N$ independently with probability $b p_i$.
We distiguish two cases depending on whether $m\leq n$
or $m= n+1$.

\medskip
\textbf{Case $m\leq n$}:
Consider the random set
\begin{equation*}
A = \bigcup_{j\in [m]} A_j\left(b \cdot \alpha_j \right),
\end{equation*}
where $A_j(b\alpha_j)$ is a random subset of $N$ containing
each element $i\in N$ with probability $b\alpha_j$,
independently of the others.
Notice that the distribution of $A$ is dominated by the distribution
of $R(b\bp)$ since $A$ contains each element $i\in N$ independently
with probability
\begin{equation*}
\Pr[i\in A] = 1- \prod_{\substack{j\in [m]\\ i\in A_j}}%
\left(1-b \alpha_j \right)
\leq 1 - \left( 1 - \sum_{\substack{j\in [m]\\ i\in A_j}}
b \alpha_j \right) = b p_i = \Pr[i\in R(b\bp)].
\end{equation*}
Hence $F(b\bp) \geq \E[f(A)]$, and we can use
the first statement of Lemma~\ref{lem:strongSubIneq} to obtain
\begin{equation*}
F(b\bp) \geq \E[f(A)] \geq
\frac{1}{\sum_{j=1}^m b\alpha_j}
\left(1 - \left( 1-\frac{\sum_{j=1}^m b \alpha_j}{m}\right)^m \right)
\sum_{j=1}^m b \alpha_j f(A_j)
\geq \left(1- \left(1-\frac{b}{m}\right)^m \right) f^+(\bp),
\end{equation*}
using $\sum_{j=1}^m \alpha_j = 1$ and
the fact that 
$\frac{1-(1-\frac{x}{m})^m}{x}$ is decreasing in $x$.
The proof of this case is completed by observing that
$(1-(1-\frac{b}{m})^m)$ is decreasing in $m$ and $m\leq n$.

\medskip
\textbf{Case $m = n+1$}:
Since $A_j\subseteq N$ for $j\in [n+1]$ and $|N|=n$, there
must be at least one set $A_{t}$ that is
covered by the remaining sets, i.e.,
$A_{t}\subseteq \cup_{j\in [n+1], j\neq t} A_j$.
Furthermore let $s\in [n+1]\setminus \{t\}$ be the index
minimizing $b \alpha_s$.
We define probabilities $q_j$ for $j\in [n+1]$ as follows
\begin{equation*}
q_j = 
\begin{cases}
b \alpha_j & \text{if } j\neq t, \\
\frac{b \alpha_t}{1 - b\alpha_s} & \text{if } j = t.
\end{cases}
\end{equation*}
We follow a similar approach as for the previous
case by considering the random set
\begin{equation*}
A = \bigcup_{j\in [m]} A_j (q_j).
\end{equation*}
Again, we first observe that $A$ is dominated by
the distribution of $R(b \bp)$. For any $i\in N\setminus A_t$
the analysis of the previous case still holds and shows
$\Pr[i\in A] \leq \Pr[i\in R(b\bp)]$.
Consider now an element $i\in A_t$.
Let $A_t, A_{j_1}, \dots, A_{j_r}$ be all sets in the
family $(A_j)_{j\in [n+1]}$ that contain $i$.
By our choice of $A_t$, there is at least one other set
containing $i$, i.e., $r\geq 1$. 
Using 
\begin{equation*}
(1-q_t)(1-q_{j_1})
  = \left(1- \frac{b \alpha_t}{1- b \alpha_s}\right)(1-b\alpha_{j_1})
  \overset{\alpha_s \leq \alpha_{j_1}}{\leq} 1 - b\alpha_t - b\alpha_{j_1},
\end{equation*}
we obtain
\begin{align*}
\Pr[i\in A]
 &= 1 - (1-q_t)(1-q_{j_1}) \prod_{\ell = 2}^r (1-q_{j_\ell})
   = 1 - (1- b \alpha_t - b\alpha_{j_1})
     \prod_{\ell = 2}^r (1-b \alpha_{j_\ell})\\
 &\leq 1 - \left(1-\sum_{\substack{j\in [n+1]\\ i\in A_j}} b\alpha_j\right)
   = b p_i = \Pr[i\in R(b\bp)].
\end{align*}
Therefore, we again have $F(b\bp) \geq \E[f(A)]$.
Notice that $q = \sum_{j=1}^{n+1} q_j$ satisfies
\begin{equation*}
q = b\alpha_t - \frac{b\alpha_t}{1-b\alpha_s}
  + \sum_{j=1}^{n+1} b \alpha_j
  = q_s q_t + \sum_{j=1}^{n+1} b \alpha_j
  = q_s q_t + b.
\end{equation*}
We apply
the second statement of Lemma~\ref{lem:strongSubIneq}
to the family $(A_j(q_j))_{j\in [n+1]}$ and use the
above fact to obtain
\begin{align*}
F(b\bp) \geq \E[f(A)] &\geq
\frac{1}{q- q_s q_t}
\left(1 - \left( 1-\frac{q - q_s q_t}{n}\right)^{n} \right)
\left(-q_s q_t \min\{f(A_s), f(A_t)\} +
  \sum_{j=1}^{n+1} q_j f(A_j)\right)\\
& = \frac{1}{b}\left(1- \left(1-\frac{b}{n}\right)^{n} \right) 
\left(-q_s q_t \min\{f(A_s), f(A_t)\} +
  \left(q_t - b\alpha_t\right)f(A_t)
    + b f^+(\bp)\right)\\
& \geq \frac{1}{b}\left(1- \left(1-\frac{b}{n}\right)^{n} \right) 
\left(-q_s q_t f(A_t) +
  \left(q_t - b\alpha_t\right)f(A_t)
    + b f^+(\bp)\right).
\end{align*}
The claim follows by observing that $q_s q_t = q_t - b\alpha_t$.
\end{proof}

\section{Details in constructing \crs schemes via the ellipsoid algorithm}
\label{sec:appendix-ellipsoid}
Here we give the technical details that are involved in sampling and
approximately solving (DP1) and (LP1) from
Section~\ref{subsec:distrCRschemes}.  
First a short reminder of the primal and dual problem.

\begin{equation*}
\begin{array}{rrrcll}
\multirow{4}{*}{(LP1)} & \max & c \hfill &&&\\
& s.t. & \sum_{\phi \in \Phi^*} q_{i,\phi} \lambda_\phi  &\geq
    & \mathrm{x}_i c &\forall i\in N\\
& &\sum_{\phi \in \Phi^*} \lambda_{\phi} & = &1 &\\
& & \lambda_\phi &\geq &0 &\forall \phi \in \Phi^* \\[0.5em]
\multirow{4}{*}{(DP1)}& \min & \mu \hfill &&&\\
& \text{s.t.} & \sum_{i\in N} q_{i,\phi} y_i & \leq & \mu
   &\forall \phi \in \Phi^*\\
& &\sum_{i\in N} \mathrm{x}_i y_i & = &1 &\\
& & y_i &\geq &0 &\forall i\in N
\end{array}
\end{equation*}

We start by observing that we can obtain strong estimates
of $q_{i,\phi}$ for any $\phi\in \Phi^*$. We assume that
$\phi$ is given as an oracle and can therefore be evaluated
in constant time.

\begin{proposition}\label{prop:qHat}
Let $\phi\in \Phi^*$ and $i\in N$. An estimate
$\hat{q}_{i,\phi}$ of $q_{i,\phi}$ whose error is bounded
by $\pm \epsilon x_i$ with high probability can be obtained
in time polynomial in $n$ and $\frac{1}{\epsilon}$.
\end{proposition}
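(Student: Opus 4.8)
The plan is to use straightforward Monte Carlo sampling, but with one twist that makes the additive error scale with $x_i$ rather than being an absolute constant. The key structural fact is that $\phi$ is a valid mapping, so $\phi(A)\subseteq A$ for every $A$; in particular $i\in \phi(R)$ can only occur when $i\in R$, and hence $q_{i,\phi}=\Pr[i\in \phi(R)]\leq \Pr[i\in R]=x_i$. Thus the quantity $p:=q_{i,\phi}/x_i$ lies in $[0,1]$ (we dispose of the trivial case $x_i=0$ by returning $\hat q_{i,\phi}=0$), and it suffices to estimate $p$ with additive error $\pm\epsilon$ and then output $\hat q_{i,\phi}:=x_i\hat p$.

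To estimate $p$, first observe that $p=\Pr[i\in \phi(R)\mid i\in R]$. So I would draw $T$ independent samples $R^{(1)},\dots,R^{(T)}$ from the distribution of $R(\bx)$ conditioned on $i\in R$. Since the coordinates of $R(\bx)$ are independent, this conditioned distribution is obtained simply by placing $i$ in the sample deterministically and including each $j\neq i$ independently with probability $x_j$; hence each sample is generated exactly in $O(n)$ time with no rejection. I would evaluate the oracle for $\phi$ on each $R^{(k)}$ and set $\hat p=\frac{1}{T}\sum_{k=1}^{T}\b1_{i\in \phi(R^{(k)})}$. This is an unbiased estimator of $p$ and an average of i.i.d.\ $[0,1]$-valued random variables, so by the Chernoff--Hoeffding bound $\Pr[|\hat p-p|>\epsilon]\leq 2e^{-2T\epsilon^2}$. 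Taking $T=\lceil \epsilon^{-2}\ln(2/\delta)\rceil$ makes the failure probability at most $\delta$; choosing $\delta=n^{-\Theta(1)}$ yields the claimed high-probability guarantee with $T=O(\epsilon^{-2}\log n)$. Then $|\hat q_{i,\phi}-q_{i,\phi}|=x_i|\hat p-p|\leq \epsilon x_i$ with high probability. The running time is $O(T\cdot n)$ plus $T$ oracle calls to $\phi$, which (with $\phi$ given as a constant-time oracle) is $O(n\epsilon^{-2}\log n)$, polynomial in $n$ and $1/\epsilon$ as required.

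The only step that is an actual idea rather than routine bookkeeping is the reduction to estimating the \emph{conditional} probability $p=q_{i,\phi}/x_i$: a direct sample-average of $\b1_{i\in \phi(R)}$ over unconditioned draws of $R(\bx)$ would only give additive error $\epsilon$, which is far too weak when $x_i$ is tiny. Exploiting $\phi(A)\subseteq A$ to condition on $i\in R$ and rescale by $x_i$ is exactly what converts this into an $\epsilon x_i$ bound, and everything else is a textbook concentration argument. I do not anticipate any real obstacle here; the proposition is a technical sampling lemma, and the same conditioning idea is what underlies the accurate estimation of the coefficients $q_{i,\phi}$ used by the separation oracle for (DP1).
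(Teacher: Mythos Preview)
Your proposal is correct and essentially identical to the paper's proof: the paper also factors $q_{i,\phi}=x_i\cdot\Pr[R\setminus\{i\}\text{ is good}]$ (where ``good'' means $i\in\phi((R\setminus\{i\})\cup\{i\})$), which is exactly your conditional probability $p=\Pr[i\in\phi(R)\mid i\in R]$, and then estimates this $[0,1]$-valued probability to additive error $\epsilon$ by standard Monte Carlo sampling. Your write-up is in fact more detailed than the paper's, which omits the explicit Chernoff bound and running-time calculation.
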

\begin{proof}
We call a set $S\subseteq N\setminus \{i\}$ \emph{good}, if
$i\in \phi(S\cup\{i\})$. We have
\begin{align*}
q_{i,\phi} &= \Pr[i\in \phi(R)]
  = \Pr[i\in R \text{ and } R\setminus\{i\} \text{ is good}]\\
 &= \Pr[i\in R]\cdot \Pr[R\setminus\{i\}\text{ is good}]
  = x_i \cdot \Pr[R\setminus\{i\} \text{ is good}].
\end{align*}
Notice that we can estimate $\Pr[R\setminus\{i\} \text{ is good}]$ up
to an error of $\pm \epsilon$ with high probability by a standard
Monte Carlo approach, where we draw samples of $R\setminus\{i\}$. This
can be done in time polynomial in $n$ and $\frac{1}{\epsilon}$ and
leads to the claimed estimate $\hat{q}_{i,\phi}$ by the above formula.
\end{proof}

We now discuss how these estimates can be used to obtain 
a near-optimal solution to (DP1) by employing the
ellipsoid method with a weak separation oracle.
After that we show how a near-optimal solution to (LP1) can
be obtained.
Notice that Proposition~\ref{prop:qHat} can easily be used
to obtain estimates $\widetilde{q}_{i,\phi}$ of $q_{i,\phi}$
that satisfy with high probability
$\widetilde{q}_{i,\phi}\in [q_{i,\phi}-\epsilon x_i, q_{i,\phi}]$:
it suffices to
consider an estimate $\hat{q}_{i,\phi}$ of $q_{i,\phi}$ that
satisfies with high probability
$\hat{q}_{i,\phi}\in [q_{i,\phi}-\frac{\epsilon}{2} x_i,%
 q_{i,\phi} + \frac{\epsilon}{2} x_i]$
and to define
$\widetilde{q}_{i,\phi} = \hat{q}_{i,\phi}-\frac{\epsilon}{2}$.
In the following, we assume that all used estimates $\widetilde{q}_{i,\phi}$
satisfy $\widetilde{q}_{i,\phi}\in [q_{i,\phi}-\epsilon x_i, q_{i,\phi}]$.
We can obtain this with
high probability through Proposition~\ref{prop:qHat} since the ellipsoid
that we will apply in the following only uses a polynomial number
of such estimates.
Notice that these estimates are ``pessimistic'' estimates for $q_{i,\phi}$,
i.e., replacing $q_{i,\phi}$ in (LP1) by these estimates leads to a
lower optimal value of the LP.

Furthermore, to simplify the exposition, we will assume that for any
given weight vector $\by\in \mathbb{R}_+^N$ we can find a \crs scheme
$\phi\in \Phi^*$ maximizing $\sum_{i\in N}q_{i,\phi} y_i$. The
following discussion works also if we can only find a \crs scheme
$\phi$ that approximately maximizes this expression.

To apply the ellipsoid algorithm to (DP1) we design a weak separation
oracle (see Chapter~4 in~\cite{KorteVygen}). As a reminder, the weak separation
oracle has to provide the following guarantees. Given is a nonnegative
vector $\by=(y_i)_{i\in N}$ satisfying $\sum_{i\in N}\mathrm{x}_i y_i =1$, and
a value $\mu$. The weak separation oracle has to provide either a
feasible dual solution $(\by', \mu')$ with $\mu' \leq \mu + \epsilon$,
or a hyperplane separating $(\by, \mu)$ from all feasible dual
solutions.
Given $\by$ and $\mu$, let $\phi\in \Phi^*$ be the CR scheme
that maximizes $\sum_{i\in N}q_{i,\phi}y_i$.
If $\sum_{i\in N}\widetilde{q}_{i,\phi} y_i \leq \mu$,
our weak separation oracle returns the dual solution
$(\by, \mu + \epsilon)$. This solution has objective value
$\mu + \epsilon$ as desired and is indeed feasible since
for any $\phi'\in \Phi^*$,
\begin{align*}
\sum_{i\in N}q_{i,\phi'}y_i \leq \sum_{i\in N}q_{i,\phi}y_i
\leq \sum_{i\in N} (\widetilde{q}_{i,\phi}+\epsilon \mathrm{x}_i) y_i
= \epsilon + \sum_{i\in N} \widetilde{q}_{i,\phi} y_i 
 \leq \mu + \epsilon.
\end{align*}
If $\sum_{i\in N}\widetilde{q}_{i,\phi} y_i > \mu$,
our weak separation oracle returns the separating
hyperplane given by the constraint $({\bf a},-1)\cdot ({\bf z}, \nu) \le 0$
where ${\bf a}$ is a $n$-dimensional vector with coefficients 
$a_i = \widetilde{q}_{i,\phi}$ for $1 \le i \le n$
(note that that (DP1) has $n+1$ variables corresponding to $y_1,\ldots,y_n$
and $\mu$). 
First, this hyperplane indeed cuts off the solution $(\by,
\mu)$. Furthermore, if $(\by', \mu')$ is a feasible dual solution then
it satisfies the constraint since $\widetilde{q}_{i,\phi}\leq q_{i,\phi}$:
\begin{align*}
\sum_{i\in N}\widetilde{q}_{i,\phi} y_i'
\leq \sum_{i\in N}q_{i,\phi} y_i' \leq \mu',
\end{align*}
where the second inequality in the above follows from feasibility of
$(\by',\mu')$.
Hence, we obtained a weak separation oracle for (DP1),
and the ellipsoid method
can therefore determine a feasible solution $(\by, \mu)$ to (DP1)
of value $\leq \mu^* + 2\epsilon$, where $\mu^*$ is the value of an
optimal dual solution. Note that since $(\by, \mu)$ is feasible,
we have $\mu^* \le \mu \le \mu^*+2\epsilon$ (see~\cite{KorteVygen}).

Let (DP1') be the linear program obtained
from (DP1) by only considering constraints corresponding
to \crs schemes $\phi$ that were used in the ellipsoid
algorithm while constructing the nearly optimal solution
$(\by,\mu)$ of (DP1), which satisfies $\mu \leq \mu^*+2\epsilon$.
Furthermore, we replace all terms $q_{i,\phi}$ by
their estimates $\widetilde{q}_{i,\phi}$ in (DP1').
Hence, the feasible region of (DP1') consists of all separating
hyperplanes that were generated during the ellipsoid algorithm.
Notice that (DP1') is a relaxation of (DP1) since our
estimates satisfy $\widetilde{q}_{i,\phi} \leq q_{i,\phi}$.
Hence, the optimal value $\mu'$ of (DP1') satisfies
$\mu' \leq \mu^*$.
The ellipsoid algorithm actually certifies the approximation
quality of the generated solution $(\mu, \by)$ by comparing
against the best solution satisfying the generated constraints,
i.e.,
\begin{equation*}
\mu \leq \mu' + 2\epsilon.
\end{equation*}
Let (LP1') be the dual of (DP1'), and let
$(\blam', c')$ be an optimal solution to (LP1'),
which can be efficiently determined since (LP1')
has polynomial size. We return $(\blam', c')$ as the
solution to (LP1).
First, notice that $(\blam', c')$ is feasible for
(LP1) since $\widetilde{q}_{i,\phi}\leq q_{i,\phi}$.
Furthermore,
\begin{align*}
c' = \mu' \geq \mu - 2\epsilon \geq \mu^* -2\epsilon = c^* - 2\epsilon,
\end{align*}
where the two equalities follow by strong duality.

\end{document}